\newtheorem{theorem}{Theorem}
\numberwithin{theorem}{section}
\newtheorem{corollary}[theorem]{Corollary}
\newtheorem{lemma}[theorem]{Lemma}
\newtheorem{proposition}[theorem]{Proposition}
\theoremstyle{definition}
\newtheorem{definition}[theorem]{Definition}
\newtheorem{example}[theorem]{Example}
\newtheorem{remark}[theorem]{Remark}
\newtheorem{notation}[theorem]{Notation}
\newtheorem{fact}[theorem]{Fact}
\newcommand{\sep}{\colon\hspace{0.4mm}}
\newcommand{\metp}{\mathsf{d}_{|\hspace{0.3mm}\cdot\hspace{0.3mm}|_p}}
\newcommand{\ttL}{\hspace{0.3mm}\mathtt{L}}
\newcommand{\ttM}{\hspace{0.3mm}\mathtt{M}}
\newcommand{\ttN}{\hspace{0.3mm}\mathtt{N}}
\newcommand{\ple}{l_{\mathrm{P}}}
\newcommand{\densr}{\mathcal{D}_r(\Hp)}
\newcommand{\denss}{\mathcal{D}_s(\Hp)}
\newcommand{\densu}{\mathcal{D}_1(\Hp)}
\newcommand{\densops}{\mathcal{D}(\Hp)}
\newcommand{\hphps}{\widehat{\phi\psi}}
\newcommand{\hpsph}{\widehat{\psi\phi}}
\newcommand{\hpsps}{\widehat{\psi\psi}}
\newcommand{\hefj}{\widehat{\hspace{0.9mm}\osj\fj}}
\newcommand{\inj}{\tau_{\Hp}}
\newcommand{\tinj}{\widetilde{\tau}_{\Hp}}
\newcommand{\sta}{\Omega}
\newcommand{\stext}{\sta_{\mathrm{ext}}}
\newcommand{\states}{\mathcal{S}(\Hp)}
\newcommand{\statex}{\mathcal{S}_{\mathrm{ext}}(\Hp)}
\newcommand{\tcst}{\mathcal{T}_{\mathrm{st}}(\Hp)}
\newcommand{\trstates}{\mathcal{S}_{\mathrm{tr}}(\Hp)}
\newcommand{\ztc}{\mathcal{T}_{\mathrm{sa}}(\Hp)_0}
\newcommand{\ztcs}{\widehat{\mathcal{T}}_{\mathrm{sa}}(\Hp)_0}
\newcommand{\ztcr}{\mathcal{T}_{\mathrm{sa}}(\Hp)_0^r}
\newcommand{\sts}{\widehat{\mathcal{T}}_{\mathrm{st}}(\Hp)}
\newcommand{\cl}{\mathrm{cl}\hspace{0.3mm}}
\newcommand{\pdj}{\pi_j}
\newcommand{\tpdk}{\widetilde{\pi}_k}
\newcommand{\prs}{\varpi_0}
\newcommand{\psy}{\upsilon_0}
\newcommand{\de}{\mathrm{d}}
\newcommand{\spm}{P_{\hspace{-0.5mm}A}}
\newcommand{\scj}{\mathfrak{sc}}
\newcommand{\acj}{\mathfrak{ac}}
\newcommand{\indf}{1_{\hspace{-0.2mm}J}}
\newcommand{\indfb}{1_{\hspace{-0.3mm}K}}
\newcommand{\baj}{\mathscr{D}_{\hspace{-0.2mm}J}}
\newcommand{\bak}{\mathscr{D}_{\hspace{-0.2mm}K}}
\newcommand{\cspanq}{\overline{\mathrm{span}}_{\hspace{0.3mm}\mathbb{Q}_p}\hspace{-0.4mm}}
\newcommand{\acop}{\mathrm{aco}_{\hspace{0.3mm}\mathbb{Q}_p}\hspace{-0.4mm}}
\newcommand{\cacop}{\overline{\mathrm{aco}}_{\hspace{0.3mm}\mathbb{Q}_p}\hspace{-0.4mm}}
\newcommand{\acoset}{\mathscr{A}}
\newcommand{\suse}{\mathscr{X}}
\newcommand{\coset}{\mathscr{C}}
\newcommand{\affset}{\mathscr{H}}
\newcommand{\lis}{\mathscr{L}}
\newcommand{\cop}{\mathrm{co}_{\hspace{0.3mm}\mathbb{Q}_p}\hspace{-0.4mm}}
\newcommand{\ccop}{\overline{\mathrm{co}}_{\hspace{0.3mm}\mathbb{Q}_p}\hspace{-0.4mm}}
\newcommand{\aff}{\mathrm{aff}_{\hspace{0.3mm}\mathbb{Q}_p}\hspace{-0.4mm}}
\newcommand{\caff}{\overline{\mathrm{aff}}_{\hspace{0.3mm}\mathbb{Q}_p}\hspace{-0.4mm}}
\newcommand{\osj}{e_j}
\newcommand{\osk}{e_k}
\newcommand{\ds}{\widetilde{e}}
\newcommand{\dsj}{\ds_j}
\newcommand{\tfj}{\widetilde{f}_j}
\newcommand{\dsk}{\ds_k}
\newcommand{\tfm}{\widetilde{f}_m}
\newcommand{\fj}{f_j}
\newcommand{\sj}{\sigma_j}
\newcommand{\pro}{\odot}
\newcommand{\xm}{\xi^{\mbox{\tiny $(m)$}}}
\newcommand{\hsp}{\rangle_{\hspace{-0.2mm}\mbox{\tiny $\Tp$}}}
\newcommand{\bhsp}{\hspace{-0.2mm}\big{\rangle}_{\hspace{-0.5mm}\mbox{\tiny $\Tp$}}\hspace{-0.4mm}}
\newcommand{\inpr}{\langle\hspace{0.3mm}\cdot\hspace{0.6mm},\cdot\hspace{0.3mm}\hsp}
\newcommand{\scap}{\langle\hspace{0.3mm}\cdot\hspace{0.6mm},\cdot\hspace{0.3mm}\rangle}
\newcommand{\scapx}{\langle x,\cdot\hspace{0.4mm}\rangle}
\newcommand{\scapsi}{\langle\psi,\cdot\hspace{0.4mm}\rangle}
\newcommand{\scaphi}{\langle\phi,\cdot\hspace{0.4mm}\rangle}
\newcommand{\scaphii}{\langle\phi_i,\cdot\hspace{0.4mm}\rangle}
\newcommand{\scaphij}{\langle\phi_j,\cdot\hspace{0.4mm}\rangle}
\newcommand{\scaphin}{\langle\phi_n,\cdot\hspace{0.4mm}\rangle}
\newcommand{\scapcx}{\big\langle\overline{\argo},x\big\rangle}
\newcommand{\scapeta}{\langle\eta,\cdot\hspace{0.4mm}\rangle}
\DeclareMathOperator{\opE}{\mathnormal E}
\newcommand{\jkE}{\hspace{-0.7mm}\sideset{^{jk\hspace{-0.5mm}}}{^{\hspace{0.4mm}\Phi}}{\opE}}
\newcommand{\jkEmn}{\hspace{-0.4mm}\sideset{^{jk\hspace{-0.5mm}}}{^{\hspace{0.4mm}\Phi}_{mn}}{\opE}}
\newcommand{\rsE}{\hspace{-0.7mm}\sideset{^{rs\hspace{-0.5mm}}}{^{\hspace{0.4mm}\Phi}}{\opE}}
\DeclareMathOperator{\opT}{\mathnormal T}
\newcommand{\kT}{\hspace{-0.7mm}\sideset{^k}{}{\opT}}
\newcommand{\kTmn}{\hspace{-0.4mm}\sideset{^k}{_{mn}}{\opT}}
\newcommand{\lT}{\hspace{-0.7mm}\sideset{^l}{}{\opT}}
\newcommand{\lTmn}{\hspace{-0.4mm}\sideset{^l}{_{mn}}{\opT}}
\DeclareMathOperator{\opC}{\mathnormal C}
\newcommand{\kC}{\hspace{-0.7mm}\sideset{^{k\hspace{-0.3mm}}}{}{\opC}}
\newcommand{\kCmn}{\hspace{-0.4mm}\sideset{^{k\hspace{-0.3mm}}}{_{mn}}{\opC}}
\newcommand{\defi}{\mathrel{\mathop:}=}
\newcommand{\ifed}{=\mathrel{\mathop:}}
\newcommand{\iffdef}{\overset{\mathrm{def}}{\iff}}
\newcommand{\id}{\mathrm{Id}}
\newcommand{\nat}{\mathbb{N}}
\newcommand{\enne}{\mathsf{N}}
\newcommand{\senne}{\mathbb{N}_{\mbox{\tiny $\le$}\enne}}
\newcommand{\menne}{\}_{m=1}^{\enne}}
\newcommand{\nenne}{\}_{n=1}^{\enne}}
\newcommand{\mnat}{\}_{m\in\nat}}
\newcommand{\nnat}{\}_{n\in\nat}}
\newcommand{\ka}{{\hspace{0.4mm}\mathtt{K}}}
\newcommand{\eps}{\mathscr{E}_\psi}
\newcommand{\ephps}{\mathscr{E}_{\phi,\psi}}
\newcommand{\smnor}{\mbox{\tiny $\|\hspace{-0.5mm}\cdot\hspace{-0.5mm}\|$}}
\newcommand{\nperp}{\overset{\smnor}{\perp}}
\newcommand{\ipp}{\mathscr{I}\hspace{-0.3mm}}
\newcommand{\nop}{\mathscr{N}\hspace{-0.3mm}}
\newcommand{\op}{\mathrm{op}\hspace{0.3mm}}
\newcommand{\opPhi}{\mathrm{op}_\Phi}
\newcommand{\opPsi}{\mathrm{op}_\Psi}
\newcommand{\card}{\mathrm{card}}
\newcommand{\isin}{\nu_{p,\mu}}
\newcommand{\tr}{\mathrm{tr}}
\newcommand{\dphi}{\phi^\prime}
\newcommand{\dA}{A^\prime}
\newcommand{\Wf}{W_{\hspace{-0.3mm}\Phi}}
\newcommand{\supp}{\mathrm{supp}}
\newcommand{\argo}{(\cdot)}
\DeclareMathOperator{\md}{mod}
\DeclareMathOperator{\dom}{dom}
\DeclareMathOperator{\ran}{ran}
\newcommand{\calg}{\mathfrak{A}}
\newcommand{\calgs}{\mathfrak{A}_{\mathrm{sa}}}
\newcommand{\salge}{\mathfrak{C}}
\newcommand{\stalg}{\mathfrak{S}(\mathfrak{A})}
\newcommand{\trast}{\mathfrak{S}_{\mathrm{tr}}(\mathfrak{A})}
\newcommand{\cst}{\mathfrak{S}(\mathcal{B}(\mathcal{K}))}
\newcommand{\densK}{\mathcal{D}(\mathcal{K})}
\newcommand{\proK}{\mathcal{P}(\mathcal{K})}
\newcommand{\bsa}{\mathcal{B}_{\mathrm{sa}}(\mathcal{K})}
\newcommand{\linop}{\mathrm{Lin}\hspace{0.3mm}(\Hp)}
\newcommand{\bas}{\mathfrak{B}}
\newcommand{\uba}{\mathcal{B}(\Hp)_1}
\newcommand{\usp}{\mathcal{B}(\Hp)_{[1]}}
\newcommand{\Hba}{\mathcal{H}_1}
\newcommand{\Qba}{\Q_1}
\newcommand{\pint}{\mathbb{Z}_{p}}
\newcommand{\mide}{\mathfrak{P}_{p}}
\newcommand{\vari}{\mathfrak{V}_{p,\hspace{0.5mm}\mu}}
\newcommand{\sephi}{\mathscr{S}_{\hspace{-0.7mm}\phi}}
\newcommand{\Minf}{\mathsf{M}_{\infty}(\Q)}
\newcommand{\Qp}{\mathbb{Q}_p}
\newcommand{\Qpa}{\mathbb{Q}_p^{\ast}}
\newcommand{\Qtra}{\mathbb{Q}_3^{\ast}}
\newcommand{\Qca}{\mathbb{Q}_5^{\ast}}
\newcommand{\Q}{\mathbb{Q}_p(\sqrt{\mu})}
\newcommand{\Qa}{\mathbb{Q}_p(\sqrt{\mu})^\ast}
\newcommand{\Bp}{\mathcal{B}(\mathcal{H})}
\newcommand{\Bpa}{\mathcal{B}_{\mathrm{ad}}(\Hp)}
\newcommand{\Bps}{\mathcal{B}_{\mathrm{sa}}(\Hp)}
\newcommand{\Up}{\mathcal{U}(\Hp)}
\newcommand{\Tp}{\mathcal{T}(\Hp)}
\newcommand{\Tps}{\mathcal{T}_{\mathrm{sa}}(\Hp)}
\newcommand{\Tw}{\mathcal{T}_{\mathrm{w}}(\Hp)}
\newcommand{\Cp}{\mathcal{C}(\mathcal{H})}
\newcommand{\Cpa}{\mathcal{C}_{\mathrm{ad}}(\mathcal{H})}
\newcommand{\Fp}{\mathscr{F}(\mathcal{H})}
\newcommand{\blfi}{\mathscr{B}_\Phi}
\newcommand{\rcf}{\mathbb{F}_p}
\newcommand{\Hp}{\mathcal{H}}
\newcommand{\Hpu}{\mathcal{H}_{p,\mu}}
\newcommand{\cV}{\mathcal{V}}
\newcommand{\dH}{\Hp^\prime}
\newcommand{\dX}{X^\prime}
\newcommand{\ddX}{X^{\prime\prime}}
\newcommand{\ddH}{\Hp^{\prime\prime}}
\newcommand{\IX}{\mathcal{I}_{X}}
\newcommand{\IH}{\mathcal{I}_{\mathcal{H}}}
\newcommand{\JH}{\mathcal{J}_{\mathcal{H}}}
\newcommand{\ellphi}{\mathcal{L}_{\Phi}}
\begin{document}

\title{Trace class operators and states in $p$-adic quantum mechanics}

\author{Paolo Aniello,$^{1,2}$\thanks{Email: paolo.aniello@na.infn.it} \hspace{1.2mm}
Stefano Mancini,$^{3,4}$\thanks{Email: stefano.mancini@unicam.it} \hspace{0.6mm}
and Vincenzo Parisi$\hspace{0.4mm}^{3,4}$\thanks{Email: vincenzo.parisi@unicam.it}}

\date{
{\normalsize
$^1$\textit{Dipartimento di Fisica  ``Ettore Pancini'', Universit\`a  di Napoli ``Federico II",\\
Complesso Universitario di Monte S.~Angelo, via Cintia, I-80126 Napoli, Italy}\\	
\vspace{3mm}
$^2$\textit{Istituto Nazionale di Fisica Nucleare, Sezione di Napoli,\\
Complesso Universitario di Monte S.~Angelo, via Cintia, I-80126 Napoli, Italy}\\
\vspace{3mm}
$^3$\textit{School of Science and Technology, University of Camerino,\\
Via Madonna delle Carceri, 9, Camerino, I-62032, Italy}\\
\vspace{3mm}
$^4$\textit{Istituto Nazionale di Fisica Nucleare, Sezione di Perugia,\\
via A.~Pascoli, I-06123 Perugia, Italy}
}	
}
	
\maketitle
	
\begin{abstract}
Within the framework of quantum mechanics over a quadratic extension of the non-Archimedean field of $p$-adic numbers, we provide
a definition of a quantum state relying on a general algebraic approach and on a $p$-adic model of probability theory. As in the
standard complex case, a distinguished set of physical states are related to a notion of trace for a certain class of
bounded operators and, in fact, we show that one can define a suitable space of trace class operators in the non-Archimedean setting, as well.
The analogies --- but also the several (highly non-trivial) differences --- with respect to the case of standard quantum mechanics
in a complex Hilbert space are analyzed.
\end{abstract}


 	
\section{Introduction}

Quantum mechanics and general relativity are undoubtedly the most successful physical theories of the past century.
On the one hand, they have transformed our understanding of physical reality by showing
that microscopic systems have an intrinsic indeterministic character and that gravity
can be described as an effect of the curvature of space-time. On the other hand, they have shown that
the investigation and understanding of fundamental physical phenomena should rely on a completely different approach
with respect to that used to develop the theories of earlier centuries. In this regard, Dirac's enlightening words capture
this radical change in physical inquiry: ``I learnt to distrust all physical concepts as the basis for a theory. Instead one should
put one's trust in a mathematical scheme, even if the scheme does not appear at first sight to be connected with physics.
[\,\textellipsis] The basic equations of the theory where worked out before their physical meaning was obtained.
The physical meaning had to follow behind the mathematics''~\cite{dirac1978mathematical}.

Nowadays, many physicists believe that quantum mechanics and general relativity are unfit to describe a wide range of physical phenomena
related to the ultimate structure of matter and space-time at a scale comparable
to Planck's length ($\ple=\sqrt{\hbar\, G/c^3}\sim 10^{-35}m$)~\cite{peres1960quantum}. It seems, therefore, necessary
to find new theoretical schemes and, according to Dirac's view of modern physics, it is likely that the identification
of the right mathematical framework for these new physical models will play a major role.

At the end of the last century, Volovich and Vladimirov~\cite{volovich1987number,volovich2010number,vladimirov1989p} envisaged a description
of microscopic phenomena based on a new picture of space-time at Planck's scale. This description stemmed from the fundamental observation that,
if Planck's length is assumed to be the smallest measurable length, then space-time should possess a \emph{non-Archimedean} character.
Pursuing this idea to its logical conclusions, one is led to the further observation that the only complete non-Archimedean field
one can construct starting from the field of rational numbers is, up to isomorphisms, the field of $p$-adic numbers
$\mathbb{Q}_p$~\cite{Cassels,gouvea1997p,schikhof2007ultrametric,theclassfields2007,robert2013course}, where $p$ is a generic prime number.
It is then natural to attempt at developing $p$-adic models of quantum
theory~\cite{volovich2010number,vladimirov1989p,ruelle1989quantum,meurice1989quantum,khrennikov1990representation,khrennikov1991p,albeverio1996p,Vourdas,Vourdas_book}
and formulating field theories on $\mathbb{Q}_p$~\cite{rammal1986ultrametricity,parisi1988p}.

Actually, the mere possibility of constructing a quantum theory based on $\mathbb{Q}_p$ had already been contemplated
in the early 1970s (see the paper of Beltrametti and Cassinelli~\cite{beltrametti1972quantum}, and references therein).
As observed by Beltrametti and Cassinelli, the problem of adopting a certain number field occurs
at two different levels in the formulation of quantum mechanics: first, concerning the values of space-time coordinates
--- and this is precisely the layer Volovich's hypothesis refers to --- and, second,
concerning the carrier vector space of physical states. In principle, the second level must not depend on the first one;
hence, there are (at least) two possibilities that may be investigated separately. In their seminal paper~\cite{beltrametti1972quantum},
Beltrametti and Cassinelli considered the second aspect only, finding inconsistencies related to the usual lattice structure of quantum mechanics.
These inconsistencies essentially originate from the lack of a nontrivial involutive automorphism of $\mathbb{Q}_p$.

Clearly, this analysis \emph{does not} rule out the possibility of constructing a sensible physical theory
of microscopic phenomena based on $\mathbb{Q}_p$. In fact, on the one hand one cannot stipulate, in advance,
that such a theory should rely on a certain given lattice structure (e.g., classical and quantum mechanics \emph{do not}
share the same lattice structure~\cite{varadarajan1968geometry,beltrametti81}); on the other hand, precisely as one
passes from the real to the complex numbers, one can consider a suitable \emph{quadratic extension} of $\mathbb{Q}_p$,
which instead does admit a nontrivial involutive automorphism~\cite{schikhof2007ultrametric,robert2013course,serre2012course}.

Both the possibility of building a quantum theory relying on `wave functions' of the form, say,
$\psi\colon\mathbb{Q}_p\rightarrow \mathbb{C}$ and the `more radical' idea of involving a non-Archimedean field
in the \emph{second} layer characterizing a quantum theory --- the carrier space of physical states ---
have extensively been investigated by Khrennikov, Albeverio and their collaborators
(see~\cite{khrennikov1991p,albeverio1996p,khrennikov1990representations,khrennikov1990mathematical,khrennikov1990quantum,khrennikov1991real,
khrennikov1993statistical,albeverio1997representation,albeverio1997spectrum,albeverio1997fourier,albeverio1998regularization,albeverio2009p},
and references therein), paving the way to a new and intriguing line of research.

In our opinion, however, at least one remarkable aspect of a $p$-adic theory of microscopic phenomena has not been
fully investigated yet. Assuming the point of view according to which the carrier space itself of physical states
should be non-Archimedean, it is not immediately clear how states (and observables) should be defined. In principle,
a theory based entirely on $p$-adic numbers may have a substantially different character with respect to standard quantum mechanics;
e.g., as a consequence of a different interpretation of the mathematical entities of the theory or,
even more drastically, of a completely different mathematical behaviour that may emerge.

The present paper focuses on this aspect, trying to provide the basic mathematical tools for an abstract definition
of a physical state. We follow two general guidelines: The first is to define states using an algebraic approach;
i.e., to describe `$p$-adic states' as linear functionals on some algebra of `p-adic operators'. The second is to adopt
a $p$-adic model of probability theory~\cite{khrennikov1993p,khrennikov2002interpretations}, coherently with our aim
at developing a theory entirely based on $\mathbb{Q}_p$.

As a result, we are led to the conclusion that $p$-adic states should be defined as suitable linear functionals on a
$\ast$-algebra of (bounded) observables over a quadratic extension $\Q$ of the field of $p$-adic numbers, where $\mu$ is
a non-quadratic element of $\mathbb{Q}_p$.

This approach seems to follow the same route as the standard algebraic formulation of quantum mechanics~\cite{Emch,strocchi2008introduction},
but one should not push the analogies between the $p$-adic and the standard complex case too far. E.g., our analysis shows that
\begin{itemize}

\item To start with, the relevant carrier vector space $\Hp\equiv\Hpu$ is \emph{not} a Hilbert space in the usual sense, being modeled
on the Banach space $c_0(\Q)$ of zero-convergent sequences in $\Q$. In particular, it is neither isomorphic to its dual $\dH$ nor reflexive,
in the infinite-dimensional case. (Our subsequent claims in the list refer, in general, to an infinite-dimensional setting.)

\item As in the complex case, one can define the Banach algebra $\Bp$ of all bounded operators, \emph{but}
this is not, in a natural way, a $\ast$-algebra.

\item It is instead the smaller algebra $\Bpa\subset\Bp$ of \emph{adjointable bounded operators} to possess a natural structure
of a Banach $\ast$-algebra (that may be thought of as the algebra of physical observables).

\item As in the complex case, one can define a \emph{trace class} $\Tp\subset\Bpa$, that induces a distinguished class of physical states
--- the so-called \emph{trace-induced states} --- but these states are not required to be \emph{positive}, simply because no natural notion
of positivity can be defined within the mathematical framework we adopt here.

\item The trace class $\Tp$ is a two-sided $\ast$-ideal in $\Bpa$, \emph{but} differently from the complex case,
it does not coincide with the whole class of bounded operators for which a \emph{global} trace functional is well
defined. Moreover, $\Tp$ alone plays
the role, simultaneously, of all \emph{trace ideals} of compact operators that can be defined in a complex
Hilbert space~\cite{Reed,Simon}.

\end{itemize}

The paper is organized as follows. In Section~\ref{sec2}, we recall some basic facts about the field of $p$-adic
numbers $\Qp$ and its quadratic extensions $\Q$. Section~\ref{sec3} is devoted to introducing a notion
of a $p$-adic Hilbert space. We begin by discussing $p$-adic normed and Banach spaces over $\Q$. Then, we consider
a suitable notion of a inner product in a $p$-adic Banach space, thus getting to a definition of a $p$-adic Hilbert space
suitable for our purposes. In Section~\ref{sec4}, we study the bounded and  the adjointable bounded operators in a $p$-adic Hilbert space.
We use extensively the notion of a \emph{matrix operator}, which turns out to be very convenient in the $p$-adic setting.
Indeed, in Section~\ref{sec5}, the $p$-adic unitary operators $\mathcal{U}(\Hp)$ are introduced as matrix operators,
whose complete characterization we then provide in Theorem~\ref{th.u3}. Section~\ref{sec6} is devoted to elaborating
a definition of a trace class operator suitable for a $p$-adic setting (where there is no natural notion of \emph{positive} operator).
We next prove that the linear space $\Tp$ of all trace class operators is a left ideal in the space of bounded operators $\Bp$,
and a two sided $\ast$-ideal in the Banach $\ast$-algebra of bounded adjointable operators $\Bpa$ (Theorem~\ref{th.t3}).
Eventually, we show that $\Tp$ is, in a natural way a a $p$-adic Hilbert space, the so-called $p$-adic Hilbert-Schmidt space
(see Theorem~\ref{theohs}). In Section~\ref{sec7}, we study the physical states in the $p$-adic setting. As anticipated,
their definition is algebraic and adapted to a $p$-adic model of probability theory. Finally, in Section~\ref{conclu},
a few conclusions are drawn, followed by a glance at future prospects.


\section{Basics of $p$-adic numbers}
\label{sec2}

The aim of this section is to recall some fundamental facts regarding the field of $p$-adic numbers $\mathbb{Q}_p$ and its quadratic
extensions~\cite{volovich2010number,Cassels,gouvea1997p,schikhof2007ultrametric,robert2013course,serre2012course,vladimirov1994p,Anashin,shimura2010arithmetic,folland2016course}.

Let $p\in\mathbb{N}$ be a prime number. According to the unique factorization theorem, every nonzero rational number
$x\in\mathbb{Q}$ can be written uniquely in the form $x=p^k m/n$, for some $k,m,n\in\mathbb{Z}$, and with $p\nmid m,n$. The
so-called $p$-adic \emph{absolute value} is defined as a map $|\cdot|_p:\mathbb{Q}\rightarrow\mathbb{R}^+$, with $|0|_p\equiv 0$ and
\begin{equation}
|x|_p\defi p^{-k} \quad (x=p^k m/n\neq 0).
\end{equation}
This map satisfies all the defining properties of an absolute value, or \emph{valuation}; i.e., it is strictly positive on
$\mathbb{Q}^\ast\equiv\mathbb{Q}\setminus\{0\}$, it factorizes under the product of two elements in $\mathbb{Q}$ and
satisfies the triangle inequality. However, it also satisfies
a more stringent condition, the so-called \emph{ultrametric inequality} (or strong triangle inequality), namely,
\begin{equation}
|x+y|_p\leq \max\{|x|_p,|y|_p\},\quad \forall x,y\in\mathbb{Q}.
\end{equation}
This inequality represents the main difference with respect to the standard absolute value on $\mathbb{Q}$.
A valuation on a certain field is called \emph{non-Archimedean} or \emph{ultrametric} if, like $|\cdot|_p$,
it satisfies the strong triangle inequality; otherwise (e.g., in the case of the standard absolute value on $\mathbb{Q}$),
it is called \emph{Archimedean}.

A remarkable theorem due to Ostrowski shows that the standard absolute value $|\cdot|$ and the
$p$-adic absolute value $|\cdot|_p$ --- with $p$ ranging over the prime numbers--- exhaust all
possible mutually \emph{inequivalent} valuations on $\mathbb{Q}$~\cite{schikhof2007ultrametric,theclassfields2007,robert2013course,vladimirov1994p}.
This fact implies that every \emph{ultrametric} absolute value on $\mathbb{Q}$ is equivalent,
for some prime number $p$, to the $p$-adic valuation.

Let
\begin{equation}
\metp(x,y)\defi |x-y|_p, \quad x,y\in\mathbb{Q},
\end{equation}
be the metric induced by $|\cdot|_p$.  Obviously, the strong triangle inequality entails that $\metp$ satisfies
\begin{equation}
\metp(x,y)\leq \max\big\{\metp(x,z),\metp(z,y)\big\},\quad x,y,z\in\mathbb{Q}.
\end{equation}
In the mathematical literature, one refers to a space endowed with such a metric as an \textit{ultrametric space}.
Although the metric $\metp$, as well as the $p$-adic absolute value, differs from the standard metric on $\mathbb{Q}$
essentially for the ultrametric inequality, the consequences of this fact are noteworthy, e.g., from a topological point of
view~\cite{robert2013course,vladimirov1994p}.

For every prime number $p$, by means of a standard procedure~\cite{Cassels,gouvea1997p,theclassfields2007}, the $p$-adic numbers
$\mathbb{Q}_p$ can be defined as the field completion of $\mathbb{Q}$ w.r.t.\ the metric $\metp$, and then $\mathbb{Q}$
can be regarded as a dense subfield of the complete field $\Qp$. We will denote by $\Qpa\equiv\Qp\setminus\{0\}$ the multiplicative group
of $\mathbb{Q}_p$.

\begin{theorem}[\cite{theclassfields2007,robert2013course,vladimirov1994p,Anashin,folland2016course}]
Every $x\in\Qpa$ admits a unique decomposition as a convergent series of the form
\begin{equation}
x=\sum_{i=0}^{\infty}x_i\,p^{i+k}=p^k(x_0+x_1\,p+x_2\,p^2+\cdots),
\quad k\in\mathbb{Z},\ x_i\in\{0,1,...,p-1\},\ x_0\neq 0,
\end{equation}
and, conversely, every series of this form converges to some non-zero element of $\Qp$.
The continuous extension of the $p$-adic absolute value $|\cdot|_p$ on $\mathbb{Q}$ to $\Qp$ ---
extension which we still denote by the same symbol --- is an ultrametric valuation on $\Qp$.
Explicitly, we have:
\begin{equation}
|x|_p=\big|{\textstyle\sum_{i=0}^{\infty}}x_i\,p^{i+k}\big|_p= p^{-k}, \quad
\forall x\in\Qpa.
\end{equation}
\end{theorem}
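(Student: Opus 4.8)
The plan is to establish, in order: (i) convergence of any series of the stated form together with the explicit value formula; (ii) existence of such an expansion for every $x\in\Qpa$; (iii) its uniqueness; and (iv) that the extended $|\cdot|_p$ is an ultrametric valuation. For (i), since $\Qp$ is complete and $\metp$ is ultrametric, a series $\sum_i a_i$ converges whenever $|a_i|_p\to 0$, and here $|x_i\,p^{i+k}|_p\le p^{-(i+k)}\to 0$. For the value: $x_0\in\{1,\dots,p-1\}$ is not divisible by $p$, so $|x_0\,p^k|_p=p^{-k}$, whereas $|x_i\,p^{i+k}|_p\le p^{-(i+k)}<p^{-k}$ for every $i\ge 1$. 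By the ultrametric ``strongest term wins'' principle (in an ultrametric space every triangle is isosceles, so a sum in which one term strictly dominates the rest inherits the absolute value of that term), the series sums to an element of absolute value exactly $p^{-k}$; in particular it is nonzero and $|x|_p=p^{-k}$.

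For (ii), let $x\in\Qpa$. The extended valuation still takes values in $\{p^k:k\in\mathbb{Z}\}$ on $\Qpa$, because it does so on $\mathbb{Q}^{\ast}$ and, in an ultrametric space, a Cauchy sequence with nonzero limit is eventually constant in absolute value; hence $|x|_p=p^{-k}$ for a unique $k\in\mathbb{Z}$, and $u\defi p^{-k}x$ satisfies $|u|_p=1$, i.e.\ $u\in\pint$. Using the density of $\mathbb{Q}$ in $\Qp$ (and the invertibility modulo $p^n$ of denominators prime to $p$), for each $n\ge 1$ one produces an integer $a_n$ with $0\le a_n<p^n$ and $|u-a_n|_p\le p^{-n}$. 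These integers are automatically compatible, $a_{n+1}\equiv a_n\pmod{p^n}$ (since $|a_{n+1}-a_n|_p\le p^{-n}$), so the base-$p$ expansions $a_n=\sum_{i=0}^{n-1}x_i\,p^i$ have digits $x_i\in\{0,\dots,p-1\}$ independent of $n$; because $|u|_p=1$ forces $a_1\not\equiv 0\pmod p$, we get $x_0\neq 0$. Finally $a_n\to u$ (as $|u-a_n|_p\le p^{-n}\to 0$), so $x=p^k u=\sum_{i=0}^{\infty}x_i\,p^{i+k}$.

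For (iii), if $\sum_i x_i\,p^{i+k}=\sum_i x_i'\,p^{i+k'}$ both equal $x\in\Qpa$, part (i) gives $p^{-k}=|x|_p=p^{-k'}$, so $k=k'$; dividing by $p^k$ yields $\sum_i(x_i-x_i')\,p^i=0$ with coefficients in $\{-(p-1),\dots,p-1\}$. Were the digit strings distinct, pick the least $j$ with $x_j\neq x_j'$; then $(x_j-x_j')\,p^j=-\sum_{i>j}(x_i-x_i')\,p^i$, but the left-hand side has absolute value $p^{-j}$ (as $0<|x_j-x_j'|\le p-1$ is prime to $p$) while the right-hand side has absolute value $\le p^{-(j+1)}$ --- a contradiction. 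For (iv): $\Qp$ is by construction the completion of $(\mathbb{Q},\metp)$, so $x\mapsto\metp(x,0)$ extends continuously and is taken as $|\cdot|_p$ on $\Qp$; the relations $|xy|_p=|x|_p|y|_p$ and $|x+y|_p\le\max\{|x|_p,|y|_p\}$ hold on the dense subfield $\mathbb{Q}$, and, addition, multiplication and $\max$ being continuous, they persist on all of $\Qp$, while strict positivity on $\Qpa$ was shown in (i).

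The step I expect to be delicate is the digit extraction in (ii): it must not presuppose the very expansion being constructed. The safe route, taken above, is to invoke only the density of $\mathbb{Q}$ in $\Qp$ supplied by the completion, together with the elementary fact that an integer prime to $p$ is invertible modulo $p^n$, and to read off $a_n$ (hence the $x_i$) from a rational approximant of $u$ with denominator prime to $p$. A slightly heavier but equivalent alternative is to identify $\pint$ with the inverse limit of the rings $\mathbb{Z}/p^n\mathbb{Z}$ and take the partial sums to be the canonical representatives of the compatible residue system attached to $u$; the remaining assertions are then routine.
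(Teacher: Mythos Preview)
Your proof is correct and complete. Note, however, that the paper does not actually supply a proof of this theorem: it is stated as a standard result with references to \cite{theclassfields2007,robert2013course,vladimirov1994p,Anashin,folland2016course} and no argument is given in the text. Your write-up therefore goes well beyond what the paper itself does, and follows the standard approach one finds in those references (density of $\mathbb{Q}$, digit extraction via successive residues, uniqueness by the dominant-term principle, and continuity to extend the valuation).
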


The field of $p$-adic numbers $\Qp$ --- being endowed with an ultrametric valuation --- is called \emph{ultrametric}
or \emph{non-Archimedean}. The topological peculiarities associated with ultrametricity
justify the use of $p$-adic numbers when describing physics on length scales comparable to Planck's length
$\ple$~\cite{peres1960quantum,vladimirov1989p,vladimirov1994p}.

The so-called \emph{valuation ring} --- w.r.t.\ $|\cdot|_p$ --- of the non-Archimedean field $\Qp$ is the ring
of \emph{$p$-adic integers} $\pint\defi\{\lambda\in\Qp \sep |\lambda|_p\le 1\}=\{\sum_{i=0}^{+\infty}a_ip^i \sep 0\le a_i<p\}$,
i.e., a subring of $\Qp$~\cite{gouvea1997p,theclassfields2007}. The set $\mide\defi\{\lambda\in\Qp \sep |\lambda|_p< 1\}=p\hspace{0.4mm}\pint\subset\pint$
is a maximal ideal in $\pint$ (actually, its unique maximal ideal) --- the so-called \emph{valuation ideal} of $\Qp$ w.r.t.\ $|\cdot|_p$ ---
and every element of $\pint\setminus\mide$ is invertible. The quotient $\pint/\mide$ is called the \emph{residue class field}
of $\Qp$ w.r.t.\ $|\cdot|_p$ (recall that the quotient of a ring by a maximal ideal is always a field); specifically,
$\pint/\mide=\pint/p\hspace{0.4mm}\pint$ is isomorphic to the finite field $\rcf=\mathbb{Z}/p\hspace{0.4mm}\mathbb{Z}$.

We will not discuss the properties of $\mathbb{Q}_p$ any further. A detailed account of the topological properties
of this field can be found, for instance, in~\cite{robert2013course,vladimirov1994p}, while applications to functional analysis
are discussed in~\cite{narici71,rooij1978non,bosch1984non,Perez-Garcia,diagana2016non}. In what follows, we will focus on the description of the
\emph{quadratic extensions} of $\mathbb{Q}_p$.

Every field with characteristic $0$ admits a quadratic extension obtained by adjoining the square root
of a non-quadratic element~\cite{Cassels,gouvea1997p,robert2013course}. Therefore, to classify all the (inequivalent) quadratic extensions
of $\mathbb{Q}_p$, we first need to characterize the non-quadratic elements of this field. We start with the following fact:

\begin{proposition}[\cite{vladimirov1994p}]\label{squareinratp}
A $p$-adic number $x=\sum_{i=0}^{\infty}x_i\,p^{i+k}\in\mathbb{Q}_p^*$, $k\in\mathbb{Z}$, $x_i\in\{0,1,...,p-1\}$, $x_0\neq 0$,
is a quadratic element --- i.e., $x\in(\Qpa)^2$ --- iff the following conditions are satisfied:
\begin{itemize}

\item $k$ is even;

\item if $p\neq2$, the equation $j^2\equiv x_0\pmod p$ admits a solution $j\in\mathbb{Z}$ --- i.e., $x_0\in\{1,...,p-1\}$
is a quadratic residue modulo $p$ --- whereas, if $p=2$, $x_1=x_2=0$.

\end{itemize}
\end{proposition}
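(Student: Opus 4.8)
The plan is to factor out the appropriate power of $p$, reduce the problem to the question of which \emph{units} of $\pint$ are squares, and then apply Hensel's lemma; this is essentially the classical characterization of squares in $\Qp$. First write $x=p^k u$ with $u\defi\sum_{i=0}^{\infty}x_i\,p^i\in\pint$, so that $|u|_p=1$ since $x_0\neq 0$. Because the valuation on $\Qp$ is $\mathbb{Z}$-valued (by the series expansion) and $|y^2|_p=|y|_p^2$, every square has even valuation; hence $|x|_p=p^{-k}$ forces $k$ to be even, which is the first asserted condition and is necessary. Conversely, if $k=2\ell$ then $x=(p^\ell)^2 u$, so $x\in(\Qpa)^2$ iff $u\in(\pint^\times)^2$. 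It therefore remains to decide when the unit $u$ --- whose reduction modulo $p$ equals $x_0$, and whose first two $p$-adic digits after the leading one are $x_1,x_2$ --- is a square.

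For $p$ odd I would apply the elementary Hensel's lemma to $f(T)=T^2-u$. If $x_0$ is a quadratic residue mod $p$, choose $j\in\{1,\dots,p-1\}$ with $j^2\equiv x_0\equiv u\pmod p$; then $f(j)\equiv 0\pmod p$ while $f'(j)=2j\not\equiv 0\pmod p$ (using $p\neq 2$ and $j\not\equiv 0$), so Hensel yields a root of $f$ in $\pint$, i.e.\ $u$ --- and hence $x$ --- is a square. Conversely, if $u=v^2$ with $v\in\pint^\times$, reducing modulo $p$ gives $x_0\equiv(\bar v)^2\pmod p$, a quadratic residue. This disposes of the odd case.

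For $p=2$ the leading digit is forced to be $x_0=1$, so $u=1+2x_1+4x_2+\cdots$, and the claim reduces to: $u\in(\pint^\times)^2$ iff $x_1=x_2=0$, i.e.\ iff $u\equiv 1\pmod 8$ (the equivalence of the digit condition and the congruence being an immediate check since $x_1,x_2\in\{0,1\}$). Necessity is elementary: an odd $v$ satisfies $v\equiv 1,3,5,7\pmod 8$, hence $v^2\equiv 1\pmod 8$ in every case, so $u=v^2\equiv 1\pmod 8$. For sufficiency the naive Hensel's lemma is useless, since $f'(T)=2T$ vanishes modulo $2$; instead I would use the \emph{refined} form of Hensel's lemma --- if $a\in\pint$ satisfies $|f(a)|_p<|f'(a)|_p^2$ then $f$ has a root in $\pint$ --- applied at $a=1$: when $u\equiv 1\pmod 8$ we have $|f(1)|_2=|1-u|_2\le 2^{-3}<2^{-2}=|2|_2^2=|f'(1)|_2^2$, so $u$ is a square. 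The main obstacle is exactly this dyadic subtlety: one must remember to invoke the sharper Newton-type version of Hensel's lemma rather than the textbook one, and to keep careful track of the passage between the congruence $u\equiv 1\pmod 8$ and the digit conditions $x_1=x_2=0$.
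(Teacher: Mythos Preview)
Your proof is correct and follows the standard route via Hensel's lemma; the paper itself does not prove this proposition but merely cites it from~\cite{vladimirov1994p}, so there is no in-paper argument to compare against. Your handling of the dyadic case --- invoking the refined Newton-form of Hensel's lemma to overcome the vanishing of $f'$ modulo $2$ --- is exactly the right move.
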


Let us denote by $\eta\in\mathbb{Q}_{p}^*$ a normalized element which is not a square, i.e., any $p$-adic number $\eta$ such that
$|\eta|_p=1$ and $\eta\notin(\mathbb{Q}_{p}^*)^2$. The first condition implies that $\eta=\eta_0+\eta_1 \,p^1+\eta_2\,p^2+\cdots\neq p$.
For $p\neq 2$ we can choose, in particular, any of the --- exactly, $(p-1)/2$ --- quadratic non-residues ($\md p$) $\eta$ contained in
$\{2,\cdots,p-1\}$; e.g., for $p\equiv 3\pmod 8$, or for $p\equiv 5\pmod 8$, one can take $\eta=2$~\cite{vladimirov1994p}.
For $p=2$, one must take $\eta$ of the form $\eta=1+\eta_1 2+\eta_2 2^2+\cdots$, with $\eta_1,\eta_2\in\{0,1\}$ and $\eta_1\eta_2\neq 0$.
At this point, by Proposition~\ref{squareinratp}, it is clear that $p$ and $\eta p$ do not belong to $(\mathbb{Q}_{p}^*)^2$, as well.
In fact, $p=1\, p^1$ entails that $k=1$ does not satisfy the first condition therein. Similarly, for $\eta p=\eta_0 \,p^1+\eta_1 \,p^2+\cdots$.

Therefore, we have the following consequence of Proposition~\ref{squareinratp}:

\begin{corollary}\label{corrs2}
The following facts hold true:
\begin{itemize}

\item[\tt{(a)}] for $p\neq 2$, there is some $\eta\in\mathbb{Q}_p$ such that $\eta\not\in(\mathbb{Q}_{p}^{*})^2$ and $|\eta|_p=1$,
and $\eta p$ and $p$ are not squares too;

\item[\tt{(b)}]  every $\mu\in\{2,\eta,2\eta\}$ --- with $\eta=3,5,7$ --- is not a quadratic element of $\mathbb{Q}_2$; namely,
every $\mu\in\{2,3,5,6,7,10,14\}$ is not a square in $\mathbb{Q}_2$.

\end{itemize}
\end{corollary}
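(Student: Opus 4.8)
The plan is to deduce both statements directly from Proposition~\ref{squareinratp}: for each candidate element one only needs to write down its normalized $p$-adic expansion $x=\sum_{i\ge0}x_i\,p^{i+k}$ (with $x_0\neq0$) and verify that one of the two listed conditions fails. There is no real obstacle here --- the argument is essentially bookkeeping --- and the only points deserving attention are using the \emph{normalized} expansion (so that $x_0\neq0$ and $k$ is read off correctly) and handling the separate $p=2$ clause of the proposition.

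For part \tt{(a)}, fix an odd prime $p$. Any $p$-adic unit $\eta$, i.e.\ $|\eta|_p=1$, has $k=0$, which is even; hence by Proposition~\ref{squareinratp} such an $\eta$ is a non-square exactly when its leading digit $\eta_0\in\{1,\dots,p-1\}$ is a quadratic non-residue modulo $p$. Since $p$ is odd there are precisely $(p-1)/2\ge1$ quadratic non-residues modulo $p$, so I take $\eta_0$ to be one of them and complete it arbitrarily to an $\eta$ with $|\eta|_p=1$ and $\eta\notin(\Qpa)^2$. It then remains to check $p$ and $\eta p$: both satisfy $|\cdot|_p=p^{-1}$, so their normalized expansions have $k=1$, which is odd, and already the first bullet of Proposition~\ref{squareinratp} fails; thus neither is a square. (This is exactly the reasoning already outlined in the paragraph preceding the corollary.)

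For part \tt{(b)} I work in $\mathbb{Q}_2$ and use the $p=2$ clause of Proposition~\ref{squareinratp}: a $2$-adic number is a square iff $k$ is even and $x_1=x_2=0$. For $\mu\in\{2,6,10,14\}$ one has $|\mu|_2=2^{-1}$, so $k=1$ is odd and $\mu$ is not a square. For the remaining units $\mu\in\{3,5,7\}$ one has $k=0$, and from the binary expansions $3=1+1\cdot2$, $5=1+1\cdot2^2$, $7=1+1\cdot2+1\cdot2^2$ one reads off that in each case at least one of $x_1,x_2$ is nonzero, so again $\mu\notin(\mathbb{Q}_2^\ast)^2$.

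To finish, observe that for $\eta=3,5,7$ the sets $\{2,\eta,2\eta\}$ are $\{2,3,6\}$, $\{2,5,10\}$ and $\{2,7,14\}$, whose union is $\{2,3,5,6,7,10,14\}$; this yields the ``namely'' clause. As anticipated, the only thing one must be careful about is normalizing each element (nonzero leading digit) before extracting $k$, $x_1$ and $x_2$ --- nothing else in the proof is nontrivial.
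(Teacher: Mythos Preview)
Your proof is correct and follows essentially the same approach as the paper: the reasoning for part \texttt{(a)} is precisely the argument given in the paragraph immediately preceding the corollary (existence of a quadratic non-residue mod $p$ for the unit $\eta$, and $k=1$ odd for $p$ and $\eta p$), and part \texttt{(b)} is the straightforward case-by-case verification via Proposition~\ref{squareinratp} that the paper leaves implicit. The paper does not provide a separate formal proof of this corollary beyond that preceding discussion, so your write-up is if anything more explicit than the original.
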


\begin{remark} \label{squacla}
The quotient group $\Qpa/(\Qpa)^2$ consists of four `square classes', for $p\neq 2$, with representatives $\{1,\eta,\,p,\,\eta p\}$,
where $\eta$ is any normalized non-quadratic element of $\Qp$; whereas, for $p=2$, it consists of eight square classes, with
representatives $\{1,2,3,5,6,7,10,14\}$ (or, equivalently, $\{\pm 1,\pm 2,\pm 3,\pm 6\}$)~\cite{vladimirov1994p,shimura2010arithmetic}.
Therefore, Corollary~\ref{corrs2} provides a classification of the square classes of $\Qpa$ different from $(\Qpa)^2$.
\end{remark}

We can now define and characterize the inequivalent quadratic extensions of $\mathbb{Q}_p$. We do not give a formal treatment here,
because the following definition closely mimics the definition of the field of complex numbers $\mathbb{C}$, regarded as a quadratic extension
of $\mathbb{R}$. For a formal approach see, e.g., Chapt.~{2} of~\cite{robert2013course} and Chapt.~{13} of~\cite{Dummit}.

\begin{definition}
Let $\mu\in\mathbb{Q}_p$ be a non-quadratic $p$-adic number, i.e., $\mu\notin(\mathbb{Q}_{p}^*)^2$. Introducing the symbol $\sqrt{\mu}$,
the quadratic extension $\Q$ of $\mathbb{Q}_p$ associated with $\mu$ is defined as the set
\begin{equation}
\Q\defi\big\{x+y\sqrt{\mu}\sep x,y\in\mathbb{Q}_p\big\}.
\end{equation}
Therefore, $\Q$ is a vector space over $\Qp$ of dimension $\big[\Q:\Qp\big]=2$.
\end{definition}

The elements of $\Q$ can be added and multiplied following the usual rules, with the additional convention that $(\sqrt{\mu})^2=\mu$.
Moreover, one observes that
\begin{equation}
x+y\sqrt{\mu}=0\quad\iff\quad x=y=0,
\end{equation}
and every non-null element $x+y\sqrt{\mu}$ admits a unique inverse, which is given by~\cite{vladimirov1994p}
\begin{equation}
(x+y\sqrt{\mu})^{-1}=\frac{x}{x^2-\mu y^2}-\frac{y}{x^2-\mu y^2}\hspace{0.5mm}\sqrt{\mu},
\end{equation}
where the denominator $x^2-\mu y^2$ is not zero, otherwise $\mu$ should be a square in $\mathbb{Q}_p$.

As in the complex case, on $\Q$ it is possible to define a \emph{conjugation}, which is given by
\begin{equation}\label{conj}
z=x+y\sqrt{\mu}\mapsto\bar{z}=x-y\sqrt{\mu},
\end{equation}
so that
\begin{equation}
z\overline{z}=x^2-\mu y^2\in\mathbb{Q}_p.
\end{equation}

For every $z=x+y\sqrt{\mu}\in\Q$, we call
\begin{equation}
x=\scj(z)\equiv(z+\overline{z})/2=\scj(\overline{z}) \quad \mbox{and} \quad
y=\acj(z)\equiv(z-\overline{z})/2\sqrt{\mu}=-\acj(\overline{z})
\end{equation}
the \emph{selfconjugate} and the \emph{anticonjugate} coordinate of $z$, respectively.

The $p$-adic absolute value $|\cdot|_p$ can be extended (in a unique way) to a valuation $|\cdot|_{p,\hspace{0.5mm}\mu}$ on $\Q$,
that is given explicitly by~\cite{Cassels,albeverio1996p,Anashin}
\begin{equation} \label{extval}
|z|_{p,\hspace{0.5mm}\mu}=\sqrt{|z\overline{z}|_p},
\end{equation}
and that, for the sake of simplicity, henceforth we will simply denote by $|\cdot|$.

Recalling Corollary~\ref{corrs2} and Remark~\ref{squacla}, it can be shown that there is a natural one-to-one correspondence between
the square classes of $\Qpa$ --- except the square class containing the identity, i.e., $(\Qpa)^2$ --- and the quadratic extensions of
$\Qp$ (up to isomorphisms):

\begin{proposition}[\cite{vladimirov1994p,shimura2010arithmetic}]\label{corfinal}

The quadratic extensions of $\mathbb{Q}_p$ are classified as follows:
\begin{itemize}

\item[\tt{(a)}] if $p\neq 2$, there are precisely three non-isomorphic quadratic extensions of $\mathbb{Q}_p$,
i.e., $\mathbb{Q}_p(\sqrt{\mu})$, with $\mu\in\{\eta,\,p,\,\eta p\}$, for any $\eta\notin(\mathbb{Q}_p^*)^2$ such that $|\eta|_p=1$;

\item[\tt{(b)}] if $p=2$,  there are precisely seven non-isomorphic quadratic extensions of $\mathbb{Q}_p$, i.e.,
$\mathbb{Q}_p(\sqrt{\mu})$, with $\mu\in\{2,\eta,2\eta\}$ --- for $\eta=3,5,7$ --- thus, with
$\mu=2,3,5,6,7,10,14$.

\end{itemize}
\end{proposition}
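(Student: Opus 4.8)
The plan is to deduce the classification purely from the structure of the square-class group $\Qpa/(\Qpa)^2$, which has already been made explicit in Remark~\ref{squacla}. Since $\Qp$ has characteristic $0$, the element $2$ is invertible, so the standard theory of quadratic extensions in characteristic $\neq 2$ applies. First I would recall that every degree-$2$ extension $L/\Qp$ has the form $\Qp(\sqrt{a})$ for some non-quadratic $a\in\Qpa$: choosing $\alpha\in L\setminus\Qp$, one has $\alpha^2 + b\hspace{0.4mm}\alpha + c = 0$ for suitable $b,c\in\Qp$, and completing the square gives $L = \Qp(\sqrt{d})$ with $d = (b^2 - 4c)/4\notin(\Qpa)^2$ (otherwise $\alpha\in\Qp$).

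The heart of the argument is the isomorphism criterion: for non-quadratic $a,b\in\Qpa$, one has $\Qp(\sqrt{a})\cong\Qp(\sqrt{b})$ if and only if $a$ and $b$ represent the same class in $\Qpa/(\Qpa)^2$. The ``if'' direction is immediate, since if $a = b\hspace{0.4mm}c^2$ with $c\in\Qpa$, the assignment $x + y\sqrt{a}\mapsto x + y\hspace{0.4mm}c\sqrt{b}$ defines a $\Qp$-algebra isomorphism. For the converse, I would observe that any field isomorphism $\Phi$ between two such quadratic extensions must restrict to a field automorphism of $\Qp$ — each extension contains $\Qp$ as its unique subfield of index $2$ — and $\Qp$ admits only the identity automorphism; hence $\Phi$ fixes $\Qp$. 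Writing $\Phi(\sqrt{a}) = u + v\sqrt{b}$ with $u,v\in\Qp$, squaring gives $a = u^2 + v^2 b + 2uv\sqrt{b}\in\Qp$, so $2uv = 0$; since $\Phi(\sqrt{a})\notin\Qp$ forces $v\neq 0$, we get $u = 0$ and $a = v^2 b$, i.e.\ $a$ and $b$ lie in the same square class.

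Putting these together, the map sending a nontrivial square class $a\hspace{0.4mm}(\Qpa)^2$ to the extension $\Qp(\sqrt{a})$ is a well-defined bijection from $\big(\Qpa/(\Qpa)^2\big)\setminus\{(\Qpa)^2\}$ onto the set of isomorphism classes of quadratic extensions of $\Qp$; the class $(\Qpa)^2$ is excluded because $a\in(\Qpa)^2$ gives $\sqrt{a}\in\Qp$, hence $\Qp$ itself and not a proper extension. It then remains only to count: by Remark~\ref{squacla}, $\Qpa/(\Qpa)^2$ has order $4$ when $p\neq 2$, with representatives $\{1,\eta,p,\eta p\}$ (for any normalized non-square $\eta$), and order $8$ when $p = 2$, with representatives $\{1,2,3,5,6,7,10,14\}$; discarding the identity class yields exactly the three, respectively seven, values of $\mu$ appearing in the statement, whose non-quadraticity is guaranteed by Corollary~\ref{corrs2}.

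I do not expect a serious obstacle here, precisely because Remark~\ref{squacla} already supplies the nontrivial input — the order and representatives of the square-class group. The one point that deserves a word of care is the reduction of an abstract field isomorphism to a $\Qp$-algebra isomorphism in the criterion above, which rests on the rigidity of $\Qp$ (absence of nontrivial field automorphisms); if one prefers to sidestep this, one may work throughout with $\Qp$-algebra isomorphisms from the outset, in which case the bijection is tautologically correct, since the only potential ambiguity — an isomorphism failing to fix the base field — has been removed by definition.
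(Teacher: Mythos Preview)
Your argument is correct and follows exactly the route the paper indicates: the sentence immediately preceding the proposition already states that the result is obtained via the natural bijection between nontrivial square classes of $\Qpa$ (Remark~\ref{squacla}) and isomorphism classes of quadratic extensions, and the paper then simply cites the result rather than spelling out the details. You have supplied precisely those details --- the completing-the-square reduction, the square-class isomorphism criterion, and the count --- so there is nothing to add.
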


\begin{remark} \label{remrami}
Putting $\Qa\equiv\Q\setminus\{0\}$, the set $|\Qa|\defi\{|\alpha|\sep\alpha\in\Qa\}$ is a \emph{discrete subgroup} of
the multiplicative group of all positive reals, called the \emph{valuation group} of $\Qp$. By relation~\eqref{extval}, it is clear that
$|\Qa|\subset\{p^{k/2}\}_{k\in\mathbb{Z}}$. If we have a \emph{strict} containment --- i.e., if
$|\Qa|=\{p^{k}\}_{k\in\mathbb{Z}}=|\Qp|\setminus\{0\}$ --- then $\Q$ is said to be an \emph{unramified extension} of
$\Qp$; otherwise (i.e., if $|\Qa|=\{p^{k/2}\}_{k\in\mathbb{Z}}$), the quadratic extension $\Q$ is called (totally) \emph{ramified}.
It can be shown that, for every prime number $p$, there is --- up to isomorphisms --- exactly one unramified quadratic extension of $\Qp$.
E.g., $\mathbb{Q}_2(\sqrt{5})$ is the only unramified quadratic extension of $\mathbb{Q}_2$ (up to isomorphisms).
See Chapt.~{7} of~\cite{Cassels}, Chapt.~{5} of~\cite{gouvea1997p} and Chapt.~{2} of~\cite{robert2013course}.
\end{remark}


\section{$p$-adic Banach and Hilbert spaces}
\label{sec3}

Until today, there seems to be no universally accepted model of a Hilbert space over the field of $p$-adic
numbers or its quadratic extensions~\cite{albeverio1996p,khrennikov1990mathematical,kalisch1947p,aguayo2007non,albeverio1999non}.
Having in mind applications to quantum mechanics, in this section we introduce notions of p-adic Banach and Hilbert spaces
that are suitable for our purposes.


\subsection{$p$-adic Banach spaces}
\label{subsec3.1}

We start with the following:

\begin{definition}
 By a \emph{normed vector space} over $\Q$ we mean a pair $(X,\|\cdot\|)$, where $X$ is a vector space over $\Q$ and $\|\cdot\|$
 is an \emph{ultrametric norm} defined on $X$; i.e., a map $\|\cdot\|: X\rightarrow \mathbb{R}^+$ such that
\begin{itemize}
\item $\|x\|=0$ iff $x =0$;
\item $\|\alpha x\|=|\alpha|\;\|x\|$, for  all $\alpha\in\Q$ and all $x\in X$;
\item $\|x+y\|\leq \max(\|x\|,\|y\|)$, for all $x,y\in X$.
\end{itemize}
\end{definition}

\begin{remark}\label{remnor}
In the literature~\cite{robert2013course,khrennikov1990mathematical,narici71,diagana2016non,albeverio1999non},
the pair $(X,\|\cdot\|)$ would be called an \emph{ultrametric} (or non-Archimedean) normed space. Putting
\begin{equation}
\|X\|\defi\{\|x\|\sep x\in X\}\quad\mbox{and}\quad |\Q|\defi\{|\alpha|\sep \alpha\in\Q\},
\end{equation}
the sets $\|X\|$ and $|\Q|$ are not, in general, related by any inclusion relation. However, the inclusion $\|X\|\subset|\Q|$
entails the existence of unit vectors in $X$ and also implies the reverse inclusion, in such a way that, actually, $\|X\|=|\Q|$.
In the case where this condition is satisfied, $\|X\|\setminus\{0\}$ coincides with the valuation group $|\Q^\ast|$ (recall
Remark~\ref{remrami}). Since the valuation group $|\Q^\ast|$ is discrete, then, by Theorem~{3} in~\cite{natarajan2019sequence},
the field $\Q$ is \emph{spherically complete}, i.e., every nest of closed balls in $\Q$ has a non-empty intersection.
\end{remark}

\begin{remark}\label{resepa}
In the following, we will mainly deal with \emph{separable} ($p$-adic) normed and Banach spaces, and we will consider
separable ($p$-adic) Hilbert spaces only. In this regard, note that, since $\Q$ is separable, we do not need to use the
--- in this case equivalent --- notion of a normed space \emph{of countable type}~\cite{rooij1978non,Perez-Garcia}.
We will consider some non-separable ultrametric Banach spaces in Subsection~\ref{linops}.
\end{remark}

\begin{remark}\label{redomvec}
One can easily check that, given $x,y\in X$, $\|x\|>\|y\|\implies\|x+y\|=\|x\|$~\cite{narici71,natarajan2019sequence}.
\end{remark}

Any $p$-adic normed space is a (ultra-)metric space. Thus, it can be completed, so resulting into a \emph{$p$-adic Banach space}
(i.e., an ultrametric Banach space over $\Q$).

\begin{proposition}[\cite{schikhof2007ultrametric,rooij1978non,diagana2016non}]\label{sumlemma}
Let $(X,\|\cdot\|)$ be a $p$-adic Banach space. A series $\sum_ix_i$ in $X$ is convergent if and only if $\lim_i x_i=0$.
In particular, (regarding $\Q$ as a complete $p$-adic normed space) a series $\sum_i x_i$ in $\Q$ converges if and only if $\lim_{i}x_i=0$.
\end{proposition}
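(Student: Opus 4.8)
The plan is to prove the two implications separately; the only non-trivial content lies in the ``if'' direction, where the ultrametric inequality does all the work.

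First I would dispatch the ``only if'' direction, which in fact holds over any valued field and uses nothing special about $\Q$: if the partial sums $S_n\defi\sum_{i=1}^n x_i$ converge to some $S\in X$, then by continuity of subtraction $x_n=S_n-S_{n-1}\to S-S=0$ as $n\to\infty$.

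For the converse, suppose $\lim_i x_i=0$. Since $(X,\|\cdot\|)$ is complete, it suffices to show that $(S_n)$ is a Cauchy sequence. For $m>n$ we have $S_m-S_n=\sum_{i=n+1}^{m}x_i$, a \emph{finite} sum, so iterating the strong triangle inequality $\|a+b\|\le\max(\|a\|,\|b\|)$ over its $m-n$ terms yields
\begin{equation*}
\|S_m-S_n\|\ \le\ \max_{n+1\le i\le m}\|x_i\|.
\end{equation*}
Given $\varepsilon>0$, choose $N$ with $\|x_i\|<\varepsilon$ for all $i>N$ (possible since $\|x_i\|\to 0$); then $\|S_m-S_n\|<\varepsilon$ whenever $m>n\ge N$. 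Hence $(S_n)$ is Cauchy and, by completeness of $X$, convergent — i.e.\ $\sum_i x_i$ converges. The ``in particular'' clause is then immediate by taking $X=\Q$, which is a complete $p$-adic normed space over itself (being a finite extension of the complete field $\Qp$, with the extended valuation $|\cdot|$), so the general statement applies verbatim.

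There is no real obstacle here: the proof is the standard one for ultrametric Banach spaces. The one step that genuinely uses non-Archimedicity — and the only place requiring a word of justification — is the bound $\|\sum_{i=n+1}^m x_i\|\le\max_i\|x_i\|$, which fails under the ordinary triangle inequality but follows by an easy induction on the number of summands from the strong triangle inequality. Everything else is continuity and completeness.
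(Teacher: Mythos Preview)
Your proof is correct and is exactly the standard argument. Note that the paper does not actually give its own proof of this proposition: it is stated with citations to \cite{schikhof2007ultrametric,rooij1978non,diagana2016non} and left without a proof environment. The argument you wrote --- the Cauchy criterion combined with the ultrametric bound $\|S_m-S_n\|\le\max_{n+1\le i\le m}\|x_i\|$ --- is precisely what one finds in those references, so there is nothing to compare.
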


We now consider a class of $p$-adic Banach spaces that will be central for our purposes.

Let $I$ be a \emph{countable} index set (in the case where this set is finite, we will put $I=\{1,2,\ldots,n\}$, for some $n\in\mathbb{N}$;
otherwise we put $I=\mathbb{N}$), and let $X$ be a $p$-adic Banach space. We introduce the space $c_0(I,X)$ of \emph{zero-convergent}
--- in the case where $I=\mathbb{N}$ --- sequences in $X$:
\begin{equation}
c_0(I,X)\defi\big\{x=\{x_i\}_{i\in I} \sep x_i\in X,\;\lim_{i}\|x_i\|= 0\big\}.
\end{equation}
In particular, with $X=\Q$ (regarded as a one-dimensional vector space, endowed with the norm $|\cdot|$), we obtain the sequence space
\begin{equation}\label{eqt.15}
c_0(I,\Q)\defi\big\{x=\{x_i\}_{i\in I}\sep x_i\in\Q,\;\lim_i |x_i|=0\big\}.
\end{equation}

\begin{remark}
In order to include the case where $I$ is finite, here and in the following we set: $\lim_i x_i\equiv 0$ and $\lim_i\|x_i\|\equiv 0$, for $I$ finite.
\end{remark}

The space $c_0(I,X)$, endowed with the \emph{sup-norm} $\|\cdot\|_{\infty}$ defined by
\begin{equation}
\|x\|_\infty\defi\sup_{i\in I}\|x_i\|=\max_{i\in I}\|x_i\|,\quad \forall x\in c_0(I,X),
\end{equation}
is a (ultrametric) normed space over $\Q$. Moreover, it is possible to prove that $c_0(I,X)$ is complete  w.r.t.\ the sup-norm, and,
thus, the pair $(c_0(I,X), \|\cdot\|_\infty)$ is a $p$-adic Banach space~\cite{robert2013course}.

\begin{remark}\label{rem.3.6}
The $p$-adic Banach space $(c_0(I,\Q),\|\cdot\|_\infty)$ is a particular case of the ultrametric Banach space $c_0(I,\mathbb{K})$
--- where  $\mathbb{K}$ is a complete, non-trivially valued, non-Archimedean
field~\cite{schikhof2007ultrametric,albeverio1999non,natarajan2019sequence,natarajan2014introduction} and
\begin{equation}
c_0(I,\mathbb{K})\defi\big\{x=\{x_i\}_{i\in I}\sep x_i\in\mathbb{K},\;\lim_i |x_i|_\mathbb{K}=0\big\}
\end{equation}
--- endowed with the norm
\begin{equation}
\|x\|_{\infty}\defi\sup_{i\in I}|x_i|_{\mathbb{K}}=\max_{i\in I}|x_i|_\mathbb{K}.
\end{equation}
\end{remark}

Next, we introduce the notion of norm-orthogonal system of vectors in a $p$-adic normed
space~\cite{schikhof2007ultrametric,rooij1978non,Perez-Garcia,diagana2016non,narici2005non}:

\begin{definition}\label{def.3.6}
Let $(X,\|\cdot\|)$ be a $p$-adic normed space. Two vectors $x,y\in X$ are said to be (mutually) \emph{norm-orthogonal} if,
for every $\alpha\in\Q$, $\|x\|\leq\|x+\alpha y\|$, or, equivalently, if $\|\alpha x+\beta y\|=\max\{\|\alpha x\|,\|\beta y\|\}$,
for all $\alpha,\beta\in\Q$. More generally, a finite set $\{x_1,\dots,x_n\}$ in $X$ is said to be \emph{norm-orthogonal} if
\begin{equation}
\left\|\sum_{i=1}^n\alpha_i x_i\right\|=\max_i|\alpha_i|\,\|x_i\|,\;\;\text{for all}\;\; \{\alpha_1,\dots,\alpha_n\}\subset\Q.
\end{equation}
 An arbitrary set $\bas\subset X$ is called \emph{norm-orthogonal} if every finite subset of $\bas$ is. In particular,
 a norm-orthogonal set $\bas\subset X$ is said to be \emph{normal} if $\|x\|=1$, for all $x\in\bas$.
 \end{definition}

\begin{definition}\label{normbasis}
Let $(X,\|\cdot\|)$ be a (separable) $p$-adic Banach space. A countable subset $\bas$ of $X\setminus\{0\}$ is said to be a
\emph{norm-orthogonal (normal)} basis if
\begin{enumerate}[label=\tt{(B\arabic*)}]

\item $\bas$ is a norm-orthogonal (normal) set;

\item for each $x\in X$, there exists a map $c_x:\bas\rightarrow \Q$ such that
\begin{equation}
x=\sum_{b\in \bas}c_x(b)\, b.
\end{equation}

\end{enumerate}
\end{definition}

Note that a normal basis in $X$ is, in particular, a (normalized) Schauder basis~\cite{bosch1984non,narici2005non};
also see point~\ref{propiii} in the proposition below. We will denote such a basis by
$\boldsymbol{e}\equiv\{e_i\}_{i\in I}$.

\begin{proposition}[\cite{schikhof2007ultrametric}]\label{prop.3.29}
Let $(X,\|\cdot\|)$ be a $p$-adic Banach space, let $\{e_i\}_{i\in I}$ be a normal basis and let $x=\sum_{i\in I}\alpha_ie_i$,
with $\alpha_1,\alpha_2,\dots\in\Q$. Then, the following facts hold:
\begin{enumerate}[label=\rm{(\roman*)}]

\item in the case where $I=\mathbb{N}$, $\lim_i \alpha_i=0$;

\item $\|x\|=\max_{i\in I}|\alpha_i|$;

\item \label{propiii} if, for some $\lambda_1,\lambda_2,\dots\in\Q$,
$\sum_{i\in I}\lambda_ie_i=x$, then $\alpha_i=\lambda_i$, $\forall i\in I$;
namely the expansion of every vector in $X$ w.r.t.\ the basis $\{e_i\}_{i\in I}$
is unique.

\end{enumerate}
\end{proposition}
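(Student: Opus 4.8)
The plan is to prove the three claims in turn, using only the basic ultrametric machinery recalled so far: the strong triangle inequality, Proposition~\ref{sumlemma} (a series converges iff its terms tend to $0$), the fact (Remark~\ref{redomvec}) that $\|x\|>\|y\|$ forces $\|x+y\|=\|x\|$, and the defining property of a normal basis.

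First I would establish (i). Since $\{e_i\}_{i\in I}$ is a normal basis, by~\tt{(B2)} we have $x=\sum_{i\in I}\alpha_i e_i$ for the given scalars $\alpha_i\in\Q$; when $I=\mathbb{N}$ this is an honest convergent series in the $p$-adic Banach space $X$. By Proposition~\ref{sumlemma}, a series $\sum_i y_i$ in $X$ converges only if $\lim_i y_i=0$; applying this with $y_i=\alpha_i e_i$ and using $\|\alpha_i e_i\|=|\alpha_i|\,\|e_i\|=|\alpha_i|$ (normality), we get $\lim_i|\alpha_i|=0$, i.e.\ $\lim_i\alpha_i=0$. This is immediate and there is no real obstacle here.

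Next I would prove (ii), the norm formula $\|x\|=\max_{i\in I}|\alpha_i|$. The finite case is exactly the defining equation of a norm-orthogonal normal set in Definition~\ref{def.3.6}: for partial sums $s_n=\sum_{i=1}^n\alpha_i e_i$ one has $\|s_n\|=\max_{i\le n}|\alpha_i|\,\|e_i\|=\max_{i\le n}|\alpha_i|$. For $I=\mathbb{N}$ I would pass to the limit: by (i) the sequence $(|\alpha_i|)$ tends to $0$, so the supremum $M\defi\sup_i|\alpha_i|=\max_i|\alpha_i|$ is actually attained, say at index $i_0$; for $n\ge i_0$ the partial sum has norm $\|s_n\|=\max_{i\le n}|\alpha_i|=M$, hence by continuity of the norm $\|x\|=\lim_n\|s_n\|=M$. (Alternatively, and perhaps more cleanly, split $x=s_{i_0}+r$ with $\|r\|_{}\le\max_{i>i_0}|\alpha_i|<M=\|s_{i_0}\|$ when the maximum is strict, and invoke Remark~\ref{redomvec}; a small separate argument handles the case where the maximum is attained more than once, where one instead notes $\|s_n\|$ is eventually constant $=M$ directly.) The only mild subtlety is making sure the tail genuinely has norm $\le$ the tail-supremum, which follows from the ultrametric inequality applied to partial sums together with completeness.

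Finally, (iii), uniqueness of the expansion, is a consequence of (ii) applied to the zero vector. Suppose $\sum_{i\in I}\lambda_i e_i=x=\sum_{i\in I}\alpha_i e_i$ with $\lambda_i,\alpha_i\in\Q$ (both convergent, by~\tt{(B2)} and Proposition~\ref{sumlemma}). Subtracting, $\sum_{i\in I}(\alpha_i-\lambda_i)e_i=0$, and this series converges; applying (ii) to its sum gives $0=\|0\|=\max_{i\in I}|\alpha_i-\lambda_i|$, whence $|\alpha_i-\lambda_i|=0$, i.e.\ $\alpha_i=\lambda_i$ for every $i\in I$. I expect the whole proof to be routine; the one place that needs a little care — the ``main obstacle'', such as it is — is the passage from the finite-sum orthogonality identity to the infinite-sum norm formula in (ii), i.e.\ controlling the tail $\sum_{i>n}\alpha_i e_i$ and justifying the interchange of norm and limit, which rests squarely on Proposition~\ref{sumlemma} and the ultrametric estimate $\|\sum_{i>n}\alpha_i e_i\|\le\sup_{i>n}|\alpha_i|$.
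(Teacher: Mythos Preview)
Your proof is correct and follows the standard route. Note, however, that the paper does not actually supply its own proof of this proposition: it is stated with a citation to Schikhof's \textit{Ultrametric Calculus} and no argument is given in the text. Your argument --- Proposition~\ref{sumlemma} for (i), the defining identity of a normal (norm-orthogonal, normalized) set plus continuity of the norm for (ii), and (ii) applied to the zero vector for (iii) --- is exactly the expected one and matches the treatment in the cited reference.
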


\begin{remark}
By the unconditional convergence of a series in  a $p$-adic Banach space, every permutation of a normal basis
is a normal basis too.
\end{remark}

\begin{theorem}\label{sepban}
Let $(X,\|\cdot\|)$ be a (separable) $p$-adic Banach space over $\Q$. Then, it admits a norm-orthogonal basis.
Moreover, $X$ admits a normal basis if and only if $\|X\|=|\Q|$ {\rm (equivalently, iff $\|X\|\subset|\Q|$;
see Remark~\ref{remnor})}. If the last condition is satisfied, the mapping
\begin{equation}\label{eqiso}
c_0(I,\Q)\ni\{x_i\}_{i\in I}\mapsto \sum_{i\in I}x_ie_i\in X
\end{equation}
defines a surjective isometry of $c_0(I,\Q)$ onto $X$.
\end{theorem}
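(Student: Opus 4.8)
The plan is to establish the three assertions in sequence: (1) existence of a norm-orthogonal basis for a separable $X$; (2) the equivalence ``$X$ admits a normal basis $\iff\|X\|=|\Q|$''; (3) the isometry statement. For (1), I would invoke the structure theory of $p$-adic Banach spaces of countable type: since $X$ is separable, it is of countable type, and by the standard $p$-adic analogue of the Gram--Schmidt procedure (available here because $\Q$ is spherically complete, as noted in Remark~\ref{remnor}, so ``$t$-orthogonal'' bases can be upgraded to genuinely norm-orthogonal ones), one extracts a countable norm-orthogonal basis. Concretely, one starts from a countable dense set, and inductively replaces each new vector $x_{n+1}$ by $x_{n+1}-y$, where $y$ lies in the (finite-dimensional, hence closed) span of $e_1,\dots,e_n$ and minimizes $\|x_{n+1}-y\|$; spherical completeness of $\Q$ guarantees the infimum is attained, and the resulting vector is norm-orthogonal to the previous ones by Definition~\ref{def.3.6}. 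Density of the original set ensures the $e_i$ span a dense subspace, and completeness of $X$ together with Proposition~\ref{sumlemma} promotes this to a genuine basis in the sense of Definition~\ref{normbasis}.

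For (2), the forward direction is immediate: if $\{e_i\}$ is a normal basis then $\|e_i\|=1\in|\Q|$, and by Proposition~\ref{prop.3.29}(ii) every $\|x\|=\max_i|\alpha_i|$ is a maximum of elements of $|\Q|$, hence lies in $|\Q|$ (using that $|\Q|$ is closed under the operations involved — in fact $\|X\|\subset|\Q|$ suffices and then Remark~\ref{remnor} gives equality). For the converse, assume $\|X\|=|\Q|$ and take the norm-orthogonal basis $\{b_i\}$ from step (1). For each $i$, since $\|b_i\|\in\|X\|=|\Q|$, pick $\lambda_i\in\Q$ with $|\lambda_i|=\|b_i\|$, and set $e_i\defi\lambda_i^{-1}b_i$; then $\|e_i\|=1$, and norm-orthogonality is preserved under rescaling by scalars, so $\{e_i\}$ is a normal set. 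That it is still a basis follows because the expansion $x=\sum_i c_x(b_i)b_i$ rewrites as $x=\sum_i\big(\lambda_i c_x(b_i)\big)e_i$, and zero-convergence of coefficients is preserved since $|\lambda_i|=\|b_i\|\to 0$ is not needed — rather one uses that $|\lambda_i c_x(b_i)|\|e_i\|=|c_x(b_i)|\|b_i\|\to 0$.

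For (3), given a normal basis $\{e_i\}_{i\in I}$, the map in~\eqref{eqiso} is well-defined by Proposition~\ref{sumlemma} (the series $\sum_i x_i e_i$ converges because $\|x_i e_i\|=|x_i|\to 0$), linear, and surjective by property (B2). Injectivity and the isometry property both follow from Proposition~\ref{prop.3.29}: if $\sum_i x_i e_i=\sum_i y_i e_i$ then uniqueness of expansion (part (iii)) forces $x_i=y_i$; and $\big\|\sum_i x_i e_i\big\|=\max_i|x_i|\|e_i\|=\max_i|x_i|=\|\{x_i\}\|_\infty$ by part (ii) together with normality. Thus the map is a surjective linear isometry, as claimed.

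The main obstacle is step (1), specifically making the $p$-adic Gram--Schmidt argument rigorous: unlike the Archimedean case, the ``best approximation'' $y$ in a finite-dimensional subspace need not be unique, and its existence genuinely requires spherical completeness of the scalar field — this is precisely why Remark~\ref{remnor} took care to record that $\Q$ is spherically complete. One must also be slightly careful that norm-orthogonality of an infinite set means norm-orthogonality of every finite subset (Definition~\ref{def.3.6}), so the inductive construction does yield a norm-orthogonal \emph{set}, and that the resulting Schauder-type expansion is legitimate (here Proposition~\ref{prop.3.29} and the unconditional convergence of $p$-adic series do the work). The remaining steps are essentially bookkeeping with the norm identity $\|\sum\alpha_i e_i\|=\max|\alpha_i|$.
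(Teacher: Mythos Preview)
Your proposal is correct and aligns with the paper's approach: the paper does not argue directly but cites Schikhof's Theorem~50.8 (via local compactness of $\Q$) or alternatively van Rooij's Lemma~5.5 (via spherical completeness) for the existence of a norm-orthogonal basis, the Monna--Fleischer theorem for the normal-basis characterization, and a proposition in Robert for the isometry; your sketch is precisely an unpacking of the spherical-completeness route through these results. The one step you rightly flag as the main obstacle---promoting a norm-orthogonal set with dense linear span to a genuine basis---does go through by the coefficient-stability argument implicit in norm-orthogonality, so there is no gap.
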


\begin{proof}
The first assertion of the theorem follows from Theorem $50.8$ in~\cite{schikhof2007ultrametric},
taking into account the fact that every finite extension of $\mathbb{Q}_p$ is locally compact.
Alternatively, one can use the fact that $\Q$ is spherically complete, $X$ is separable (equivalently, of countable type)
and Lemma $5.5$ in~\cite{rooij1978non}. The second assertion is (the separable version of) the Monna-Fleischer Theorem;
see Sect.~$4.4.5$ in~\cite{robert2013course}, taking into account the fact that the valuation group $|\Q^*|$ is
a discrete subgroup of the multiplicative group of all positive reals. For the final assertion of the theorem,
see Proposition~$3$ in Section~$4.4.2$ of~\cite{robert2013course}.
\end{proof}

In the light of the previous result, we set the following:

\begin{definition}
We say that a $p$-adic Banach space $(X,\|\cdot\|)$ over $\Q$ is \emph{normal} if $\|X\|=|\Q|$;
equivalently, if it admits a normal basis.
\end{definition}

\begin{remark}
Let $(X, \|\cdot\|)$ be a (separable) $p$-adic Banach space. We define $\dim(X)$ as the countable cardinality
of any norm-orthogonal basis in $X$. In the finite-dimensional case, $\dim(X)$ coincides with the algebraic dimension
of $X$. In the infinite-dimensional case, we simply put $\dim(X)=\infty$.
\end{remark}


\subsection{Inner product $p$-adic Banach spaces}
\label{ipbs}

We now aim at introducing a suitable notion of \emph{$p$-adic Hilbert space} over $\Q$.
The first step is to provide a convenient notion of $p$-adic inner product (Banach) space:

\begin{definition}\label{innprodu}
Let $(X, \|\cdot\|)$ be a $p$-adic Banach space over $\Q$. By a \emph{non-Archimedean inner product} on $X$ we mean a map
$\scap: X\times X\rightarrow \Q$ such that
\begin{itemize}

\item[(i)] $\scap$ is a \emph{sesquilinear form}, i.e., it is linear in its second argument and conjugate-linear
in its first argument (w.r.t.\ the conjugation in $\Q$ introduced in Section~\ref{sec2});

\item[(ii)] $\scap$ is \emph{Hermitian}, i.e., $\langle x,y\rangle=\overline{\langle y,x\rangle}$;

\item[(iii)] the \emph{Cauchy-Schwarz inequality} holds, i.e., $|\langle x,y\rangle|\leq \|x\|\;\|y\|$.

\end{itemize}
We call the triple $(X, \|\cdot\|, \scap)$, where $\scap$ is a non-Archimedean inner product,
an \textit{inner product $p$-adic Banach space}.

Given an inner product $p$-adic Banach space $(X, \|\cdot\|, \scap)$, we say that $\scap$
is \emph{non-degenerate} if, moreover, $\langle x,y\rangle=0$, for all $y\in X$, implies that $x=0$.
\end{definition}

\begin{remark}
Note that here, in general, $\|x\|\neq \sqrt{|\langle x,x\rangle|}$.
\end{remark}

\begin{remark}\label{rem.cont}
It is worth noting that the Cauchy-Schwarz inequality immediately implies that a non-Archimedean inner product $\scap$
on a $p$-adic Banach space is \emph{continuous} w.r.t.\ both its arguments (separately), and also \emph{jointly} continuous (i.e., as a map
from $X\times X,$ endowed with the product topology, into $\Q$), where the topology on $X$ is the one induced by the norm.
\end{remark}

\begin{example}\label{canoinner}
 Let us  provide an example of an inner product $p$-adic Banach space. Given a normal $p$-adic Banach space $X$,
 let us consider the (non-degenerate, Hermitian) sesquilinear form defined by
\begin{equation}\label{defsp}
X\times X\ni (x,y)\mapsto\langle x,y\rangle\defi \sum_{i\in I}\overline{x_i}\, y_i,
\end{equation}
where, for some \emph{normal basis} $\{e_i\}_{i\in I}$, $x=\sum_{i\in I} x_ie_i$ and $y=\sum_{i\in I} y_ie_i$. Clearly, we have that
\begin{align}
|\langle x,y\rangle|=\bigg{|}\sum_{i\in I}\overline{x_i}\, y_i\bigg{|}\leq \max_{i\in I}|x_i|\,|y_i|\leq
\max_{i\in I} |x_i|\, \max_{j\in I}|y_j|=\|x\|_{\infty}\|y\|_{\infty},
\end{align}
i.e., the Cauchy-Schwarz inequality is satisfied. We call this inner product the \textit{canonical inner product} in $X$
associated with the normal basis $\{e_i\}_{i\in I}$.
\end{example}

\begin{remark}\label{nulvec}
A normal inner product $p$-adic Banach space $(X,\|\cdot\|,\scap)$ --- with $\dim(X)\geq 2$,
and even assuming that $\scap$ is non-degenerate --- may contain \emph{isotropic vectors},
i.e., nonzero vectors $x$ such that $\langle x,x\rangle=0$. Let us construct an example of such a vector.
Suppose that $\scap$ is the canonical inner product, associated with a normal basis $\{e_i\}_{i\in I}$,
of Example \ref{canoinner}. For $p\equiv 1\pmod 4$ and for a non-quadratic element $\mu$ of $\mathbb{Q}_p$,
assuming that $X$ is a vector space over $\Q$ and taking into account
the fact that $-1$ is a square in $\mathbb{Q}_p$, let $x\in X$ be given by
$x=\big(\alpha+\beta\sqrt{\mu}\big)e_1 + \sqrt{-1}^p\big(\gamma+\delta\sqrt{\mu}\big)e_2$,
where $\alpha,\beta,\gamma,\delta\in\Qp$ --- with $|\alpha|+|\beta|+|\gamma|+|\delta|\neq 0$
and $\big(\alpha^2-\gamma^2\big)-\mu\big(\beta^2-\delta^2\big)=0$ (e.g., with $\alpha^2=\gamma^2\neq 0$ and $\beta^2=\delta^2$) ---
and $\sqrt{-1}^p\in\Qp$ is any of the two $p$-adic square roots of $-1$; i.e., the components $x_i\in\Q$ of the vector $x$
--- w.r.t.\ the fixed normal basis $\{e_i\}_{i\in I}$ of $X$ --- are given by  $x_1=\alpha+\beta\sqrt{\mu}$,
$x_2=\sqrt{-1}^p\big(\gamma+\delta\sqrt{\mu}\big)$, and $x_i=0$, for $i\geq 3$. Then, we have that $x\neq 0$ and
$\langle x,x\rangle=\overline{x_1}\,x_1+\overline{x_2}\,x_2=\big(\alpha^2-\mu\beta^2\big)-\big(\gamma^2-\mu\delta^2\big)=0$,
namely, $x$ is a isotropic vector.
\end{remark}

In an inner product $p$-adic Banach space $(X, \|\cdot\|, \scap)$, we have two (distinct) natural notions of orthogonality:
the --- previously introduced --- norm-orthogonality and the \emph{inner-product-orthogonality} (IP-orthogonality). Clearly, we say that
two vectors $x,y\in X$ are IP-orthogonal --- in symbols, $x\perp y$ --- if $\langle x,y\rangle=0$.

\begin{definition}\label{orbasis}
Let $(X, \|\cdot\|, \scap)$ be a normal inner product $p$-adic Banach space. A (finite or denumerable) sequence
of vectors $\Phi\equiv\{\phi_i\}_{i\in I}$ is said to be an \emph{orthonormal basis} in $X$, if the following conditions hold:
\begin{enumerate}[label=\tt{(O\arabic*)}]

\item $\Phi$ is a normal basis in $(X, \|\cdot\|)$;

\item $\langle \phi_i,\phi_j\rangle=\delta_{ij}$, for all $i,j\in I$.

\end{enumerate}
\end{definition}

Let $(X,\|\cdot\|, \scap)$ be a normal inner product $p$-adic Banach space over $\Q$, and suppose that $X$ admits an
orthonormal basis $\Phi\equiv\{\phi_i\}_{i\in I}$. By providing explicit examples, we now show that the construction of a \emph{new}
orthonormal basis $\Psi\equiv\{\psi_i\}_{i\in I}$ in $X$, starting from the given orthonormal basis $\{\phi_i\}_{i\in I}$,
is not a trivial task as it would be, say, in an ordinary separable complex Hilbert space.

Assume, at first, that $\dim(X)=2$. In order to construct a new orthonormal basis in $X$, suppose that $z\in\Q$ is such that
$z\overline{z}=2$ and $|z|=\sqrt{|z\overline{z}|_p}=1$ ($\hspace{-1.2mm}\iff |2|_p=1\iff p\neq 2$).

\begin{example}\label{ex.02}
Let us give a few explicit examples where, for $p\neq 2$, the condition $z\overline{z}=2$ is realized.
\begin{enumerate}

\item Let us take $p=3$ and $\mu=5$ (note that $5=2+1\cdot 3$ is a quadratic non-residue $\md 3$).
Then, $z=\sqrt{7}^3-\sqrt{5}$ --- where $\sqrt{7}^3$ is one of the $3$-adic square roots of $7=1+2\cdot 3\in(\Qtra)^2$, i.e.,
\begin{equation}
\sqrt{7}^3=
\begin{cases}
1+1\cdot 3+1\cdot 3^2+0\cdot 3^3+2\cdot 3^4+\cdots\\
2+1\cdot 3+1\cdot 3^2+2\cdot 3^3+ 0\cdot 3^4+\cdots
\end{cases}
\end{equation}
--- verifies the condition $z\overline{z}=7-5=2$, and $|z|=\sqrt{|2|_3}=1$.

\item Let $p=3$ and $\mu=2\not\in(\Qtra)^2$. Then, $z=2+\sqrt{2}$ is such that $z\overline{z}=4-2=2$, where $|2|_3=1$.

\item Let $p=5$ and $\mu=3$ ($3$ is a quadratic non-residue $\md 5$). We set $z=\sqrt{29}^5+3\sqrt{3}$, where
$\sqrt{29}^5$ is one of the $5$-adic square roots of $29=4+0\cdot 5+1\cdot 5^2\in(\Qca)^2$, i.e.,
\begin{equation}
\sqrt{29}^5=
\begin{cases}
2+0\cdot 5+4\cdot 5^2+3\cdot 5^3+4\cdot 5^4+\cdots\\
3+4\cdot 5+0\cdot 5^2+1\cdot 5^3+ 0\cdot 5^4+\cdots
\end{cases}.
\end{equation}
Then, we have that $z\overline{z}=29-27=2$, where $|2|_5=1$.

\item Let us take $p=7$ (and, say, $\mu=7$). We set $z=\sqrt{2}^7$ ($2\equiv3^2\pmod7$), with
\begin{equation}
\sqrt{2}^7=
\begin{cases}
3+1\cdot 7+2\cdot 7^2+6\cdot 7^3+1\cdot 7^4+\cdots\\
4+5\cdot 7+4\cdot 7^2+0\cdot 7^3+ 5\cdot 7^4+\cdots\;,
\end{cases}
\end{equation}
so that $z\overline{z}=2$, where $|2|_7=1$.
\end{enumerate}
\end{example}

Now, given an orthonormal basis $\{\phi_1,\phi_2\}$ in $X$ ($\dim(X)=2$, $p\neq 2$), the set $\{\psi_1,\psi_2\}$, where
\begin{equation}\label{eq.03}
\psi_1=\frac{1}{z}(\phi_1+\phi_2),\;\;\;\;\psi_2=\frac{1}{z}(\phi_1-\phi_2),
\end{equation}
with $z\in\Q$ being chosen as above, is an orthonormal basis for $X$. Indeed, we have that $\langle\psi_1,\psi_2\rangle=0$ and
$\langle\psi_1,\psi_1\rangle=2/z\overline{z}=1=\langle\psi_2,\psi_2\rangle$. Moreover: $\|\psi_1\|=\|\psi_2\|=1/|z|=1$. \emph{But},
it remains to show that $\{\psi_1,\psi_2\}$ is a \emph{norm-orthogonal} set, as well.  To clarify this point, let us first prove the following:

\begin{fact}\label{lem.1}
If $x_1,x_2\in\Q$ and $p\neq 2$ --- equivalently, $|2|_p=1$ --- then
\begin{equation}\label{eq.05}
\max\{|x_1|,|x_2|\}=\max\{|x_1+x_2|, |x_1-x_2|\}.
\end{equation}
\end{fact}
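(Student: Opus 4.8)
The plan is to reduce the claimed equality to a case analysis driven by the ultrametric inequality, using the hypothesis $|2|_p = 1$ to make the two "mixed" combinations $x_1 \pm x_2$ behave as nicely as $x_1$ and $x_2$ themselves. First I would dispose of the trivial cases: if $x_1 = 0$ then both sides equal $|x_2|$ (since $|x_2| = |-x_2|$), and symmetrically if $x_2 = 0$; so we may assume $x_1, x_2 \neq 0$. The key observation is that the right-hand side is never larger than the left: by the strong triangle inequality $|x_1 \pm x_2| \le \max\{|x_1|, |x_2|\}$, hence $\max\{|x_1+x_2|, |x_1-x_2|\} \le \max\{|x_1|, |x_2|\}$. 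So it remains to prove the reverse inequality $\max\{|x_1|, |x_2|\} \le \max\{|x_1+x_2|, |x_1-x_2|\}$.

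For the reverse inequality I would split according to whether $|x_1|$ and $|x_2|$ are equal. If $|x_1| \neq |x_2|$ --- say $|x_1| > |x_2|$ --- then by Remark~\ref{redomvec} (the "dominant vector" rule) we get $|x_1 + x_2| = |x_1| = |x_1 - x_2|$, so the right-hand side equals $|x_1| = \max\{|x_1|,|x_2|\}$ and we are done; the case $|x_2| > |x_1|$ is symmetric. Now suppose $|x_1| = |x_2| =: c > 0$. Here I would write $x_1 + x_2$ and $x_1 - x_2$ and add them: $(x_1+x_2) + (x_1-x_2) = 2x_1$, so $|2x_1| \le \max\{|x_1+x_2|, |x_1-x_2|\}$ by the ultrametric inequality. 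Since $|2|_p = 1$ by hypothesis (this is exactly where $p \neq 2$ enters), we have $|2x_1| = |2|_p\,|x_1| = |x_1| = c$, hence $\max\{|x_1+x_2|, |x_1-x_2|\} \ge c = \max\{|x_1|,|x_2|\}$, which completes the reverse inequality.

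Combining the two inequalities yields the asserted identity \eqref{eq.05}. The only place where a subtlety could arise is the equal-norm case, and the trick of recovering $x_1$ (or $x_2$) as a $\Qp$-combination of $x_1+x_2$ and $x_1-x_2$ --- legitimate precisely because $2$ is a unit when $p \neq 2$ --- handles it cleanly; for $p = 2$ the argument genuinely breaks (e.g. $x_1 = x_2 = 1$ gives left-hand side $1$ but $|x_1-x_2| = 0$ and $|x_1+x_2| = |2|_2 = 1/2$, so the right-hand side is $1/2 \neq 1$), which confirms the hypothesis is necessary and not merely convenient. I do not expect any real obstacle here; this is a short ultrametric computation, and the main thing to be careful about is invoking Remark~\ref{redomvec} correctly in the unequal-norm case rather than re-deriving it.
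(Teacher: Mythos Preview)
Your proof is correct and uses the same key trick as the paper --- writing $2x_1 = (x_1+x_2)+(x_1-x_2)$ and invoking $|2|_p = 1$. The paper's version is slightly more economical: it assumes without loss of generality $|x_1|\ge|x_2|$ and then runs the single chain $|x_1|=|2x_1|=|(x_1+x_2)+(x_1-x_2)|\le\max\{|x_1+x_2|,|x_1-x_2|\}\le\max\{|x_1|,|x_2|\}=|x_1|$, which handles both your cases at once without the separate ``dominant vector'' branch.
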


\begin{proof}
We can suppose, without loss of generality, that $|x_1|\geq|x_2|$. Then, we have:
\begin{equation}
|x_1|=|2x_1|=|(x_1+x_2)+(x_1-x_2)|\leq \max\{|x_1+x_2|, |x_1-x_2|\}\leq \max\{|x_1|,|x_2|\}=|x_1|,
\end{equation}
so that~\eqref{eq.05} holds true.
\end{proof}

We can now conclude that the vectors in~\eqref{eq.03} form an orthonormal set. Indeed, let $x=x_1\psi_1+x_2\psi_2$ be any vector
in $X=\textrm{span}\{\psi_1,\psi_2\}=\mathrm{span}\{\phi_1,\phi_2\}$. We have:
\begin{equation}
x=\frac{1}{z}(x_1+x_2)\phi_1+\frac{1}{z}(x_1-x_2)\phi_2.
\end{equation}
Since $\{\phi_1,\phi_2\}$ is a norm-orthogonal set and $|z|=1$, we have that
\begin{equation}
\|x\|=\max\left\{\frac{1}{|z|}|x_1+x_2|,\frac{1}{|z|}|x_1-x_2|\right\}=\max\{|x_1+x_2|, |x_1-x_2|\}=\max\{|x_1|,|x_2|\},
\end{equation}
where the last equality holds by~\eqref{eq.05}. Therefore, the set
\begin{equation}
\big\{\psi_1=z^{-1}(\phi_1+\phi_2),\,\psi_2=z^{-1}(\phi_1-\phi_2)\big\}
\end{equation}
is norm-orthogonal too. Clearly, if $\dim(X)>2$, then
\begin{equation}
\psi_1=\frac{1}{z}(\phi_1+\phi_2),\quad\psi_2=\frac{1}{z}(\phi_1-\phi_2),\quad\psi_3=\phi_3,\quad\ldots
\end{equation}
is again an orthonormal basis in $X$.

Let us now consider the case where $\dim(X)=\infty$ and $p\neq 2$. If $z\in\Q$ is such that $z\overline{z}=2$,
then $\Psi\equiv\{\psi_1,\psi_2,\psi_3,\ldots\}$ --- with
\begin{equation}\label{eq.15}
\psi_1=\frac{1}{z}(\phi_1+\phi_2),\quad\psi_2=\frac{1}{z}(\phi_1-\phi_2),\quad\psi_3=
\frac{1}{z}(\phi_3+\phi_4),\quad\psi_4=\frac{1}{z}(\phi_3-\phi_4),\quad\ldots
\end{equation}
--- is an orthonormal basis. Indeed, if
\begin{equation}
x=\sum_{j\in\mathbb{N}}x_j\psi_j=\frac{1}{z}\sum_{j\,\,\text{odd}}((x_j+x_{j+1})\phi_j+(x_j-x_{j+1})\phi_{j+1}),
\end{equation}
then, since $|z|=1$,
\begin{equation}
\|x\|=\max_{j\,\,\text{odd}}\{|x_j+x_{j+1}|,\,|x_j-x_{j+1}|\}.
\end{equation}
 Hence, by~\eqref{eq.05},
\begin{equation}
\|x\|=\max_{j\; \mathrm{odd}}\{|x_j|,\,|x_{j+1}|\}=\max_{j\in\mathbb{N}}\{|x_j|\},
\end{equation}
so that $\Psi$ is a norm-orthogonal (and IP-orthogonal) set, and an orthonormal basis in $X$,
because $\mathrm{span}\{\psi_j\}_{j\in\nat}=\mathrm{span}\{\phi_j\}_{j\in\nat}$ so that
$\overline{\mathrm{span}\{\psi_j\}_{j\in\nat}}^{\,\|\cdot\|}=\overline{\mathrm{span}\{\phi_j\}_{j\in\nat}}^{\,\|\cdot\|}=X$.


\subsection{$p$-adic Hilbert spaces}

To the best of our knowledge, the existence of an orthonormal basis in a generic inner product $p$-adic Banach space is
not guaranteed (even assuming that the inner product is non-degenerate). Therefore, it is natural to set the following:

\begin{definition}
Let $(X, \|\cdot\|,\scap)$ be an inner product $p$-adic Banach space (over $\Q$). We say that $X$
is a \emph{$p$-adic Hilbert space} if it admits an orthonormal basis $\{\phi_i\}_{i\in I}$
(in the sense of Definition~\ref{orbasis}). We will typically denote (the carrier space of) a $p$-adic Hilbert space by $\Hp$.
\end{definition}

Let $x\in \Hp$, and let $\Phi\equiv \{\phi_i\}_{i\in I}$ be an orthonormal basis in $\Hp$. By the first condition in Definition~\ref{orbasis},
we can express $x$  --- in a unique way --- as $x=\sum_{i\in I}x_i\phi_i$, for some set of coefficients $\{x_i\}_{i\in I}$ in $\Q$.
Moreover, by the second condition in the same definition, and taking into account the continuity, w.r.t.\ each of its arguments,
of the non-Archimedean inner product (see Remark~\ref{rem.cont}), we have that
\begin{equation}
\langle \phi_j,x\rangle=\langle\phi_j,{\textstyle\sum_{i\in I}}x_i\phi_i\rangle=
\sum_{i\in I}x_i\,\langle\phi_j,\phi_i\rangle=x_j,\quad\forall j\in I.
\end{equation}
 Thus, we see that any $x\in\Hp$ is expressed --- w.r.t.\ the fixed orthonormal basis $\Phi$ in $\Hp$ --- as
\begin{equation}\label{eq.Hs1}
x=\sum_{i\in I}\langle \phi_i,x\rangle\phi_i,
\end{equation}
from which we deduce the \emph{non-Archimedean Parseval} identity
\begin{equation}
\|x\|=\max_{i\in I} |\langle \phi_i,x\rangle|.
\end{equation}

\begin{proposition}\label{nondeginpro}
Let $\Hp$ be a $p$-adic Hilbert space over $\Q$. Then, $\Hp$ is normal --- i.e., $\|\Hp\|=|\Q|$ --- and the non-Archimedean inner product
$\scap$ defined on it is non-degenerate, i.e.,
\begin{equation}
\langle x,y\rangle=0,\;\;\forall y\in\Hp\;\;\implies\;\; x=0.
\end{equation}
\end{proposition}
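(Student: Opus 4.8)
The plan is to read off both assertions directly from the existence of an orthonormal basis $\Phi \equiv \{\phi_i\}_{i\in I}$ in $\Hp$ --- which is the very definition of a $p$-adic Hilbert space --- using the expansion \eqref{eq.Hs1} and the non-Archimedean Parseval identity recorded just above the statement. (Note, in passing, that proving normality also shows that the ``normal'' hypothesis built into Definition~\ref{orbasis} is automatically fulfilled once an orthonormal basis exists, so no circularity is involved.)

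First I would establish normality. By condition \tt{(O1)} of Definition~\ref{orbasis}, $\Phi$ is a normal basis of the $p$-adic Banach space $(\Hp,\|\cdot\|)$; hence Theorem~\ref{sepban} (the separable Monna--Fleischer theorem) yields $\|\Hp\| = |\Q|$ at once. If one prefers an argument not relying on Monna--Fleischer, one can proceed by hand: for $\alpha \in \Q$ we have $\|\alpha\phi_1\| = |\alpha|\,\|\phi_1\| = |\alpha|$, so $|\Q| \subseteq \|\Hp\|$; conversely, for $x = \sum_{i\in I}\langle\phi_i,x\rangle\phi_i \in \Hp$ the Parseval identity gives $\|x\| = \max_{i\in I}|\langle\phi_i,x\rangle|$, where the maximum is genuinely attained because the coefficient family is zero-convergent (Proposition~\ref{prop.3.29}(i) when $I=\mathbb{N}$, trivial when $I$ is finite), so $\|x\| = |\langle\phi_{i_0},x\rangle| \in |\Q|$ for a suitable $i_0 \in I$, giving $\|\Hp\| \subseteq |\Q|$. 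Either way $\|\Hp\| = |\Q|$, i.e.\ $\Hp$ is normal.

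Next I would deduce non-degeneracy. Assume $x \in \Hp$ satisfies $\langle x,y\rangle = 0$ for every $y \in \Hp$. Taking $y = \phi_i$ and applying the Hermitian property (ii) of the inner product together with the fact that the conjugation of $\Q$ fixes $0$, we obtain $\langle\phi_i,x\rangle = \overline{\langle x,\phi_i\rangle} = \overline{0} = 0$ for all $i \in I$. Substituting into \eqref{eq.Hs1} gives $x = \sum_{i\in I}\langle\phi_i,x\rangle\phi_i = 0$, so $\scap$ is non-degenerate.

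I do not expect a substantial obstacle here: the proposition is essentially a bookkeeping consequence of having an orthonormal basis. The only point that deserves a line of care is that the supremum in the Parseval identity is attained (equivalently, that $\|x\|$ lands in $|\Q|$ rather than merely in its closure) --- and this is immediate from zero-convergence of the coefficients, or can be bypassed entirely by invoking Theorem~\ref{sepban}.
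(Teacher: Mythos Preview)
Your proof is correct and follows essentially the same route as the paper: normality is obtained from the existence of a normal basis via Theorem~\ref{sepban}, and non-degeneracy follows by testing against the basis vectors $\phi_i$ and reading off $x=0$ from the expansion~\eqref{eq.Hs1}. Your extra details --- the explicit use of the Hermitian property to pass from $\langle x,\phi_i\rangle=0$ to $\langle\phi_i,x\rangle=0$, and the alternative hand-computation of $\|\Hp\|=|\Q|$ --- are sound but not needed beyond what the paper does.
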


\begin{proof}
Since $\Hp$ admits a (ortho-)normal basis, then, by the second assertion of Theorem~\ref{sepban}, $\|\Hp\|=|\Q|$.
Let $\Phi\equiv \{\phi_i\}_{i\in I}$ be an orthonormal basis of $\Hp$. If $\langle x,y\rangle=0$, $\forall y\in \Hp$,
it must be true that $\langle x,\phi_i\rangle=0$, $\forall i\in I$. But then, by~\eqref{eq.Hs1} and by the uniqueness
of the decomposition of a vector w.r.t.\ a (ortho-)normal basis in $\Hp$, it follows that $x=0$.
\end{proof}

\begin{example}\label{examp324}
Let us consider the normal $p$-adic Banach space $(c_0(I,\Q), \|\cdot\|_{\infty})$ (see Subsection~\ref{subsec3.1}),
where $\|x\|_{\infty}=\max_{i\in I}|x_i|$ ($x=\{x_i\}_{i\in I}$). Let $\boldsymbol{e}\equiv\{e_i\}_{i\in I}$ be the set
of all sequences in $c_0(I,\Q)$ whose elements are of the form
\begin{equation}
e_1=(1,0,0,\cdots),\quad e_2=(0,1,0,\cdots),\quad e_3=(0,0,1,\cdots),\quad\ldots\quad .
\end{equation}
Clearly, the set $\{e_i\}_{i\in I}$ is a normal basis --- the \emph{standard basis} --- and $\dim(c_0(I,\Q))=\card(I)$.
Let us endow $(c_0(I,\Q), \|\cdot\|_{\infty})$ with the \emph{canonical inner product}, associated with $\{e_i\}_{i\in I}$,
introduced in Example~\ref{canoinner}:
\begin{equation}\label{eq.Hs2}
c_0(I,\Q)\times c_0(I,\Q)\ni (x,y)\mapsto\langle x,y\rangle\defi \sum_{i\in I}\overline{x_i}\, y_i\in \Q.
\end{equation}
By construction, we have that
\begin{equation}
\langle e_i,e_j\rangle=\delta_{ij},\quad \forall i,j\in I,
\end{equation}
i.e., $\{e_i\}_{i\in I}$ is an orthonormal basis for $c_0(I,\Q)$. Therefore, the $p$-adic Banach space $(c_0(I,\Q), \|\cdot\|_{\infty})$,
endowed with the inner product in~\eqref{eq.Hs2}, and admitting the orthonormal basis $\{e_i\}_{i\in I}$, is a $p$-adic Hilbert space.
In the literature~\cite{khrennikov1990mathematical,diagana2016non}, this $p$-adic Hilbert space is sometimes called \emph{coordinate $p$-adic Hilbert space},
and denoted by $\mathbb{H}(I)$. More generally, given a normal $p$-adic Banach space $X$ and a normal basis $\{e_i\}_{i\in I}$ in $X$,
we can endow this space with the sesquilinear form defined by~\eqref{defsp}, so that $\{e_i\}_{i\in I}$ becomes an orthonormal basis and $X$ a $p$-adic Hilbert space.
\end{example}

In the light of Remark~\ref{nulvec} about the existence of isotropic vectors in an inner product $p$-adic Banach space, we set the following:

\begin{definition}
For every quadratic extensions $\Q$ of $\mathbb{Q}_p$, we define the \emph{isotropy index} $\isin\in\mathbb{N}$ as
\begin{equation}
\isin\defi \min\big\{\card(\supp(x))\sep x\in c_0(\nat,\Q)\setminus\{0\},\; \langle x,x\rangle=0\big\},
\end{equation}
where $\supp(x)\subset\nat$ denotes the \emph{support} of the sequence $x=\{x_i\}_{i\in\nat}\in c_0(\nat,\Q)$; namely,
\begin{equation}
\supp(x)\defi\{i\in\nat\sep x_i\neq 0\}.
\end{equation}
\end{definition}

\begin{proposition}\label{prop.326}
Given a quadratic extension $\Q$ of $\mathbb{Q}_p$, $\isin\in \{2,3\}$.
\end{proposition}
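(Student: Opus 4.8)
The plan is to show both that $\isin \le 3$ always and that $\isin \ge 2$ always, so that $\isin \in \{2,3\}$.

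For the lower bound $\isin \ge 2$: a sequence $x$ with $\card(\supp(x)) = 1$ has a single nonzero component, say $x = x_1 e_1$ with $x_1 \ne 0$; then $\langle x,x\rangle = \overline{x_1} x_1 = |x_1|^2_{\text{(norm sense)}}$—more precisely, $\overline{x_1}\, x_1 \in \Qp$ equals $x_1^2 - \mu y_1^2$ if $x_1 = a + b\sqrt{\mu}$, and this vanishes only if $\mu$ is a square in $\Qp$, contradicting the hypothesis that $\mu$ is non-quadratic. Hence no isotropic vector can be supported on a single index, giving $\isin \ge 2$.

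For the upper bound $\isin \le 3$: I would split into cases according to whether the quadratic extension $\Q$ admits an element of norm realizing a sum-of-two-squares relation. The cleanest route is to produce, in every quadratic extension, a nonzero $x$ with $\card(\supp(x)) \le 3$ and $\langle x,x\rangle = 0$. For two-element support we need $\overline{x_1} x_1 + \overline{x_2} x_2 = 0$, i.e. $(a_1^2 - \mu b_1^2) + (a_2^2 - \mu b_2^2) = 0$ in $\Qp$; this is solvable precisely when $-1$ can be written suitably, and Remark~\ref{nulvec} already exhibits such a vector when $p \equiv 1 \pmod 4$ (using that $-1$ is a square in $\Qp$). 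When $-1$ is not a square in $\Qp$ (i.e. $p \equiv 3 \pmod 4$, or $p = 2$), one instead uses a three-element support: one seeks $a_i, b_i \in \Qp$, not all zero, with $\sum_{i=1}^{3}(a_i^2 - \mu b_i^2) = 0$. Since every element of $\Qp$ is a sum of finitely many squares (indeed, for $p$ odd, every $p$-adic integer is a sum of three squares, and more simply $-1$ is a sum of squares in any $\Qp$), one can solve $a_1^2 + a_2^2 + a_3^2 = 0$ nontrivially when $-1$ is a sum of two squares in $\Qp$, and otherwise adjust using the $\mu$-terms: pick $b_1$ so that $\mu b_1^2$ plays the role of the needed square. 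The key number-theoretic input is the classical fact that the quadratic form $x^2 + y^2 + z^2$ (or $x^2 + y^2 - \mu z^2$, etc.) is isotropic over $\Qp$ in at least one of these shapes for every $p$ and every non-square $\mu$; this follows from the Hasse--Minkowski / local square-class analysis, or directly from counting solutions mod $p$ and Hensel lifting.

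The main obstacle I anticipate is the $p = 2$ case together with the bookkeeping over the seven quadratic extensions $\Q$ with $\mu \in \{2,3,5,6,7,10,14\}$: showing in each case that either a two-term relation $\overline{x_1}x_1 + \overline{x_2}x_2 = 0$ or a three-term relation holds requires verifying that the relevant ternary $\Qp$-quadratic form is isotropic, and for $p=2$ the square-class structure is more intricate (eight classes rather than four). The efficient way around this is to invoke the general principle that any nondegenerate quadratic form over $\Qp$ in $\ge 4$ variables is isotropic, and in $3$ variables is isotropic unless it is (up to scaling) the norm form of a quaternion division algebra; since the forms $\langle 1, -\mu\rangle \perp \langle 1, -\mu\rangle \perp \langle 1, -\mu\rangle$ in six variables and its sub-forms are readily seen to be isotropic, a suitable nonzero isotropic vector supported on at most three standard-basis vectors exists, completing $\isin \le 3$. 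I would therefore organize the proof by first disposing of the easy bound $\isin \ge 2$, then handling $\isin \le 3$ uniformly via the six-variable isotropy statement, remarking that in the case $p \equiv 1 \pmod 4$ the sharper value $\isin = 2$ already follows from Remark~\ref{nulvec}.
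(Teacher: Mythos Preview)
Your argument is correct in outline and reaches the result, but by a different route than the paper, and with one misstatement to fix.

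First, the correction: it is \emph{not} true that every nondegenerate quadratic form over $\Qp$ in $\ge 4$ variables is isotropic --- the norm form of the (unique) quaternion division algebra over $\Qp$ is anisotropic in four variables. The correct threshold is $\ge 5$ variables. (Relatedly, anisotropic ternary forms are the \emph{pure} parts of that quaternion norm form, not the norm form itself.) This slip is harmless for your purposes, since you ultimately invoke isotropy of the six-variable form $\langle 1,-\mu\rangle^{\perp 3}$: a nonzero $(a_1,b_1,\dots,a_3,b_3)\in\Qp^6$ with $\sum_{i=1}^3(a_i^2-\mu b_i^2)=0$ yields a nonzero $x=\sum_{i=1}^3(a_i+b_i\sqrt{\mu})\,e_i$ with $\langle x,x\rangle=0$ and $\card(\supp(x))\le 3$, exactly as you say.

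Second, the comparison. The paper's proof is more concrete. For $p\neq 2$ it quotes a single lemma stating that $\alpha^2+\beta^2+\gamma^2=0$ admits a nontrivial solution in $\Qp$, so one may take all $b_i=0$ and produce an isotropic vector with components in $\Qp\subset\Q$, independent of $\mu$. For $p=2$ the paper abandons theory and simply writes down, for each of the seven extensions, an explicit isotropic vector supported on two or three basis elements (e.g.\ $(1+\sqrt{2})e_1+e_2$ for $\mu=2$, $(1+\sqrt{3})e_1+e_2+e_3$ for $\mu=3$, $(1+\sqrt{5})e_1+2e_2$ for $\mu=5$). Your approach trades these explicit constructions for one uniform appeal to local quadratic-form theory; this is conceptually cleaner and sidesteps the $p=2$ bookkeeping you flagged as the main obstacle, at the price of citing a heavier (though standard) result.
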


\begin{proof}
It is clear that $\isin>1$. Let us show that, in particular, either $\isin=2$ or $\isin=3$. Assume, at first, that $p\neq 2$.
By Lemma $54.6$ in~\cite{theclassfields2007}, there exist numbers $\alpha,\beta,\gamma\in \mathbb{Q}_p$ such that $\alpha\neq 0\neq\beta$
and $\alpha^2+\beta^2+\gamma^2=0$. Therefore, putting $x=\alpha e_1+\beta e_2+\gamma e_3$, where $\{e_i\}_{i\in\nat}$ is the standard basis
in $c_0(\mathbb{N},\Q)$, we have that $\langle x,x\rangle=\alpha^2+\beta^2+\gamma^2=0$; i.e., $\isin\in\{2,3\}$. For $p=2$,
the same result can be achieved by means of a direct calculation. E.g., for $\mu=2$, we can put $x=(1+\sqrt{2})e_1+e_2$
(so that $\langle x,x\rangle=(1-2)+1=0$). For $\mu=3$, we can take $x=(1+\sqrt{3})e_1+e_2+e_3$. For $\mu=5$, we take
$x=(1+\sqrt{5})e_1+2e_2$. The remaining cases ($p=2$ and $\mu=6,7,10,14$) are similar and, once again, it turns out that $\isin \in\{2,3\}$.
\end{proof}

It is worth observing that the coordinate $p$-adic Hilbert space $\mathbb{H}(I)$ (Example~\ref{examp324})
plays a role analogous to the role played by $\ell^2(I)$ for the (separable) complex Hilbert spaces: There exists an isomorphism
of  $p$-adic Hilbert spaces between  $\Hp$ and $\mathbb{H}(I)$, where $\dim(\Hp)=\card(I)$. Here, we are assuming the following:

\begin{definition} \label{ishil}
Given $p$-adic Hilbert spaces $(\Hp, \|\cdot\|, \scap)$ and $(\mathcal{K}, \|\cdot\|, \scap)$ over the same quadratic extension $\Q$ of $\Qp$,
a linear map $W\colon \Hp\rightarrow \mathcal{K}$ is called an \emph{isomorphism of $p$-adic Hilbert spaces}
(an \emph{automorphism}, in the case where $\Hp=\mathcal{K}$) if
\begin{enumerate}[label=\tt{(I\arabic*)}]

\item $W$ is an isometry: $\|Wx\|=\|x\|$, $\forall x\in\Hp$;

\item $W$ is surjective: $W\Hp=\mathcal{K}$;

\item $\langle Wx, Wy\rangle=\langle x,y\rangle$, $\forall x,y\in\Hp$.

\end{enumerate}
Otherwise stated, $W$ is an isomorphism of $p$-adic Banach spaces --- a surjective isometry --- preserving the inner product.
\end{definition}

Then, let $\Hp$ be a $p$-adic Hilbert space, and let $\Phi\equiv\{\phi_i\}_{i\in I}$ be an orthonormal basis in $\Hp$.
By Theorem~\ref{sepban}, the map
\begin{equation}\label{isohilb}
\Wf:\Hp\ni x=\sum_{i\in I}\langle\phi_i,x\rangle\,\phi_i\mapsto\breve{x}\equiv\{\langle\phi_i,x\rangle\}_{i\in I}\in \mathbb{H}(I)
\end{equation}
is an isomorphism of $\Hp$ onto $\mathbb{H}(I)$, since it is a surjective isometry and, by the continuity of the inner product (see Remark~\ref{rem.cont}),
\begin{align}
\langle x,y\rangle&=\langle{\textstyle\sum_{i\in I}}\langle\phi_i,x\rangle\,\phi_i,
{\textstyle\sum_{j\in I}}\langle\phi_j,y\rangle\,\phi_j\rangle
\nonumber\\
&=\sum_{i\in I}\sum_{j\in I}\overline{\langle\phi_i,x\rangle}\,\langle\phi_j,y\rangle\,\langle \phi_i,\phi_j\rangle
\nonumber\\ \label{reside}
&=\sum_{i\in I}\overline{\langle\phi_i,x\rangle}\,\langle\phi_i,y\rangle=\langle\breve{x},\breve{y}\rangle,
\end{align}
i.e., the inner product is preserved. Note that, $\Wf\phi_i=e_i$, $\forall i\in I$, where $\{e_i\}_{i\in I}$
is the standard basis in $\mathbb{H}(I)$, and $\dim(\Hp)=\card(I)=\dim(\mathbb{H}(I))$. Therefore,
two $p$-adic Hilbert spaces $\Hp$ and $\mathcal{K}$, over the same quadratic extension of $\mathbb{Q}_p$,
are isomorphic iff $\dim(\Hp)=\dim(\mathcal{K})$, in complete analogy w.r.t.\ separable complex Hilbert spaces.

However, the analogies between the complex and the $p$-adic Hilbert spaces cannot be pursued too far.
E.g., in a complex Hilbert space the norm stems directly from the scalar product, and
the closed subspaces --- endowed with the subset inclusion and with the orthogonal complementation ---
form an (orthomodular) orthocomplemented lattice~\cite{beltrametti81}; in particular,
a relation of the form~\eqref{eq.orthcon} below holds true, whereas, for a $p$-adic Hilbert space,
we have the following:
\begin{proposition}
Let $\Hp$ be a $p$-adic Hilbert space over $\Q$, with $\dim(\Hp)\geq \isin\in\{2,3\}$. Then, the mapping
\begin{equation}
\Hp\ni x\mapsto\sqrt{|\langle x,x\rangle|}\in\mathbb{R}^+
\end{equation}
is \emph{not} a norm. Moreover, in general, it is \emph{not} true that
\begin{equation}\label{eq.orthcon}
\emptyset\neq \cV\subset \Hp\quad\text{and}\quad \cV=\cV^{\perp\perp}\quad\implies\quad \cV+\cV^\perp=\Hp,
\end{equation}
where $\cV^\perp\defi\{x\in \Hp\sep \langle x,y\rangle=0,\, \forall y\in \cV\}$; i.e., there exists some non-empty subset
$\mathcal{V}$ of $\Hp$ such that $\mathcal{V}=\mathcal{V}^{\perp\perp}$ and violating the relation $\cV+\cV^\perp=\Hp$.
\end{proposition}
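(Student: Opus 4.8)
The plan is to derive both statements from the existence of a nonzero \emph{isotropic} vector in $\Hp$, which is precisely what the hypothesis $\dim(\Hp)\geq\isin$ guarantees. Since $\isin\in\{2,3\}$ by Proposition~\ref{prop.326} and $\dim(\Hp)=\card(I)\geq\isin$, the defining property of the isotropy index together with the isometric, inner-product-preserving isomorphism $\Wf\colon\Hp\to\mathbb{H}(I)$ of~\eqref{isohilb} yields a nonzero $x\in\Hp$ with $\langle x,x\rangle=0$: one transports a minimal-support isotropic sequence in $c_0(\nat,\Q)$, of support cardinality $\isin\leq\card(I)$, back through $\Wf^{-1}$.

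The first assertion is then immediate: the candidate map sends such an $x\neq 0$ to $\sqrt{|\langle x,x\rangle|}=\sqrt{|0|}=0$, so it fails the positive-definiteness axiom of a norm (it is, incidentally, still homogeneous, because $|\alpha\overline{\alpha}|=|\alpha|^2$, but definiteness is the axiom that breaks). For the second assertion I would take as the counterexample the one-dimensional $\Q$-subspace $\cV\defi\Q\,x$ spanned by the same isotropic vector. Since $\scap$ is linear in its second argument, $\cV^\perp=\{w\in\Hp\sep\langle w,\alpha x\rangle=0\ \forall\alpha\in\Q\}=\{w\in\Hp\sep\langle w,x\rangle=0\}$, and because $\langle x,x\rangle=0$ we get $x\in\cV^\perp$, hence $\cV\subseteq\cV^\perp$ and therefore $\cV+\cV^\perp=\cV^\perp$. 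Moreover $\cV^\perp\neq\Hp$: otherwise $\langle w,x\rangle=0$ for every $w$, hence $\langle x,w\rangle=\overline{\langle w,x\rangle}=0$ for every $w$, and non-degeneracy of $\scap$ (Proposition~\ref{nondeginpro}) would force $x=0$, a contradiction. So already $\cV+\cV^\perp\subsetneq\Hp$, and the implication~\eqref{eq.orthcon} fails.

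It remains to check that this $\cV$ satisfies the hypothesis $\cV=\cV^{\perp\perp}$. The inclusion $\cV\subseteq\cV^{\perp\perp}$ is automatic from the Hermitian property. For the reverse inclusion, observe that $\cV^\perp$ is exactly the kernel of the linear functional $\langle x,\cdot\rangle$ on $\Hp$, which is nonzero because $x\neq 0$ and $\scap$ is non-degenerate; hence, if $z\in\cV^{\perp\perp}$, the linear functional $\langle z,\cdot\rangle$ vanishes on $\ker\langle x,\cdot\rangle$, and the elementary fact that a linear functional vanishing on the kernel of a nonzero linear functional is a scalar multiple of it gives $\langle z,\cdot\rangle=\lambda\,\langle x,\cdot\rangle=\langle\overline{\lambda}\,x,\cdot\rangle$ for some $\lambda\in\Q$; non-degeneracy then yields $z=\overline{\lambda}\,x\in\cV$. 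Thus $\cV^{\perp\perp}=\cV$ whereas $\cV+\cV^\perp=\cV^\perp\subsetneq\Hp$, as required. I expect the genuinely delicate point to be exactly this last reduction --- that the biorthogonal closure of a line through an isotropic vector collapses back onto the line instead of enlarging --- so non-degeneracy of $\scap$ (and the codimension-one character of $\cV^\perp$) carries the real weight of the argument, while the remaining steps are routine manipulations of the definitions.
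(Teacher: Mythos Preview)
Your proof is correct and follows essentially the same line as the paper's: both reduce to the existence of an isotropic vector, take $\cV=\Q\,x$, and verify $\cV=\cV^{\perp\perp}$ while $\cV+\cV^\perp=\cV^\perp\subsetneq\Hp$. The only difference is cosmetic: for the inclusion $\cV^{\perp\perp}\subset\cV$ you invoke the standard linear-algebra fact that a functional vanishing on the kernel of a nonzero functional is a scalar multiple of it, whereas the paper writes out that same argument explicitly by picking $y$ with $\langle x,y\rangle\neq 0$, decomposing an arbitrary $z$ as $\tilde{z}+\langle x,z\rangle\langle x,y\rangle^{-1}y$ with $\tilde{z}\in(\Q\,x)^\perp$, and reading off the scalar.
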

(Note: For every non-empty subset $\cV$ of $\Hp$, $\cV^\perp$ is a norm-closed linear subspace of $\Hp$.)

\begin{proof}
For every orthonormal basis $\Phi=\{\phi_1,\phi_2,\ldots\}$ in $\Hp$, the mapping
\begin{equation}
\Hp\ni x=\sum_k x_k\phi_k \mapsto \breve{x}=\{\breve{x}_1, \breve{x}_2,\ldots\}\in c_0(\mathbb{N},\Q),
\end{equation}
--- where $\breve{x}_k=x_k$, for $k\leq \dim(\Hp)$, and $\breve{x}_k=0$, otherwise --- is an isometry preserving
the inner product ($c_0(\mathbb{N},\Q)$ being endowed with the canonical inner product associated with its standard basis).
Therefore, by Proposition~\ref{prop.326}, if $\dim(\Hp)\geq \isin$, then $\Hp$ admits (nonzero) isotropic vectors:
$\exists x\in \Hp$, $x\neq 0$, such that $\sqrt{|\langle x,x\rangle|}=0$. This observation proves the first assertion.
To prove the second one, we now provide a counterexample to implication~\eqref{eq.orthcon}. Let us first show that,
for every $0\neq x\in\Hp$, $\Q\, x=(\Q\, x)^{\perp\perp}$. In fact, since $\scap$ is non-degenerate,
there is some $y\in\Hp$ such that $\langle x,y\rangle\neq 0$. Then, for every $z\in\Hp$, the vector
$\tilde{z}=z-\langle x,z\rangle\,\langle x,y\rangle^{-1}y$ is IP-orthogonal to $x$:
\begin{equation}
\langle x,\tilde{z}\rangle=\langle x,z\rangle-\langle x,z\rangle\langle x,y\rangle^{-1}\langle x,y\rangle=0.
\end{equation}
Therefore, for every $z\in\Hp$, $\tilde{z}\in (\Q\, x)^\perp$. Thus, given any $w\in (\Q\, x)^{\perp\perp}$,
we have that $\langle w,\tilde{z}\rangle=0$, $\forall z\in\Hp$; i.e.,
\begin{equation}
\langle w, z-\alpha y\rangle=0,\quad \forall z\in\Hp,
\end{equation}
where $\alpha=\langle x,z\rangle\,\langle x,y\rangle^{-1}$. This condition is equivalent to
\begin{equation}
\langle w-\overline{\beta}x, z\rangle=0,\quad \forall z\in\Hp,
\end{equation}
where $\beta=\langle x,y\rangle^{-1}\langle w,y\rangle$ ($\alpha\langle w,y\rangle=\beta\langle x,z\rangle$).
Hence, as the inner product is non-degenerate, $w-\overline{\beta}x=0$; i.e.,
$w=\overline{\beta}x\in\Q\, x$, so that $(\Q\, x)^{\perp\perp}\subset\Q\, x$.
But, for any $\emptyset \neq \cV\subset \Hp$, it is always true that $\cV\subset \cV^{\perp\perp}$;
therefore, actually, $\Q\, x=(\Q\, x)^{\perp\perp}$.
At this point, observe that, given $0\neq x\in \Hp$, we have:
\begin{align}
\langle x,x\rangle=0\;&\implies\;\Q\, x\subset (\Q\, x)^{\perp}
\nonumber\\
&\implies\; \Q\, x +(\Q\, x)^{\perp}=(\Q\, x)^\perp\neq \Hp.
\end{align}
Here, the relation $(\Q\, x)^\perp\neq \Hp$ must hold, because the assumption that $(\Q\, x)^\perp=\Hp$ would imply
\begin{equation}
\Q\, x=(\Q\, x)^{\perp\perp}=\Hp^\perp=\{0\};
\end{equation}
i.e., we would have a contradiction. In conclusion, if $\dim(\Hp)\geq \isin$, there exists a (nonzero) isotropic vector $x\in\Hp$, so that
\begin{equation}
\Q\, x =(\Q\, x)^{\perp\perp}\quad\text{and}\quad \Q\, x + (\Q\, x)^\perp\neq \Hp;
\end{equation}
i.e., implication~\eqref{eq.orthcon} is violated.
\end{proof}

\begin{remark}
In the case where $\dim(\Hp)=\infty$, the second assertion of the previous proposition can be regarded as a manifestation
of Sol\`er's celebrated theorem~\cite{Soler95}, according to which a vector space over a division ring, endowed with a
(non-degenerate) Hermitian form satisfying a relation of the form~\eqref{eq.orthcon}, and admitting an infinite orthonormal sequence,
must be real, complex or quaternionic. Note that Sol\`er calls an Hermitian space where a relation of the type~\eqref{eq.orthcon}
is satisfied an \emph{orthomodular space}. This is due to the fact that the canonical orthocomplemented lattice of (form-closed)
subspaces of a orthomodular space turns out to be an orthomodular lattice; see, e.g., Theorem~$2.8$ in~\cite{piziak91}.
\end{remark}

In the next subsection, we will also argue that a further distinguishing mark of an infinite-dimensional $p$-adic Hilbert space $\Hp$,
versus a complex Hilbert space, is that it \emph{cannot} be identified with its (topological) dual $\dH$.


\subsection{Linear operators between $p$-adic normed spaces}
\label{linops}

Let $(X,\|\cdot\|_X)$, $(Y,\|\cdot\|_Y)$ be two $p$-adic normed spaces, and let $L: X\rightarrow Y$
be a linear operator from $X$ to $Y$. In the $p$-adic setting, as in the complex case,  linear operators  are
\emph{continuous} precisely when they are \emph{bounded}~\cite{robert2013course,narici71,rooij1978non,Perez-Garcia,diagana2016non};
specifically, $L$ is bounded if
\begin{equation}\label{normonL}
\|L\|\defi \sup_{x\neq 0}\frac{\|Lx\|_Y}{\|x\|_X}<\infty.
\end{equation}
(A bounded conjugate-linear operator and its norm are defined analogously).
We denote the space of bounded --- equivalently, continuous --- linear operators $L: X\rightarrow Y$,
by $\mathcal{B}(X,Y)$, and we refer to the norm in~\eqref{normonL} as the \emph{operator norm}.

\begin{theorem}\label{th.330}
Let $X,Y$ be $p$-adic normed spaces over $\Q$ --- with $Y$ complete and normal, i.e., $\|Y\|_Y=|\Q|$ ---
and let $X_0$ be a linear subspace of $X$. Then, every linear operator $L_0\in\mathcal{B}(X_0,Y)$ admits
an extension $L\in \mathcal{B}(X,Y)$ --- a so-called \emph{Hahn-Banach extension} of $L_0$ ---
such that $\|L\|=\|L_0\|$. If $X_0$ is dense in $X$, the bounded extension $L$ of $L_0$ is unique.
\end{theorem}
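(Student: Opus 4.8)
The plan is to follow the classical Hahn--Banach strategy, but in the non-Archimedean setting one does not extend one dimension at a time against a sublinear functional; instead one uses the \emph{spherical completeness} of the target field $\Q$ (guaranteed, as noted in Remark~\ref{remnor}, because the valuation group $|\Q^\ast|$ is discrete) to extend coordinatewise. Concretely, I would first treat the scalar-valued case $Y=\Q$. Given $f_0\in\mathcal{B}(X_0,\Q)$ with $c\defi\|f_0\|$, one wants to extend $f_0$ to $f\in X^\prime$ with $\|f\|=c$. The key single-step lemma is: if $x_1\in X\setminus X_0$, then $f_0$ extends to $X_0+\Q x_1$ with the same norm. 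Proving this amounts to finding a value $\lambda\defi f(x_1)\in\Q$ such that $|f_0(x)+\alpha\lambda|\le c\,\|x+\alpha x_1\|$ for all $x\in X_0$, $\alpha\in\Q$; dividing by $\alpha$ and relabeling, this is the requirement $|\lambda-(-\alpha^{-1}f_0(x))|\le c\,\|x_1-(-\alpha^{-1}x)\|$, i.e.\ that $\lambda$ lie in the intersection of the family of closed balls $B\big(f_0(x^\prime),\,c\,\|x_1-x^\prime\|\big)$, $x^\prime\in X_0$. By the Cauchy--Schwarz-type estimate $|f_0(x^\prime)-f_0(x^{\prime\prime})|=|f_0(x^\prime-x^{\prime\prime})|\le c\|x^\prime-x^{\prime\prime}\|\le c\max\{\|x_1-x^\prime\|,\|x_1-x^{\prime\prime}\|\}$, any two of these balls meet, hence (ultrametric spaces are ``2-Helly'') the family is a nest, and spherical completeness of $\Q$ gives a common point $\lambda$. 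This is the main technical obstacle — everything else is bookkeeping.

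Next I would pass from the one-step extension to a full extension on all of $X$. Since $X$ is separable (Remark~\ref{resepa}), pick a countable set whose span is dense, adjoin its elements one at a time, take the union of the resulting nested extensions to get a norm-preserving extension to a dense subspace, and then invoke uniform continuity to extend continuously (and uniquely) to the completion, i.e.\ to all of $X$; the norm is preserved at each stage and in the limit. Alternatively, for the non-separable situation one runs the same argument transfinitely via Zorn's lemma, the maximal element necessarily being defined on all of $X$ by the one-step lemma. Either way one obtains $f\in\mathcal{B}(X,\Q)$ extending $f_0$ with $\|f\|=\|f_0\|$ (the inequality $\|f\|\ge\|f_0\|$ being automatic).

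Finally I would bootstrap from $Y=\Q$ to a general complete normal $Y$. Since $Y$ is normal, Theorem~\ref{sepban} gives an isometric identification $Y\cong c_0(J,\Q)$ via a normal basis $\{d_j\}_{j\in J}$; write $L_0 x=\{(L_0x)_j\}_{j\in J}$ with $(L_0x)_j\in\Q$. Each coordinate functional $x\mapsto(L_0x)_j$ lies in $\mathcal{B}(X_0,\Q)$ with norm $\le\|L_0\|$, so by the scalar case it extends to $g_j\in\mathcal{B}(X,\Q)$ with $\|g_j\|\le\|L_0\|$. One must then check that $Lx\defi\{g_j(x)\}_{j\in J}$ actually lands in $c_0(J,\Q)$ for every $x\in X$ — this is the one place where the naive coordinatewise recipe could fail, and it is exactly where I would be careful. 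I would handle it by instead extending $L_0$ \emph{first} to the closure $\overline{X_0}$ by continuity (there $Lx\in Y$ is automatic since $L_0$ maps into $Y$), reducing to the case $X_0$ closed; then enlarge $X_0$ one vector $x_1$ at a time, and for the single new vector one needs the single value $Lx_1\in Y$ lying in an intersection of balls \emph{in the Banach space $Y$} — and here one uses that $Y$, being $c_0(J,\Q)$ over a spherically complete field with discrete value group, is itself spherically complete, so the same nest-of-balls argument applies verbatim with $|\cdot|$ replaced by $\|\cdot\|_Y$. This makes the coordinate decomposition unnecessary and the membership $Lx\in Y$ automatic. The uniqueness clause when $X_0$ is dense is immediate from continuity, as already noted.
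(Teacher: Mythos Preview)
Your proposal is correct and, in its final form, coincides with the paper's proof: the paper simply observes that $Y$ (being complete, normal, over a field with discrete value group) is spherically complete and then cites Ingleton's non-Archimedean Hahn--Banach theorem, whose proof is exactly the nest-of-balls argument you reconstruct. The detour through the scalar case and the abandoned coordinatewise extension is unnecessary once you arrive --- as you do --- at the direct argument using spherical completeness of $Y$ itself, but the endpoint is the right one.
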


\begin{proof}
By Lemma~2.4 in~\cite{rooij1978non} (or by Proposition~20.2 in~\cite{schikhof2007ultrametric}) the normed space $Y$,
being complete and normal (and the valuation group $|\mathbb{Q}_p(\sqrt{\mu})^\ast|$ discrete), is spherically complete.
Then, the first assertion of the theorem follows from the non-Archimedean Hahn-Banach theorem (i.e., Ingleton's theorem;
see Theorem~4.8 in~\cite{rooij1978non}).
The second assertion is clear.
\end{proof}

From now on, we will remove subscripts $X$ and $Y$ from the associated norms, since it will be clear from the context to which space they refer.

\begin{proposition}[\cite{robert2013course}]
The space $\mathcal{B}(X,Y)$, endowed with the operator norm, is an ultrametric normed space over $\Q$,
which is complete (hence, an ultrametric Banach space) whenever $Y$ is.
\end{proposition}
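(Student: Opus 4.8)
The plan is to verify the two asserted properties of $\mathcal{B}(X,Y)$ separately: that the operator norm is an ultrametric norm making $\mathcal{B}(X,Y)$ a normed vector space over $\Q$, and that completeness of $Y$ is inherited. First I would check the norm axioms. Positivity and nondegeneracy: $\|L\|=0$ means $\|Lx\|\le 0\cdot\|x\|$ for all $x$, hence $Lx=0$ everywhere, i.e. $L=0$; conversely $\|0\|=0$. Homogeneity: for $\alpha\in\Q$, $\|\alpha L\|=\sup_{x\neq 0}\|\alpha Lx\|/\|x\|=|\alpha|\sup_{x\neq 0}\|Lx\|/\|x\|=|\alpha|\,\|L\|$, using the homogeneity of the norm on $Y$. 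The ultrametric (strong) triangle inequality is the one point that genuinely uses non-Archimedean structure: for $L,M\in\mathcal{B}(X,Y)$ and $x\neq 0$,
\begin{equation}
\frac{\|(L+M)x\|}{\|x\|}=\frac{\|Lx+Mx\|}{\|x\|}\le\frac{\max\{\|Lx\|,\|Mx\|\}}{\|x\|}\le\max\{\|L\|,\|M\|\},
\end{equation}
and taking the supremum over $x\neq 0$ gives $\|L+M\|\le\max\{\|L\|,\|M\|\}$. This makes $(\mathcal{B}(X,Y),\|\cdot\|)$ an ultrametric normed space over $\Q$; that it is a vector space over $\Q$ is immediate since sums and $\Q$-scalar multiples of bounded linear operators are again bounded and linear, as the displayed estimates show each such operator has finite operator norm.

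Next I would prove completeness when $Y$ is complete, by the standard argument that a Cauchy sequence of bounded operators converges pointwise to a bounded operator and that convergence is in operator norm. Let $\{L_n\}$ be Cauchy in $\mathcal{B}(X,Y)$. For each fixed $x\in X$, $\|L_nx-L_mx\|\le\|L_n-L_m\|\,\|x\|$, so $\{L_nx\}$ is Cauchy in $Y$; since $Y$ is complete, it converges to a limit which I call $Lx$. Linearity of $L$ passes to the limit (using joint continuity of addition and scalar multiplication, or just the ultrametric estimates). Boundedness: since $\{L_n\}$ is Cauchy it is bounded, say $\|L_n\|\le C$ for all $n$; then $\|Lx\|=\lim_n\|L_nx\|\le C\|x\|$, so $L\in\mathcal{B}(X,Y)$ with $\|L\|\le C$. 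Finally, norm convergence $L_n\to L$: given $\varepsilon>0$ pick $N$ with $\|L_n-L_m\|\le\varepsilon$ for $n,m\ge N$; then for $n\ge N$ and any $x\neq 0$, $\|L_nx-Lx\|=\lim_{m}\|L_nx-L_mx\|\le\varepsilon\|x\|$, so $\|L_n-L\|\le\varepsilon$. Hence $\mathcal{B}(X,Y)$ is an ultrametric Banach space whenever $Y$ is.

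I do not expect a serious obstacle here; this is a routine adaptation of the classical Archimedean proof, and in fact the strong triangle inequality makes the triangle-inequality step \emph{easier} than in the real or complex case. The only minor subtlety worth a sentence is that at no point do we need $X$ to be complete or normal: the boundedness estimate and the pointwise-limit construction use only that $X$ is a normed space and $Y$ is complete. If one wanted a slicker completeness argument one could instead invoke Proposition~\ref{sumlemma}: show every series $\sum_i L_i$ with $\|L_i\|\to 0$ converges in $\mathcal{B}(X,Y)$ (its partial sums are pointwise-convergent since $\sum_i L_ix$ converges in $Y$ by the same criterion, and the tail estimate $\|\sum_{i\ge n}L_i\|\le\sup_{i\ge n}\|L_i\|$ gives norm convergence), which by Proposition~\ref{sumlemma} applied to $\mathcal{B}(X,Y)$ is equivalent to completeness — but the direct Cauchy-sequence argument is entirely elementary and self-contained, so that is the route I would write up.
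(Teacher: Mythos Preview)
Your argument is correct and is the standard one; the paper itself does not prove this proposition but merely cites~\cite{robert2013course}, so there is nothing to compare against beyond noting that your write-up is exactly the routine verification one finds in that reference. One small quibble on your aside: Proposition~\ref{sumlemma} as stated only gives the forward implication (completeness $\Rightarrow$ the null-terms criterion for series convergence), so invoking it to \emph{deduce} completeness would be circular without separately establishing the converse; your direct Cauchy-sequence argument is the right route and needs no patching.
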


As usual, in the case where $X=Y$, we simply write $\mathcal{B}(X)$ rather than $\mathcal{B}(X,X)$.
Moreover, since $\|ST\|\leq \|S\|\,\|T\|$, for all $S,T\in\mathcal{B}(X)$, if $X$ is a $p$-adic Banach space,
then $\mathcal{B}(X)$ is a \emph{unital ultrametric Banach algebra}~\cite{narici71}.

As in the standard complex case, we define the (topological) \emph{dual} of a $p$-adic Banach space $X$
as the ultrametric Banach space $\dX\defi\mathcal{B}(X, \Q)$. Denoting by $\ddX$ the \emph{bidual} of $X$,
the linear map $\IX: X\rightarrow \ddX$ defined by $\big(\IX(x)\big)(\xi)\defi\xi(x)$, for all $x\in X$ and $\xi\in\dX$,
is continuous (see Chapt.~3 of~\cite{rooij1978non}). If $\dim(X)<\infty$, then $X$ is \emph{reflexive}
(i.e., $\IX$ is a surjective isometry); otherwise, since $\Q$ is spherically complete,
by a classical result of Fleischer --- see Theorem~$4.16$ in~\cite{rooij1978non} --- $X$ is \emph{not} reflexive.
Nevertheless, in the case where $X$ is infinite-dimensional, it may be `pseudoreflexive':

\begin{definition}
A $p$-adic Banach space $X$ is said to be \emph{pseudoreflexive} if the linear map $\IX:X\rightarrow \ddX$ is an isometry.
\end{definition}

\begin{remark}
Let $\Hp$ be a $p$-adic Hilbert space, with $\dim(\Hp)=\infty$. Then, its dual $\dH$ is \emph{not} a $p$-adic Hilbert space
isomorphic to $\Hp$ (like in the complex case), but a $p$-adic Banach space isomorphic to $\ell^{\infty}(\nat,\Q)$; see the
forthcoming definition~\eqref{defdua} and Proposition~\ref{prop.334} below.
\end{remark}

In order to describe the dual of the $p$-adic Banach space $c_0(I,X)$, where $I=\{1,2,\ldots\}$
is a countable index set, we recall that, for a $p$-adic Banach space $X$, the space $\ell^{\infty}(I,X)$
is defined as follows:
\begin{equation} \label{defdua}
\ell^{\infty}(I,X)\defi\{\xi=\{\xi_i\}_{i\in I}\sep \mbox{$\xi_i\in X$, $\xi$ bounded sequence in $X$}\}.
\end{equation}
This space, equipped with the norm
\begin{equation}
\|\xi\|_{\infty}\defi\sup_{i\in I}\|\xi_i\|,
\end{equation}
is a $p$-adic Banach space~\cite{robert2013course,rooij1978non}. In particular, for $X=\Q$, we obtain the $p$-adic Banach space
\begin{equation}
\ell^{\infty}(I,\Q)\defi \{\xi=\{\xi_i\}_{i\in I}\sep \mbox{$\xi_i\in \Q$, $\xi$ bounded sequence in $\Q$}\},
\end{equation}
endowed with the norm $\|\xi\|_{\infty}\defi \sup_{i\in I}|\xi_i|$.

\begin{remark}
The $p$-adic Banach space $\ell^{\infty}(\nat,\Q)$ --- differently from $c_0(\nat,\Q)$ --- is \emph{not} separable;
equivalently --- recall Remark~\ref{resepa} --- it is not of countable type. In fact, for every $J\subset\nat$, let
$\indf\in\ell^{\infty}(\nat,\Q)$ be defined by
\begin{equation}
(\indf)_i=
\begin{cases}
1 \quad \mbox{if $i\in J$}
\\
0  \quad \mbox{if $i\not\in J$}
\end{cases}.
\end{equation}
Clearly, $\|\indf-\indfb\|_{\infty}=1$, whenever $J\neq K\subset\nat$. Let us put
\begin{equation}
\baj\defi\big\{\xi\in\ell^{\infty}(\nat,\Q) \sep \|\xi-\indf\|_{\infty}\le p^{-1}\big\}.
\end{equation}
Thus, $\{\baj\}_{J\subset\nat}$ is a countably infinite set of balls in $\ell^{\infty}(\nat,\Q)$.
Note that these balls are mutually disjoint because, if $\xi\in\indf$ and $J\neq K\subset\nat$, then
\begin{equation}
1=\|\indf-\indfb\|_{\infty}\le\max\{\|\indf-\xi\|_{\infty},\|\xi-\indfb\|_{\infty}\}=
\max\big\{p^{-1},\|\xi-\indfb\|_{\infty}\big\},
\end{equation}
so that $\|\xi-\indfb\|_{\infty}\ge 1$ and $\xi\not\in\bak$. Now, let $\mathscr{E}$ be any
dense subset of $\ell^{\infty}(\nat,\Q)$. Each ball in $\{\baj\}_{J\subset\nat}$ must contain
at least one element of $\mathscr{E}$, and such an element is not contained in any other ball
in $\{\baj\}_{J\subset\nat}$. It follows that there is an uncountable subset of $\mathscr{E}$,
so that $\mathscr{E}$ itself is uncountable and, hence, $\ell^{\infty}(\nat,\Q)$ is not separable.
It is worth observing that, more generally, $\ell^{\infty}(\nat,\mathbb{K})$ --- where $\mathbb{K}$ is any
complete ultrametric field, with a non-trivial valuation --- is not of countable type;
see Theorem~{2.5.15} in~\cite{Perez-Garcia}.
\end{remark}

\begin{proposition}[\cite{robert2013course,natarajan2019sequence,Perez-Garcia}]\label{dualspace}
Let $X$ be a $p$-adic Banach space over $\Q$. The topological dual of the space $c_0(I,X)$ is isomorphic, as a $p$-adic Banach space,
to $\ell^{\infty}(I, \dX)$. The identification of $c_0(I,X)^\prime$ with $\ell^{\infty}(I, \dX)$ is given via the bilinear pairing
\begin{equation}\label{pairinga}
\ell^\infty(I, \dX)\times c_0(I, X)\ni(\xi,y)\mapsto\sum_{i\in I}\xi_i (y_i)\ifed\xi(y)\in\Q.
\end{equation}
\end{proposition}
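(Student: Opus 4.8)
The plan is to show that the assignment $T\xi\defi\xi(\cdot)$, with $\xi(y)\defi\sum_{i\in I}\xi_i(y_i)$ as in the pairing~\eqref{pairinga}, defines a surjective isometry $T\colon\ell^\infty(I,\dX)\to c_0(I,X)^\prime$; since $T$ is obviously $\Q$-linear in $\xi$, the real content is the well-definedness of $T\xi$, the norm identity $\|T\xi\|=\|\xi\|_\infty$, and surjectivity. Throughout I take $\card(I)=\infty$ (say $I=\mathbb{N}$); for finite $I$ both $\ell^\infty(I,\dX)$ and $c_0(I,X)^\prime$ reduce to the algebraic dual of $X^{\card(I)}$ and the pairing is the evident one, so nothing needs to be proved.

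For well-definedness, take $\xi\in\ell^\infty(I,\dX)$ and $y\in c_0(I,X)$. Then $|\xi_i(y_i)|\le\|\xi_i\|\,\|y_i\|\le\|\xi\|_\infty\,\|y_i\|$, and since $\lim_i\|y_i\|=0$ the terms $\xi_i(y_i)$ tend to $0$ in $\Q$; hence the series $\sum_i\xi_i(y_i)$ converges by Proposition~\ref{sumlemma}. Linearity of $y\mapsto(T\xi)(y)$ is immediate, and the strong triangle inequality for $|\cdot|$ (applied to partial sums and passed to the limit, using continuity of the absolute value) gives $|(T\xi)(y)|\le\sup_i|\xi_i(y_i)|\le\|\xi\|_\infty\,\|y\|_\infty$, so $T\xi\in c_0(I,X)^\prime$ with $\|T\xi\|\le\|\xi\|_\infty$. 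For the reverse inequality, fix $i\in I$ and $x\in X$ and apply $T\xi$ to the finitely supported sequence $x\,\delta_i\in c_0(I,X)$ having $x$ in position $i$ and $0$ elsewhere: since $\|x\,\delta_i\|_\infty=\|x\|$ and $(T\xi)(x\,\delta_i)=\xi_i(x)$, we get $|\xi_i(x)|\le\|T\xi\|\,\|x\|$, and taking the supremum over $x\neq 0$ and then over $i$ yields $\|\xi\|_\infty\le\|T\xi\|$. Thus $\|T\xi\|=\|\xi\|_\infty$, so $T$ is an isometry (in particular injective).

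For surjectivity, given $\Lambda\in c_0(I,X)^\prime$ set $\xi_i(x)\defi\Lambda(x\,\delta_i)$; this is $\Q$-linear in $x$, and $|\xi_i(x)|\le\|\Lambda\|\,\|x\,\delta_i\|_\infty=\|\Lambda\|\,\|x\|$ shows $\xi_i\in\dX$ with $\|\xi_i\|\le\|\Lambda\|$ for every $i$, so $\xi\defi\{\xi_i\}_{i\in I}\in\ell^\infty(I,\dX)$. On a finitely supported $y=\sum_{i\in F}y_i\,\delta_i$ ($F\subset I$ finite) linearity of $\Lambda$ gives $\Lambda(y)=\sum_{i\in F}\Lambda(y_i\,\delta_i)=\sum_{i\in F}\xi_i(y_i)=(T\xi)(y)$. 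For arbitrary $y\in c_0(I,X)$ the truncations obtained by zeroing all but finitely many coordinates converge to $y$ in $\|\cdot\|_\infty$ — this is exactly the condition $\lim_i\|y_i\|=0$ — so the finitely supported sequences are dense in $c_0(I,X)$; since $\Lambda$ and $T\xi$ are continuous and agree on a dense set, $\Lambda=T\xi$. Hence $T$ is a surjective isometry, i.e.\ an isomorphism of $p$-adic Banach spaces.

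The steps are all elementary precisely because of the ultrametric setting: a series in a $p$-adic Banach space converges as soon as its terms go to $0$, and its sum then has norm at most the supremum of the term norms, which makes both the convergence of the pairing and the bound $\|T\xi\|\le\|\xi\|_\infty$ essentially automatic, and no Hahn--Banach extension is needed since surjectivity is obtained by the explicit formula $\xi_i=\Lambda(\,\cdot\,\delta_i)$. The only place that calls for a little (standard) care is the density of the finitely supported sequences in $c_0(I,X)$ together with the continuity argument used to pass from finitely supported $y$ to all of $c_0(I,X)$; the finite-dimensional case requires no argument at all.
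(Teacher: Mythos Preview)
Your proof is correct and complete. Note that the paper does not supply its own proof of this proposition: it is stated with citations to \cite{robert2013course,natarajan2019sequence,Perez-Garcia} and no argument is given. Your write-up is precisely the standard argument found in those references (define $T\xi$ via the pairing, verify the isometry using the ultrametric inequality and the coordinate injections $x\,\delta_i$, then recover $\xi$ from an arbitrary $\Lambda$ by restriction to these injections and extend by density), so there is nothing to compare beyond observing that you have filled in what the paper deliberately omits.
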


\begin{remark}\label{rem.3-}
Note that the norm of any $x\in c_0(I, \Q)\subset\ell^\infty(I, \Q)$ coincides with the norm of $x$ regarded as an element of $\ell^\infty(I,\Q)$
(equivalently, of $c_0(I,\Q)^\prime$). Moreover, by suitably composing the pairing~\eqref{pairinga} with the conjugate-linear isometry
$\{\xi_i\}_{i\in I}\mapsto \{\overline{\xi_i}\}_{i\in I}$ of $\ell^\infty(I, \Q)$ onto itself, we obtain the \emph{sesquilinear pairing}
\begin{equation}\label{pairingb}
\ell^\infty(I, \Q)\times c_0(I, \Q)\ni(\xi,y)\mapsto\sum_{i\in I}\overline{\xi_i}\, y_i=\overline{\xi}(y)\equiv\langle\xi,y\rangle\in\Q.
\end{equation}
This pairing determines a conjugate-linear isometry
\begin{equation}
c_0(I, \Q)\ni x\mapsto\scapx\in c_0(I, \Q)^\prime.
\end{equation}
Also note that $c_0(I, \Q)$ is pseudoreflexive, because the mapping
\begin{equation}
c_0(I, \Q)\ni x\mapsto\scapcx\in c_0(I, \Q)^{\prime\prime}
\end{equation}
--- where $\scapcx\colon\ell^\infty(I,\Q)\ni\xi\mapsto\langle\overline{\xi},x\rangle=\xi(x)=\sum_{i\in I}\xi_i x_i$ ---
is a linear isometry ($\|\scapx\|=\|x\|_{\infty}$). Clearly, we have: $\langle \xi,x\rangle=\big(\IX(x)\big)(\overline{\xi})$,
with $X=c_0(I,\Q)$. Observe that, with a slight abuse, we are using the same symbol $\scap$ for the inner product~\eqref{eq.Hs2}
and for the sesquilinear pairing~\eqref{pairingb}.
\end{remark}

\begin{proposition}\label{prop.334}
Let $\Hp$ be a $p$-adic Hilbert space over $\Q$, and let $\Phi\equiv\{\phi_i\}_{i\in I}$ be an orthonormal basis in $\Hp$. The mapping
\begin{equation}\label{defJH}
\JH:\Hp\ni\psi\mapsto\scapsi\in\dH
\end{equation}
is a conjugate-linear isometry of $\Hp$ into its dual $\dH$, that is surjective iff $\dim(\Hp)<\infty$.
The $p$-adic Banach space $\dH$ is isomorphic to $\ell^{\infty}(I,\Q)$ --- with $\card(I)=\dim(\dH)$ ---
and this isomorphism is implemented by the surjective isometry
\begin{equation} \label{defell}
\ellphi\colon\ell^{\infty}(I,\Q)\ni\xi=\{\xi_i\}_{i\in I}\mapsto\sum_{i\in I}\xi_i\scaphii\in\dH,
\end{equation}
where, if $I=\nat$, the series converges w.r.t.\ the weak$^{\hspace{0.4mm}\ast}\hspace{-0.6mm}$-topology;
moreover,
\begin{equation} \label{relell}
\ellphi(c_0(I,\Q))=\JH(\Hp).
\end{equation}

Finally, $\Hp$ is reflexive iff $\dim(\Hp)<\infty$; in the case where $\dim(\Hp)=\infty$, $\Hp$ is pseudoreflexive,
because the mapping
\begin{equation} \label{defIH}
\IH:\Hp\ni\psi\mapsto \Big(\dH\ni\dphi\mapsto\dphi(\psi)\in\Q\Big)\in\ddH
\end{equation}
is an isometry of $\Hp$ into its bidual $\ddH$.
\end{proposition}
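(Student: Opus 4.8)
The plan is to verify each clause of the statement by reducing everything to the coordinate model $\mathbb{H}(I)$ via the isomorphism $\Wf$ of~\eqref{isohilb}, and then invoking the duality description of $c_0(I,\Q)$ from Proposition~\ref{dualspace} (and Remark~\ref{rem.3-}). First I would check that $\JH$ is a well-defined conjugate-linear map: linearity in $\psi$ fails and is replaced by conjugate-linearity precisely because $\scap$ is conjugate-linear in its first argument. For the isometry property, given $\psi\in\Hp$ one has, by the non-Archimedean Parseval identity and the Cauchy--Schwarz inequality, $\|\scapsi\|=\sup_{\|x\|\le 1}|\langle\psi,x\rangle|\le\|\psi\|$; the reverse inequality follows by testing against a suitable basis vector $\phi_j$ realizing $\|\psi\|=\max_i|\langle\phi_i,\psi\rangle|=\max_i|\overline{\langle\psi,\phi_i\rangle}|$, since $|\langle\psi,\phi_j\rangle|=\|\psi\|$ for some $j$ and $\|\phi_j\|=1$. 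Surjectivity of $\JH$ when $\dim(\Hp)<\infty$ is then immediate by dimension count; when $\dim(\Hp)=\infty$, non-surjectivity will follow from~\eqref{relell} together with the fact that $c_0(I,\Q)\subsetneq\ell^\infty(I,\Q)$.

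Next I would identify $\dH$ with $\ell^\infty(I,\Q)$. Using $\Wf:\Hp\to\mathbb{H}(I)$ we get a dual isometry $\dH\cong\mathbb{H}(I)'=c_0(I,\Q)'$, and Proposition~\ref{dualspace} (with $X=\Q$, noting $\Q'\cong\Q$ via the conjugation-twisted pairing of Remark~\ref{rem.3-}) identifies this with $\ell^\infty(I,\Q)$ through the pairing $\xi(y)=\sum_i\xi_i y_i$. Unwinding the identification, the functional attached to $\xi=\{\xi_i\}$ sends $x=\sum_j\langle\phi_j,x\rangle\phi_j$ to $\sum_i\xi_i\langle\phi_i,x\rangle$, which is exactly $\bigl(\sum_i\xi_i\scaphii\bigr)(x)$; so $\ellphi$ as defined in~\eqref{defell} is precisely this isomorphism and is a surjective isometry. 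I would remark that for $I=\nat$ the sum $\sum_i\xi_i\scaphii$ need not converge in operator norm (the tail norms are $\sup_{i\ge n}|\xi_i|$, which need not vanish), but it converges in the weak$^\ast$ topology since for each fixed $x\in\Hp$ one has $\langle\phi_i,x\rangle\to 0$, hence $\xi_i\langle\phi_i,x\rangle\to 0$ and the scalar series converges by Proposition~\ref{sumlemma}. The relation~\eqref{relell} is then a direct translation: $\ellphi(\xi)=\JH(\psi)$ with $\psi=\sum_i\overline{\xi_i}\phi_i$, and $\psi\in\Hp$ (i.e.\ $\xi_i\to 0$) iff $\xi\in c_0(I,\Q)$, using Proposition~\ref{prop.3.29}.

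For the last paragraph on (pseudo)reflexivity, the finite-dimensional case is the general fact recalled in Subsection~\ref{linops} that finite-dimensional $p$-adic Banach spaces are reflexive. In the infinite-dimensional case, non-reflexivity is Fleischer's theorem (also recalled in that subsection), since $\Q$ is spherically complete; so it remains only to show $\IH$ is an isometry, i.e.\ that $\Hp$ is pseudoreflexive. This transfers along $\Wf$ to the statement that $c_0(I,\Q)$ is pseudoreflexive, which is exactly what Remark~\ref{rem.3-} records: the map $x\mapsto\scapcx$ into $c_0(I,\Q)''$ is a linear isometry because $\|\scapx\|_{c_0(I,\Q)'}=\|x\|_\infty$ and the bidual norm agrees with it. I expect the only genuinely delicate point to be bookkeeping the two conjugations — the one in $\Q$ used to define $\scap$, and the conjugate-linear twist $\{\xi_i\}\mapsto\{\overline{\xi_i}\}$ needed to pass from the bilinear pairing of Proposition~\ref{dualspace} to the sesquilinear pairing~\eqref{pairingb} — so that $\JH$, $\ellphi$ and $\IH$ fit together with the correct (conjugate-)linearity types; everything else is a routine transport of structure through $\Wf$ and the already-established facts about $c_0(I,\Q)$ and $\ell^\infty(I,\Q)$.
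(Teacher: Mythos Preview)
Your proposal is correct and follows essentially the same route as the paper: reduce to the coordinate model via $\Wf$, invoke Proposition~\ref{dualspace} to identify $\dH\cong\ell^\infty(I,\Q)$ and unwind this as the map $\ellphi$, read off~\eqref{relell} and the (non-)surjectivity of $\JH$, and transport pseudoreflexivity from $c_0(I,\Q)$ via Remark~\ref{rem.3-}. Your isometry argument for $\JH$ (Cauchy--Schwarz for one inequality, testing against a basis vector $\phi_j$ attaining the Parseval maximum for the other) is in fact a bit cleaner than the paper's direct computation, and your remark on why only weak$^\ast$ convergence is available for the series defining $\ellphi$ is a useful addition, but there is no substantive difference in strategy.
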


\begin{proof}
The mapping~\eqref{defJH} is a (conjugate-linear) isometry, because
\begin{align}
\|\JH\psi\|=\sup_{\eta\neq 0}\frac{|\langle \psi,\eta\rangle|}{\|\eta\|}&
=\sup_{\eta\neq 0}\frac{|\sum_{i\in I}\langle \psi,\phi_i\rangle\langle\phi_i,\eta\rangle|}{\|\eta\|}\nonumber\\
&=\sup_{\eta\neq 0}\frac{\max_{i\in I}|\langle \psi,\phi_i\rangle|\,
|\langle\phi_i,\eta\rangle|}{\max_{i\in I}|\langle\phi_i,\eta\rangle|}\nonumber\\
&=\max_{i\in I}|\langle\phi_i,\psi\rangle|=\|\psi\|.
\end{align}
Since $\Hp$, as a $p$-adic Banach space, is isomorphic to $c_0(I,\Q)$ $(\card(I)=\dim(\Hp))$ via the mapping
$\Hp\ni x=\sum_{i\in I}\langle\phi_i,x\rangle\phi_i\mapsto\{\langle\phi_i,x\rangle\}_{i\in I}\in c_0(I,\Q)$,
then, by suitably composing this linear isometry with the bilinear pairing~\eqref{pairinga}, we see that the
map $\ellphi\colon\ell^{\infty}(I,\Q)\rightarrow \dH$ ---
$\big(\ellphi\big)(\psi)\defi\sum_{i\in I}\xi_i\langle \phi_i,\psi\rangle$ --- is an isomorphism of $p$-adic Banach spaces.
Therefore, we can write $\ellphi=\sum_{i\in I}\xi_i\scaphii$, where, for $I=\nat$, the series converges
pointwise, namely, w.r.t.\ the weak$^\ast\hspace{-0.3mm}$-topology (see, e.g., Sect.~{7.3} of~\cite{Perez-Garcia}).
Also note that $\JH(\Hp)=\ellphi(c_0(I,\Q))$; hence, $\JH$ is surjective iff $\dim(\Hp)<\infty$. Finally, we have
already observed that, if $\dim(\Hp)=\infty$, then $\Hp$ is not reflexive. Nevertheless, $\Hp$ is pseudoreflexive, because the mapping
\begin{equation}
\IH\colon \Hp\ni \psi=\sum_{i\in I}x_i\phi_i\mapsto\Big(\dH\ni\dphi=
\sum_{j\in I}\xi_j\scaphij\mapsto\phi^\prime(\psi)=\sum_{i\in I}\xi_i x_i\in \Q\Big)\in \ddH
\end{equation}
is a linear isometry.
\end{proof}

\begin{remark} \label{renoco}
With regard to the isometry~\eqref{defell}, note that --- since, for every $\xi\equiv\{\xi_i\}_{i\in I}\in\ell^{\infty}(I,\Q)$
and every finite subset $I_0$ of $I$,
\begin{equation}
\big\|{\textstyle\sum_{i\in I_0}}\xi_i\scaphii\big\|=\max_{i\in I_0}|\xi_i|\,\|\JH\phi_i\|
=\max_{i\in I_0}|\xi_i|
\end{equation}
--- in the case where $I=\nat$, the (pointwise converging) series $\sum_{i\in I}\xi_i\scaphii$ converges
w.r.t.\ the norm topology iff $\xi\in c_0(I,X)$.
\end{remark}

As a consequence of the first assertion of Proposition~\ref{prop.334}, we have the following:
\begin{corollary}
If $A\in \Bp$ and $B\in \mathcal{B}(\Hp^\prime)$ satisfy the intertwining relation
\begin{equation}\label{interell}
\JH\circ A=B\circ \JH,
\end{equation}
then $B\hspace{0.5mm}\JH(\Hp)\subset\JH(\ran(A))\subset\JH(\Hp)$ and $\|A\|=\|B_0\|\leq\|B\|$, where $B_0$ is
the restriction of $B$ to the closed subspace $\JH(\Hp)$ of $\dH$.
\end{corollary}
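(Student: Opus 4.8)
The plan is to push everything through the conjugate-linear isometric embedding $\JH\colon\Hp\to\dH$ supplied by Proposition~\ref{prop.334}. The key observation is that $\JH$ is a bijection of $\Hp$ onto its image $\JH(\Hp)$, that this image is norm-closed in $\dH$ (being the isometric image of the complete space $\Hp$), and that norms are preserved: $\|\JH\psi\|=\|\psi\|$ for all $\psi\in\Hp$. Since the corollary is really just a change-of-variable statement along $\JH$, the argument should be short.

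First I would establish the chain $B\,\JH(\Hp)\subset\JH(\ran(A))\subset\JH(\Hp)$ directly from the intertwining relation~\eqref{interell}: an arbitrary element of $\JH(\Hp)$ is of the form $\JH\psi$ with $\psi\in\Hp$, and then $B\,\JH\psi=\JH(A\psi)\in\JH(\ran(A))$, while $\JH(\ran(A))\subset\JH(\Hp)$ is trivial because $\ran(A)\subset\Hp$. A by-product of this computation is that $B$ maps the closed subspace $\JH(\Hp)$ into itself, so that the restriction $B_0\defi B|_{\JH(\Hp)}$ is a well-defined bounded operator on $\JH(\Hp)$; this has to be noted \emph{before} one can legitimately speak of $\|B_0\|$.

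Next I would prove the norm identity. For every $\psi\in\Hp\setminus\{0\}$, combining $\JH(A\psi)=B\,\JH\psi=B_0\,\JH\psi$ with $\|\JH\psi\|=\|\psi\|$ gives $\|A\psi\|/\|\psi\|=\|B_0\,\JH\psi\|/\|\JH\psi\|$. Taking the supremum over $\psi\in\Hp\setminus\{0\}$ and using that $\psi\mapsto\JH\psi$ is a bijection of $\Hp\setminus\{0\}$ onto $\JH(\Hp)\setminus\{0\}$, the right-hand supremum becomes exactly $\sup_{0\neq\dphi\in\JH(\Hp)}\|B_0\,\dphi\|/\|\dphi\|=\|B_0\|$, whence $\|A\|=\|B_0\|$. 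Finally $\|B_0\|\le\|B\|$ is immediate, since $B_0$ is a restriction of $B$ and so its operator norm is a supremum over the smaller set $\JH(\Hp)\setminus\{0\}\subset\dH\setminus\{0\}$.

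I do not expect a genuine obstacle here; the only subtleties are bookkeeping ones: (i) one must verify $B\,\JH(\Hp)\subset\JH(\Hp)$ before invoking $B_0$ (handled by the first step), and (ii) one must be sure that the conjugate-linearity of $\JH$ does not spoil the substitution — it does not, because a conjugate-linear isometry is still norm-preserving and still a bijection onto its image, and these are the only properties the argument uses.
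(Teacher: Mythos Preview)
Your argument is correct and follows essentially the same route as the paper's proof: both establish the invariance $B\,\JH(\Hp)\subset\JH(\Hp)$ from the intertwining relation and then compute $\|A\|$ by substituting $\dphi=\JH\psi$ in the defining supremum, using that $\JH$ is a norm-preserving bijection onto its image. Your write-up is slightly more explicit in separating out the containment chain and the well-definedness of $B_0$ before the norm calculation, but the mathematical content is the same.
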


\begin{proof}
If relation~\eqref{interell} holds, then, since $\JH\colon \Hp\rightarrow \dH$ is an isometry, we have
\begin{align}
\|A\|=\sup_{\psi\neq 0}\frac{\|A\psi\|}{\|\psi\|}&=
\sup_{\psi\neq 0}\frac{\|\JH(A\psi)\|}{\|\JH(\psi)\|}\nonumber\\
&=\sup_{\psi\neq 0}\frac{\big\|B\big(\JH(\psi)\big)\big\|}{\|\JH(\psi)\|}\nonumber\\
&=\sup_{0\neq\dphi\in \JH(\Hp)}\frac{\|B\dphi\|}{\|\dphi\|}=\|B_0\|\leq \|B\|,
\end{align}
where $B_0$ is the restriction of $B$ to $\JH(\Hp)$, which, by~\eqref{interell}, is a (closed) subspace of $\dH$,
stable under the action of $B$.
\end{proof}

\begin{definition}
If $A\in \Bp$ and $B\in\mathcal{B}(\Hp^\prime)$ satisfy the intertwining relation~\eqref{interell} and, moreover,
$\|A\|=\|B\|$, we say that $B$ is a \emph{dual Hahn-Banach extension} of $A$.
\end{definition}

For every bounded  linear operator $A\in\Bp$, one can define its \emph{Banach space adjoint} --- or \emph{generalized adjoint}
--- $\dA: \dH\rightarrow \dH$ by setting $\dA(\dphi)\defi \dphi\circ A$, namely,
\begin{equation}
\big(\dA\dphi\big)(\psi)\defi \dphi(A\psi),\quad \forall\psi\in\Hp,\,\forall\dphi\in\dH.
\end{equation}
In particular, $\dA(\scaphi)=\scaphi\circ A =\langle \phi,A(\cdot)\rangle$, for all $\phi\in\Hp$; namely,
\begin{equation}\label{interelb}
\big(A^\prime\circ\JH\big)(\phi)=\JH(\phi)\circ A,\quad \forall\phi\in\Hp.
\end{equation}
\begin{proposition} \label{proda}
If $A\in\Bp$, then $\dA\in\mathcal{B}(\mathcal{H}^\prime)$ and $\|\dA\|=\|A\|$.
\end{proposition}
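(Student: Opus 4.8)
The plan is to establish the two inequalities $\|\dA\|\le\|A\|$ and $\|\dA\|\ge\|A\|$ separately; that $\dA$ is linear is immediate from the defining formula $\dA(\dphi)=\dphi\circ A$. For the first inequality I would argue directly: given $\dphi\in\dH$ and $\psi\in\Hp$, the estimate $|(\dA\dphi)(\psi)|=|\dphi(A\psi)|\le\|\dphi\|\,\|A\psi\|\le\|\dphi\|\,\|A\|\,\|\psi\|$ yields $\|\dA\dphi\|\le\|A\|\,\|\dphi\|$, hence $\|\dA\|\le\|A\|<\infty$; in particular $\dA\in\mathcal{B}(\mathcal{H}^\prime)$.

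For the reverse inequality, the standard complex-Hilbert-space strategy would extract, for a near-optimal $\psi$, a norm-one functional detecting $\|A\psi\|$; in a general $p$-adic Banach space such a functional would be produced via the non-Archimedean Hahn--Banach (Ingleton) theorem, which applies here because $\Q$ is spherically complete (cf.\ Theorem~\ref{th.330}). However, I would bypass Hahn--Banach entirely by exploiting the orthonormal basis $\Phi\equiv\{\phi_i\}_{i\in I}$. Assuming $A\ne0$, pick any $\psi\in\Hp$ with $A\psi\ne0$. By the non-Archimedean Parseval identity, $\|A\psi\|=\max_{i\in I}|\langle\phi_i,A\psi\rangle|$, and since the coefficients $\langle\phi_i,A\psi\rangle$ tend to $0$ (Proposition~\ref{prop.3.29}) this maximum is attained at some index $j$. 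Set $\dphi\defi\JH(\phi_j)=\langle\phi_j,\cdot\rangle\in\dH$; since $\JH$ is an isometry (Proposition~\ref{prop.334}), $\|\dphi\|=\|\phi_j\|=1$. Then $\|\dA\dphi\|\ge|(\dA\dphi)(\psi)|/\|\psi\|=|\langle\phi_j,A\psi\rangle|/\|\psi\|=\|A\psi\|/\|\psi\|$, so $\|\dA\|\ge\|A\psi\|/\|\psi\|$.

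Taking the supremum over all $\psi\ne0$ (those with $A\psi=0$ contributing nothing to $\sup_{\psi\ne 0}\|A\psi\|/\|\psi\|$, and the case $A=0$ being trivial) gives $\|\dA\|\ge\|A\|$, and combined with the first inequality $\|\dA\|=\|A\|$. I do not anticipate a genuine obstacle: the only specifically $p$-adic point is that one must not invoke a scalar-product duality identifying $\dH$ with $\Hp$ (in infinite dimensions $\JH$ fails to be surjective, by Proposition~\ref{prop.334}), so the separating functional has to be exhibited by hand --- which the explicit choice $\dphi=\JH(\phi_j)$ together with the Parseval identity does cleanly.
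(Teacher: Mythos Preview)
Your proof is correct. The paper's own proof of this proposition is simply a citation to Section~4F of van Rooij's \emph{Non-Archimedean Functional Analysis}, so your argument is considerably more self-contained: instead of appealing to the general Banach-space duality theory developed there (which ultimately rests on the non-Archimedean Hahn--Banach theorem), you exploit the orthonormal basis and the Parseval identity specific to the $p$-adic Hilbert space setting to produce the norming functional $\JH(\phi_j)$ explicitly. Both routes are valid; yours has the advantage of staying entirely within the framework already built in the paper, while the cited result applies to arbitrary ultrametric Banach spaces over spherically complete fields.
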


\begin{proof}
See Section~4F of~\cite{rooij1978non}.
\end{proof}


\section{Bounded and adjointable operators in a $p$-adic Hilbert space}
\label{sec4}

In this section, we derive some useful characterizations of bounded operators in a $p$-adic Hilbert space $\Hp$.
We will focus on the case where $\dim(\Hp)=\infty$, because in the finite-dimensional setting most of the subsequent
results become trivial. Accordingly, we will identify the index set $I$ of the previous sections with $\mathbb{N}$.
For the sake of conciseness, we put $c_0\equiv c_0(\mathbb{N},\Q)$ and $\ell^{\infty}\equiv\ell^{\infty}(\mathbb{N},\Q)$.

Let $(A_{mn})$, $m,n\in\nat$, be an infinite matrix with entries in $\Q$. We denote the set of all such matrices by $\Minf$.
The matrix $(A_{mn})$ determines a linear operator $\op(A_{mn})$ in $c_0$ by putting
\begin{align}
\dom(\op(A_{mn}))\defi\big\{  &
x=\{x_n\}_{n\in \mathbb{N}}\in c_0 \sep
\mbox{the series $\sum_n A_{mn}x_n$ converges, $\forall m\in\mathbb{N}$, and}
\nonumber\\ \label{eq.1}
& \mbox{$\sum_m\sum_nA_{mn}x_n$ converges too, i.e., $\big{\{}\sum_n A_{mn}x_n\big{\}}_{m\in \mathbb{N}}\in c_0$}
\big\},
\end{align}
\begin{equation}\label{eq.2}
\textstyle
\op(A_{mn})\, x\defi\big\{\sum_n A_{mn} x_n\big\}_{m\in\mathbb{N}},\quad x=\{x_n\}_{n\in \mathbb{N}}\in\dom(\op(A_{mn})).
\end{equation}
Clearly, the \emph{matrix operator} $\op(A_{mn})$ will be --- in general --- unbounded, and here we are assuming that it is defined on its
\emph{maximal domain} $\dom(\op(A_{mn}))$.

We further introduce the following set of linear operators in $c_0$:
\begin{equation}\label{eq.3}
(c_0,c_0)\defi\{\op(A_{mn})\sep (A_{mn})\in \Minf,\;\dom(\op(A_{mn}))=c_0\}.
\end{equation}
By Theorem $65$ in~\cite{natarajan2019sequence},  $\op(A_{mn})\in (c_0,c_0)$ iff

\begin{enumerate}[label=\tt{(M\arabic*)}]

\item \label{cond.boi}  $\lim_m A_{mn}=0$, $\forall n\in\mathbb{N}$;

\item \label{cond.boii} $\sup_m(\sup_n|A_{mn}|)<\infty$.

\end{enumerate}

\begin{remark}
It is worth stressing the following points:
\begin{itemize}
\item Since $(\overline{\lim}\equiv \limsup)$
\begin{equation}
\overline{\lim_m}\Big{(}\sup_n|A_{mn}|\Big{)}\leq \sup_m\Big{(}\sup_n|A_{mn}|\Big{)}<\infty,
\end{equation}
 then the further condition that --- see Theorem~$65$ of~\cite{natarajan2019sequence} ---
\begin{equation}
\lim_{l}\frac{1}{l}\;\overline{\lim_m}\Big(\sup_n|A_{mn}|\Big)=0
\end{equation}
is redundant in the case we are considering.
\item By the Principle of the Iterated Suprema, if for $r_{mn}\geq 0$, $m,n\in\mathbb{N}$, either
$\sup_{m,n}r_{mn}<\infty$, or $\sup_m\sup_n r_{mn}<\infty$, or $\sup_n\sup_m r_{mn}<\infty$, then
\begin{equation}
\sup_{m,n}r_{mn}=\sup_m\sup_nr_{mn}=\sup_n\sup_mr_{mn}<\infty.
\end{equation}
\end{itemize}

Thus, condition~\ref{cond.boii} above can be replaced with
\begin{enumerate}[label=\tt{(M\arabic*)}$^\prime$]
\setcounter{enumi}{1}

\item \label{cond.boiii} $\sup_{m,n}|A_{mn}|<\infty$.

\end{enumerate}
\end{remark}

We next switch from the sequence space $c_0$ to an infinite-dimensional $p$-adic Hilbert space $\Hp$ over $\Q$.
We say that a linear operator $A$ in $\Hp$ is \textit{all-over} if
\begin{equation}
\dom(A)=\Hp.
\end{equation}
Since, by Theorem~\ref{th.330}, a densely defined \emph{bounded} linear operator in a $p$-adic Hilbert space
admits a \emph{unique} bounded linear extension to the whole space, with the same norm --- precisely as it happens in the standard
complex setting; see, e.g., Theorem~{4.5} of~\cite{weidmann2012linear} --- we will tacitly assume bounded operators to be all-over,
i.e., to be defined on the whole $\Hp$, unless otherwise specified. As in Section~\ref{sec3}, we denote the space of all such operators
by $\Bp$.

Keeping in mind the isomorphism of $p$-adic Hilbert spaces $\Wf\colon\Hp\rightarrow\mathbb{H}$ --- see~\eqref{isohilb} ---
where $\mathbb{H}\equiv\mathbb{H}(\mathbb{N})$ is the Banach space $c_0$, endowed with its canonical inner product~\eqref{eq.Hs2},
we can now consider \emph{matrix operators} in $\Hp$. An infinite matrix $(A_{mn})\in\Minf$ ---
together with an orthonormal basis $\Phi\equiv\{\phi_n\}_{n\in\mathbb{N}}$ in $\Hp$ --- determines a linear operator
$\opPhi(A_{mn})$ in $\Hp$ as follows:
\begin{align}
\dom(\opPhi(A_{mn}))\defi\big\{ &
\mbox{$\psi=\sum_n x_n\phi_n=\sum_n\langle \phi_n,\psi\rangle\phi_n \sep
\sum_n A_{mn}x_n$ converges, $\forall m\in\mathbb{N}$, and}
\nonumber\\ \label{eq.7}
& \mbox{$\sum_m(\sum_nA_{mn}x_n)\phi_m$ converges too, i.e.,
$\big\{\sum_n A_{mn}x_n\big\}_{m\in\mathbb{N}}\in c_0$}\big\},
\end{align}
\begin{equation}\label{eq.8}
\textstyle
\opPhi(A_{mn})\,\psi\defi\sum_m\big(\sum_n A_{mn}\langle\phi_n,\psi\rangle\big)\phi_m,\quad\psi\in\dom(\opPhi(A_{mn})).
\end{equation}
Taking into account definitions~\eqref{eq.1}, \eqref{eq.2}, \eqref{eq.7} and~\eqref{eq.8},
we see that
\begin{equation} \label{relops}
\dom(\op(A_{mn})) = \Wf\,\dom(\opPhi(A_{mn})) \quad \mbox{and} \quad
\op(A_{mn})\circ\Wf = \Wf\circ\opPhi(A_{mn}) .
\end{equation}

For every orthonormal basis $\Phi\equiv\{\phi_n\}_{n\in\mathbb{N}}$ in $\Hp$, we put
\begin{align}\label{eq.10}
(\Hp,\Hp)_{\Phi}\defi\{\opPhi(A_{mn})\sep (A_{mn})\in \Minf,\; \dom(\opPhi(A_{mn}))=\Hp\};
\end{align}
i.e., $(\Hp,\Hp)_{\Phi}$ is the set of the all-over matrix operators in $\Hp$ associated with $\Phi$.
It is clear that $(\Hp,\Hp)_{\Phi}$ is, in a natural way, a linear space over $\Q$. Actually,
we will show that its definition does not depend on the choice of $\Phi$.

To this end, first note that every bounded operator $A\in\Bp$ belongs to $(\Hp,\Hp)_\Phi$
(whatever the orthonormal basis $\Phi\equiv\{\phi_m\}_{m\in\mathbb{N}}$ is) because, putting
\begin{equation}\label{eq.11}
A_{mn}\equiv \langle \phi_m, A\phi_n\rangle, \qquad
\mbox{($\lim_m A_{mn}=0$, $\forall n\in\mathbb{N}$, $\sup_{m,n}|A_{mn}|\le\|A\|$)}
\end{equation}
we have that
\begin{equation}\label{eq.12}
A=\opPhi(A_{mn})=\sum_m\sum_n A_{mn}\langle\phi_n,\cdot\hspace{0.4mm}\rangle\phi_m
=\sum_n\sum_m A_{mn}\langle \phi_n,\cdot\hspace{0.4mm}\rangle\phi_m,
\end{equation}
where $(\langle\phi_n,\cdot\hspace{0.4mm}\rangle\phi_m)\,\psi\defi\langle\phi_n,\psi\rangle\phi_m$ (i.e.,
$\langle\phi_n,\cdot\hspace{0.4mm}\rangle\phi_m\equiv|\phi_m\rangle\langle\phi_n|$, in Dirac's notation) and
both the iterated series converge --- as can be easily checked --- w.r.t.\ the strong operator topology in $\Bp$
(i.e., the initial topology induced by the family of maps $\{\eps\colon\Bp\rightarrow\Hp\}_{\psi\in\Hp}$, where
$\eps(A)\defi A\psi$). Thus, $\Bp\subset(\Hp,\Hp)_{\Phi}$; precisely:

\begin{theorem}\label{th.1}
For every orthonormal basis $\Phi\equiv\{\phi_m\}_{m\in\mathbb{N}}$ in $\Hp$, we have that
\begin{align}\label{eq.13}
\Bp=(\Hp,\Hp)_{\Phi}=\{\opPhi(A_{mn})\sep \mbox{\rm $\lim_m A_{mn}=0$, $\forall n\in\nat$, and
$\sup_{m,n}|A_{mn}|<\infty$}\}.
\end{align}
Moreover, for every $A=\opPhi(A_{mn})\in\Bp$,
\begin{equation}\label{eq.14}
\|A\|=\sup_{m,n}|A_{mn}|=\sup_n\|A\phi_n\|.
\end{equation}
\end{theorem}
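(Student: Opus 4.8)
The plan is to transport everything to the sequence space $c_0$ via the Hilbert-space isomorphism $\Wf\colon\Hp\to\mathbb{H}$ and then lean on the characterization of $(c_0,c_0)$ already recalled --- conditions~\ref{cond.boi} and~\ref{cond.boiii}, i.e.\ Theorem~$65$ of~\cite{natarajan2019sequence}. The inclusion $\Bp\subset(\Hp,\Hp)_\Phi$ is already in place by~\eqref{eq.11}--\eqref{eq.12}, so what remains is the reverse inclusion, the description of $(\Hp,\Hp)_\Phi$ through the matrix entries, and the norm formula~\eqref{eq.14}.

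First I would use the intertwining relations~\eqref{relops}: since $\Wf$ is a surjective isometry, $\dom(\opPhi(A_{mn}))=\Hp$ holds precisely when $\dom(\op(A_{mn}))=c_0$, i.e.\ when $\op(A_{mn})\in(c_0,c_0)$; by the cited theorem (with~\ref{cond.boii} replaced by the equivalent~\ref{cond.boiii}, as noted in the preceding remark) this is equivalent to $\lim_m A_{mn}=0$ for every $n\in\nat$ together with $\sup_{m,n}|A_{mn}|<\infty$. This already yields the description of $(\Hp,\Hp)_\Phi$ appearing in~\eqref{eq.13}.

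Next I would show that each such matrix operator is bounded with norm exactly $\sup_{m,n}|A_{mn}|$. For the upper bound, for $x=\{x_n\}\in c_0$ the strong triangle inequality gives
\begin{equation*}
\|\op(A_{mn})\,x\|_\infty=\max_m\Big|{\textstyle\sum_n}A_{mn}x_n\Big|\le\max_{m,n}|A_{mn}|\,|x_n|\le\Big(\sup_{m,n}|A_{mn}|\Big)\,\|x\|_\infty ,
\end{equation*}
so $\op(A_{mn})$ is bounded and $\|\op(A_{mn})\|\le\sup_{m,n}|A_{mn}|$; since $\op(A_{mn})=\Wf\circ\opPhi(A_{mn})\circ\Wf^{-1}$ and conjugation by a surjective isometry preserves the operator norm, this proves $(\Hp,\Hp)_\Phi\subset\Bp$, hence $\Bp=(\Hp,\Hp)_\Phi$, as well as $\|A\|=\|\op(A_{mn})\|\le\sup_{m,n}|A_{mn}|$ for $A=\opPhi(A_{mn})$. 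For the matching lower bound I would test on basis vectors: by the non-Archimedean Parseval identity $\|A\phi_n\|=\max_m|\langle\phi_m,A\phi_n\rangle|=\max_m|A_{mn}|$ (the supremum being attained since $\{A_{mn}\}_m\in c_0$), and as $\|\phi_n\|=1$ this forces $\|A\|\ge\sup_n\|A\phi_n\|=\sup_{m,n}|A_{mn}|$. Combining the two inequalities gives~\eqref{eq.14}.

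The only step that is more than a bare ultrametric estimate is the equivalence ``$\op(A_{mn})$ is everywhere defined $\iff$~\ref{cond.boi} and~\ref{cond.boiii}'', and in particular the implication that everywhere-definedness forces the \emph{uniform} bound $\sup_{m,n}|A_{mn}|<\infty$; this is exactly what Theorem~$65$ of~\cite{natarajan2019sequence} provides (morally a Banach--Steinhaus / closed-graph argument, available here because $\Q$ is spherically complete), and I would cite it rather than reprove it. Everything else --- the passage through $\Wf$, the two norm inequalities, and the identification of $\|A\phi_n\|$ with $\max_m|A_{mn}|$ --- is routine.
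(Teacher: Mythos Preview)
Your proposal is correct and follows essentially the same route as the paper's proof: both transport the characterization of $(\Hp,\Hp)_\Phi$ to $(c_0,c_0)$ via~\eqref{relops} and Theorem~65 of~\cite{natarajan2019sequence}, establish the upper bound $\|A\|\le\sup_{m,n}|A_{mn}|$ by the ultrametric estimate (you do it in $c_0$ and pull back through $\Wf$, the paper does it directly in $\Hp$ --- an immaterial difference), and then saturate it by evaluating $\|A\phi_n\|=\sup_m|A_{mn}|$ via Parseval. Your identification of the only substantive input (the implication that everywhere-definedness forces $\sup_{m,n}|A_{mn}|<\infty$) matches the paper's reliance on that cited result.
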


\begin{proof}
By relations~\eqref{relops}, we have that
\begin{equation}
(\Hp,\Hp)_{\Phi}=\{\opPhi(A_{mn})\sep \op(A_{mn})\in (c_0,c_0)\},
\end{equation}
where the set $(c_0,c_0)$ is completely characterized by conditions~\ref{cond.boi} and~\ref{cond.boiii}.
Hence, the second equality in~\eqref{eq.13} holds true.

We have already observed that $\Bp\subset(\Hp,\Hp)_{\Phi}$. It remains to prove the reverse inclusion $(\Hp,\Hp)_{\Phi}\subset \Bp$
--- so that, actually, $(\Hp,\Hp)_{\Phi}=\Bp$ --- and, moreover, to show that
\begin{equation}
\|A\|=\sup_{m,n}|A_{mn}|=\sup_n\|A\phi_n\|,\quad\forall A=\opPhi(A_{mn})\in\Bp.
\end{equation}
Let $A\in (\Hp,\Hp)_{\Phi}$, with $A=\opPhi(A_{mn})$. For every $\psi\in\Hp$, we have that
\begin{equation}
A\psi=\sum_m\bigg(\sum_n A_{mn}\langle\phi_n,\psi\rangle\bigg)\phi_m;
\end{equation}
hence:
\begin{equation}
\|A\psi\|=\sup_m\Big{|}\sum_n A_{mn}\langle\phi_n,\psi\rangle\Big{|}\leq\sup_{m}\sup_n|A_{mn}| \,
|\langle \phi_n,\psi\rangle|\;\leq \|\psi\| \, \sup_m\sup_n|A_{mn}|.
\end{equation}
Therefore, $A$ is bounded and
\begin{equation}\label{eq.18}
\|A\|\leq \sup_m\sup_n|A_{mn}|=\sup_{m,n}|A_{mn}|.
\end{equation}
It is then shown that $(\Hp,\Hp)_\Phi\subset\Bp$ too, and therefore, actually, the two sets coincide.

Now, let $A=\opPhi(A_{mn})\in (\Hp,\Hp)_\Phi=\Bp$. Note that
\begin{align}\label{eq.19}
\|A\phi_n\|&=\Big{\|}\sum_m\langle \phi_m,A\phi_n\rangle\phi_m\Big{\|}=
\sup_m|\langle\phi_m, A\phi_n\rangle|=\sup_m|A_{mn}|,\;\;\forall n\in\mathbb{N}.
\end{align}
From~\eqref{eq.18} and~\eqref{eq.19} it follows that
\begin{equation}
\sup_{m,n}|A_{mn}|=\sup_n\sup_m|A_{mn}|=\sup_n\|A\phi_n\|\leq\|A\|\leq \sup_{m,n}|A_{mn}|.
\end{equation}
Hence, actually, $\|A\|=\sup_{m,n}|A_{mn}|=\sup_n\|A\phi_n\|$, for every $A=\opPhi(A_{mn})\in\Bp$.
\end{proof}

Summarizing, it is proven that $\Bp$ coincides with the linear space $(\Hp,\Hp)_{\Phi}$ of the \emph{all-over matrix operators} in $\Hp$
w.r.t.\ an orthonormal basis $\Phi\equiv\{\phi_n\}_{n\in\mathbb{N}}$ (independently of the choice of $\Phi$);
moreover, one can give a complete characterization of the infinite matrices in $\Minf$ that correspond to bounded operators
(w.r.t.\ any orthonormal basis in $\Hp$). Furthermore, the norm of a bounded operator is given by the supremum, in valuation,
of its matrix elements (again, w.r.t.\ any orthonormal basis).

At this point, we move on to discuss the notion of `proper adjoint' of a bounded operator in a $p$-adic Hilbert space $\Hp$.
In fact, as argued in Section~\ref{sec3}, since, in the infinite-dimensional setting, $\Hp$ cannot be identified with its dual $\dH$,
with every bounded operator $A\in\Bp$ is associated a generalized adjoint $\dA\in\mathcal{B}(\dH)$. Nevertheless, one can single out a suitable
class of bounded operators admitting a genuine `Hilbert space adjoint' (the proper adjoint).

Given any $A\in\Bp$, we first associate with $A$ a linear operator $A^\dagger$ in $\Hp$ --- the so called \emph{pseudo-adjoint} of $A$ ---
as follows. We start with defining its domain as
\begin{align}\label{eq.21}
\dom(A^\dagger)\defi\{\phi\in\Hp\sep \langle\phi, A\psi\rangle=\langle\eta,\psi\rangle,\;\text{for some}\;\eta\in\Hp,
\;\text{and for all}\;\psi\in\Hp\}.
\end{align}
It is clear that $\dom(A^\dagger)$ is a linear subspace of $\Hp$.

\begin{remark}\label{rem.3}
The linear subspace $\dom(A^\dagger)$ can be regarded as the set of all vectors $\phi\in\Hp$ such that the bounded functional
$\langle \phi, A(\cdot)\rangle\in\dH$ can be identified, via the conjugate-linear isometry $\JH$, with an element
$\eta$ of $\Hp$; i.e.,
\begin{equation}
\phi\in\dom(A^\dagger) \ \iff \ \langle\phi, A(\cdot)\rangle=\scapeta=\JH\,\eta,
\end{equation}
for some $\eta\in\Hp$.
\end{remark}

\begin{proposition}\label{proadj}
The condition
\begin{equation}\label{eq.24}
\big(\JH\,\phi\big)(A\psi)=\langle\phi, A\psi\rangle=\langle A^\dagger\phi,\psi\rangle=\big(\JH(A^\dagger\phi)\big)(\psi),
\;\;\forall\phi\in\dom(A^\dagger),\;\forall\psi\in\Hp,
\end{equation}
uniquely determines a bounded linear operator $A^\dagger\colon\dom(A^\dagger)\rightarrow\Hp$, whose domain $\dom(A^\dagger)$
is given by~\eqref{eq.21}, and such that $\|A^\dagger\|\le\|A\|$.
Moreover, we have the following dichotomy: either $\dom(A^\dagger)=\Hp$, or $\dom(A^\dagger)$ is a closed subspace
of $\Hp$, with $\dom(A^\dagger)\subsetneq\Hp$.
\end{proposition}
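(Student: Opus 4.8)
The plan is to verify, in turn: well-definedness and linearity of $A^\dagger$ (together with its uniqueness as a solution of~\eqref{eq.24}); the norm bound $\|A^\dagger\|\le\|A\|$; and finally the closedness of $\dom(A^\dagger)$, which is precisely the content of the stated dichotomy.

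First I would fix $\phi\in\dom(A^\dagger)$: by the defining condition~\eqref{eq.21} there is some $\eta\in\Hp$ with $\langle\phi,A\psi\rangle=\langle\eta,\psi\rangle$ for all $\psi\in\Hp$, and such an $\eta$ is unique, since two candidates would differ by a vector IP-orthogonal to all of $\Hp$, hence zero by the non-degeneracy of $\scap$ (Proposition~\ref{nondeginpro}). Setting $A^\dagger\phi\defi\eta$ then makes $A^\dagger$ well defined, and it satisfies~\eqref{eq.24} by construction; the sesquilinearity of the inner product together with the uniqueness of $\eta$ shows both that $\dom(A^\dagger)$ is a linear subspace and that $A^\dagger$ is linear on it, while the same non-degeneracy argument shows no two distinct operators can satisfy~\eqref{eq.24}. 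For the norm estimate, since $\JH$ is a (conjugate-linear) isometry by Proposition~\ref{prop.334} and $\JH(A^\dagger\phi)=\langle A^\dagger\phi,\cdot\hspace{0.4mm}\rangle$, one has $\|A^\dagger\phi\|=\|\JH(A^\dagger\phi)\|=\sup_{\psi\neq0}|\langle\phi,A\psi\rangle|/\|\psi\|$, which by the Cauchy--Schwarz inequality and boundedness of $A$ is at most $\|\phi\|\,\|A\|$; hence $A^\dagger$ is bounded with $\|A^\dagger\|\le\|A\|$.

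For the dichotomy I would use Remark~\ref{rem.3}: $\phi\in\dom(A^\dagger)$ if and only if the functional $\langle\phi,A(\cdot)\rangle=\dA(\JH\phi)$ belongs to $\JH(\Hp)$. Thus, writing $\Phi_A\defi\dA\circ\JH\colon\Hp\to\dH$ --- a bounded conjugate-linear map, being the composite of the isometry $\JH$ with the bounded generalized adjoint $\dA$ (Proposition~\ref{proda}) --- we get $\dom(A^\dagger)=\Phi_A^{-1}\big(\JH(\Hp)\big)$. Now $\JH(\Hp)$ is a norm-closed subspace of $\dH$ (it is complete, being the isometric image of the complete space $\Hp$), so $\dom(A^\dagger)$, as the preimage of a closed subspace under a continuous map, is a closed subspace of $\Hp$; therefore it is either all of $\Hp$ or a proper closed subspace. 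Equivalently one can argue directly: if $\phi_k\to\phi$ with $\phi_k\in\dom(A^\dagger)$, then $\{A^\dagger\phi_k\}$ is Cauchy by the estimate just proved, hence converges to some $\eta\in\Hp$, and the joint continuity of $\scap$ yields $\langle\phi,A\psi\rangle=\langle\eta,\psi\rangle$ for all $\psi$, so $\phi\in\dom(A^\dagger)$.

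I expect the only non-routine step to be this last one --- recognizing that the a priori bound $\|A^\dagger\|\le\|A\|$ together with completeness of $\Hp$ forces $\dom(A^\dagger)$ to be \emph{closed} rather than merely a (possibly dense, non-closed) domain, so that the gap between the complex case and the $p$-adic case is entirely captured by the possibility $\dom(A^\dagger)\subsetneq\Hp$. The remaining assertions are immediate consequences of non-degeneracy, the isometry property of $\JH$, and Cauchy--Schwarz.
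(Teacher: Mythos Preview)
Your proof is correct. The well-definedness, linearity, and norm-bound arguments are essentially the same as the paper's, down to the use of the isometry $\JH$ and non-degeneracy of $\scap$.

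For the closedness of $\dom(A^\dagger)$, however, you take a genuinely different route. The paper invokes the Hahn--Banach theorem (Theorem~\ref{th.330}) to produce a bounded extension $B\in\Bp$ of $A^\dagger$ to all of $\Hp$, and then uses the continuity of $B$ to pass to the limit along a convergent sequence $\chi_n\to\chi$ in $\dom(A^\dagger)$. Your two arguments avoid Hahn--Banach entirely: the first recognizes $\dom(A^\dagger)$ as the preimage of the closed subspace $\JH(\Hp)\subset\dH$ under the continuous map $\dA\circ\JH$, which is cleaner and more conceptual; the second observes that the a~priori bound $\|A^\dagger\phi\|\le\|A\|\,\|\phi\|$ already forces $\{A^\dagger\phi_k\}$ to be Cauchy, so no extension is needed before taking limits. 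Both are more elementary than the paper's approach, the preimage argument being perhaps the most transparent explanation of why closedness holds.
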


\begin{proof}
For every $\phi\in\dom(A^\dagger)$, the relation
$(\JH\,\phi)\circ A=\JH\,\eta= \JH(A^\dagger\phi)$, determines a linear operator $A^\dagger\colon\dom(A^\dagger)\rightarrow\Hp$,
where the vector $\eta\ifed A^\dagger\phi\in\Hp$ is unique because $\JH$ is a (conjugate-linear) isometry.
Equivalently, one can use the fact that the sesquilinear form $\scap$ is non-degenerate; see Proposition~\ref{nondeginpro}.
(The linearity of $A^\dagger$ is evident.) Moreover, since
\begin{equation}
\|\JH(A^\dagger\phi)\|=\|A^\dagger\phi\|=\|(\JH\,\phi)\circ A\|\le\|\JH\,\phi\|\|A\|=\|\phi\|\|A\|,\quad \phi\in\dom(A^\dagger),
\end{equation}
$A^\dagger$ is bounded, with $\|A^\dagger\|\le\|A\|$.

Let us now show that $\dom(A^\dagger)$ is a closed subspace of $\Hp$. In fact, by Theorem~\ref{th.330}, there is a bounded operator
$B$ in $\Hp$ that agrees with $A^\dagger$ on $\dom(A^\dagger)$, and such that $\|B\|=\|A^\dagger\|\le\|A\|$. Therefore, for every
sequence $\{\chi_n\}_{n\in\nat}\subset\dom(A^\dagger)$, with $\chi_n\rightarrow\chi\in\overline{\dom(A^\dagger)}^{\,\|\cdot\|}$,
we have that
\begin{equation}
\langle B\chi,\psi\rangle=\lim_n\,\langle B\chi_n,\psi\rangle=\lim_n\,\langle A^\dagger\chi_n,\psi\rangle
=\lim_n\,\langle \chi_n,A\psi\rangle=\langle \chi,A\psi\rangle, \quad \forall\psi\in\Hp,
\end{equation}
where we have used the continuity of $B$ and of the inner product $\scap$; hence: $\chi\in\dom(A^\dagger)=\overline{\dom(A^\dagger)}^{\,\|\cdot\|}$
and $B=A^\dagger$. In conclusion, either $\dom(A^\dagger)=\Hp$, or $\dom(A^\dagger)$ is a closed subspace of $\Hp$, strictly
contained in $\Hp$.
\end{proof}

\begin{remark}
By the definition of $A^\dagger$ and by the dichotomy in Proposition~\ref{proadj}, it is clear that, if $\dom(A^\dagger)\subsetneq\Hp$,
then, given any bounded extension $B\in\Bp$ of $A^\dagger$, we have:
\begin{equation}
\begin{cases}
\langle B\phi,\psi\rangle=\langle\phi, A\psi\rangle,\;\;\forall\phi\in\dom(A^\dagger)=\overline{\dom(A^\dagger)}^{\,\|\cdot\|},\;\forall\psi\in\Hp, \\
\langle B\phi,\psi\rangle\neq\langle\phi, A\psi\rangle,\;\;\forall\phi\not\in\dom(A^\dagger),\;\forall\psi\not\in\sephi,
\end{cases}
\end{equation}
where the set $\sephi\defi\{\psi\in\Hp\sep \langle B\phi,\psi\rangle=\langle\phi, A\psi\rangle\}$ is a non-dense, closed linear subspace of $\Hp$
depending on $\phi\not\in\dom(A^\dagger)$.
\end{remark}

\begin{definition}
We say that $A\in\Bp$ is \emph{adjointable} if the pseudo-adjoint $A^\dagger$ of $A$ is all-over, i.e., if $\dom(A^\dagger)=\Hp$.
In such a case,  we put $A^*\equiv A^\dagger$ and we call the all-over linear operator $A^*$ the (proper) \emph{adjoint} of $A\in\Bp$.
\end{definition}

It is clear that the collection of all adjointable operators in $\Hp$ --- denoted hereafter by $\Bpa$ --- is, in a natural way,
a linear space over $\Q$ (a linear subspace of $\Bp$). We are now going to characterize it.

\begin{theorem}\label{th.2}
If $A\in\Bp$ is adjointable, then its (proper) adjoint $A^*$ is a bounded operator. Given any orthonormal basis
$\Phi\equiv\{\phi_m\}_{m\in\nat}$ in $\Hp$, if $A=\opPhi(A_{mn})\in\Bpa$, then
\begin{equation}\label{eq.25}
A^*=\opPhi(A^*_{mn})\in\Bpa,
\end{equation}
with $A^*_{mn}=\overline{A_{nm}}$. Therefore, if $A=\opPhi(A_{mn})\in\Bpa$, then
\begin{enumerate}[label=\tt{(A\arabic*)}]

\item \label{cond.bot1} $\sup_{m,n}|A_{mn}|<\infty$,

\item \label{cond.bot2} $\lim_m A_{mn}=0$, $\forall n\in\nat$,

\item \label{cond.bot3} $\lim_n A_{mn}=0$, $\forall m\in\nat$,

\end{enumerate}
and, moreover,
\begin{equation}\label{eq.26}
\|A^*\|=\sup_{m,n}|A_{mn}|=\|A\|.
\end{equation}

Conversely, if, for some orthonormal basis $\Phi\equiv\{\phi_m\}_{m\in\nat}$, $A=\opPhi(A_{mn})$ --- i.e., if $A$ is a matrix operator associated with $\Phi$ ---
where the entries of the matrix $(A_{mn})\in\Minf$ are supposed to satisfy conditions~\ref{cond.bot1}--\ref{cond.bot3} above,
then $A\in\Bpa$.
\end{theorem}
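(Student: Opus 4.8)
The plan is to reduce the whole statement to Theorem~\ref{th.1} (the matrix characterization of $\Bp$ together with the norm formula $\|A\|=\sup_{m,n}|A_{mn}|$) and to Proposition~\ref{proadj}, using nothing more than the defining relation~\eqref{eq.24} of the adjoint and the Hermitian property of $\scap$.

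First I would treat the direct implication. By Proposition~\ref{proadj} the pseudo-adjoint $A^\dagger$ is always bounded with $\|A^\dagger\|\le\|A\|$; when $A$ is adjointable, $A^*=A^\dagger$ is moreover all-over, so $A^*\in\Bp$ and, by Theorem~\ref{th.1}, $A^*=\opPhi(B_{mn})$ with $B_{mn}=\langle\phi_m,A^*\phi_n\rangle$. Combining Hermiticity with~\eqref{eq.24} gives
\begin{equation}
B_{mn}=\langle\phi_m,A^*\phi_n\rangle=\overline{\langle A^*\phi_n,\phi_m\rangle}=\overline{\langle\phi_n,A\phi_m\rangle}=\overline{A_{nm}},
\end{equation}
which is~\eqref{eq.25}. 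Next I would read off~\ref{cond.bot1}--\ref{cond.bot3} and~\eqref{eq.26} by applying Theorem~\ref{th.1} twice: since $A\in\Bp$ one gets $\sup_{m,n}|A_{mn}|<\infty$ and $\lim_m A_{mn}=0$ for all $n$, i.e.\ \ref{cond.bot1} and \ref{cond.bot2}; since $A^*=\opPhi(\overline{A_{nm}})\in\Bp$, the vanishing condition for its matrix reads $\lim_m\overline{A_{nm}}=0$ for every $n$, hence (conjugation being isometric, after renaming the free index) $\lim_n A_{mn}=0$ for every $m$, which is~\ref{cond.bot3}; and the norm formula of Theorem~\ref{th.1} yields $\|A^*\|=\sup_{m,n}|\overline{A_{nm}}|=\sup_{m,n}|A_{mn}|=\|A\|$. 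That $A^*$ itself belongs to $\Bpa$ then follows from the converse part below, applied to the matrix $\overline{A_{nm}}$, which we have just shown satisfies \ref{cond.bot1}--\ref{cond.bot3}.

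For the converse, suppose $A=\opPhi(A_{mn})$ for a matrix obeying~\ref{cond.bot1}--\ref{cond.bot3}. Conditions~\ref{cond.bot1} and~\ref{cond.bot2} are exactly~\ref{cond.boiii} and~\ref{cond.boi}, so $A\in\Bp$ by Theorem~\ref{th.1}. Put $B\defi\opPhi(\overline{A_{nm}})$; conditions~\ref{cond.bot1} and~\ref{cond.bot3} say precisely that the conjugate-transposed matrix again meets the criterion of Theorem~\ref{th.1}, so $B\in\Bp$. It remains to check that $B$ acts as the proper adjoint of $A$, i.e.\ that $\langle B\phi,\psi\rangle=\langle\phi,A\psi\rangle$ for all $\phi,\psi\in\Hp$; once this holds, every $\phi\in\Hp$ lies in $\dom(A^\dagger)$ with witness $\eta=B\phi$ (see~\eqref{eq.21}), hence $\dom(A^\dagger)=\Hp$ and $A\in\Bpa$. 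On basis vectors one computes $B\phi_j=\sum_m\overline{A_{jm}}\,\phi_m$, so $\langle B\phi_j,\phi_i\rangle=A_{ji}=\langle\phi_j,A\phi_i\rangle$; then both $(\phi,\psi)\mapsto\langle B\phi,\psi\rangle$ and $(\phi,\psi)\mapsto\langle\phi,A\psi\rangle$ are continuous sesquilinear forms on $\Hp\times\Hp$ (boundedness of $A$ and $B$, joint continuity of $\scap$ by Remark~\ref{rem.cont}) agreeing on the dense subset $\mathrm{span}\{\phi_i\}_{i\in\nat}\times\mathrm{span}\{\phi_i\}_{i\in\nat}$, hence they coincide everywhere.

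The routine parts are pure bookkeeping with Theorem~\ref{th.1}; the one step that needs genuine care is the last part of the converse — propagating the adjoint identity from basis vectors to all of $\Hp\times\Hp$, equivalently a non-Archimedean Fubini argument for the summable double family $\{\overline{a_j}\,b_i\,A_{ji}\}_{i,j}$ associated with $\phi=\sum_j a_j\phi_j$, $\psi=\sum_i b_i\phi_i$ — and, as a lesser nuisance, keeping the index conventions consistent when translating the matrix criterion for $\opPhi(\overline{A_{nm}})$ into condition~\ref{cond.bot3}.
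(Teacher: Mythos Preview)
Your argument is correct and is cleaner than the paper's own proof. The paper establishes the forward direction by going through the dual-space machinery of Proposition~\ref{prop.334} and Notation~\ref{not.1}: it expands $\langle\psi,A\chi\rangle$ as in~\eqref{eq.31}, packages the coefficients $\xi_n^\psi=\sum_m\overline{A_{mn}}\langle\phi_m,\psi\rangle$ into an element of $\ell^\infty$, and then invokes the injectivity of $\ellphi$ together with $\ellphi(c_0)=\JH(\Hp)$ to deduce that, when $A$ is adjointable, $\{\xi_n^\psi\}\in c_0$ and $A^*\psi=\sum_n\xi_n^\psi\phi_n$. You bypass all of this by observing that Proposition~\ref{proadj} already gives $A^*\in\Bp$ directly, so Theorem~\ref{th.1} applies and the matrix entries can be read off from the single computation $\langle\phi_m,A^*\phi_n\rangle=\overline{A_{nm}}$. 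For the converse, the paper verifies the adjoint identity by manipulating the iterated sums in~\eqref{eq.31} explicitly, whereas you check it on basis pairs and extend by joint continuity; the two are equivalent, your density argument being the abstract version of the paper's term-by-term calculation. Your route buys brevity and makes the dependence on Theorem~\ref{th.1} and Proposition~\ref{proadj} transparent; the paper's route has the virtue of exhibiting $A^*\psi$ constructively as a $c_0$-sequence before appealing to the matrix characterization.
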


\begin{notation}\label{not.1}
Taking into account the mapping~\eqref{defell}, for every $\xi\in\ell^{\infty}$, we write symbolically
\begin{equation}\label{eq.29}
\xi_{\Phi}=\langle{\textstyle\sum_n}\xi_n\phi_n,\cdot\hspace{0.4mm}\rangle\equiv\ellphi(\overline{\xi})=
\sum_{n}\overline{\xi_n}\,\scaphin\in\dH.
\end{equation}
Here, the (pointwise converging) series $\sum_{i\in I}\overline{\xi_i}\,\langle\phi_i,\cdot\hspace{0.4mm}\rangle$
converges w.r.t.\ the norm topology too iff $\xi\in c_0$ (Remark~\ref{renoco}).
With this notation, if, in particular, $\xi\in c_0\subset\ell^\infty$, the functional $\langle\sum_n \xi_n\phi_n,\cdot\hspace{0.4mm}\rangle$
can be directly identified with the element $\sum_n\xi_n\phi_n$ of $\Hp$ (with norm $\|\sum_n\xi_n\phi_n\|=\|\xi\|_{\infty}=
\sup_n|\xi_n|=\max_n|\xi_n|$):
\begin{equation}
\xi_\Phi(\psi)=\langle{\textstyle\sum_n}\xi_n\phi_n,\psi\rangle=\sum_n\overline{\xi_n}\,\langle\phi_n,\psi\rangle.
\end{equation}
\end{notation}

\begin{proof}[Proof of Theorem~\ref{th.2}]
Given an orthonormal basis $\Phi\equiv\{\phi_n\}_{n\in\mathbb{N}}$ in $\Hp$ and a bounded operator $A\in\Bp$, $A=\opPhi(A_{mn})$,
by the second series expansion in~\eqref{eq.12} --- converging w.r.t.\ the strong operator topology --- we have that
\begin{equation}\label{eq.31}
\langle \psi, A\chi\rangle=\sum_n\sum_m A_{mn}\langle \psi, \phi_m\rangle\langle \phi_n,\chi\rangle
=\sum_n\bigg(\sum_m\overline{\overline{A_{mn}}\,\langle\phi_m,\psi\rangle}\bigg)\langle\phi_n,\chi\rangle,
\quad\forall\psi,\chi\in\Hp,
\end{equation}
where, by the arbitrariness of $\chi\in\Hp$ (and since $\ell^\infty$ is the generalized K\"othe-Toeplitz dual of $c_0$,
see Sect.~{1.6} of~\cite{natarajan2019sequence}),
\begin{equation}\label{eq.32}
\big\{\xi_{n}^{\psi}\equiv{\textstyle\sum_m}\overline{A_{mn}}\,\langle\phi_m,\psi\rangle\big\}_{n\in\nat}\in\ell^\infty.
\end{equation}
Using Notation~\ref{not.1}, we can write
\begin{equation}\label{eq.33}
\langle\psi, A\chi\rangle=\sum_n\overline{\xi_{n}^{\psi}}\,\langle\phi_m,\chi\rangle\equiv
\big\langle{\textstyle\sum_n}\xi_n^\psi\phi_m,\chi\big\rangle.
\end{equation}
Here, in general, we have a pairing between an element of $\dH$ --- i.e.,
$\langle\psi,A(\cdot)\rangle=\big\langle\hspace{-0.5mm}\sum_n\xi_{n}^{\psi}\phi_m,\cdot\hspace{0.4mm}\big\rangle$
--- and the vector $\chi\in\Hp$.

Assume now that $A \in\Bpa$. Then, for some $\eta=A^*\psi\in\Hp$,
\begin{equation}\label{eq.34}
\big\langle{\textstyle\sum_n}\xi_{n}^{\psi}\phi_m,\chi\big\rangle=\langle\psi, A\chi\rangle=\langle A^*\psi,\chi\rangle,
\quad\forall\chi\in\Hp.
\end{equation}
By the second assertion of Proposition~\ref{prop.334}, $\ellphi$ is injective and $\ellphi(c_0)=\JH(\Hp)$, so that
\begin{equation}\label{eq.35}
\big\{\xi_{n}^{\psi}\big\}_{n\in\nat}\in c_0\;\;\;\;\text{and}\;\;\;\;\sum_n\xi_{n}^{\psi}\phi_n=A^*\psi\in\Hp.
\end{equation}
We stress that here $\sum_n\xi_{n}^{\psi}\phi_n$ should be regarded as a \emph{bona fide} expansion in $\Hp$.
Recalling~\eqref{eq.32}, we see that
\begin{equation}
A^*\psi=\sum_n\xi_{n}^{\psi}\phi_n=\sum_n\sum_m\overline{A_{mn}}\,\langle\phi_m,\psi\rangle\phi_n
\implies A^*=\opPhi(A_{mn}^*),
\end{equation}
where $A_{mn}^*=\overline{A_{nm}}$.

Therefore, the adjoint $A^*$ of $A$ --- that, by definition, is defined on the whole Hilbert space $\Hp$ and, by Proposition~\ref{proadj},
is bounded --- is the (all-over) matrix operator $\opPhi(A_{mn}^*)$ and, by Theorem~\ref{th.1}, we conclude that
\begin{itemize}

\item[(i)] $\sup_{m,n}|A_{mn}^*|<\infty$ ($\Longleftrightarrow\sup_{m,n}|A_{mn}|<\infty$),

\item[(ii)] $\lim_m A_{mn}^*=0$, $\forall n\in\nat$ ($\Longleftrightarrow\lim_n A_{mn}=0$, $\forall m\in\nat$),

\end{itemize}
and, moreover, since $A\in\Bp$, also

\begin{itemize}
\item[(iii)] $\lim_n A^{*}_{mn}=0$, $\forall m\in\nat$ ($\Longleftrightarrow\lim_m A_{mn}=0$, $\forall n\in\nat$).
\end{itemize}

It is clear that $\|A^*\|=\sup_{m,n}|A^{*}_{mn}|=\sup_{m,n}|A_{mn}|=\|A\|$.

Now, let $A=\opPhi(A_{mn})$ --- with $\Phi\equiv\{\phi_m\mnat$ any orthonormal basis in $\Hp$ --- be a matrix operator.
By relation~\eqref{eq.13} in Theorem~\ref{th.1}, if $\sup_{m,n}|A_{mn}|<\infty$ and $\lim_m A_{mn}=0$, for every
$n\in\nat$, then $A$ is a bounded operator so that, for all $\psi,\chi\in\Hp$, relation~\eqref{eq.31} holds true.
Next, if, moreover, $\lim_n A_{mn}=0$, for every $m\in\nat$, then --- putting  $A^{*}_{mn}\equiv\overline{A_{nm}}$ ---
we also have that $\op(A^{*}_{mn})\in (c_0,c_0)$. Hence,
\begin{equation}
\{{\textstyle\sum_n} A^{*}_{mn}x_n\}_{m\in\nat}=\{{\textstyle\sum_n} \overline{A_{nm}}\,x_n\}_{m\in\nat}\in c_0,\quad
\forall x=\{x_n\nnat\in c_0,
\end{equation}
and, for every $\psi\in\Hp$, the series
\begin{equation}
\sum_m\bigg(\sum_n A^{*}_{mn}\langle\phi_n,\psi\rangle\bigg)\phi_m,
\end{equation}
must converge to some vector $\eta=\opPhi(A^{*}_{mn})\,\psi\in\Hp$, where $\opPhi(A^{*}_{mn})\in(\Hp,\Hp)_{\Phi}=\Bp$.
In conclusion, for every $\psi\in\Hp$, there is some $\eta=\opPhi(A^{*}_{mn})\,\psi\in\Hp$ such that, given any $\chi\in\Hp$,
by~\eqref{eq.31} (with the indices $m,n$ merely re-named) we have:
\begin{equation}
\langle\psi, A\chi\rangle =\sum_m\bigg(\sum_n\overline{\overline{A_{nm}}\,\langle\phi_n,\psi\rangle}\bigg)\langle\phi_m,\chi\rangle
=\sum_m\bigg(\sum_n \overline{A^{*}_{mn}\,\langle\phi_n,\psi\rangle}\bigg)\langle\phi_m,\chi\rangle=\langle\eta,\chi\rangle.
\end{equation}
Otherwise stated, $A=\opPhi(A_{mn})$ --- with the matrix $(A_{mn})$ that satisfies conditions~\ref{cond.bot1}--\ref{cond.bot3}
in the statement of the theorem --- is adjointable, and $A^*=\opPhi(A_{mn}^*)$, $A^{*}_{mn}\equiv\overline{A_{nm}}$, because the relation
$\langle\psi, A\chi\rangle=\langle A^\ast\psi,\chi\rangle$, for all $\psi,\chi\in\Hp$ determines $A^\ast$ uniquely.
\end{proof}

We will now derive some direct consequences of Theorem~\ref{th.2}.

\begin{corollary} \label{corad}
If $A\in\Bpa$, then $A^*\in\Bpa$ too and
\begin{equation}
(A^*)^*=A.
\end{equation}
Moreover, for all $A,B\in\Bpa$ and all $\alpha\in\Q$, we have that ($\alpha A,A+B\in\Bpa$ and) $AB,\id\in\Bpa$,
together with
\begin{equation}
(\alpha A)^*=\overline{\alpha} A, \quad (A+B)^*=A^*+B^*, \quad (AB)^*=B^*A^* .
\end{equation}
\end{corollary}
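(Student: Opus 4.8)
The plan is to derive everything directly from the defining relation of the proper adjoint, using only the sesquilinearity and Hermitian symmetry of $\scap$ together with its non-degeneracy (Proposition~\ref{nondeginpro}); in particular I would \emph{not} go through matrices, although every claim can alternatively be read off from the matrix description in Theorem~\ref{th.2}. The single fact I would lean on throughout is that, for $A\in\Bpa$, the adjoint $A^*$ is the \emph{unique} all-over operator satisfying $\langle A^*\phi,\psi\rangle=\langle\phi,A\psi\rangle$ for all $\phi,\psi\in\Hp$, uniqueness being exactly the content of non-degeneracy.

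First I would treat $A^*\in\Bpa$ and $(A^*)^*=A$ together: for $\phi,\psi\in\Hp$ one computes, using Hermiticity twice and the defining relation of $A^*$,
\begin{equation}
\langle\phi,A^*\psi\rangle=\overline{\langle A^*\psi,\phi\rangle}=\overline{\langle\psi,A\phi\rangle}=\langle A\phi,\psi\rangle,
\end{equation}
so that for every $\phi$ the vector $A\phi\in\Hp$ witnesses $\phi\in\dom\big((A^*)^\dagger\big)$; hence $\dom\big((A^*)^\dagger\big)=\Hp$, i.e.\ $A^*\in\Bpa$, and by uniqueness $(A^*)^*=A$. For the linear structure I would argue the same way: for all $\phi,\psi\in\Hp$,
\begin{equation}
\langle\phi,(\alpha A)\psi\rangle=\alpha\,\langle A^*\phi,\psi\rangle=\langle\overline\alpha\,A^*\phi,\psi\rangle,\qquad
\langle\phi,(A+B)\psi\rangle=\langle(A^*+B^*)\phi,\psi\rangle,
\end{equation}
and since $\overline\alpha\,A^*$ and $A^*+B^*$ are all-over bounded operators, they realise the pseudo-adjoints of $\alpha A$ and $A+B$; so $\alpha A,A+B\in\Bpa$, with $(\alpha A)^*=\overline\alpha\,A^*$ and $(A+B)^*=A^*+B^*$.

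For the product I would first note that $AB\in\Bp$ (as $\Bp$ is an algebra) and that $B^*A^*$ is all-over and bounded (a composition of all-over bounded operators, $A^*,B^*\in\Bpa\subset\Bp$); then
\begin{equation}
\langle\phi,(AB)\psi\rangle=\langle A^*\phi,B\psi\rangle=\langle B^*A^*\phi,\psi\rangle,\qquad\forall\phi,\psi\in\Hp,
\end{equation}
exhibits $B^*A^*$ as an all-over pseudo-adjoint of $AB$, whence $AB\in\Bpa$ and $(AB)^*=B^*A^*$; and $\langle\phi,\id\,\psi\rangle=\langle\phi,\psi\rangle=\langle\id\,\phi,\psi\rangle$ gives $\id\in\Bpa$ with $\id^*=\id$.

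There is no serious obstacle here. The only points that need care are: the placement of the conjugation in the scalar-multiplication identity (since $\scap$ is conjugate-linear in its \emph{first} slot); and, for the product, the explicit observation that $B^*A^*$ is defined on \emph{all} of $\Hp$ --- this is precisely what upgrades $AB$ from merely having a pseudo-adjoint to being genuinely adjointable. If a matrix proof were preferred, I would instead fix an orthonormal basis $\Phi$, write $A=\opPhi(A_{mn})$, $B=\opPhi(B_{mn})$ with $(A_{mn}),(B_{mn})$ satisfying \ref{cond.bot1}--\ref{cond.bot3}, observe that the matrices of $A^*$, $\alpha A+\beta B$, $AB$ are $(\overline{A_{nm}})$, $(\alpha A_{mn}+\beta B_{mn})$, $(\sum_k A_{mk}B_{kn})$ respectively, verify \ref{cond.bot1}--\ref{cond.bot3} for each (for $AB$, condition~\ref{cond.bot3} needs a short $\varepsilon$-split of the sum: $\lim_k A_{mk}=0$ controls the tail in $k$, $\lim_n B_{kn}=0$ controls the finite head, and $\sup_{m,k}|A_{mk}|,\sup_{k,n}|B_{kn}|<\infty$ control the rest), and then read off the adjoint identities via Theorem~\ref{th.2}.
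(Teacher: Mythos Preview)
Your proof is correct. The paper takes a slightly different emphasis: for $A^*\in\Bpa$ and $(A^*)^*=A$ it goes through the matrix characterization of Theorem~\ref{th.2} (observing that if $(A_{mn})$ satisfies \ref{cond.bot1}--\ref{cond.bot3} then so does $(\overline{A_{nm}})$, and taking the transpose-conjugate twice returns $(A_{mn})$), while for $(\alpha A)^*$, $(A+B)^*$, $(AB)^*$ and $\id^*$ it simply declares these ``clear from the definition of the adjoint'' --- which is exactly your primary argument, written out in full. Your direct route via the defining relation and non-degeneracy is self-contained and avoids the detour through matrices; the matrix alternative you sketch at the end is essentially what the paper uses for the first claim.
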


\begin{proof}
If $A=\opPhi(A_{mn})\in\Bp$ is adjointable, then the matrix elements of $(A_{mn})$ satisfy conditions~\ref{cond.bot1}--\ref{cond.bot3}
in Theorem~\ref{th.2}, and clearly, the matrix elements of $A^*$ w.r.t. the orthonormal basis $\Phi\equiv\{\phi_m\}_{m\in\mathbb{N}}$
--- recall that $A^*=\opPhi(A^*_{mn})$, with $A^*_{mn}=\overline{A_{nm}}$ --- satisfy these conditions too. Hence, $A^*$ is adjointable
too and $(A^*)^*=\opPhi(A_{mn})=A$. We have already observed that $\Bpa$ is a linear subspace of $\Bp$. The remaining facts are clear
from the definition of the adjoint of a bounded operator in a $p$-adic Hilbert space.
\end{proof}

\begin{corollary} \label{corhbe}
If $A\in\Bpa$, then its generalized adjoint $\dA\in\mathcal{B}(\dH)$ is a dual Hahn-Banach extension of its adjoint $A^*\in\Bpa$.
Moreover, if $A\in\Bpa$ and, for some all-over linear operator $B$ in $\Hp$, the intertwining relation $A^\prime\circ\JH=\JH\circ B$
holds, then $B\in\Bpa$, $\dA$ is a dual Hahn-Banach extension of $B$ and $B=A^\ast$.
\end{corollary}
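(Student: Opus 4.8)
The plan is to reduce both assertions to a single identity, namely the intertwining relation $A^\prime\circ\JH=\JH\circ A^\ast$, supplemented by the norm equality $\|A^\prime\|=\|A^\ast\|$; once this is in hand, everything follows by unwinding the definition of a dual Hahn-Banach extension and using the injectivity of $\JH$.

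First I would establish the first assertion. Since $A\in\Bpa\subset\Bp$, Proposition~\ref{proda} gives $A^\prime\in\mathcal{B}(\dH)$ with $\|A^\prime\|=\|A\|$, while Theorem~\ref{th.2} gives $\|A^\ast\|=\|A\|$; hence $\|A^\prime\|=\|A^\ast\|$. What then remains is to check the intertwining relation~\eqref{interell} with $A^\ast$ playing the role of the bounded operator and $A^\prime$ that of its extension, i.e.\ $A^\prime\circ\JH=\JH\circ A^\ast$. Fixing $\phi\in\Hp$, the definition of the generalized adjoint (equivalently, relation~\eqref{interelb}) gives $A^\prime(\JH\phi)=\JH(\phi)\circ A=\langle\phi,A(\cdot)\rangle$. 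Since $A$ is adjointable, $\dom(A^\dagger)=\Hp$ and $A^\dagger=A^\ast$, so identity~\eqref{eq.24} applies for every $\psi\in\Hp$, yielding $\langle\phi,A\psi\rangle=\langle A^\ast\phi,\psi\rangle$ for all $\psi\in\Hp$; that is, $\langle\phi,A(\cdot)\rangle=\langle A^\ast\phi,\cdot\rangle=\JH(A^\ast\phi)$. Combining, $A^\prime(\JH\phi)=\JH(A^\ast\phi)$ for every $\phi\in\Hp$, which is precisely $A^\prime\circ\JH=\JH\circ A^\ast$; by the definition of a dual Hahn-Banach extension, $A^\prime$ is then a dual Hahn-Banach extension of $A^\ast$.

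For the ``moreover'' part I would argue as follows. Suppose $B$ is an all-over linear operator in $\Hp$ with $A^\prime\circ\JH=\JH\circ B$. Comparing this with the identity $A^\prime\circ\JH=\JH\circ A^\ast$ just obtained gives $\JH\circ B=\JH\circ A^\ast$; since $\JH$ is a conjugate-linear isometry it is injective (alternatively, one may invoke the non-degeneracy of $\scap$, Proposition~\ref{nondeginpro}), whence $B=A^\ast$ on all of $\Hp$. Thus $B=A^\ast\in\Bpa$ by Corollary~\ref{corad}, and $A^\prime$ is a dual Hahn-Banach extension of $B$ by the first assertion.

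As for difficulty, I do not expect a genuine obstacle: the statement merely repackages Proposition~\ref{proda}, Theorem~\ref{th.2}, and the defining identity of the proper adjoint. The two points deserving care are that one must use the \emph{adjointability} of $A$ (so that~\eqref{eq.24} holds with $\psi$ ranging over the whole of $\Hp$, rather than merely a proper closed subspace, as in the dichotomy of Proposition~\ref{proadj}) and the injectivity of $\JH$ (which is what pins down $B$ uniquely); beyond that, the argument is bookkeeping between the two notions of adjoint.
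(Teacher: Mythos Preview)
Your proposal is correct and follows essentially the same route as the paper's proof: both establish the intertwining relation $A^\prime\circ\JH=\JH\circ A^\ast$ via~\eqref{interelb} and the defining identity~\eqref{eq.24} of the proper adjoint, then invoke $\|A^\prime\|=\|A\|=\|A^\ast\|$ from Proposition~\ref{proda} and Theorem~\ref{th.2}; for the second part, the paper phrases uniqueness via Proposition~\ref{proadj} while you appeal directly to the injectivity of $\JH$, which amounts to the same thing.
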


\begin{proof}
We have already observed that the generalized adjoint $\dA$ of a bounded operator $A$ satisfies the intertwining
relation~\eqref{interelb}; i.e., for every $\phi\in\Hp$, $\big(A^\prime\circ\JH\big)(\phi)=\JH(\phi)\circ A$.
If $A\in\Bpa$, by Proposition~\ref{proadj}, the proper adjoint $A^*$ is (uniquely) determined by the condition that
$\JH(\phi)\circ A=\big(\JH\circ A^*\big)(\phi)$, for all $\phi\in\Hp$. Therefore, we have that
\begin{equation}
\big(A^\prime\circ\JH\big)(\phi)=\JH(\phi)\circ A=\big(\JH\circ A^*\big)(\phi), \quad \forall\phi\in\Hp;
\end{equation}
i.e., the conjugate-linear isometry $\JH$ intertwines $A^*$ with $\dA$. Thus, $\dA$ is a dual Hahn-Banach extension
of $A^*$, because, by relation~\eqref{eq.26} in Theorem~\ref{th.2} and by Proposition~\ref{proda}, we also have that
$\|A^*\|=\|A\|=\|\dA\|$.

Moreover, if $A\in\Bpa$ and, for some all-over operator $B$ in $\Hp$, $A^\prime\circ\JH=\JH\circ B$, then
\begin{equation}
\big(\JH\circ A^*\big)(\phi)=
\JH(\phi)\circ A=\big(A^\prime\circ\JH\big)(\phi)=\big(\JH\circ B\big)(\phi), \quad \forall\phi\in\Hp,
\end{equation}
so that $B=A^*\in\Bpa$ (Corollary~\ref{corad}), because the relation $\JH(\phi)\circ A=\big(\JH\circ A^*\big)(\phi)$, satisfied for all
$\phi\in\Hp$, uniquely determines the operator $A^*$ (Proposition~\ref{proadj}).
\end{proof}

\begin{notation} \label{notalim}
Given an infinite matrix $(A_{mn})\in\Minf$, by writing $\lim_{m+n}A_{mn}=\alpha$, for some $\alpha\in\Q$, we mean that
\begin{equation}\label{meana}
\mbox{$\forall\epsilon>0$, $\card(\{(m,n)\in\mathbb{N}\times\mathbb{N}\sep |A_{mn}-\alpha|\geq\epsilon\})<\infty$}.
\end{equation}
Equivalently, we mean that
\begin{equation}\label{meanb}
\mbox{$\forall\epsilon>0$, $\exists \ttN\in\mathbb{N}$, such that, if $\max\{m,n\}>\ttN$, then $|A_{mn}-\alpha|<\epsilon$,}
\end{equation}
or, also, that
\begin{equation}\label{meanc}
\mbox{$\forall\epsilon>0$, $\exists\ttN\in\mathbb{N}$, such that, if $m+n>\ttN$, then $|A_{mn}-\alpha|<\epsilon$.}
\end{equation}
\end{notation}

\begin{corollary}\label{cor.2}
Let $\Phi=\{\phi_n\}_{n\in\mathbb{N}}$ be any orthonormal basis in $\Hp$, and let $A=\opPhi(A_{mn})$ be a matrix operator.
If the elements of the matrix $(A_{mn})$ satisfy the condition that
\begin{equation}\label{eq.40}
\lim_{m+n}A_{mn}=0,
\end{equation}
then $A\in\Bpa$.
\end{corollary}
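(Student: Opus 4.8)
The plan is to derive the three conditions~\ref{cond.bot1}--\ref{cond.bot3} appearing in Theorem~\ref{th.2} directly from the hypothesis $\lim_{m+n}A_{mn}=0$, and then invoke the converse (last) assertion of that theorem to conclude that the matrix operator $A=\opPhi(A_{mn})$ is adjointable.

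First I would rewrite the hypothesis using the equivalent formulation~\eqref{meanb} of Notation~\ref{notalim}: for every $\epsilon>0$ there is an $\ttN\in\nat$ such that $|A_{mn}|<\epsilon$ whenever $\max\{m,n\}>\ttN$. Applying this with $\epsilon=1$ shows that the set $\{(m,n)\in\nat\times\nat\sep|A_{mn}|\ge 1\}$ is contained in $\{1,\dots,\ttN\}\times\{1,\dots,\ttN\}$ and is therefore finite; consequently $\sup_{m,n}|A_{mn}|$ equals the maximum of $1$ and finitely many values $|A_{mn}|$, hence is finite. This is condition~\ref{cond.bot1}.

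Next, to obtain~\ref{cond.bot2} I fix $n\in\nat$ and, given $\epsilon>0$, take $\ttN$ as above; for every $m>\ttN$ we then have $\max\{m,n\}\ge m>\ttN$, hence $|A_{mn}|<\epsilon$, so $\lim_m A_{mn}=0$. Exchanging the roles of $m$ and $n$ (fixing $m\in\nat$ and letting $n>\ttN$) yields~\ref{cond.bot3} in the same way. Since the entries of $(A_{mn})$ thus satisfy~\ref{cond.bot1}--\ref{cond.bot3}, the converse part of Theorem~\ref{th.2} gives $A\in\Bpa$.

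I do not expect any real obstacle: the statement is an immediate corollary of Theorem~\ref{th.2}. The only point worth flagging is that $\lim_{m+n}A_{mn}=0$ is strictly stronger than the conjunction of the row and column conditions~\ref{cond.bot2}--\ref{cond.bot3} (in particular it already entails the uniform bound~\ref{cond.bot1}), so the implication here is one-directional and no converse is asserted.
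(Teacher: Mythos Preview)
Your proposal is correct and follows essentially the same approach as the paper: both derive conditions~\ref{cond.bot1}--\ref{cond.bot3} directly from the hypothesis $\lim_{m+n}A_{mn}=0$ (using the finiteness of the level sets to get the uniform bound, and the obvious specialization to fixed $n$ or fixed $m$ for the row/column limits), and then invoke the converse part of Theorem~\ref{th.2}. Your write-up is slightly more explicit about the $\epsilon$-$\ttN$ mechanics, but the argument is the same.
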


\begin{proof}
Condition~\eqref{eq.40} implies:
\begin{itemize}

\item[(i)] $\sup_{m,n}|A_{mn}|<\infty$, because for every $\epsilon>0$ we have that
the set $\{m,n\in\mathbb{N}\times\mathbb{N}\sep |A_{mn}|\geq \epsilon\}$ is finite.

\item[(ii)] $\lim_{m} A_{mn}=0$, $\forall n\in\nat$, and $\lim_n A_{mn}=0$, $\forall m\in\nat$.

\end{itemize}

By the final assertion on Theorem~\ref{th.2}, it follows that $A\in\Bpa$.
\end{proof}

\begin{definition}
We say that a bounded operator $A\in\Bp$ is \emph{self-adjoint} if
\begin{equation}
A\in\Bpa\;\;\;\;\text{and}\;\;\;\;A^*=A.
\end{equation}
\end{definition}

From Theorem~\ref{th.2} we also immediately derive the following:

\begin{corollary}
Let $\Phi=\{\phi_n\}_{n\in\mathbb{N}}$ be any orthonormal basis in $\Hp$. An adjointable bounded operator $A\in\Bpa$
--- $A=\opPhi(A_{mn})$ --- is self-adjoint iff
\begin{equation}
A_{mn}=\overline{A_{nm}}, \quad\forall m,n\in\mathbb{N}.
\end{equation}

Therefore, a matrix operator $\opPhi(A_{mn})$ is self-adjoint iff
\begin{enumerate}[label=\tt{(S\arabic*)}]

\item $A_{mn}=\overline{A_{nm}}$, $\forall m,n\in\nat$,

\item $\sup_{m,n}|A_{mn}|<\infty$,

\item $\lim_{m}A_{mn}=0$, $\forall n\in\mathbb{N}$.

\end{enumerate}
\end{corollary}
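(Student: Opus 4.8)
The plan is to extract both equivalences directly from Theorem~\ref{th.2}. For the first ``iff'', fix $A=\opPhi(A_{mn})\in\Bpa$. By Theorem~\ref{th.2} its proper adjoint is the matrix operator $A^*=\opPhi(A^*_{mn})$ with $A^*_{mn}=\overline{A_{nm}}$. Since the matrix of a bounded operator with respect to the fixed orthonormal basis $\Phi$ is uniquely determined --- one has $A_{mn}=\langle\phi_m,A\phi_n\rangle$ as in~\eqref{eq.11}, so the assignment $A\mapsto(A_{mn})$ is injective on $\Bp$ (equivalently, use the uniqueness of the expansion from point~\ref{propiii} of Proposition~\ref{prop.3.29}, transported through the isometry $\Wf$ of~\eqref{isohilb}) --- the operator equation $A^*=A$ holds if and only if the entries coincide, i.e.\ $\overline{A_{nm}}=A_{mn}$ for all $m,n\in\nat$. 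This gives the first equivalence.

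For the ``therefore'' part I would combine this with the characterization of adjointability. By definition, $\opPhi(A_{mn})$ is self-adjoint exactly when it is adjointable \emph{and} equal to its adjoint; by Theorem~\ref{th.2} the first requirement amounts to conditions~\ref{cond.bot1}--\ref{cond.bot3} (that is, $\sup_{m,n}|A_{mn}|<\infty$, $\lim_m A_{mn}=0$ for every $n$, and $\lim_n A_{mn}=0$ for every $m$), and, once these hold, the second requirement is $A_{mn}=\overline{A_{nm}}$ by the first part of the corollary just proved. It then remains only to observe that, in the presence of $A_{mn}=\overline{A_{nm}}$, conditions~\ref{cond.bot2} and~\ref{cond.bot3} are equivalent: since $|A_{mn}|=|\overline{A_{nm}}|=|A_{nm}|$, one has $\lim_n A_{mn}=0$ for all $m$ precisely when $\lim_m A_{mn}=0$ for all $n$. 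Hence the set of requirements collapses to exactly (S1)--(S3), as claimed.

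I do not expect a genuine obstacle here: the statement is a bookkeeping consequence of Theorem~\ref{th.2}. The only step that warrants an explicit sentence is the uniqueness of the matrix representation with respect to a fixed orthonormal basis, which legitimises the passage from the operator identity $A^*=A$ to the entrywise identity $A^*_{mn}=A_{mn}$; everything else reduces to the formula $A^*_{mn}=\overline{A_{nm}}$ and the invariance of the valuation under the conjugation of $\Q$.
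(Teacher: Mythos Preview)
Your argument is correct and matches the paper's approach: the paper simply states that this corollary is ``immediately derived'' from Theorem~\ref{th.2} and gives no separate proof, so you have made explicit exactly the bookkeeping the authors leave to the reader. The only points worth spelling out --- uniqueness of the matrix representation (so that $A^*=A$ is equivalent to the entrywise identity) and the fact that $|\overline{z}|=|z|$ (so that condition~\ref{cond.bot3} becomes redundant once $A_{mn}=\overline{A_{nm}}$ is assumed) --- are precisely the ones you flagged.
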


It is clear that the set of all self-adjoint bounded operators in $\Hp$ --- denoted hereafter by $\Bps$ ---
is a $\mathbb{Q}_p$-linear subspace of $\Bpa$ (by field restriction).

We conclude this section by observing that $\Bpa$ is a Banach $*$-algebra.

\begin{proposition}\label{algestruc}
The linear space $\Bpa$ is a $p$-adic Banach space and a (unital) Banach subalgebra of $\Bp$. Therefore, $\Bpa$,
endowed with the adjoining operation $A\mapsto A^*$, is a $p$-adic Banach $*$-algebra.
\end{proposition}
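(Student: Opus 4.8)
The plan is to exhibit $\Bpa$, equipped with the operator norm inherited from $\Bp$, as a norm-closed --- hence complete --- unital subalgebra of the unital ultrametric Banach algebra $\Bp$, carrying an isometric, conjugate-linear, involutive and anti-multiplicative involution $A\mapsto A^\ast$; every ingredient except the closedness has in fact already been secured. Indeed, Corollary~\ref{corad} tells us that $\Bpa$ is a $\Q$-linear subspace of $\Bp$ with $\id\in\Bpa$, that $AB\in\Bpa$ whenever $A,B\in\Bpa$, and that $(A^\ast)^\ast=A$, $(\alpha A)^\ast=\overline{\alpha}A^\ast$, $(A+B)^\ast=A^\ast+B^\ast$, $(AB)^\ast=B^\ast A^\ast$; the submultiplicativity $\|ST\|\le\|S\|\,\|T\|$ and the strong triangle inequality for $\|\cdot\|$ are simply inherited from $\Bp$; and by relation~\eqref{eq.26} of Theorem~\ref{th.2} the involution is isometric, $\|A^\ast\|=\|A\|$. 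Thus the only step requiring a genuine argument is that $\Bpa$ is closed in $(\Bp,\|\cdot\|)$, after which its completeness --- and hence the statement that $\Bpa$ is a $p$-adic Banach space --- is immediate, $\Bp$ being complete.

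For the closedness I would take a sequence $\{A_k\}_{k\in\nat}\subset\Bpa$ with $\|A_k-A\|\to 0$ for some $A\in\Bp$, and show $A\in\Bpa$. The key remark is that the sequence of adjoints $\{A_k^\ast\}_{k\in\nat}$ is itself Cauchy in $\Bp$, because $A_k-A_l\in\Bpa$ and so, by~\eqref{eq.26}, $\|A_k^\ast-A_l^\ast\|=\|(A_k-A_l)^\ast\|=\|A_k-A_l\|$; completeness of $\Bp$ then gives $A_k^\ast\to B$ for some $B\in\Bp$. Invoking the separate continuity of the non-Archimedean inner product (Remark~\ref{rem.cont}) together with the defining identity~\eqref{eq.24} of the pseudo-adjoint, for every $\phi,\psi\in\Hp$ one obtains $\langle\phi,A\psi\rangle=\lim_k\langle\phi,A_k\psi\rangle=\lim_k\langle A_k^\ast\phi,\psi\rangle=\langle B\phi,\psi\rangle$; hence $\langle\phi,A\psi\rangle=\langle B\phi,\psi\rangle$ for all $\psi\in\Hp$, so that every $\phi\in\Hp$ lies in $\dom(A^\dagger)$ --- with witness $\eta=B\phi$ in~\eqref{eq.21} --- whence $\dom(A^\dagger)=\Hp$. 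Therefore $A$ is adjointable, with $A^\ast=B$ by the uniqueness built into Proposition~\ref{proadj}, and $\Bpa$ is norm-closed.

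An equivalent, more hands-on route would use the matrix picture: fixing an orthonormal basis $\Phi\equiv\{\phi_m\}_{m\in\nat}$ and writing $A=\opPhi(A_{mn})$, $A_k=\opPhi(A^{(k)}_{mn})$, relation~\eqref{eq.14} of Theorem~\ref{th.1} converts $\|A_k-A\|\to 0$ into the uniform convergence $\sup_{m,n}|A^{(k)}_{mn}-A_{mn}|\to 0$; since each $A_k\in\Bpa$ satisfies condition~\ref{cond.bot3} of Theorem~\ref{th.2}, i.e.\ $\lim_n A^{(k)}_{mn}=0$ for every $m$, a routine $\epsilon/2$-estimate transfers this to $\lim_n A_{mn}=0$ for every $m$; and since $A\in\Bp$ already supplies conditions~\ref{cond.bot1}--\ref{cond.bot2}, the converse assertion of Theorem~\ref{th.2} yields $A\in\Bpa$. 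I do not expect a serious obstacle: the proposition is essentially a repackaging of Corollary~\ref{corad} and Theorem~\ref{th.2}, and the only point that calls for a little care is checking that $B=\lim_k A_k^\ast$ really is the proper adjoint of the limit operator $A$, which is exactly what the uniqueness of the pseudo-adjoint (Proposition~\ref{proadj}) delivers. Assembling the pieces, $\Bpa$ is a complete ultrametric normed space over $\Q$, a unital subalgebra of $\Bp$, and $A\mapsto A^\ast$ is an isometric involution compatible with the algebraic structure; that is, $\Bpa$ is a $p$-adic Banach $\ast$-algebra.
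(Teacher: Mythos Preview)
Your proposal is correct and follows essentially the same route as the paper: invoke Corollary~\ref{corad} for the algebraic structure, use the isometry $\|A^\ast\|=\|A\|$ from Theorem~\ref{th.2} to show that $\{A_k^\ast\}$ is Cauchy with limit $B$, and then pass to the limit in $\langle\phi,A_k\psi\rangle=\langle A_k^\ast\phi,\psi\rangle$ via the continuity of the inner product to conclude that $A$ is adjointable with $A^\ast=B$. Your additional matrix-picture argument is a legitimate alternative not present in the paper, but your primary argument already matches the paper's proof.
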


\begin{proof}
By Corollary~\ref{corad}, the linear subspace $\Bpa$ of $\Bp$ is a actually a subalgebra of $\Bp$,
containing the identity $\id$, and the mapping $\Bpa\ni A\mapsto A^*\in\Bpa$ is an involution.
Therefore, the only thing to be shown is that $\Bpa$ is closed in $\Bp$.
In fact, let $\{A_n\}_{n\in\nat}$ be a sequence in $\Bpa$, converging in $\Bp$:
$\lim_n A_n=A\in\Bp$ (in the norm topology). Since the adjoining operation is an \emph{isometric} involution,
then $\lim_n A_n^*=B$, for some bounded operator $B\in\Bp$ ($\{A_n^*\}_{n\in\nat}$ being a Cauchy sequence in $\Bp$).
It follows that
\begin{equation}
\langle\phi, A\psi\rangle=\lim_n\,\langle\phi,A_n\psi\rangle=\lim_n\,\langle A_n^*\phi,\psi\rangle=
\langle B\phi,\psi\rangle, \quad \forall\phi,\psi\in\Hp.
\end{equation}
Thus, $A$ is adjointable (and $B=A^*$); i.e., $\Bpa$ is a $p$-adic Banach space.
\end{proof}


\section{Unitary operators in a $p$-adic Hilbert space}
\label{sec5}

The definition of a unitary operator in a $p$-adic Hilbert space is not as simple as in the complex case.
Clearly, this is due to the fact that the relation between the norm and the inner product is not the `standard one'.
As in the complex setting, one can actually consider various (equivalent) definitions. Since, orthonormal bases and
matrix operators turn out to play a central role in the $p$-adic setting, we will introduce unitary operators as matrix
operators relating any pair of orthonormal bases; see Definition~\ref{def.u4} below. Eventually, it will be shown that
a unitary operator is nothing but an automorphism of a $p$-adic Hilbert space (Definition~\ref{ishil}).

In order to prove the main result of this section, we first need to collect a few preliminary facts.
We will work in a $p$-adic Hilbert space $\Hp$ over $\Q$, with $\dim(\Hp)=\enne$, where $\enne\in\nat$ or $\enne=\infty$;
accordingly, we will put $\senne\defi\{n\in\nat \sep n\le\enne\}$ (i.e., $\senne\equiv\nat$, for $\enne=\infty$).
As in Section~\ref{sec4}, we will assume bounded operators to be all-over, and as in the previous sections,
the term `isometry' will stand for `norm-isometry'. However, since the inner product will play a major role here,
for the sake of clarity it is worth starting with the following:

\begin{definition}\label{def.u1}
A \emph{linear operator} $A$ in $\Hp$ is called an \emph{isometry} if
\begin{equation}\label{eq.u1}
\dom(A)=\Hp\quad\text{and}\quad\|A\phi\|=\|\phi\|,\quad\forall\phi\in\Hp;
\end{equation}
i.e., if it is all-over and norm-preserving (N-preserving).
\end{definition}

\begin{lemma}\label{lem.u1}
Let $A$ be a linear operator in $\Hp$. The following facts are equivalent:

\begin{enumerate}[label=\rm{(\roman*)}]

\item \label{cond.le52i}  $A$ is a surjective isometry;

\item \label{cond.le52ii} $A$ is bounded, admits a bounded inverse $A^{-1}$ (with $\dom(A^{-1})=\Hp$) and
\begin{equation}\label{eq.u2}
\|A\|=1=\|A^{-1}\| .
\end{equation}

\end{enumerate}
\end{lemma}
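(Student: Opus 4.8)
The plan is to prove the two implications \ref{cond.le52i}$\Rightarrow$\ref{cond.le52ii} and \ref{cond.le52ii}$\Rightarrow$\ref{cond.le52i} separately, using the fact --- established in Definition~\ref{def.u1} --- that an isometry is by definition all-over, together with the observation from Section~\ref{linops} that for linear operators between $p$-adic normed spaces boundedness is equivalent to continuity, and that $\|\id\|=1$ whenever $\Hp\neq\{0\}$.

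\textbf{\ref{cond.le52i}$\Rightarrow$\ref{cond.le52ii}.} Assume $A$ is a surjective isometry. From $\|A\phi\|=\|\phi\|$ for all $\phi\in\Hp$ and $\dom(A)=\Hp$ it is immediate that $A$ is bounded with $\|A\|=\sup_{\phi\neq 0}\|A\phi\|/\|\phi\|=1$. Norm-preservation forces $A$ to be injective (if $A\phi=0$ then $\|\phi\|=\|A\phi\|=0$, so $\phi=0$), and surjectivity is assumed; hence $A^{-1}$ exists as a linear map with $\dom(A^{-1})=A\Hp=\Hp$, i.e.\ $A^{-1}$ is all-over. For any $\psi\in\Hp$, writing $\psi=A\phi$ with $\phi=A^{-1}\psi$, we get $\|A^{-1}\psi\|=\|\phi\|=\|A\phi\|=\|\psi\|$, so $A^{-1}$ is itself an isometry, in particular bounded with $\|A^{-1}\|=1$. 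This establishes~\eqref{eq.u2}.

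\textbf{\ref{cond.le52ii}$\Rightarrow$\ref{cond.le52i}.} Assume $A$ is bounded with a bounded all-over inverse $A^{-1}$ and $\|A\|=\|A^{-1}\|=1$. Boundedness of $A$ with $\dom(A)=\Hp$ already gives that $A$ is all-over, and invertibility with $\ran(A^{-1})=\Hp$ forces $A$ to be surjective (indeed $A(A^{-1}\psi)=\psi$ for every $\psi\in\Hp$). It remains to check that $A$ preserves the norm. For every $\phi\in\Hp$ we have $\|A\phi\|\le\|A\|\,\|\phi\|=\|\phi\|$; on the other hand, applying the same estimate to $A^{-1}$ at the vector $A\phi$ gives $\|\phi\|=\|A^{-1}(A\phi)\|\le\|A^{-1}\|\,\|A\phi\|=\|A\phi\|$. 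Combining the two inequalities yields $\|A\phi\|=\|\phi\|$ for all $\phi\in\Hp$, so $A$ is an isometry; being also surjective, it is a surjective isometry.

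There is no genuine obstacle here: the argument is the standard two-sided sandwich estimate $\|\phi\|\le\|A^{-1}\|\,\|A\phi\|\le\|A^{-1}\|\,\|A\|\,\|\phi\|$, which collapses to an equality precisely because both operator norms equal $1$. The only point deserving care --- and the reason the lemma is stated at all in the $p$-adic setting --- is the bookkeeping of domains: one must invoke the convention of Definition~\ref{def.u1} that ``isometry'' already includes being all-over, and one must check that $A^{-1}$ is defined on all of $\Hp$ (which is exactly surjectivity of $A$), so that the inequality applied ``to $A^{-1}$ at $A\phi$'' is legitimate for every $\phi$. No appeal to the inner product, to completeness, or to spherical completeness is needed; the statement is purely metric.
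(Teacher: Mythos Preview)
Your proof is correct and follows essentially the same approach as the paper: the direction \ref{cond.le52i}$\Rightarrow$\ref{cond.le52ii} is identical, and for \ref{cond.le52ii}$\Rightarrow$\ref{cond.le52i} you use the direct two-sided estimate $\|A\phi\|\le\|\phi\|$ and $\|\phi\|\le\|A\phi\|$, whereas the paper phrases the same idea contrapositively (assuming $\|A\phi\|/\|\phi\|\neq 1$ and deriving a contradiction with either $\|A\|=1$ or $\|A^{-1}\|=1$). The underlying content is the same.
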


\begin{proof}
Suppose that~\ref{cond.le52i} holds. Then, by~\eqref{eq.u1}, $A$ is bounded and $\|A\|=1$. Moreover, $A$ is bijective
and the linear operator $A^{-1}$ is an isometry too, because, for every $\psi\in\Hp$, there is some $\phi\in\Hp$ such
that $\psi=A\phi$ and $\|\psi\|=\|A\phi\|=\|\phi\|$. Therefore, $\|A^{-1}\psi\|=\|A^{-1}A\phi\|=\|\phi\|=\|\psi\|$;
whence, $A^{-1}$ is bounded and $\|A^{-1}\|=1=\|A\|$.

Conversely, suppose that~\ref{cond.le52ii} holds (in particular, $\ker(A)=\{0\}$). Assume that $A$ is \emph{not} an isometry.
Then, there exists some $\phi\in\Hp$, $\phi\neq 0$, such that
\begin{equation}\label{eq.u3}
0\neq \|\phi\|\neq\|A\phi\|\neq 0.
\end{equation}
Hence, we have:
\begin{equation}\label{eq.u4}
0\neq \frac{\|A\phi\|}{\|\phi\|}\neq 1.
\end{equation}

Now, if $\|A\phi\|/\|\phi\|>1$, then we would have $\|A\|>1$, which would contradict one of the hypotheses in~\ref{cond.le52ii}.
Instead, if $0<\|A\phi\|/\|\phi\|<1$, then we would have $\|A^{-1}\|>1$, because, in such a case, we should conclude that
\begin{equation}\label{eq.u5}
1<\frac{\|\phi\|}{\|A\phi\|}=\frac{\|A^{-1}\psi\|}{\|AA^{-1}\psi\|}=\frac{\|A^{-1}\psi\|}{\|\psi\|},\quad
\mbox{for some $\psi\in\Hp\setminus\{0\}$}.
\end{equation}
This, as well, would contradict one of the hypotheses of~\ref{cond.le52ii}. Therefore, the bijection $A$ must be an isometry.
\end{proof}

\begin{remark}\label{rem.u1}
By the Bounded Inverse Theorem (see, e.g., Corollary~{3.6} in~\cite{rooij1978non}, or Subsect.~{2.8} of~\cite{bosch1984non}),
if $A$ is a \emph{bijective} bounded operator in $\Hp$ --- in particular, $\dom(A)=\ran(A)=\Hp$ --- then $A^{-1}$ is bounded too.
Thus, point~\ref{cond.le52ii} in Lemma~\ref{lem.u1} can be reformulated as follows:
\begin{itemize}

\item[(ii)$^\prime$] $A$ is bounded and bijective, and $\|A\|=1=\|A^{-1}\|$.

\end{itemize}
\end{remark}

\begin{definition}\label{def.u2}
We say that a linear operator $A$ in $\Hp$ is \emph{inner-product-preserving} (in short, IP-preserving) if
\begin{equation}\label{eq.u6}
\langle A\phi, A\psi\rangle=\langle\phi, \psi\rangle,\quad\forall\,\phi,\psi\in\dom(A).
\end{equation}
\end{definition}

\begin{lemma}\label{lem.u2}
If $A$ is an IP-preserving, bounded operator in $\Hp$, admitting a bounded inverse $A^{-1}$
(equivalently, an IP-preserving, bijective bounded operator), then $A^{-1}$ is IP-preserving too:
\begin{equation}\label{eq.u7}
\langle A^{-1}\phi, A^{-1}\psi\rangle=\langle\phi,\psi\rangle,\quad\forall\phi,\psi\in\Hp.
\end{equation}
\end{lemma}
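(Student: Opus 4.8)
The plan is to reduce the statement about $A^{-1}$ to the hypothesis on $A$ by exploiting surjectivity. Since $A$ is a bounded operator admitting a bounded inverse $A^{-1}$ with $\dom(A^{-1})=\Hp$, it is a bijection of $\Hp$ onto itself; in particular it is onto. Hence, for arbitrary $\phi,\psi\in\Hp$ I may put $\phi_0\defi A^{-1}\phi\in\Hp=\dom(A)$ and $\psi_0\defi A^{-1}\psi\in\Hp=\dom(A)$, so that $A\phi_0=\phi$ and $A\psi_0=\psi$.

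First I would invoke the IP-preserving identity~\eqref{eq.u6} for $A$, applied to the pair $(\phi_0,\psi_0)\in\dom(A)\times\dom(A)$:
\begin{equation*}
\langle\phi,\psi\rangle=\langle A\phi_0,A\psi_0\rangle=\langle\phi_0,\psi_0\rangle=\langle A^{-1}\phi,A^{-1}\psi\rangle .
\end{equation*}
As $\phi,\psi$ were arbitrary, this is exactly~\eqref{eq.u7}. I would also record that the two formulations in the statement coincide: a bounded operator with an all-over bounded inverse is bijective, and conversely, by the Bounded Inverse Theorem (Remark~\ref{rem.u1}), a bijective bounded operator on the $p$-adic Banach space $\Hp$ automatically has a bounded inverse, so no separate argument is needed there.

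There is essentially no obstacle. The only point requiring a moment's care is that the IP-preserving hypothesis for $A$ is stated on $\dom(A)$, so one must first note $\dom(A)=\Hp$ (built into the meaning of ``bounded operator'' under the all-over convention of Section~\ref{sec4}) before substituting the vectors $A^{-1}\phi,A^{-1}\psi\in\Hp$. The computation is purely algebraic; the continuity of the inner product (Remark~\ref{rem.cont}) is not even used.
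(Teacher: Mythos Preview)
Your proof is correct and follows essentially the same approach as the paper: both use surjectivity of $A$ to write arbitrary $\phi,\psi\in\Hp$ as $A\eta,A\chi$ (your $\phi_0,\psi_0$), then apply the IP-preserving identity for $A$ to obtain $\langle A^{-1}\phi,A^{-1}\psi\rangle=\langle\eta,\chi\rangle=\langle A\eta,A\chi\rangle=\langle\phi,\psi\rangle$. Your added remark on the equivalence of the two hypotheses via the Bounded Inverse Theorem is a harmless elaboration of what the paper records in Remark~\ref{rem.u1}.
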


\begin{proof}
For every pair of vectors $\phi,\psi\in\Hp$, we have that $\phi=A\eta$, $\psi=A\chi$, for some $\eta,\chi\in\Hp$,
because $A$ is surjective, and
\begin{align*}
\langle A^{-1}\phi, A^{-1}\psi\rangle&=\langle A^{-1}A\eta, A^{-1}A\chi\rangle
=\langle \eta, \chi\rangle=\langle A\eta, A\chi\rangle=\langle\phi,\psi\rangle,
\end{align*}
where we have used the fact that $A$ is IP-preserving.
\end{proof}

\begin{remark}\label{rem.u2}
From the previous proof it is clear that, if $A$ is any IP-preserving, injective operator in $\Hp$, then
\begin{equation}
\langle A^{-1}\phi, A^{-1}\psi\rangle=\langle \phi,\psi\rangle,\qquad\forall \phi,\psi\in\dom(A^{-1})=\ran(A).
\end{equation}
\end{remark}

\begin{remark}\label{rem.u3}
A bounded operator admitting a bounded inverse is often called --- and we will indeed call it --- a \emph{top-linear isomorphism}.
Thus, Lemma~\ref{lem.u2} can be rephrased as follows: If $A$ is an IP-preserving top-linear isomorphism, then $A^{-1}$ is
an IP-preserving top-linear isomorphism too.
\end{remark}

\begin{proposition}\label{prop.u1}
A densely defined, IP-preserving operator is injective. If a bounded operator $A\in\Bp$ is IP-preserving, then $\|A\|\geq 1$;
in particular, if $A$ is an IP-preserving top-linear isomorphism, then
\begin{equation}\label{eq.u8}
\|A\|,\|A^{-1}\|\geq 1 .
\end{equation}
\end{proposition}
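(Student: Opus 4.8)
The plan is to establish the three assertions in order, each being a short deduction from the Cauchy--Schwarz inequality, the non-degeneracy of the inner product, and the existence of an orthonormal basis.

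First I would prove that a densely defined IP-preserving operator $A$ is injective. Suppose $A\phi=0$ for some $\phi\in\dom(A)$. The IP-preserving property~\eqref{eq.u6} then gives $\langle\phi,\psi\rangle=\langle A\phi,A\psi\rangle=0$ for every $\psi\in\dom(A)$. Since $\dom(A)$ is dense in $\Hp$ and the non-Archimedean inner product is continuous in its second argument (Remark~\ref{rem.cont}), this extends to $\langle\phi,\psi\rangle=0$ for all $\psi\in\Hp$; by the non-degeneracy of $\scap$ on a $p$-adic Hilbert space (Proposition~\ref{nondeginpro}) we conclude $\phi=0$, i.e.\ $\ker(A)=\{0\}$.

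Next, for the norm estimate, I would fix an orthonormal basis $\Phi\equiv\{\phi_i\}_{i\in I}$ of $\Hp$ --- which exists by the very definition of a $p$-adic Hilbert space --- and pick one basis vector $\phi_1$, so that $\langle\phi_1,\phi_1\rangle=1$ and $\|\phi_1\|=1$. If $A\in\Bp$ is IP-preserving, then, combining~\eqref{eq.u6} with the Cauchy--Schwarz inequality (condition (iii) in Definition~\ref{innprodu}),
\begin{equation}
1=|\langle\phi_1,\phi_1\rangle|=|\langle A\phi_1,A\phi_1\rangle|\le\|A\phi_1\|^2\le\|A\|^2\,\|\phi_1\|^2=\|A\|^2,
\end{equation}
whence $\|A\|\ge1$. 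Finally, if $A$ is an IP-preserving top-linear isomorphism, then $A^{-1}$ is again bounded (Remark~\ref{rem.u3}) and IP-preserving by Lemma~\ref{lem.u2}, so applying the inequality just obtained to both $A$ and $A^{-1}$ gives $\|A\|,\|A^{-1}\|\ge1$, which is~\eqref{eq.u8}.

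I do not expect a genuine obstacle here; the one point to be careful about is that in the $p$-adic setting one cannot run the usual argument through $\|x\|=\sqrt{|\langle x,x\rangle|}$, since this equality fails in general (and fails dramatically in the presence of isotropic vectors), so the estimate must be routed through the Cauchy--Schwarz inequality applied to a genuine unit vector supplied by an orthonormal basis.
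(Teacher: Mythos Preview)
Your proof is correct and follows essentially the same approach as the paper's own proof: injectivity via non-degeneracy and density, the norm bound via Cauchy--Schwarz applied at an orthonormal basis vector with $|\langle\phi,\phi\rangle|=1=\|\phi\|$, and the final inequality via Lemma~\ref{lem.u2}. The paper writes the intermediate inequality $|\langle\phi,\psi\rangle|\le\|A\|^2\|\phi\|\,\|\psi\|$ for general $\phi,\psi$ before specializing, but this is purely cosmetic.
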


\begin{proof}
If a linear operator $A$ in $\Hp$ is IP-preserving, given any $\psi\in\ker(A)$, we have that
\begin{equation}\label{eq.u9}
0=\langle A\psi, A\phi\rangle=\langle\psi, \phi\rangle,\qquad\forall\phi\in\dom(A).
\end{equation}
Thus, if $A$ is densely defined --- i.e., if $\overline{\dom(A)}^{\hspace{0.6mm}\|\cdot\|}=\Hp$ --- then by the continuity
of the inner product (w.r.t.\ each of its arguments), we conclude that, actually, $\langle\psi,\phi\rangle=0$, $\forall\phi\in\Hp$;
hence, the Hermitian sesquilinear form $\scap$ being non-degenerate, $\psi=0$. Therefore, $\ker(A)=\{0\}$ and $A$ is injective.

Now, suppose that $A$ is bounded and IP-preserving. Applying the latter property and the Cauchy-Schwarz inequality, we find that
\begin{equation}\label{eq.u10}
|\langle\phi,\psi\rangle|=|\langle A\phi,A\psi\rangle|\leq \|A\phi\|\;\|A\psi\|\leq \|A\|^2\;\|\phi\|\;\|\psi\|,
\end{equation}
for all $\phi,\psi\in\Hp$. Setting $\psi=\phi$ in relation~\eqref{eq.u10}, and choosing this vector $\phi$ in such a way that
$|\langle\phi,\phi\rangle|=1=\|\phi\|$ (e.g. an element of an orthonormal basis in $\Hp$), we conclude that
$\|A\|^2\geq 1$; hence $\|A\|\geq 1$. Finally, if $A$ is an IP-preserving top-linear isomorphism, then,
by Lemma~\ref{lem.u2}, $A^{-1}$ enjoys the same property, so that both inequalities in~\eqref{eq.u8} hold true.
\end{proof}

We will now prove that, under mild conditions, an IP-preserving operator is a top-linear isomorphism.

\begin{theorem}\label{th.u1}
A surjective, IP-preserving, all-over operator $A$ in $\Hp$ is an adjointable top-linear isomorphism and $A^*=A^{-1}$; moreover,
\begin{equation}\label{eq.u11}
\|A\|=\|A^*\|=\|A^{-1}\|\geq 1.
\end{equation}
\end{theorem}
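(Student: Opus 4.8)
The plan is to deduce, in order, that $A$ is bijective, that it is a top-linear isomorphism (a bounded operator with bounded inverse), and that it is adjointable with $A^{*}=A^{-1}$; the norm identities will then drop out from results already established. First I would observe that, being all-over, $A$ is in particular densely defined, so by Proposition~\ref{prop.u1} the IP-preserving operator $A$ is injective; together with the assumed surjectivity, this makes $A$ a bijection of $\Hp$ onto itself, and by Remark~\ref{rem.u2} the inverse $A^{-1}$ is all-over and IP-preserving as well.

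The crucial step is to show that $A$ is bounded, for which I would invoke the (non-Archimedean) Closed Graph Theorem; see Chapt.~3 of~\cite{rooij1978non}. Suppose $\phi_{n}\to\phi$ and $A\phi_{n}\to\eta$ in $\Hp$. For every $\psi\in\Hp$ the IP-preserving property gives $\langle A\phi_{n},A\psi\rangle=\langle\phi_{n},\psi\rangle$; letting $n\to\infty$ and using the continuity of the inner product in each of its arguments (Remark~\ref{rem.cont}) yields $\langle\eta,A\psi\rangle=\langle\phi,\psi\rangle=\langle A\phi,A\psi\rangle$, so $\langle\eta-A\phi,A\psi\rangle=0$ for all $\psi\in\Hp$. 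As $A$ is surjective, this means $\langle\eta-A\phi,\chi\rangle=0$ for all $\chi\in\Hp$, and the non-degeneracy of $\scap$ (Proposition~\ref{nondeginpro}) forces $\eta=A\phi$. Hence the graph of $A$ is closed, so $A\in\Bp$; by the Bounded Inverse Theorem (Remark~\ref{rem.u1}) $A^{-1}$ is bounded too, i.e., $A$ is a top-linear isomorphism, and $\|A\|,\|A^{-1}\|\geq 1$ by Proposition~\ref{prop.u1}.

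It remains to identify $A^{-1}$ with the proper adjoint. Applying the IP-preserving identity to the pair $A^{-1}\phi,\psi$, we get, for all $\phi,\psi\in\Hp$,
\[
\langle\phi,A\psi\rangle=\langle A(A^{-1}\phi),A\psi\rangle=\langle A^{-1}\phi,\psi\rangle ,
\]
so every $\phi\in\Hp$ lies in $\dom(A^{\dagger})$ with $A^{\dagger}\phi=A^{-1}\phi$; thus $A$ is adjointable and $A^{*}=A^{-1}\in\Bpa$. Then $\|A^{*}\|=\|A^{-1}\|$ trivially, while $\|A\|=\|A^{*}\|$ is exactly~\eqref{eq.26} in Theorem~\ref{th.2}, giving $\|A\|=\|A^{*}\|=\|A^{-1}\|\geq 1$. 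The only genuinely delicate point is the boundedness of $A$, where both the surjectivity of $A$ and the non-degeneracy of $\scap$ are indispensable --- the former to turn the identity $\langle\eta-A\phi,A\psi\rangle=0$ into a statement about all vectors $\chi\in\Hp$, the latter to conclude $\eta=A\phi$ from it.
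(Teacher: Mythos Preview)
Your proof is correct and follows essentially the same approach as the paper's: injectivity from Proposition~\ref{prop.u1}, boundedness via the Closed Graph Theorem (using surjectivity of $A$ together with non-degeneracy of $\scap$ to close the argument), the Bounded Inverse Theorem, and then the identification $A^{*}=A^{-1}$ from the IP-preserving identity. The only cosmetic differences are that the paper phrases the closed-graph step with sequences $\chi_{n}\to 0$, $A\chi_{n}\to\phi$ (showing $\phi=0$) rather than your general $\phi_{n}\to\phi$, $A\phi_{n}\to\eta$, and that it obtains $\langle\phi,A\psi\rangle=\langle A^{-1}\phi,\psi\rangle$ by first invoking Lemma~\ref{lem.u2} (IP-preservation of $A^{-1}$) instead of applying the IP-preserving property of $A$ to the pair $(A^{-1}\phi,\psi)$ as you do---these are interchangeable.
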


\begin{proof}
Since $A$ is IP-preserving and $\dom(A)=\Hp$, by Proposition~\ref{prop.u1} it is injective.
Let us prove that $A$ is bounded (equivalently continuous). By the Closed Graph Theorem
(see, e.g., Theorem~{3.5} in~\cite{rooij1978non}, or Subsect.~{2.8} of~\cite{bosch1984non}),
it is sufficient to show that $A$ is a closed operator. Let $\{\chi_n\}_{n\in\nat}$ be a sequence
in $\Hp$ such that
\begin{equation}\label{eq.u12}
\mbox{$\displaystyle \lim_n\chi_n=0$ and $\displaystyle \lim_nA\chi_n=\phi$, for some $\phi\in\Hp$}.
\end{equation}
In order to conclude that $A$ is closed, we need to show that $\phi=0$. Indeed, by the continuity of the scalar product
(w.r.t.\ each of its arguments) and by the fact that $A$ is IP-preserving, we have:
\begin{equation}\label{eq.u13}
\langle \phi, A\psi\rangle=\langle{\textstyle\lim_n} A\chi_n, A\psi\rangle=\lim_n\,\langle A\chi_n, A\psi\rangle
=\lim_n\,\langle \chi_n, \psi\rangle=\langle{\textstyle\lim_n}\chi_n,\psi\rangle=0,\quad\forall\psi\in\Hp.
\end{equation}
Since $A$ is surjective, we conclude that $\langle\phi,\eta\rangle=0$, $\forall\eta\in\Hp$; hence,
the Hermitian sesquilinear form $\scap$ being non-degenerate, $\phi=0$, so that $A$ is closed.

Summarizing, a surjective, IP-preserving, all-over operator $A$ is bijective and bounded;
hence, by the Bounded Inverse Theorem, a top-linear isomorphism.

Let us now show that $A\in\Bpa$ and $A^*=A^{-1}$. In fact, by Lemma~\ref{lem.u2},
the bounded operator $A^{-1}$ is IP-preserving too; hence:
\begin{equation}\label{eq.u14}
\langle\phi, A\psi\rangle=\langle A^{-1}\phi, A^{-1}A\psi\rangle=\langle A^{-1}\phi, \psi\rangle,\qquad \forall\phi,\psi\in\Hp.
\end{equation}
Therefore, $A$ is adjointable and $A^*=A^{-1}$, so that $\|A\|=\|A^*\|=\|A^{-1}\|$ and, by~\eqref{eq.u8} in Proposition~\ref{prop.u1},
relation~\eqref{eq.u11} holds true.
\end{proof}

\begin{notation}
Given two vectors $\phi,\psi\in\Hp$, by writing
\begin{equation}\label{eq.u15}
\phi\nperp\psi,
\end{equation}
we mean that $\phi$ and $\psi$ are \emph{norm-orthogonal} each other (recall from Subsection~\ref{ipbs} that $\phi\perp\psi$ means
that $\phi$ and $\psi$ are IP-orthogonal, instead); i.e., that $\|\alpha \phi+\beta \psi\|=\max\{\|\alpha \phi\|,\|\beta \psi\|\}$,
for all $\alpha,\beta\in\Q$.
\end{notation}

\begin{definition}\label{def.u3}
A linear operator $A$ in $\Hp$ is said to be \emph{norm-orthogonality-preserving} (in short, NO-preserving) if
\begin{equation}\label{eq.u16}
\phi,\psi\in\dom(A),\ \phi\nperp\psi\quad\implies\quad A\phi\nperp A\psi.
\end{equation}
\end{definition}

\begin{theorem}\label{th.u2}
Every all-over, NO-preserving operator in $\Hp$ is bounded. Specifically, every all-over, NO-preserving operator
in $\Hp$ is a nonzero scalar multiple of an isometry and, conversely, a nonzero scalar multiple of an isometry is NO-preserving.
In particular, a linear operator $A$ in $\Hp$ is an isometry if and only if $A$ is an all-over, NO-preserving (hence, bounded) operator
such that $\|A\|=1$.
\end{theorem}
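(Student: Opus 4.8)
The plan is to prove, in order: (i) the converse, that a nonzero scalar multiple of an isometry is NO-preserving; (ii) that an all-over, NO-preserving operator multiplies the norm of \emph{every} vector by one and the same constant, whence it is bounded and a scalar multiple of an isometry; and (iii) the final equivalence. Part (i) is routine: if $A=\lambda U$ with $\lambda\in\Qa$ and $U$ an isometry, then $\|A\phi\|=|\lambda|\,\|\phi\|$ for all $\phi$, and for $\phi\nperp\psi$ one computes, for every $\alpha,\beta\in\Q$, that $\|\alpha A\phi+\beta A\psi\|=|\lambda|\,\|\alpha\phi+\beta\psi\|=|\lambda|\max\{\|\alpha\phi\|,\|\beta\psi\|\}=\max\{\|\alpha A\phi\|,\|\beta A\psi\|\}$, so that $A\phi\nperp A\psi$.

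The heart of the matter is the lemma: \emph{if $A$ is NO-preserving and $a,b\in\dom(A)$ satisfy $\|a\|=\|b\|$ and $a\nperp b$, then $\|Aa\|=\|Ab\|$.} Its proof begins with the observation that $\|a\|=\|b\|$ together with $a\nperp b$ forces \emph{both} $a\nperp(a+b)$ and $b\nperp(a+b)$ --- this reduces, after expanding $\|\alpha a+\beta(a+b)\|=\|(\alpha+\beta)a+\beta b\|$ and using $\|a+b\|=\max\{\|a\|,\|b\|\}=\|a\|$, to the elementary ultrametric identity $\max\{|\alpha+\beta|,|\beta|\}=\max\{|\alpha|,|\beta|\}$, valid for all $\alpha,\beta\in\Q$. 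Applying NO-preservation to $a\nperp b$, $a\nperp(a+b)$ and $b\nperp(a+b)$, and writing $u=Aa$, $v=Ab$, we obtain $u\nperp v$, $u\nperp(u+v)$ and $v\nperp(u+v)$. If $\|u\|>\|v\|$, then choosing $\alpha=1$, $\beta=-1$ in $\|\alpha u+\beta(u+v)\|=\max\{|\alpha|\,\|u\|,|\beta|\,\|u+v\|\}$ yields $\|v\|=\|u\|$, a contradiction; hence $\|u\|\le\|v\|$, and by symmetry $\|v\|\le\|u\|$, so $\|u\|=\|v\|$.

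Now fix an orthonormal basis $\Phi\equiv\{\phi_i\}_i$ of $\Hp$. The $\phi_i$ are norm-orthogonal unit vectors, so the lemma applied to pairs $\phi_i,\phi_j$ produces a common value $\lambda\defi\|A\phi_i\|$ independent of $i$. Since $\phi_i$ is norm-orthogonal to every finite linear combination of the remaining $\phi_j$, NO-preservation makes $A\phi_i$ norm-orthogonal to every finite linear combination of the $A\phi_j$ $(j\neq i)$, and an easy induction on the size of a finite subset of the index set then shows that $\{A\phi_i\}_i$ is a norm-orthogonal set; consequently $\|Ax\|=\lambda\,\|x\|$ already for every $x$ in the (dense) linear span of $\Phi$. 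Assuming $A\neq0$, so $\lambda>0$, I claim this persists for all $x\in\Hp$. By homogeneity (using $\|\Hp\|\setminus\{0\}=|\Qa|$) it suffices to take $\|x\|=1$; write $x=\sum_ix_i\phi_i$ and pick $N\ge1$ with $\|x^{(N)}\|=1$, where $x^{(N)}\defi\sum_{i\le N}x_i\phi_i$ and $r^{(N)}\defi x-x^{(N)}$ (possible since $x_i\to0$). The vectors $x^{(N)}$ and $r^{(N)}$ have disjoint supports, hence are norm-orthogonal, so $Ax^{(N)}\nperp Ar^{(N)}$ and $\|Ax\|=\max\{\|Ax^{(N)}\|,\|Ar^{(N)}\|\}=\max\{\lambda,\|Ar^{(N)}\|\}$. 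But $r^{(N)}$ has vanishing $\phi_1$-component; if $r^{(N)}\neq0$, rescaling it to a unit vector $\tilde r$ (again using $\|\Hp\|\setminus\{0\}=|\Qa|$) gives $\tilde r\nperp\phi_1$ with $\|\tilde r\|=\|\phi_1\|=1$, so the lemma yields $\|A\tilde r\|=\lambda$ and therefore $\|Ar^{(N)}\|=\|r^{(N)}\|\,\lambda\le\lambda$. Hence $\|Ax\|=\lambda=\lambda\|x\|$. It follows that $A$ is bounded with $\|A\|=\lambda$ and that $U\defi\lambda^{-1}A$ is all-over and norm-preserving, i.e.\ an isometry, so $A=\lambda U$ (the case $A=0$ being trivial). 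Finally the equivalence is immediate: an isometry is all-over, has $\|A\|=1$, and is NO-preserving by part (i); conversely an all-over, NO-preserving $A$ with $\|A\|=1$ satisfies $A=\lambda U$ with $\lambda=\|A\|=1$, i.e.\ is an isometry.

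The main obstacle I anticipate is the lemma on equal scaling of equinorm norm-orthogonal vectors --- in particular recognising which auxiliary norm-orthogonality relations ($a\nperp(a+b)$, $b\nperp(a+b)$) survive the failure of parallelogram-type identities when $p=2$, so that the argument is uniform in $p$. Once the lemma is in hand, the only further idea needed is the decomposition of an arbitrary vector into a finitely supported part (on which $A$ already scales correctly) plus a remainder that, after rescaling, is norm-orthogonal to a fixed basis vector; this simultaneously delivers boundedness and the exact scaling constant, so no separate closed-graph argument is required.
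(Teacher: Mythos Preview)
Your argument is correct, modulo one notational slip: the quantity $\lambda=\|A\phi_i\|$ is a positive \emph{real} number, not an element of $\Q$, so ``$U\defi\lambda^{-1}A$'' is not well defined. Since $\lambda\in\|\Hp\|=|\Q|$, you should instead choose some $z\in\Qa$ with $|z|=\lambda$ and set $U\defi z^{-1}A$; then $\|U\psi\|=\|\psi\|$ for all $\psi$ and $A=zU$ is the desired expression as a nonzero scalar multiple of an isometry. (The $A=0$ case is, strictly speaking, a counterexample to the literal statement ``nonzero scalar multiple of an isometry''; the theorem should be read for $A\neq0$, which you already assume.)

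The paper takes a completely different route: it simply observes that $\|\Hp\|=|\Q|$, so the ``ramification index'' of $\Hp$ equals $1$, and then invokes Corollary~1.3 of Shilkret, \emph{Orthogonal transformations in non-Archimedean spaces} (Arch.\ Math.\ 1972), with no further details. Your proof is therefore not a reproduction of the paper's argument but an explicit, self-contained unpacking of the cited result in the $p$-adic Hilbert space setting. The gain is independence from an external reference and transparency about \emph{where} the hypothesis $\|\Hp\|=|\Q|$ is actually used --- namely, in rescaling $r^{(N)}$ to a unit vector so that the equal-scaling lemma applies, and in producing the scalar $z$ above. The paper's approach is of course shorter, but yours shows that nothing deep lies behind the citation: the key lemma (equinorm, norm-orthogonal pairs go to equinorm pairs) plus the finite-support decomposition is all that is needed, and both steps work uniformly in $p$ because they rest on the single ultrametric identity $\max\{|\alpha+\beta|,|\beta|\}=\max\{|\alpha|,|\beta|\}$.
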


\begin{proof}
Since $\|\Hp\|\defi\{\|\phi\|\sep \phi\in\Hp\}=|\Q|$, the `ramification index' of $\Hp$ is equal to $1$,
so that we can apply Corollary $1.3$ in \cite{shilkret1972orthogonal} (actually, the first assertion of the
theorem follows from Corollary~{1.1} \emph{ibidem}, and does not require the mentioned property of $\Hp$).
\end{proof}

We can now introduce the unitary operators in the $p$-adic setting and provide a suitable characterization of
this class of operators.

\begin{definition}\label{def.u4}
A matrix operator in $\Hp$ of the form
\begin{equation}
U=\opPhi(\langle\phi_m,\psi_n\rangle)
\end{equation}
--- where $\Phi\equiv\{\phi_m\menne$, $\Psi\equiv\{\psi_n\nenne$ are orthonormal bases in $\Hp$ --- is called
a \emph{unitary operator}. We will denote the set of all such operators in $\Hp$ by $\mathcal{U}(\Hp)$.
\end{definition}

The set $\mathcal{U}(\Hp)$ is characterized by the following result:

\begin{theorem}\label{th.u3}
Given a linear operator $U$ in $\Hp$, the following facts are equivalent:

\begin{enumerate}[label=\tt{(U\arabic*)}]

\item $U$ \label{cond.th5u1}is a unitary operator --- i.e., $U=\opPhi(\langle\phi_m,\psi_n\rangle)$ --- for some pair
of orthonormal bases $\Phi\equiv\{\phi_m\menne$ and $\Psi\equiv\{\psi_n\nenne$ in $\Hp$;

\item \label{cond.th5u2} $U\in\Bp$ and, for some pair of orthonormal bases $\Phi\equiv\{\phi_m\menne$ and
$\Psi\equiv\{\psi_n\nenne$, $U\phi_k=\psi_k$, $\forall k\in\senne$;

\item \label{cond.th5u3} $U\in\Bpa$, $\|U\|=1$ and $UU^*=\mathrm{Id}=U^*U$;

\item \label{cond.th5u4} $U$ is a surjective IP-preserving, all-over (hence, bounded) operator and $\|U\|=1$;

\item \label{cond.th5u5} $U$ is an IP-preserving top-linear isomorphism and $\|U\|=1=\|U^{-1}\|$;

\item \label{cond.th5u6} $U$ is an automorphism of the $p$-adic Hilbert space $\Hp$, namely, an IP-preserving surjective isometry;

\item \label{cond.th5u7} $U$ is a surjective, IP-preserving, NO-preserving, all-over operator;

\item \label{cond.th5u8} $U$ is bounded and transforms orthonormal bases into orthonormal bases.
\end{enumerate}
\end{theorem}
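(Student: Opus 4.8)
The natural approach is to prove a web of implications connecting the eight conditions, so that they all sit in a single equivalence class. I would organize it around the chain
\[
\text{(U1)}\Longleftrightarrow\text{(U2)}\Longrightarrow\text{(U4)}\Longrightarrow\text{(U3)}\Longrightarrow\text{(U5)}\Longrightarrow\text{(U6)}\Longrightarrow\text{(U2)},
\]
which closes a cycle through (U1)--(U6), and then attach the two remaining conditions by the short loops (U6)$\Rightarrow$(U7)$\Rightarrow$(U4) and (U6)$\Rightarrow$(U8)$\Rightarrow$(U2). Almost every arrow is either a bookkeeping verification or a direct citation of a result already proved in Sections~\ref{sec4} and~\ref{sec5}; the only arrow requiring a genuine idea is (U7)$\Rightarrow$(U4).

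\textbf{The core cycle.} For (U1)$\Leftrightarrow$(U2): if $U=\opPhi(\langle\phi_m,\psi_n\rangle)$, then $|\langle\phi_m,\psi_n\rangle|\le\|\phi_m\|\,\|\psi_n\|=1$ by Cauchy--Schwarz, and $\{\langle\phi_m,\psi_n\rangle\}_m\in c_0$ for each $n$ because $\psi_n=\sum_m\langle\phi_m,\psi_n\rangle\phi_m$ by~\eqref{eq.Hs1}; hence by Theorem~\ref{th.1} $U\in\Bp$, and evaluating on $\phi_k$ gives $U\phi_k=\psi_k$, i.e.\ (U2). Conversely, if (U2) holds then $\langle\phi_m,U\phi_n\rangle=\langle\phi_m,\psi_n\rangle$, so $U=\opPhi(\langle\phi_m,\psi_n\rangle)$ by Theorem~\ref{th.1}. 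For (U2)$\Rightarrow$(U4): writing $x=\sum_k a_k\phi_k$ with $a_k\to0$, continuity of $U$ gives $Ux=\sum_k a_k\psi_k$, whence $\|Ux\|=\max_k|a_k|=\|x\|$ (so $U$ is an isometry and $\|U\|=1$), $\langle Ux,Ux'\rangle=\sum_k\overline{a_k}a'_k=\langle x,x'\rangle$ (IP-preserving), and surjectivity follows since any $\sum_k b_k\psi_k$ is the image of $\sum_k b_k\phi_k$. Then (U4)$\Rightarrow$(U3) is Theorem~\ref{th.u1}: a surjective, IP-preserving, all-over operator is an adjointable top-linear isomorphism with $U^*=U^{-1}$, so $UU^*=\id=U^*U$ and, with $\|U\|=1$, (U3) holds. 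Next (U3)$\Rightarrow$(U5): $U^{-1}=U^*$ is bounded with $\|U^*\|=\|U\|=1$ by~\eqref{eq.26}, and $\langle U\phi,U\psi\rangle=\langle U^*U\phi,\psi\rangle=\langle\phi,\psi\rangle$ (Proposition~\ref{proadj}), so $U$ is an IP-preserving top-linear isomorphism with $\|U\|=1=\|U^{-1}\|$. Then (U5)$\Rightarrow$(U6) is immediate from Lemma~\ref{lem.u1}: norm one for $U$ and its bounded inverse forces $U$ to be a surjective isometry, which together with IP-preservation is precisely an automorphism in the sense of Definition~\ref{ishil}. Finally (U6)$\Rightarrow$(U2): fix any orthonormal basis $\Phi$; then $\psi_k\defi U\phi_k$ satisfies $\langle\psi_j,\psi_k\rangle=\delta_{jk}$, each $\|\psi_k\|=1$, the set is norm-orthogonal because $U$ is an isometry, and it spans $\Hp$ because $U$ is surjective and continuous, so $\Psi=\{\psi_k\}$ is an orthonormal basis and $U\phi_k=\psi_k$, i.e.\ (U2).

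\textbf{Attaching (U7) and (U8).} For (U6)$\Rightarrow$(U7): an automorphism is a surjective, IP-preserving, all-over isometry, and isometry plus the ultrametric identity $\|\alpha\phi+\beta\psi\|=\|U(\alpha\phi+\beta\psi)\|$ shows it preserves norm-orthogonality, giving (U7). For (U7)$\Rightarrow$(U4): by Theorem~\ref{th.u2} an all-over NO-preserving operator is a nonzero scalar multiple $U=\lambda V$ of an isometry $V$; since $U$ is surjective, so is $V$, hence $U$ is a top-linear isomorphism; IP-preservation of $U$ and of $U^{-1}$ (Lemma~\ref{lem.u2}) together with Proposition~\ref{prop.u1} give $\|U\|\ge1$ and $\|U^{-1}\|\ge1$, i.e.\ $|\lambda|\ge1$ and $|\lambda|^{-1}\ge1$, so $\|U\|=|\lambda|=1$ and (U4) holds. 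For (U6)$\Rightarrow$(U8): the computation in the (U6)$\Rightarrow$(U2) step applies verbatim to \emph{every} orthonormal basis, so $U$ maps orthonormal bases to orthonormal bases, and $U$ is bounded. Finally (U8)$\Rightarrow$(U2) is trivial: apply $U$ to one fixed orthonormal basis $\Phi$ and set $\Psi=U\Phi$.

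\textbf{Main obstacle.} The only step that is not pure bookkeeping is (U7)$\Rightarrow$(U4): one must extract the normalization $\|U\|=1$ from the mere information that $U$ is a scalar multiple of an isometry, and this is exactly where IP-preservation (applied to both $U$ and $U^{-1}$) is indispensable via the two inequalities of Proposition~\ref{prop.u1}. A secondary point of care is the well-definedness claim hidden in (U1)$\Rightarrow$(U2): one must check that the matrix $(\langle\phi_m,\psi_n\rangle)$ satisfies the boundedness conditions of Theorem~\ref{th.1}, which is where Cauchy--Schwarz and the expansion~\eqref{eq.Hs1} enter.
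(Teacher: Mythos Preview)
Your proof is correct and uses essentially the same toolkit as the paper (Lemma~\ref{lem.u1}, Lemma~\ref{lem.u2}, Proposition~\ref{prop.u1}, Theorem~\ref{th.u1}, Theorem~\ref{th.u2}), with only a minor reordering of the implication chain: the paper runs \ref{cond.th5u2}$\Rightarrow$\ref{cond.th5u3}$\Rightarrow$\ref{cond.th5u4} (establishing adjointability first via the matrix criterion and then computing $U^*\psi_k=\phi_k$), whereas you go \ref{cond.th5u2}$\Rightarrow$\ref{cond.th5u4}$\Rightarrow$\ref{cond.th5u3} (showing isometry, IP-preservation and surjectivity directly from $U\phi_k=\psi_k$, and then invoking Theorem~\ref{th.u1}); similarly the paper closes the \ref{cond.th5u7}-loop via \ref{cond.th5u7}$\Rightarrow$\ref{cond.th5u6} using Theorem~\ref{th.u1} to get $\|U\|=\|U^{-1}\|$, while you reach \ref{cond.th5u4} via the two inequalities of Proposition~\ref{prop.u1}. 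These are cosmetic differences; the substance is the same.
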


\begin{proof}
We will first show that~\ref{cond.th5u1} $\iff$~\ref{cond.th5u2} $\implies$~\ref{cond.th5u3}.

Note that
\begin{equation}\label{eq.u18}
\mbox{$|\langle\phi_m,\psi_n\rangle|\leq \|\phi_m\|\;\|\psi_n\|=1$, $\forall m,n\in\senne$, and,
for $\enne=\infty$, $\displaystyle\lim_m\,\langle\phi_m,\psi_n\rangle=0$, $\forall n\in\nat$.}
\end{equation}
Hence, by Theorem~\ref{th.1}, we have that $\opPhi(\langle\phi_m,\psi_n\rangle)\in\Bp$ and, moreover,
\begin{equation}\label{eq.u19}
\opPhi(\langle\phi_m,\psi_n\rangle)\phi_k=\sum_m\langle\phi_m,\psi_k\rangle\phi_m=\psi_k,\quad\forall k\in\senne.
\end{equation}
Therefore, \ref{cond.th5u1} $\implies$~\ref{cond.th5u2}.

Conversely, if $U$ satisfies~\ref{cond.th5u2}, then
\begin{equation}\label{eq.u20}
U=\opPhi(\langle\phi_m, U\phi_n\rangle)=\opPhi(\langle\phi_m,\psi_n\rangle),
\end{equation}
where in the first equality, we have used the expression of a bounded matrix operator (w.r.t. any orthonormal basis).
Thus,~\ref{cond.th5u1} holds true.

Now, given a unitary operator $U=\opPhi(\langle\phi_m,\psi_n\rangle)\in\mathcal{U}(\Hp)\subset\Bp$, since, for $\enne=\infty$,
$\lim_n\langle\phi_m,\psi_n\rangle=0$, for all $m\in\nat$, then, by the last assertion of Theorem~\ref{th.2},
we conclude that $U$ is adjointable; i.e., $\mathcal{U}(\Hp)\subset\Bpa$ as well. Moreover, as previously shown,
the unitary operator $U=\opPhi(\langle\phi_m,\psi_n\rangle)$ is completely determined by condition~\ref{cond.th5u2};
therefore:
\begin{equation}\label{eq.u21}
\langle\phi_m, U^*\psi_k\rangle=\langle U\phi_m,\psi_k\rangle=\langle\psi_m,\psi_k\rangle=\delta_{mk}.
\end{equation}
Thus, $U^*\psi_k=\phi_k$, $\forall k\in\senne$, and hence --- noting that: $A\in\Bp$, $A\phi_m=\phi_m$,
$\forall m\in\senne$ (where $\{\phi_m\menne$ is any orthonormal basis) $\implies$ $A=\mathrm{Id}$ --- we have:
\begin{equation}\label{eq.u22}
\mbox{$U^*U=\mathrm{Id}=UU^*$; i.e., $U^*=U^{-1}$}.
\end{equation}
Also note that $\sup_{m}\langle\phi_m,\psi_n\rangle=\|\psi_n\|=1$, $\forall n\in\senne$; hence:
\begin{equation}\label{eq.u23}
1=\sup_{m,n}\,\langle\phi_m,\psi_n\rangle=\|U\|=\|U^*\|=\|U^{-1}\|.
\end{equation}
Thus, if $U\in\mathcal{U}(\Hp)$, then $U$ satisfies condition~\ref{cond.th5u3}.

Next, it is clear that~\ref{cond.th5u3} $\implies$~\ref{cond.th5u4}, because, if the conditions in~\ref{cond.th5u3}
are satisfied, then $U$ is a surjective (adjointable) bounded operator and
\begin{equation}
\langle U\chi, U\eta\rangle=\langle U^*U\chi, \eta\rangle=\langle\chi, \eta\rangle,\qquad\forall\chi, \eta\in\Hp;
\end{equation}
i.e., $U$ is IP-preserving. Moreover, by Theorem~\ref{th.u1},~\ref{cond.th5u4} $\implies$~\ref{cond.th5u5}.
Also, if $U$ satisfies~\ref{cond.th5u5}, then, by Lemma~\ref{lem.u1}, $U$ is a surjective isometry (and IP-preserving);
i.e., \ref{cond.th5u5} $\implies$~\ref{cond.th5u6}.

Let us now prove that~\ref{cond.th5u6} $\iff$~\ref{cond.th5u7}. In fact, by the second assertion
of Theorem~\ref{th.u2}, if $U$ is an isometry, then it is a NO-preserving (all-over) operator;
hence: \ref{cond.th5u6} $\implies$~\ref{cond.th5u7}. Conversely, if $U$ is a NO-preserving, all-over operator,
then (again by the second assertion of Theorem~\ref{th.u2}) $U$ is a non-zero scalar multiple of an isometry:
$U=zJ$, with $z\in\Q\setminus\{0\}$. Now, if, moreover, $U$ is IP-preserving and surjective, then by Theorem~\ref{th.u1},
$\|U\|=\|U^{-1}\|$. Thus, $J$ is a surjective isometry and
\begin{equation}
|z|=\|zJ\|=\|U\|=\|U^{-1}\|=\|z^{-1}J^{-1}\|=|z|^{-1}\;\;\implies\;\; |z|=1.
\end{equation}
Therefore, $U=zJ$ is an IP-preserving, surjective isometry; i.e.,~\ref{cond.th5u7} $\implies$~\ref{cond.th5u6}, as well.

At this point, let us observe that~\ref{cond.th5u6} implies~\ref{cond.th5u8}. Indeed, if $U$ is an IP-preserving, surjective isometry,
then, given any orthonormal basis $\{\phi_m\menne$ in $\Hp$, and, putting $\psi_n=U\phi_n$, $\forall n\in\senne$, we obtain another
orthonormal basis $\{\psi_n\nenne$, because
\begin{equation}\label{eq.u25}
\langle\psi_j,\psi_k\rangle=\langle U\phi_j, U\phi_k\rangle=\langle\phi_j,\phi_k\rangle=\delta_{jk};
\end{equation}
in addition, for every set $\{z_n\nenne\subset\Q$ --- converging to $0$, if $\enne=\infty$ ---
\begin{equation}\label{eq.u26}
\big\|{\textstyle\sum_n} z_n\psi_n\big\|=\big\|U^{-1}{\textstyle\sum_n} z_n\psi_n\big\|=\big\|{\textstyle\sum_n} z_n\phi_n\big\|=\max_n|z_n|,
\end{equation}
where we have used the fact that $U^{-1}$ is an isometry, and, for every $\chi\in\Hp$,
\begin{equation}\label{eq.u27}
\chi=U(U^{-1}\chi)=\sum_n\langle\phi_n, U^{-1}\chi\rangle\, U\phi_n=\sum_n
\langle U\phi_n, UU^{-1}\chi\rangle\, U\phi_n=\sum_n\langle\psi_n,\chi\rangle\psi_n.
\end{equation}
Thus, by~\eqref{eq.u26} and~\eqref{eq.u27}, $\{\psi_n\nenne$ is a normal basis, and specifically, by~\eqref{eq.u25}, it is
orthonormal.

Finally, it is obvious that~\ref{cond.th5u8} $\implies$~\ref{cond.th5u2}, and this observation completes the proof,
since overall we have shown that: \ref{cond.th5u1} $\iff$~\ref{cond.th5u2} $\implies$~\ref{cond.th5u3} $\implies$~\ref{cond.th5u4}
$\implies$~\ref{cond.th5u5} $\implies$~\ref{cond.th5u6} $\implies$~\ref{cond.th5u8} $\implies$~\ref{cond.th5u2}, and,
moreover, \ref{cond.th5u6} $\iff$~\ref{cond.th5u7}.
\end{proof}

\begin{remark}\label{rem.4u}
One can easily check that
\begin{equation}\label{eq.u28}
U\defi\opPhi(\langle\phi_m,\psi_n\rangle)=\sum_k|\psi_k\rangle\langle\phi_k|=\opPsi(\langle\phi_m,\psi_n\rangle),
\end{equation}
--- where, if $\enne=\infty$, the series converges w.r.t.\ the strong operator topology --- and
\begin{equation}\label{eq.u29}
U^*=\opPhi(\langle\psi_m,\phi_n\rangle)=\sum_k|\phi_k\rangle\langle\psi_k|
=\opPsi(\langle\psi_m,\phi_n\rangle)=U^{-1}\in\mathcal{U}(\Hp).
\end{equation}
\end{remark}

\begin{remark}\label{rem.u5}
By the characterization~\ref{cond.th5u6} of $\mathcal{U}(\Hp)$,  it is clear that $\mathcal{U}(\Hp)$ is, in a natural way,
a group. In fact, the product (composition) of two unitary operators is unitary and $\mathrm{Id}\in\mathcal{U}(\Hp)$.
Moreover, by~\eqref{eq.u29} --- or, say, by~\ref{cond.th5u6} and Lemma~\ref{lem.u2} (if $U$ is an IP-preserving surjective isometry,
then $U^{-1}$ shares the same property) --- if $U\in\mathcal{U}(\Hp)$, then $U^{-1}=U^*\in\mathcal{U}(\Hp)$ too. Let us also observe
that the \emph{unitary group} of the $p$-adic Hilbert space $\Hp$ is the intersection of two other remarkable groups, i.e.,
\begin{equation}\label{eq.u30}
\mathcal{U}(\Hp)=\ipp(\Hp)\cap\nop(\Hp),
\end{equation}
where:
\begin{itemize}

\item $\ipp(\Hp)\subset\Bpa$ is the group of all surjective, IP-preserving, all-over operators (note that, by Theorem~\ref{th.u1}
and Lemma~\ref{lem.u2}, if $A\in\ipp(\Hp)$, then $A^{-1}=A^*\in\ipp(\Hp)$ too);

\item $\nop(\Hp)$ is the group of all surjective, NO-preserving, all-over operators --- equivalently, the group of all non-zero
scalar multiples of surjective isometries (Theorem~\ref{th.u2}).

\end{itemize}
\end{remark}

\begin{remark}\label{rem.u6}
In the case where $\Hp$ is finite-dimensional --- $\dim(\Hp)=\enne\in\nat$; hence, $\Bp=\Bpa$ is just the set $\linop$ of
all linear operators in $\Hp$, and $UU^*=\id$ iff $U^*U=\id$ --- the characterization~\ref{cond.th5u3} of $\mathcal{U}(\Hp)$
provides a simple description of the unitary group of $\Hp$ as a matrix group, i.e.,
\begin{equation} \label{carunops}
\textstyle
\mathcal{U}(\Hp)=\big\{\opPhi(U_{mn})\sep \sum_{n=1}^\enne U_{ln}\overline{U_{mn}}=\delta_{lm},\;
\max_{m,n}|U_{mn}|=1\big\},
\end{equation}
where $\Phi=\{\phi_m\menne$ is any orthonormal basis in $\Hp$. It is worth observing that here the condition
\begin{equation}\label{eq.u31}
\|\opPhi(U_{mn})\|=\max_{m,n}|U_{mn}|=1
\end{equation}
\emph{cannot be dispensed with} (unlike the complex case). We illustrate this point by means of an explicit example.

Assume that $\Hp$ is a $p$-adic Hilbert space, with $p\neq 2$ and $\dim(\Hp)=4$. As shown in the proof of Proposition~{5.3}
in~\cite{albeverio1999non}, there exists a solution $x_1,\dots,x_4$ of the equation
\begin{equation}\label{eq.u32}
x_1^2+x_2^2+x_3^2+x_4^2=p^{2\ka},\qquad x_1,\dots,x_4\in\mathbb{Z},
\end{equation}
--- for any $\ka\in\mathbb{N}$ --- satisfying the condition that
\begin{equation}\label{eq.u33}
\max_i|x_i|=1.
\end{equation}
Consider, then, the matrix (with rational coefficients)
\begin{equation}
(A_{mn})=\frac{1}{p^\ka}
\begin{pmatrix}
x_1 & x_2 & x_3 & x_4\\
-x_2 & x_1 & -x_4 & x_3\\
-x_4 & -x_3 & x_2 & x_1\\
-x_3 & x_4 & x_1 & -x_2
\end{pmatrix},
\end{equation}
where $x_1,\dots,x_4$ is the aforementioned  solution of~\eqref{eq.u32}--\eqref{eq.u33}. Clearly, we have:
\begin{equation}
\mbox{$\sum_{n=1}^4 A_{ln}\overline{A_{mn}}=\sum_{n=1}^4 A_{ln}A_{mn}=\delta_{lm}$; but $\max_{m,n}|A_{mn}|=p^\ka\neq 1$.}
\end{equation}
Thus, $A=\opPhi(A_{mn})\in\ipp(\Hp)$ (because $A^*=A^{-1}$, hence, $A$ is IP-preserving), \emph{but} $A\notin\nop(\Hp)$,
because
\begin{equation}
\|A\|=\|A^*\|=\|A^{-1}\|=p^\ka >1,
\end{equation}
so that $A$ cannot be a non-zero scalar multiple of an isometry (in such a case, we should have that $\|A^{-1}\|=\|A\|^{-1}$).
Therefore, $A\notin\mathcal{U}(\Hp)$. Otherwise stated, $A$ cannot be unitary, since it preserves the inner product,
but not the norm-orthogonality.
\end{remark}

\begin{remark}\label{rem.u7}
Let us observe explicitly that the group $\ipp(\Hp)$ admits a further characterization; namely,
\begin{equation}\label{eq.u36}
\ipp(\Hp)=\big\{A\in\Bpa\sep \mbox{$A$ bijective and $A^*=A^{-1}$}\big\}.
\end{equation}
In fact, by Theorem~\ref{th.u1}, $\ipp(\Hp)$ is contained in the set defined on the right hand side of~\eqref{eq.u36}.
Conversely, it is clear that every bijective operator $A\in\Bpa$, such that $A^*=A^{-1}$, is IP-preserving:
\begin{equation}\label{eq.u37}
\langle A\eta, A\chi\rangle=\langle A^*A\eta, \chi\rangle=\langle\eta,\chi\rangle,\qquad\forall\eta,\chi\in\Hp.
\end{equation}
Therefore, relation~\eqref{eq.u36} holds true. As a consequence, we obtain a simple description of the
unitary group $\mathcal{U}(\Hp)$. Indeed, note that, by~\eqref{eq.u36} and by the characterization~\ref{cond.th5u3}
of a unitary operator, we have:
\begin{equation}\label{eq.u38}
\mathcal{U}(\Hp)=\ipp(\Hp)\cap\usp=\ipp(\Hp)\cap\uba,
\end{equation}
where $\usp$ and $\uba$ are, respectively, the \emph{unit sphere} and that \emph{unit ball} in $\Bp$; i.e.,
\begin{equation}\label{eq.u39}
\usp\defi\{A\in\Bp\sep \|A\|=1\},\quad \uba\defi\{A\in\Bp\sep \|A\|\le1\}.
\end{equation}
The first equality in~\eqref{eq.u38} corresponds to the characterization~\ref{cond.th5u4} of a $p$-adic unitary operator,
and the second equality follows from~\eqref{eq.u11} in Theorem~\ref{th.u1}, according to which $\|\ipp(\Hp)\|\subset[1,\infty)$.
\end{remark}

\begin{remark}\label{rem.u8}
Considering again the case where $\Hp$ is finite-dimensional --- $\dim(\Hp)=\enne\in\nat$ and $\Bp=\linop=\Bpa$ ---
using elementary methods of matrix analysis one can prove that, in this case, the group $\nop(\Hp)$ admits
the following further characterization:
\begin{equation}
\nop(\Hp)=\big\{A\in\linop\sep \|A\|=1=|\det(A)|\big\},
\end{equation}
where $\det(A)$ is the determinant of the representative matrix of $A$ w.r.t.\ any basis in the finite-dimensional
vector space $\Hp$. Clearly, if $A\in\mathcal{U}(\Hp)$, then $|\det(A)|=1$ automatically, because
$|\det(AA^\ast)|=\big|\det(A)\,\overline{\det(A)}\,\big|=|\det(A)|^2$.
\end{remark}

\begin{example}
Let us consider the case where $p=2$ and $\mu=14$; i.e., $\Hp$ is a $p$-adic Hilbert space over $\mathbb{Q}_2(\sqrt{14})$.
Let us assume that $\dim(\Hp)=2$, and, given an orthonormal basis $\Phi=\{\phi_1,\phi_2\}$ in $\Hp$, let us consider
a linear operator $U=\opPhi(U_{mn})$. By~\eqref{carunops}, $U$ is unitary iff
\end{example}
\begin{equation}
(U_{mn})=
\begin{pmatrix}
a & b \\
c & d
\end{pmatrix}, \
\mbox{where: $a\hspace{0.3mm}\overline{a}+b\hspace{0.3mm}\overline{b}=1=c\hspace{0.3mm}\overline{c}+d\hspace{0.3mm}\overline{d}$,
$a\hspace{0.3mm}\overline{c}+b\hspace{0.3mm}\overline{d}=0$, $\max\{|a|,|b|,|c|,|d|\}=1$.}
\end{equation}
To satisfy this condition, we can put, e.g., $a=\sqrt{-7}^2$, $b=\frac{2}{a}\sqrt{14}$, $c=b$ and $d=a$, where $\sqrt{-7}^2$
is any of the two $2$-adic square roots of $-7=1+0\cdot 2+0\cdot 2^2+1\cdot 2^3+1\cdot 2^4+\cdots\in(\mathbb{Q}_2^\ast)^2$.
Therefore, $\Psi\equiv\{\psi_1=a\phi_1+b\phi_2,\psi_2=b\phi_1+a\phi_2\}$ is another orthonormal basis in $\Hp$.


\section{The trace class of a $p$-adic Hilbert space}
\label{sec6}

In this section, we will introduce a suitable notion of trace class operator in a $p$-adic Hilbert space $\Hp$.
As in Section~\ref{sec4}, we will assume that $\dim(\Hp)=\infty$ (and we will use the notations adopted
therein), because in the finite-dimensional case the notion of trace introduced here becomes completely
analogous to the notion of trace of a linear operator in a finite-dimensional complex Hilbert space
and the results of this section hold true with obvious modifications.


\subsection{Traceable operators}
\label{preliminaries}

We start with the following:

\begin{definition}\label{def.3}
Let $\Phi=\{\phi_m\mnat$ be an orthonormal basis in $\Hp$, and let $T$ be a (densely defined) linear operator in $\Hp$
such that $\Phi\subset\dom(T)$. We say that the operator $T$ is \emph{traceable} w.r.t.\ $\Phi$ if the series
\begin{equation}\label{eq.43}
\sum_m\langle\phi_m, T\phi_m\rangle
\end{equation}
is convergent. Namely, if $\lim_m\langle\phi_m, T\phi_m\rangle=0$ (see Proposition~\ref{sumlemma}).
\end{definition}

\begin{proposition}\label{prop.1}
A matrix operator $T=\opPhi(T_{mn})$ is (such that $\Phi\subset\dom(T)$ and) traceable w.r.t.\ $\Phi$ iff
\begin{equation}\label{eq.44}
\lim_m T_{mn}=0,\;\;\forall n\in\mathbb{N},\;\;\text{and}\;\;\lim_m T_{mm}=0.
\end{equation}
\end{proposition}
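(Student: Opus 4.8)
The plan is to unwind the definition of ``traceable w.r.t.\ $\Phi$'' into a statement purely about the matrix entries $T_{mn} = \langle \phi_m, T\phi_n\rangle$, and then observe that the two stated conditions are exactly what is needed. First I would note that, since $T=\opPhi(T_{mn})$ is assumed to be a matrix operator, by definition $\Phi\subset\dom(T)$ is automatic precisely when the series $\sum_m T_{mn}\phi_m$ converges in $\Hp$ for each fixed $n$, i.e.\ (by Proposition~\ref{sumlemma} and the non-Archimedean Parseval identity) precisely when $\lim_m T_{mn}=0$ for every $n\in\nat$; this is the first half of~\eqref{eq.44}. So half of the condition is simply the requirement that the diagonal-action make sense.

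Next, assuming $\Phi\subset\dom(T)$, I would compute $\langle \phi_m, T\phi_m\rangle$. Using the expansion $T\phi_m = \sum_k \langle\phi_k, T\phi_m\rangle\phi_k = \sum_k T_{km}\phi_k$ (valid because $\Phi\subset\dom(T)$), together with the continuity of the inner product in its second argument (Remark~\ref{rem.cont}) and orthonormality of $\Phi$, one gets
\begin{equation}
\langle\phi_m, T\phi_m\rangle = \Big\langle \phi_m, \sum_k T_{km}\phi_k\Big\rangle = \sum_k T_{km}\langle\phi_m,\phi_k\rangle = T_{mm}.
\end{equation}
Hence the series~\eqref{eq.43} is literally $\sum_m T_{mm}$, and by Proposition~\ref{sumlemma} it converges if and only if $\lim_m T_{mm}=0$, which is the second half of~\eqref{eq.44}. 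Combining the two halves gives the claimed equivalence.

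The argument is essentially a bookkeeping exercise, so there is no serious obstacle; the only point that needs a little care is making sure the phrase ``$T=\opPhi(T_{mn})$ is such that $\Phi\subset\dom(T)$'' is correctly translated. I would be explicit that, for a matrix operator, $\phi_n\in\dom(\opPhi(T_{mn}))$ means both that $\sum_m T_{mn}(\text{scalar})$-type partial sums behave well \emph{and} that $\{\sum_m T_{mn}\delta_{\cdot\, n}\}$ lies in $c_0$, which here reduces cleanly to $\lim_m T_{mn}=0$ because the input vector $\phi_n$ has a single nonzero coordinate. I would also remark that the convergence of~\eqref{eq.43} is unconditional (Proposition~\ref{sumlemma}), so the notion of traceability does not depend on the ordering of $\Phi$, only on $\Phi$ as a set; this is worth stating but requires no extra work.
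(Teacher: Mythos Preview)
Your proof is correct and follows essentially the same approach as the paper: both arguments reduce the domain condition $\Phi\subset\dom(T)$ to $\lim_m T_{mn}=0$ for each $n$, compute $\langle\phi_m,T\phi_m\rangle=T_{mm}$ via orthonormality, and then invoke the ultrametric convergence criterion (Proposition~\ref{sumlemma}) for the diagonal series. Your added remark on unconditional convergence is a harmless bonus not present in the paper.
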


\begin{proof}
It is easy to see that $\Phi\subset\dom(T)$, with $T=\opPhi(T_{mn})$, iff $\lim_m T_{mn}=0$, $\forall n\in\nat$.
Indeed, recalling~\eqref{eq.7} and~\eqref{eq.8}, if $\phi_n\in\dom(T)$, then $T\phi_n=\sum_m T_{mn}\phi_m$
(hence, $\{T_{mn}\mnat\in c_0$) and, conversely, if $\lim_{m}T_{mn}=0$, then $\phi_n\in\dom(T)$.

Moreover, if $\phi_n\in\dom(T)$, then ($T\phi_n=\sum_m T_{mn}\phi_m$ and)
\begin{equation}\label{eq.45}
\langle \phi_n, T\phi_n\rangle=\sum_m T_{mn}\langle \phi_n,\phi_m\rangle=T_{nn}.
\end{equation}

Therefore, if $\Phi\subset\dom(T)$, then $\lim_m T_{mn}=0$, $\forall n$, and, if, moreover, the series~\eqref{eq.43}
is convergent, then
\begin{equation}\label{eq.46}
\lim_m T_{mm}=\lim_m\,\langle\phi_m, T\phi_m\rangle=0.
\end{equation}
Conversely, if both conditions in~\eqref{eq.44} hold true, then $\Phi\subset\dom(T)$, and
$\lim_m\langle\phi_m, T\phi_m\rangle=\lim_m T_{mm}=0$, so that the series~\eqref{eq.43}
converges; i.e., $T=\opPhi(T_{mn})$ is traceable w.r.t.\ $\Phi$.
\end{proof}

\begin{remark}\label{rem.5}
By Proposition~\ref{prop.1} and by the characterization of matrix elements of a bounded operator
(see~\eqref{eq.13} in Theorem~\ref{th.1}), it is clear that one can construct matrix operators in
$\Hp$ ($\mathrm{dim}(\Hp)=\infty$) that are traceable w.r.t.\ a given orthonormal basis in $\Hp$,
but \emph{not} bounded (and, thus, not all-over). Precisely, one has to take any matrix operator
$T=\opPhi(T_{mn})$ satisfying both conditions in~\eqref{eq.44} and such that $\sup_{m,n}|T_{mn}|=\infty$.
\end{remark}

The previous remark motivates us to consider a smaller class of matrix operators for the definition
of the trace class of $\Hp$.

\begin{definition}\label{def.4}
Let $\Phi\equiv\{\phi_m\mnat$ be an orthonormal basis in $\Hp$. We introduce the following set of matrix operators:
\begin{equation}
\mathcal{T}_\Phi(\Hp)\defi\big\{\opPhi(T_{mn})\sep \mbox{$T_{mn}\in\Minf$ s.t.\
$\lim_{m+n}T_{mn}=0$}\big\}.
\end{equation}
\end{definition}

\begin{remark}\label{rem.6}
Recalling Notation~\ref{notalim}, the limit $\lim_{m+n}T_{mn}=0$ means that
\begin{equation}\label{eq.47}
\mbox{$\forall\epsilon>0$, $\card(\{(m,n)\in\mathbb{N}\times\mathbb{N}\sep |T_{mn}|\geq\epsilon\})<\infty$}.
\end{equation}
Equivalently, $\lim_{m+n}T_{mn}=0$ means that
\begin{equation}\label{eq.47b}
\mbox{$\forall\epsilon>0$, $\exists\ttN\in\mathbb{N}$, such that, if $\max\{m,n\}>\ttN$, then $|T_{mn}|<\epsilon$,}
\end{equation}
or, also, that
\begin{equation}\label{eq.48}
\mbox{$\forall\epsilon>0$, $\exists\ttN\in\mathbb{N}$, such that, if $m+n>\ttN$, then $|T_{mn}|<\epsilon$.}
\end{equation}
Moreover, conditions~\eqref{eq.47}--\eqref{eq.48} are equivalent to assuming that the double series
$\sum_{m,n}T_{mn}$ is convergent, where
\begin{equation}\label{eq.49}
\sum_{m,n}T_{mn}=\lim_{\ttN\rightarrow\infty}\bigg(\sum_{m=1}^{\ttN}\sum_{n=1}^{\ttN}T_{mn}\bigg).
\end{equation}
It is a remarkable fact that, given  a double sequence $\{x_{mn}\}_{m,n\in\nat}$ in $\Q$,
if $\lim_{m+n}x_{mn}=0$, then both the iterated series
\begin{equation}
\sum_m\sum_n x_{mn}\;\;\;\;\text{and}\;\;\;\;\sum_n\sum_m x_{mn}
\end{equation}
converge and
\begin{equation}\label{eq.50}
\sum_m\sum_n x_{mn}=\sum_n\sum_m x_{mn}=\sum_{m,n}x_{mn}.
\end{equation}
Therefore, if $\lim_{m+n}T_{mn}=0$, then the convergent double series~\eqref{eq.49} can be expressed as
an iterated series.

Another useful fact is that, given a double sequence $\{x_{mn}\}_{m,n\in\nat}\subset\Q$,
\begin{align} \label{limiff}
\lim_{m+n}x_{mn}=0\quad & \iff \quad
\begin{cases}
\lim_m x_{mn}=0,\ \forall n\in\nat,\quad\lim_n x_{mn}=0,\ \forall m\in\nat,\\
\text{and}\ \lim_{m,n}x_{mn}=0\quad \mbox{(Pringsheim limit)}\footnotemark
\end{cases}
\\ \label{limiffbis}
& \iff \quad \mbox{$\lim_m x_{mn}=0$, $\forall n\in\nat$, and $\lim_n x_{mn}=0$, \emph{uniformly} in $m\in\nat$.}
\end{align}
In relation~\eqref{limiffbis}, the expression ``$\lim_n x_{mn}=0$, \emph{uniformly} in $m\in\nat$'' means:
for every $\epsilon>0$, $\exists\ttN\in\mathbb{N}$ such that, for $n>\ttN$ and all $m\in\nat$, $|x_{mn}|<\epsilon$.
\footnotetext{Namely, $\forall\epsilon>0$, $\exists\ttN\in\mathbb{N}$ such that, if $m,n>\ttN$,
then $|x_{mn}|<\epsilon$.}

For the previous claims, see p.~{62} of~\cite{schikhof2007ultrametric}, Exercise~{23.B}, and Chapt.~{8} of~\cite{natarajan2014introduction}.
\end{remark}

If a linear operator $T$ in $\Hp$ is traceable w.r.t.\ an orthonormal basis $\Phi=\{\phi_m\mnat$,
we denote the sum of the series~\eqref{eq.43} by the symbol $\mathrm{tr}_{\Phi}(T)$.

\begin{proposition}\label{prop.t2}
If $T=\opPhi(T_{mn})\in\mathcal{T}_\Phi(\Hp)$, then it is traceable w.r.t.\ $\Phi$ --- in particular,
$\Phi\subset\dom(T)$ --- and
\begin{equation}\label{eq.51}
\tr_\Phi(T)\defi\sum_m\langle\phi_m, T\phi_m\rangle=\sum_m T_{mm}.
\end{equation}
\end{proposition}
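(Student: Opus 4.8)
The plan is to reduce this statement to Proposition~\ref{prop.1} together with the elementary implications recorded in Remark~\ref{rem.6}. The hypothesis $T=\opPhi(T_{mn})\in\mathcal{T}_\Phi(\Hp)$ means, by Definition~\ref{def.4}, precisely that $\lim_{m+n}T_{mn}=0$. So the first step is to extract from this the two conditions appearing in~\eqref{eq.44}. Unwinding Notation~\ref{notalim} (or invoking~\eqref{limiff}), the condition $\lim_{m+n}T_{mn}=0$ gives at once that $\lim_m T_{mn}=0$ for every fixed $n\in\nat$; moreover, restricting attention to the diagonal indices $(m,m)$ — for which $\max\{m,n\}=m$ — one gets $\lim_m T_{mm}=0$ as well. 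Thus both conditions in~\eqref{eq.44} are satisfied.

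With these two conditions in hand, the second step is a direct appeal to Proposition~\ref{prop.1}: it yields that $\Phi\subset\dom(T)$ and that $T$ is traceable w.r.t.\ $\Phi$, i.e.\ the series $\sum_m\langle\phi_m,T\phi_m\rangle$ converges. Consequently $\tr_\Phi(T)$ is well defined.

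For the identification of the sum I would reuse the computation~\eqref{eq.45} from the proof of Proposition~\ref{prop.1}: since $\phi_m\in\dom(T)$ one has $T\phi_m=\sum_k T_{km}\phi_k$ (a bona fide expansion in $\Hp$), whence $\langle\phi_m,T\phi_m\rangle=\sum_k T_{km}\langle\phi_m,\phi_k\rangle=T_{mm}$ by orthonormality of $\Phi$. Summing over $m$ gives $\tr_\Phi(T)=\sum_m\langle\phi_m,T\phi_m\rangle=\sum_m T_{mm}$, which is the claimed formula.

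I do not expect any genuine obstacle here: the proposition is essentially a corollary of Proposition~\ref{prop.1}, and the only point deserving an explicit word is the (trivial) passage from $\lim_{m+n}T_{mn}=0$ to the pair of conditions $\lim_m T_{mn}=0$ ($\forall n$) and $\lim_m T_{mm}=0$, which is immediate from the meaning of the limit $\lim_{m+n}$ as spelled out in Notation~\ref{notalim} and Remark~\ref{rem.6}.
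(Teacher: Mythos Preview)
Your proof is correct and follows essentially the same approach as the paper's own proof: both extract from $\lim_{m+n}T_{mn}=0$ the two conditions~\eqref{eq.44} and then invoke Proposition~\ref{prop.1} together with the computation~\eqref{eq.45}. The only cosmetic difference is that the paper justifies $\lim_m T_{mm}=0$ via the cardinality inequality $\card(\{m\sep |T_{mm}|\ge\epsilon\})\le\card(\{(m,n)\sep |T_{mn}|\ge\epsilon\})$, whereas you use the equivalent $\max\{m,n\}$ formulation of Notation~\ref{notalim}; these are just two phrasings of the same trivial implication.
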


\begin{proof}
Observe that
\begin{equation}\label{eq.52}
\card(\{m\in\mathbb{N}\sep |T_{mm}|\ge\epsilon\})\leq
\card(\{(m,n)\in\mathbb{N}\times\mathbb{N}\sep |T_{mn}|\ge\epsilon\}).
\end{equation}
Thus, recalling Remark~\ref{rem.6}, we argue that
\begin{equation}
T=\opPhi(T_{mn})\in\mathcal{T}_\Phi(\Hp)\;\;\iffdef\;\;\lim_{m+n}T_{mn}=0\;\;\implies\lim_m T_{mm}=0,
\end{equation}
and
\begin{equation}
\lim_{m+n}T_{mn}=0\;\;\implies\;\;\lim_m T_{mn}=0,\;\;\forall n\in\mathbb{N}.
\end{equation}
Thus, if $T=\opPhi(T_{mn})$ belongs to $\mathcal{T}_\Phi(\Hp)$, then $\Phi\subset\dom(T)$
(see the proof of Proposition~\ref{prop.1}) and $T$ is traceable w.r.t.\ $\Phi$. Moreover,
$\langle\phi_m, T\phi_m\rangle=T_{mm}$, for every $m\in\mathbb{N}$ (see~\eqref{eq.45});
hence, relation~\eqref{eq.51} holds true.
\end{proof}

We now provide a more precise characterization of the set of matrix operators $\mathcal{T}_\Phi(\Hp)$.

\begin{proposition}\label{prop.t3}
Let $\Phi\equiv\{\phi_n\nnat$ be any orthonormal basis in $\Hp$. Then, the following facts are equivalent:

\begin{enumerate}[label=\tt{(T\arabic*)}]

\item \label{cond.p6t1}  $T\in\Bpa$ and $\lim_{m+n}\langle \phi_m, T\phi_n\rangle=0$;

\item \label{cond.p6t2}  $T\in\Bp$ and $\lim_{m+n}\langle \phi_m, T\phi_n\rangle=0$;

\item \label{cond.p6t3}  $T\in\mathcal{T}_\Phi(\Hp)$.

\end{enumerate}
\end{proposition}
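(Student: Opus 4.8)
The plan is to establish the cycle \ref{cond.p6t1} $\implies$ \ref{cond.p6t2} $\implies$ \ref{cond.p6t3} $\implies$ \ref{cond.p6t1}, since the middle implication is essentially trivial ($\Bpa\subset\Bp$ leaves the analytic condition untouched) and the remaining two are the substantive ones. Throughout I write $T_{mn}=\langle\phi_m,T\phi_n\rangle$ for the matrix entries with respect to $\Phi$, so that the condition $\lim_{m+n}\langle\phi_m,T\phi_n\rangle=0$ is exactly $\lim_{m+n}T_{mn}=0$ in the sense of Notation~\ref{notalim}.

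First I would prove \ref{cond.p6t2} $\implies$ \ref{cond.p6t3}. Suppose $T\in\Bp$ with $\lim_{m+n}T_{mn}=0$. By Theorem~\ref{th.1}, $T$ is a matrix operator, $T=\opPhi(T_{mn})$, with $(T_{mn})\in\Minf$. But then $(T_{mn})$ satisfies precisely the defining condition of $\mathcal{T}_\Phi(\Hp)$ in Definition~\ref{def.4}, so $T\in\mathcal{T}_\Phi(\Hp)$. (The only thing to check is that the ``$T\in\Bp$'' hypothesis is not needed for membership in $\mathcal{T}_\Phi(\Hp)$ — it isn't, since $\mathcal{T}_\Phi(\Hp)$ is defined purely in terms of the matrix — but having $T\in\Bp$ guarantees that the abstract operator $T$ really coincides with the matrix operator $\opPhi(T_{mn})$.) This step is immediate.

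Next, \ref{cond.p6t3} $\implies$ \ref{cond.p6t1}. Let $T=\opPhi(T_{mn})$ with $\lim_{m+n}T_{mn}=0$. As observed in the proof of Corollary~\ref{cor.2}, the condition $\lim_{m+n}T_{mn}=0$ forces (i) $\sup_{m,n}|T_{mn}|<\infty$, because for each $\epsilon>0$ only finitely many entries have $|T_{mn}|\geq\epsilon$; and (ii) $\lim_m T_{mn}=0$ for every $n$, and $\lim_n T_{mn}=0$ for every $m$. Conditions (i) and (ii) are exactly \ref{cond.bot1}--\ref{cond.bot3} of Theorem~\ref{th.2}, whence $T\in\Bpa$. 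Since the matrix entries of $T$ with respect to $\Phi$ are the $T_{mn}$ themselves, the hypothesis $\lim_{m+n}T_{mn}=0$ is literally the statement $\lim_{m+n}\langle\phi_m,T\phi_n\rangle=0$, so \ref{cond.p6t1} holds. Finally, \ref{cond.p6t1} $\implies$ \ref{cond.p6t2} holds because $\Bpa\subseteq\Bp$ and the analytic condition is identical in both items.

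I do not anticipate a genuine obstacle here: the proposition is a bookkeeping consolidation of Theorem~\ref{th.1}, Theorem~\ref{th.2}, and Corollary~\ref{cor.2}. The one point that warrants a careful sentence is the identification, in the step \ref{cond.p6t2} $\implies$ \ref{cond.p6t3}, of the abstract bounded operator $T$ with the matrix operator $\opPhi(T_{mn})$ built from its own matrix entries — this is exactly what equation~\eqref{eq.12} in the discussion preceding Theorem~\ref{th.1} provides, so it should be invoked explicitly rather than taken for granted.
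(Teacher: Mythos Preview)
Your proof is correct and follows essentially the same route as the paper: the same cycle \ref{cond.p6t1} $\Rightarrow$ \ref{cond.p6t2} $\Rightarrow$ \ref{cond.p6t3} $\Rightarrow$ \ref{cond.p6t1}, the same use of Theorem~\ref{th.1} (via~\eqref{eq.12}) to identify $T$ with $\opPhi(T_{mn})$, and the same appeal to Corollary~\ref{cor.2} for the adjointability step. Your careful remark about invoking~\eqref{eq.12} explicitly is well taken; the paper does this too, just more tersely.
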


\begin{proof}
It is obvious that \ref{cond.p6t1} $\implies$~\ref{cond.p6t2}. Also, if $T\in\Bp$, then
$T=\opPhi(T_{mn})$, where $T_{mn}=\langle\phi_m,T\phi_n\rangle$; hence \ref{cond.p6t2} $\implies$~\ref{cond.p6t3}.
Next, if $T\in\mathcal{T}_\Phi(\Hp)$, then by Corollary~\ref{cor.2}, we have that $T\in\Bpa$.
Moreover, since $T$ is bounded, $T=\opPhi(\langle \phi_m, T\phi_n\rangle)$.
Hence, \ref{cond.p6t3} $\implies$~\ref{cond.p6t1}, and the proof is complete.
\end{proof}

\begin{corollary}\label{cor.4}
$\mathcal{T}_\Phi(\Hp)$ is a linear subspace of $\Bpa$ and
\begin{equation}\label{eq.53}
T\in\mathcal{T}_\Phi(\Hp)\implies T^*\in\mathcal{T}_\Phi(\Hp).
\end{equation}
\end{corollary}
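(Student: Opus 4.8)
The plan is to derive Corollary~\ref{cor.4} directly from the characterization of $\mathcal{T}_\Phi(\Hp)$ given in Proposition~\ref{prop.t3}, together with the behaviour of matrix elements under the adjoint operation established in Theorem~\ref{th.2}. The two assertions --- that $\mathcal{T}_\Phi(\Hp)$ is a linear subspace of $\Bpa$, and that it is stable under the adjoint --- are essentially independent, so I would treat them in turn.

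For the linearity, I would first recall from Proposition~\ref{prop.t3} that $T\in\mathcal{T}_\Phi(\Hp)$ iff $T\in\Bpa$ and $\lim_{m+n}\langle\phi_m,T\phi_n\rangle=0$. Given $S,T\in\mathcal{T}_\Phi(\Hp)$ and $\alpha\in\Q$, the operator $\alpha S+T$ again lies in $\Bpa$ (since $\Bpa$ is a linear space, as noted after the definition of adjointable operator, or by Corollary~\ref{corad}), so it only remains to check the limit condition on its matrix elements. Writing $(\alpha S+T)_{mn}=\alpha\langle\phi_m,S\phi_n\rangle+\langle\phi_m,T\phi_n\rangle$ and using Notation~\ref{notalim} together with the ultrametric inequality $|\alpha\, a_{mn}+b_{mn}|\le\max\{|\alpha|\,|a_{mn}|,|b_{mn}|\}$, one sees that if both $|\langle\phi_m,S\phi_n\rangle|$ and $|\langle\phi_m,T\phi_n\rangle|$ are eventually (in $m+n$) smaller than any prescribed $\epsilon$, then so is $|(\alpha S+T)_{mn}|$ once $|\alpha|\le 1$ --- and the general case reduces to this by scaling $\epsilon$. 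Hence $\lim_{m+n}(\alpha S+T)_{mn}=0$, and $\alpha S+T\in\mathcal{T}_\Phi(\Hp)$.

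For the stability under the adjoint, I would argue as follows. Let $T=\opPhi(T_{mn})\in\mathcal{T}_\Phi(\Hp)$. By Proposition~\ref{prop.t3}, $T\in\Bpa$, and by Corollary~\ref{corad} (or Corollary~\ref{corad} combined with Theorem~\ref{th.2}) $T^*\in\Bpa$ with $T^*=\opPhi(T^*_{mn})$ and $T^*_{mn}=\overline{T_{nm}}$. Now, since the conjugation on $\Q$ is an isometry, $|T^*_{mn}|=|\overline{T_{nm}}|=|T_{nm}|$; and the condition $\lim_{m+n}T_{mn}=0$ is manifestly symmetric in the two indices, so $\lim_{m+n}T^*_{mn}=\lim_{m+n}|T_{nm}|=0$ as well. (Concretely: for $\epsilon>0$ the set $\{(m,n):|T^*_{mn}|\ge\epsilon\}$ is the image of the finite set $\{(m,n):|T_{mn}|\ge\epsilon\}$ under the swap $(m,n)\mapsto(n,m)$, hence finite.) Appealing once more to Proposition~\ref{prop.t3}, this shows $T^*\in\mathcal{T}_\Phi(\Hp)$.

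I do not anticipate a serious obstacle here: the statement is genuinely a formal corollary of the preceding results, and the only thing to be careful about is invoking the correct characterization (the equivalence in Proposition~\ref{prop.t3} is what makes the membership in $\Bpa$ free once the matrix-element condition holds) and keeping the index-swap bookkeeping straight when passing to $T^*$. The mild point worth a sentence is the scaling argument handling $|\alpha|>1$ in the linearity check, but that is routine in the ultrametric setting.
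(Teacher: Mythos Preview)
Your proposal is correct and follows essentially the same route as the paper: both invoke the characterization in Proposition~\ref{prop.t3} to reduce membership in $\mathcal{T}_\Phi(\Hp)$ to the condition $\lim_{m+n}\langle\phi_m,T\phi_n\rangle=0$, then use the ultrametric estimate $|\langle\phi,(aS+bT)\psi\rangle|\le\max\{|a|\,|\langle\phi,S\psi\rangle|,|b|\,|\langle\phi,T\psi\rangle|\}$ for linearity and the index-swap $\langle\phi_m,T^*\phi_n\rangle=\overline{\langle\phi_n,T\phi_m\rangle}$ for stability under the adjoint. The only cosmetic difference is that the paper writes the ultrametric bound with two scalars at once, which makes your separate ``scaling $\epsilon$ for $|\alpha|>1$'' remark unnecessary --- the $\max$ already absorbs $|\alpha|$ uniformly.
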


\begin{proof}
Since~\ref{cond.p6t1} $\iff$~\ref{cond.p6t3} in Proposition~\ref{prop.t3} --- therefore, $\mathcal{T}_\Phi(\Hp)\subset\Bpa\subset\Bp$
--- it is sufficient to note that, for $S,T\in\Bp$, $\phi,\psi\in\Hp$ and scalars $a,b\in\Q$, the following estimate holds:
\begin{equation}
|\langle\phi, (aS+b\,T)\psi\rangle|\leq \max\{|a|\,|\langle\phi,S\psi\rangle|, |b|\,|\langle\phi,T\psi\rangle|\}.
\end{equation}
It follows that
\begin{equation}
\lim_{m+n}\langle \phi_m, S\phi_n\rangle=0=\lim_{m+n}\langle\phi_m, T\phi_n\rangle\implies\lim_{m+n}\langle\phi_m, (aS+bT)\phi_n\rangle=0,
\end{equation}
namely, $S,T\in\mathcal{T}_\Phi(\Hp)\implies aS+b\, T\in\mathcal{T}_\Phi(\Hp)$. Moreover, if $T\in\mathcal{T}_\Phi(\Hp)\subset\Bpa$,
then $\langle\phi_m,T\phi_n\rangle=\overline{\langle\phi_n, T^*\phi_m\rangle}$. Hence,
$\lim_{m+n}\langle\phi_m,T\phi_n\rangle=0\implies\lim_{m+n}\langle\phi_m,T^*\phi_n\rangle=0$, i.e., the implication~\eqref{eq.53} holds true.
\end{proof}


\subsection{The trace class}
\label{main}

Our next task is to show that, actually, the definition of the linear subspace  $\mathcal{T}_\Phi(\Hp)$ of $\Bpa$
does not depend on the choice of $\Phi$; i.e., given any pair of orthonormal bases $\Phi\equiv\{\phi_m\mnat$ and
$\Psi\equiv\{\psi_n\nnat$ in $\Hp$, we have that
\begin{equation}\label{eq.54}
\mathcal{T}_\Phi(\Hp)=\mathcal{T}_\Psi(\Hp).
\end{equation}

To prove this important fact, we need to establish a further relevant property of $\mathcal{T}_\Phi(\Hp)$.
To this aim, we will use the following technical result:

\begin{lemma}\label{lem.t1}
Let $(A_{lm}),(T_{mn})\in \Minf$ be any pair of infinite matrices satisfying the following conditions:
\begin{enumerate}[label=\rm{(\alph*)}]

\item \label{cond.l6a} $\alpha\equiv\sup_{l,m}|A_{lm}|<\infty$,

\item \label{cond.l6b} $\lim_l A_{lm}=0$, $\forall m\in\mathbb{N}$, and

\item \label{cond.l6c} $\lim_{m+n}T_{mn}=0$.

\end{enumerate}
Then, for every $(l,n)\in\mathbb{N}\times\mathbb{N}$, the series $\sum_m A_{lm}T_{mn}$ converges to some
$S_{ln}\in\Q$ and $\lim_{l+n}S_{ln}=0$.
\end{lemma}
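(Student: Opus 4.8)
The plan is to show first that each series $\sum_m A_{lm}T_{mn}$ converges — this is immediate from the ultrametric summability criterion (Proposition~\ref{sumlemma}), since $|A_{lm}T_{mn}| \le \alpha\,|T_{mn}|$ and, by condition~\ref{cond.l6c}, $\lim_m T_{mn} = 0$ for each fixed $n$ (because $\lim_{m+n}T_{mn}=0$ forces the row and column limits to vanish, as recorded in~\eqref{limiff}). So $S_{ln} \defi \sum_m A_{lm}T_{mn}$ is a well-defined element of $\Q$, and moreover $|S_{ln}| \le \alpha \sup_m |T_{mn}|$.

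The heart of the matter is the claim $\lim_{l+n}S_{ln}=0$, i.e., that for every $\epsilon>0$ only finitely many pairs $(l,n)$ satisfy $|S_{ln}| \ge \epsilon$. Fix $\epsilon > 0$. By condition~\ref{cond.l6c} there is some $\ttN$ such that $|T_{mn}| < \epsilon/(\alpha+1)$ whenever $\max\{m,n\} > \ttN$. The plan is to split the sum defining $S_{ln}$ as $\sum_{m \le \ttN} A_{lm}T_{mn} + \sum_{m > \ttN} A_{lm}T_{mn}$. For the tail, if additionally $n > \ttN$, then every term has $\max\{m,n\} > \ttN$ (indeed $m > \ttN$), so $|A_{lm}T_{mn}| < \alpha \cdot \epsilon/(\alpha+1) < \epsilon$, whence by the ultrametric inequality $\big|\sum_{m>\ttN}A_{lm}T_{mn}\big| < \epsilon$. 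For the head $\sum_{m \le \ttN} A_{lm}T_{mn}$: when $n > \ttN$, each factor $T_{mn}$ with $m \le \ttN$ still satisfies $\max\{m,n\} = n > \ttN$, so again $|T_{mn}| < \epsilon/(\alpha+1)$, giving $|A_{lm}T_{mn}| < \epsilon$ and hence $\big|\sum_{m\le\ttN}A_{lm}T_{mn}\big| < \epsilon$. Therefore $|S_{ln}| < \epsilon$ for \emph{all} $l$ as soon as $n > \ttN$.

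It remains to control the region $n \le \ttN$. Here $n$ takes only finitely many values, so it suffices to show that for each fixed $n$ one has $\lim_l S_{ln} = 0$; then outside a finite set of $l$ (depending on that finite set of $n$'s) we get $|S_{ln}| < \epsilon$, and combining with the previous paragraph only finitely many pairs $(l,n)$ can have $|S_{ln}| \ge \epsilon$. To prove $\lim_l S_{ln} = 0$ for fixed $n$: again split at $\ttN$. The tail $\sum_{m > \ttN} A_{lm}T_{mn}$ has each term bounded by $\alpha \sup_{m>\ttN}|T_{mn}|$; but $\lim_{m+n}T_{mn}=0$ (or just the uniform column convergence in~\eqref{limiffbis}) makes $\sup_{m>\ttN}|T_{mn}|$ small once $\ttN$ is large, uniformly in $l$ — so choosing $\ttN$ large makes the tail $< \epsilon$ for all $l$. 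The head $\sum_{m \le \ttN}A_{lm}T_{mn}$ is a \emph{finite} sum, and by condition~\ref{cond.l6b}, $\lim_l A_{lm} = 0$ for each of the finitely many $m \le \ttN$, so the head tends to $0$ as $l \to \infty$. Hence $\limsup_l |S_{ln}| \le \epsilon$ for every $\epsilon$, i.e., $\lim_l S_{ln}=0$.

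The main obstacle — really the only subtle point — is getting the estimate \emph{uniform in $l$} in the region $n > \ttN$; this is where condition~\ref{cond.l6a} (the \emph{global} bound $\alpha = \sup_{l,m}|A_{lm}| < \infty$) is essential, since it lets us bound the entire tail and head by $\alpha$ times a small quantity without any dependence on $l$. The ultrametric (strong triangle) inequality is what makes the argument clean: we never need to sum infinitely many small contributions, only to observe that the supremum of the terms is small.
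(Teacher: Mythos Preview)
Your proof is correct and follows essentially the same approach as the paper's: both arguments split the sum $\sum_m A_{lm}T_{mn}$ at a threshold $\ttN$ determined by condition~\ref{cond.l6c}, use the global bound $\alpha$ together with~\ref{cond.l6c} to control the tail (and the entire sum when $n>\ttN$), and use~\ref{cond.l6b} on the finitely many columns $m\le\ttN$ to control the head for large $l$. The paper is slightly more explicit---it introduces $\tau=\sup_{m,n}|T_{mn}|$ and directly exhibits an $\ttL$ with $|A_{lm}|<\epsilon/\tau$ for $m\le\ttN$, $l>\ttL$, rather than phrasing it as ``the head tends to $0$''---but this is only a presentational difference; your split in the $n>\ttN$ case is also harmless but redundant, since there $\max\{m,n\}>\ttN$ holds for every $m$.
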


\begin{proof}
By condition~\ref{cond.l6a}, the sequence $\{A_{lm}\}_{m\in\nat}$ belongs to $\ell^{\infty}$, for every $l\in\nat$,
and, by condition~\ref{cond.l6c}, the sequence $\{T_{mn}\}_{m\in\nat}$ belongs to $c_0$, for every $n\in\nat$
(see relation~\eqref{limiff} in Remark~\ref{rem.6}). Hence, the series $\sum_m A_{lm}T_{mn}$ is convergent,
for all $l,n\in\nat$, and we can put
\begin{equation}\label{eq.55}
S_{ln}=\sum_m A_{lm}T_{mn}\in\Q.
\end{equation}
Moreover, condition~\ref{cond.l6c} also entails that
\begin{enumerate}[label=\rm{(\alph*)}]

\setcounter{enumi}{3}

\item \label{cond.l6d} $\tau\equiv\sup_{m,n}|T_{mn}|<\infty$.

\end{enumerate}

Let us assume that $\alpha,\tau>0$ (otherwise there is nothing to prove), and let us take any $\epsilon>0$.
Now, by~\ref{cond.l6c}, there is some $\ttN\in\nat$ such that, if $\max\{m,n\}>\ttN$, then $|T_{mn}|<\epsilon/\alpha$.
Next, by~\ref{cond.l6b}, there exists some $\ttL\in\nat$ such that, if $m\leq \ttN$ and $l>\ttL$, then $|A_{lm}|<\epsilon/\tau$
(because we are considering a \emph{finite} set $\{A_{l1}\}_{l\in\nat},\dots,\{A_{l\ttN}\}_{l\in\nat}$ of sequences
converging to $0$).

Summarizing, we have found that, for every $\epsilon>0$, there are $\ttL,\ttN\in\nat$ such that the following additional
conditions hold:
\begin{enumerate}[label=\rm{(\alph*)}]

\setcounter{enumi}{4}

\item \label{cond.l6e} $l>\ttL,\ m\leq\ttN\implies|A_{lm}|<\epsilon/\tau$,

\item \label{cond.l6f} $\max\{m,n\}>\ttN\implies|T_{mn}|<\epsilon/\alpha$.

\end{enumerate}

Therefore, eventually we obtain the following estimates:
\begin{enumerate}[label=\sf{(E\arabic*)}]

\item \label{este1} By~\ref{cond.l6a} and~\ref{cond.l6f} --- for all $l\in\nat$ and all $n>\ttN$ ---
we have that
\begin{equation*}
|S_{ln}|=|{\textstyle\sum_m} A_{lm}T_{mn}|\leq \sup_m |A_{lm}|\,|T_{mn}|<\alpha\,\frac{\epsilon}{\alpha}=\epsilon.
\end{equation*}

\item \label{este2} By~\ref{cond.l6d} and~\ref{cond.l6e}, and by~\ref{cond.l6a} and~\ref{cond.l6f} --- for all $l>\ttL$
and all $n\in\nat$ --- we have:
\begin{equation*}
\hspace{-2mm}|S_{ln}|\leq \sup_m |A_{lm}|\,|T_{mn}|=\max\hspace{-0.4mm}\bigg\{\max_{m\leq\ttN}\{|A_{lm}|\,|T_{mn}|\},\sup_{m>\ttN}|A_{lm}|\,|T_{mn}|\bigg\}
\hspace{-0.5mm}<\max\hspace{-0.4mm}\bigg\{\frac{\epsilon}{\tau}\,\tau,\,\alpha\,\frac{\epsilon}{\alpha}\bigg\}\hspace{-0.5mm}=\epsilon.
\end{equation*}

\end{enumerate}

In conclusion, by the estimates~\ref{este1} and~\ref{este2}, for every $\epsilon>0$, there are $\ttL,\ttN\in\nat$ such that
\begin{equation}\label{eq.56}
\mbox{$l>\ttL$ and/or $n>\ttN$}\ \implies\ |S_{ln}|<\epsilon
\end{equation}
(and, \emph{a fortiori}, if $\max\{l,n\}>\ttM\equiv\max\{\ttL,\ttN\}$, then $|S_{ln}|<\epsilon$).

Eventually, we have shown that, for every $\epsilon>0$, the set
\begin{equation}
\{(l,n)\in\mathbb{N}\times\mathbb{N}\sep |S_{ln}|>\epsilon\}
\end{equation}
is finite. Equivalently, for every $\epsilon>0$, there exists some $\ttM\in\nat$ such that, if $\max\{l,n\}>\ttM$, then
$|S_{ln}|<\epsilon$; namely, $\lim_{l+n}S_{ln}=0$.
\end{proof}

\begin{theorem}\label{th.t3}
Given any orthonormal basis $\Phi$, the linear subspace $\mathcal{T}_\Phi(\Hp)$ of $\Bpa\subset\Bp$
is a \emph{left ideal} in $\Bp$, i.e.,
\begin{equation}\label{eq.57}
AT\in\mathcal{T}_\Phi(\Hp),\;\;\;\;\forall A\in\Bp,\forall\, T\in\mathcal{T}_\Phi(\Hp).
\end{equation}
Moreover, $\mathcal{T}_\Phi(\Hp)$ is a \emph{two sided $*$-ideal} in $\Bpa$, i.e.,
\begin{equation}\label{eq.58}
T^*\in\mathcal{T}_\Phi(\Hp),\;\;\;\;AT,\,TA\in\mathcal{T}_\Phi(\Hp),\;\;\forall\, T\in\mathcal{T}_\Phi(\Hp),\;\;\forall A\in\Bpa.
\end{equation}
\end{theorem}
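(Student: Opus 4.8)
The plan is to reduce the entire statement to the already-established Lemma~\ref{lem.t1}. Two of the three closure properties in~\eqref{eq.57}--\eqref{eq.58} cost essentially nothing: the inclusion $\mathcal{T}_\Phi(\Hp)\subset\Bpa$ and the implication $T\in\mathcal{T}_\Phi(\Hp)\Rightarrow T^*\in\mathcal{T}_\Phi(\Hp)$ are exactly Corollary~\ref{cor.4} (equivalently, Proposition~\ref{prop.t3}). So the real work is the left-ideal property~\eqref{eq.57}; once it is in place, the relation $TA\in\mathcal{T}_\Phi(\Hp)$ in~\eqref{eq.58} follows by passing to adjoints. First I would fix an orthonormal basis $\Phi\equiv\{\phi_m\mnat$, take $A=\opPhi(A_{lm})\in\Bp$ and $T=\opPhi(T_{mn})\in\mathcal{T}_\Phi(\Hp)$, and identify the matrix of the product $AT$ with respect to $\Phi$.

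Since $\Bp$ is a unital Banach algebra, $AT\in\Bp$, hence by Theorem~\ref{th.1} we have $AT=\opPhi\big((AT)_{ln}\big)$ with $(AT)_{ln}=\langle\phi_l,AT\phi_n\rangle$; moreover $A_{lm}=\langle\phi_l,A\phi_m\rangle$ with $\sup_{l,m}|A_{lm}|=\|A\|<\infty$ and $\lim_l A_{lm}=0$ for every $m$. As $\{T_{mn}\mnat\in c_0$ one has the norm-convergent expansion $T\phi_n=\sum_m T_{mn}\phi_m$, and since $\|T_{mn}A\phi_m\|\le|T_{mn}|\,\|A\|\to 0$, boundedness and linearity of $A$ give $AT\phi_n=\sum_m T_{mn}A\phi_m=\sum_m T_{mn}\sum_l A_{lm}\phi_l$; taking the inner product with $\phi_l$ and using continuity of $\scap$ yields
\begin{equation}
(AT)_{ln}=\sum_m A_{lm}\,T_{mn}\ifed S_{ln}.
\end{equation}
Now the pair $(A_{lm}),(T_{mn})$ satisfies hypotheses~\ref{cond.l6a}, \ref{cond.l6b}, \ref{cond.l6c} of Lemma~\ref{lem.t1} — the first two because $A\in\Bp$, the last because $T\in\mathcal{T}_\Phi(\Hp)$ — so $S_{ln}$ is well defined and $\lim_{l+n}S_{ln}=0$. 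Hence $AT=\opPhi(S_{ln})\in\mathcal{T}_\Phi(\Hp)$ by the very definition of this set, proving~\eqref{eq.57}; in particular $AT\in\mathcal{T}_\Phi(\Hp)$ whenever $A\in\Bpa\subset\Bp$.

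For the remaining relation $TA\in\mathcal{T}_\Phi(\Hp)$ with $A\in\Bpa$ I would avoid repeating the computation and instead use the $*$-structure: by Corollary~\ref{corad}, $A^*\in\Bpa\subset\Bp$ and $TA=(A^*T^*)^*$; since $T^*\in\mathcal{T}_\Phi(\Hp)$ (Corollary~\ref{cor.4}), the already-proved left-ideal property gives $A^*T^*\in\mathcal{T}_\Phi(\Hp)$, and one more application of Corollary~\ref{cor.4} gives $TA=(A^*T^*)^*\in\mathcal{T}_\Phi(\Hp)$. Collecting the pieces, $T^*,AT,TA\in\mathcal{T}_\Phi(\Hp)$ for all $T\in\mathcal{T}_\Phi(\Hp)$ and all $A\in\Bpa$, so $\mathcal{T}_\Phi(\Hp)$ is a two-sided $*$-ideal in $\Bpa$ and a left ideal in $\Bp$. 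I do not expect a serious obstacle here: the only genuinely analytic estimate has been packaged into Lemma~\ref{lem.t1}, and what remains is bookkeeping — the term-by-term manipulations in the computation of $(AT)_{ln}$ being automatic in the ultrametric setting, where a series converges iff its terms vanish and convergence is unconditional. The one delicate point worth flagging is that the right-ideal property truly requires $A\in\Bpa$ and not merely $A\in\Bp$: a direct computation gives $(TA)_{ln}=\sum_m T_{lm}A_{mn}$, and fitting this into Lemma~\ref{lem.t1} (after a transposition) consumes the condition $\lim_n A_{mn}=0$ from Theorem~\ref{th.2}, which is precisely the property separating $\Bpa$ from $\Bp$; the adjoint route above sidesteps invoking it explicitly.
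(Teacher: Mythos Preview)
Your proof is correct and follows essentially the same route as the paper: apply Lemma~\ref{lem.t1} to the matrix $(AT)_{ln}=\sum_m A_{lm}T_{mn}$ to obtain the left-ideal property, then derive $TA\in\mathcal{T}_\Phi(\Hp)$ via the adjoint identity $TA=(A^*T^*)^*$ together with Corollary~\ref{cor.4}. The only difference is that you spell out explicitly why $(AT)_{ln}=\sum_m A_{lm}T_{mn}$, whereas the paper simply records this implication.
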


\begin{proof}
Let us prove property~\eqref{eq.57} of $\mathcal{T}_\Phi(\Hp)$. Since $A\in\Bp$ and $T\in\mathcal{T}_\Phi(\Hp)\subset\Bp$, we have
that $S=AT\in\Bp$ and
\begin{equation}\label{eq.59}
A=\opPhi(A_{lm}),\,T=\opPhi(T_{mn})\;\;\implies\;\;S=\opPhi(S_{ln}),
\end{equation}
where $S_{ln}=\sum_m A_{lm}T_{mn}$.
Here, the infinite matrix $(A_{lm})$ satisfies conditions~\ref{cond.l6a} and~\ref{cond.l6b} in Lemma~\ref{lem.t1},
because $A$ is bounded (Theorem~\ref{th.1}). Moreover, $(T_{mn})$ satisfies condition~\ref{cond.l6c} (by the definition
of $\mathcal{T}_\Phi(\Hp)$). Hence, by the same lemma, $\lim_{l+n}S_{ln}=0$; i.e., $S=AT\in\mathcal{T}_\Phi(\Hp)$.

Let us now prove that $\mathcal{T}_\Phi(\Hp)$ satisfies properties~\eqref{eq.58}, as well. We have already shown that, if $T\in\mathcal{T}_\Phi(\Hp)$,
then $T^*\in\mathcal{T}_\Phi(\Hp)$ too (Corollary~\ref{cor.4}).

Next, if, additionally, $A\in\Bpa$, then $AT,A^*T^*\in\mathcal{T}_\Phi(\Hp)$, so that
\begin{equation}\label{eq.60}
TA=(A^*T^*)^*\in\mathcal{T}_\Phi(\Hp).
\end{equation}
The proof is complete.
\end{proof}

We will now derive two remarkable consequences of Theorem~\ref{th.t3}; the most important one is
the following:

\begin{corollary}\label{cor.t5}
For every pair of orthonormal bases $\Phi\equiv\{\phi_m\mnat$ and $\Psi\equiv\{\psi_n\nnat$, we have that
\begin{equation}\label{eq.70}
\mathcal{T}_\Phi(\Hp)=\mathcal{T}_\Psi(\Hp)\equiv\Tp.
\end{equation}
\end{corollary}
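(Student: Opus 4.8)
The plan is to reduce the assertion to the two-sided ideal property of $\mathcal{T}_\Phi(\Hp)$ just established in Theorem~\ref{th.t3}, exploiting the fact that any two orthonormal bases of $\Hp$ are intertwined by a unitary operator. By symmetry it suffices to prove the inclusion $\mathcal{T}_\Psi(\Hp)\subset\mathcal{T}_\Phi(\Hp)$.

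First I would invoke Theorem~\ref{th.u3} (the equivalence of \ref{cond.th5u1}, \ref{cond.th5u2} and \ref{cond.th5u3}) to produce the unitary operator $U\defi\opPhi(\langle\phi_m,\psi_n\rangle)\in\mathcal{U}(\Hp)$; it satisfies $U\phi_k=\psi_k$ for all $k\in\nat$ and, since $U^*U=UU^*=\id$, we also have $U,U^*\in\Bpa$ and $U^*\psi_k=\phi_k$ for all $k$. Now take $T\in\mathcal{T}_\Psi(\Hp)$. Because $\mathcal{T}_\Psi(\Hp)$ is a two-sided ideal in $\Bpa$ (Theorem~\ref{th.t3}) and $U,U^*\in\Bpa$, the operator $UTU^*$ again lies in $\mathcal{T}_\Psi(\Hp)$; as it is bounded, $UTU^*=\opPsi(\langle\psi_m,UTU^*\psi_n\rangle)$ by Theorem~\ref{th.1}, so the very definition of $\mathcal{T}_\Psi(\Hp)$ gives $\lim_{m+n}\langle\psi_m,UTU^*\psi_n\rangle=0$.

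The next step is to relate the matrix of $T$ in the basis $\Phi$ to that of $UTU^*$ in the basis $\Psi$. Using $\phi_k=U^*\psi_k$ and the defining adjoint identity $\langle U^*\chi,\xi\rangle=\langle\chi,U\xi\rangle$ (valid since $U\in\Bpa$, by Proposition~\ref{proadj} and Corollary~\ref{corad}), one obtains
\begin{equation*}
\langle\phi_m,T\phi_n\rangle=\langle U^*\psi_m,\,TU^*\psi_n\rangle=\langle\psi_m,\,UTU^*\psi_n\rangle,\qquad\forall\,m,n\in\nat,
\end{equation*}
hence $\lim_{m+n}\langle\phi_m,T\phi_n\rangle=0$. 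Since $T\in\mathcal{T}_\Psi(\Hp)\subset\Bpa\subset\Bp$, Proposition~\ref{prop.t3} (implication \ref{cond.p6t2}$\Rightarrow$\ref{cond.p6t3}) then yields $T\in\mathcal{T}_\Phi(\Hp)$. Exchanging the roles of $\Phi$ and $\Psi$ gives the opposite inclusion, and therefore~\eqref{eq.70}. I expect no genuine obstacle here: the one delicate point is ensuring that $UTU^*$ is still in the trace class --- that is, that being traceable is stable under conjugation by a unitary --- and this is precisely what Theorem~\ref{th.t3} supplies; the remainder is the routine bookkeeping of a change of orthonormal basis. (Alternatively, one could bypass unitaries and apply Lemma~\ref{lem.t1} twice to the transition coefficients $\langle\phi_l,\psi_k\rangle$ and $\langle\psi_j,\phi_n\rangle$, but that route is longer.)
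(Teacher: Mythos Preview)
Your argument is correct and is essentially the paper's own proof: both introduce the unitary $U=\opPhi(\langle\phi_m,\psi_n\rangle)$ with $U\phi_k=\psi_k$, invoke the two-sided $\ast$-ideal property from Theorem~\ref{th.t3} to conjugate by $U$ (or $U^*$), and then identify $\langle\phi_m,T\phi_n\rangle$ with $\langle\psi_m,UTU^*\psi_n\rangle$ (equivalently $\langle\phi_m,U^*TU\phi_n\rangle=\langle\psi_m,T\psi_n\rangle$). The only cosmetic difference is that the paper writes a single chain of biconditionals starting from $\mathcal{T}_\Phi(\Hp)$, while you prove one inclusion and invoke symmetry.
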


\begin{proof}
Let $U$ be the unitary operator determined by condition~\ref{cond.th5u2} in Theorem~\ref{th.u3}, i.e.,
\begin{equation}
U=\opPhi(\langle\phi_m,U\phi_n\rangle)=\opPhi(\langle\phi_m,\psi_n\rangle).
\end{equation}
Recalling that \ref{cond.p6t2}~$\hspace{-1.5mm}\iff\hspace{-1.5mm}$~\ref{cond.p6t3} (Proposition~\ref{prop.t3}),
we have:
\begin{align}
T\in\mathcal{T}_\Phi(\Hp)&\iff U^*TU\in\mathcal{T}_\Phi(\Hp)
\quad (\mbox{\eqref{eq.58} in Theorem~\ref{th.t3}, $U^*=U^{-1}\in\Bpa$})
\nonumber\\
&\iff T\in\Bp,\ 0=\lim_{m+n}\langle\phi_m, U^*TU\phi_n\rangle=\lim_{m+n}\langle\psi_m,T\psi_n\rangle
\nonumber\\
&\iff T\in\mathcal{T}_\Psi(\Hp).
\end{align}
Therefore, $\mathcal{T}_\Psi(\Hp)=\mathcal{T}_\Phi(\Hp)$, for any pair of orthonormal bases $\Phi,\Psi$ in $\Hp$.
\end{proof}

\begin{definition}
We call an operator belonging to the two-sided $*$-ideal $\Tp$ of $\Bpa$ --- whose definition does not depend on
the choice of an orthonormal basis in $\Hp$ (by Corollary~\ref{cor.t5}) --- a \emph{trace class} operator.
The linear space $\Tp$ itself will be called the \emph{the trace class of} $\Hp$.
\end{definition}

We next obtain a second remarkable consequence of Theorem~\ref{th.t3}:

\begin{corollary}\label{cor.t6}
Given a linear operator $T$ in $\Hp$, the following facts are equivalent:
\begin{enumerate}[label=\rm{(\roman*)}]

\item \label{cond.c6i} $T\in\Tp$;

\item \label{cond.c6ii} $T\in\Bp$ and, for \emph{some} pair $\Phi\equiv\{\phi_m\mnat, \Psi\equiv\{\psi_n\nnat$ of
orthonormal bases, satisfies the condition that
\begin{equation}\label{eq.72}
\lim_{m+n}\langle\phi_m,T\psi_n\rangle=0;
\end{equation}

\item \label{cond.c6iii} $T\in\Bp$ and, for \emph{every} pair $\Phi\equiv\{\phi_m\mnat,\Psi\equiv\{\psi_n\nnat$ of
orthonormal bases in $\Hp$, satisfies condition~\eqref{eq.72}.

\end{enumerate}
\end{corollary}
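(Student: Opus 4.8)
The plan is to prove the chain of equivalences \ref{cond.c6i} $\implies$ \ref{cond.c6iii} $\implies$ \ref{cond.c6ii} $\implies$ \ref{cond.c6i}, since the middle implication \ref{cond.c6iii} $\implies$ \ref{cond.c6ii} is trivial (``for every'' entails ``for some'', noting that orthonormal bases exist in $\Hp$). So the real content is in the two remaining implications.

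First I would prove \ref{cond.c6i} $\implies$ \ref{cond.c6iii}. Suppose $T\in\Tp$, and fix an arbitrary pair of orthonormal bases $\Phi\equiv\{\phi_m\mnat$ and $\Psi\equiv\{\psi_n\nnat$. By Theorem~\ref{th.u3}, there is a unitary operator $V\in\mathcal{U}(\Hp)$ with $V\phi_k=\psi_k$ for all $k$; by Remark~\ref{rem.4u}, $V^*=V^{-1}\in\mathcal{U}(\Hp)\subset\Bpa$. Now $\langle\phi_m,T\psi_n\rangle=\langle\phi_m,TV\phi_n\rangle$, and since $TV\in\Tp$ (by the left-ideal property \eqref{eq.57} in Theorem~\ref{th.t3}, as $V\in\Bp$), we have $TV=\opPhi(\langle\phi_m,TV\phi_n\rangle)$ with $\lim_{m+n}\langle\phi_m,TV\phi_n\rangle=0$ by the very definition of $\mathcal{T}_\Phi(\Hp)$ (recall $\Tp=\mathcal{T}_\Phi(\Hp)$ by Corollary~\ref{cor.t5}). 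Hence $\lim_{m+n}\langle\phi_m,T\psi_n\rangle=0$, which is \eqref{eq.72}; also $T\in\Bp$ since $\Tp\subset\Bpa\subset\Bp$.

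Next, \ref{cond.c6ii} $\implies$ \ref{cond.c6i}. Suppose $T\in\Bp$ and $\lim_{m+n}\langle\phi_m,T\psi_n\rangle=0$ for some fixed orthonormal bases $\Phi,\Psi$. Let $V\in\mathcal{U}(\Hp)$ be the unitary with $V\phi_k=\psi_k$ as before. Then $\langle\phi_m,T\psi_n\rangle=\langle\phi_m,(TV)\phi_n\rangle$, so the matrix of $S\defi TV\in\Bp$ with respect to $\Phi$ satisfies $\lim_{m+n}S_{mn}=0$, i.e.\ $S=TV\in\mathcal{T}_\Phi(\Hp)=\Tp$. Since $\Tp$ is a left ideal in $\Bp$ and also, by Theorem~\ref{th.t3}, a right ideal under multiplication by elements of $\Bpa$, and since $V^{-1}=V^*\in\Bpa$, we conclude $T=(TV)V^{-1}=SV^*\in\Tp$. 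This closes the cycle.

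The one point requiring a little care — and the closest thing to an obstacle — is the passage from ``$T\in\Bp$ together with a decay condition on matrix elements'' to ``$T$ is the matrix operator of those elements'', i.e.\ $T=\opPhi(\langle\phi_m,T\phi_n\rangle)$; this is precisely relation \eqref{eq.12} from Theorem~\ref{th.1}, so it is already available. Likewise, the asymmetry in Theorem~\ref{th.t3} ($\Tp$ is a \emph{left} ideal in all of $\Bp$ but only a \emph{two-sided} ideal inside $\Bpa$) is exactly what makes the argument work: in both directions the extra unitary factor $V$ or $V^*$ that we must absorb lies in $\Bpa$, so the relevant ideal property applies. No genuinely new estimate is needed; the proof is a bookkeeping exercise assembling Theorem~\ref{th.u3}, Theorem~\ref{th.t3}, and Corollary~\ref{cor.t5}.
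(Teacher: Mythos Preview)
Your proof is correct and follows essentially the same route as the paper's own argument (unitary $V$ relating the two bases, then use the ideal properties of $\Tp$ from Theorem~\ref{th.t3} together with Corollary~\ref{cor.t5}). One small slip: in the step \ref{cond.c6i} $\implies$ \ref{cond.c6iii} you justify $TV\in\Tp$ by citing the \emph{left}-ideal property~\eqref{eq.57}, but that gives $VT\in\Tp$, not $TV\in\Tp$; the correct reference is the two-sided $\ast$-ideal property~\eqref{eq.58} in $\Bpa$ (valid since $V\in\mathcal{U}(\Hp)\subset\Bpa$), which you in fact invoke correctly in the \ref{cond.c6ii} $\implies$ \ref{cond.c6i} step.
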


\begin{proof}
Clearly, \ref{cond.c6iii} $\implies$~\ref{cond.c6ii}. Let us prove that \ref{cond.c6ii}  $\implies$~\ref{cond.c6i}.

Assume that~\ref{cond.c6ii} holds, and let $U$ be the unitary operator determined by
\begin{equation}\label{eq.73}
\mbox{$U\phi_k=\psi_k$, $\forall k\in\mathbb{N}$; i.e., $U=\opPhi(\langle\phi_m,\psi_n\rangle)$}.
\end{equation}
By~\eqref{eq.72} we have:
\begin{equation}\label{eq.74}
0=\lim_{m+n}\langle\phi_m, T\psi_n\rangle=\lim_{m+n}\langle\phi_m, TU\phi_n\rangle.
\end{equation}
Therefore, $TU\in\Tp$ and, by Theorem~\ref{th.t3}, $T=(TU)U^*\in\Tp$ too; i.e., \ref{cond.c6ii} $\implies$~\ref{cond.c6i}.

It is then sufficient to show that~\ref{cond.c6i} $\implies$~\ref{cond.c6iii}, as well. Let $T\in\Tp$ and let
$\Phi\equiv\{\phi_m\mnat$, $\Psi\equiv\{\psi_n\nnat$ be \emph{any} pair of orthonormal bases in $\Hp$. We have:
\begin{equation}
\lim_{m+n}\langle\phi_m, T\psi_n\rangle=\lim_{m+n}\langle\phi_m, TU\phi_n\rangle=0.
\end{equation}
Here, $U$ is the unitary operator determined by~\eqref{eq.73}, and we have used the fact that $TU\in\Tp$
(Theorem~\ref{th.t3}).
\end{proof}

\begin{remark}\label{rem.8}
Recalling Remark~\ref{rem.6}, the condition that $T\in\Bp$ satisfies~\eqref{eq.72} is equivalent to
the condition that the series
\begin{equation}\label{eq.75}
\sum_{m,n}\langle\phi_m, T\psi_n\rangle
\end{equation}
be convergent. This is reminiscent of the fact that, in a (infinite-dimensional, separable)
\emph{complex Hilbert space} $\mathcal{K}$,
\begin{equation}\label{eq.76}
T\in\mathcal{T}(\mathcal{K})\;\iff\;\sum_{m}\sum_n|\langle \eta_m, T\chi_n\rangle|<\infty,
\end{equation}
for some --- equivalently, for any --- pair $\{\eta_m\mnat,\{\chi_n\nnat$ of orthonormal bases in $\mathcal{K}$.
This is probably the tightest connection that one can establish between the $p$-adic and the complex
trace class. Recall indeed that, for a complex Hilbert space $\mathcal{K}$, one usually first defines the trace of
a \emph{positive} bounded operator (that may be finite or infinite). Then, the trace class $\mathcal{T}(\mathcal{K})$
is introduced as the set of all bounded operators $T$ such that their absolute value $|T|$ (the unique positive square root
of $T^*T$) has a \emph{finite} trace; see, e.g.,~\cite{Reed}. But this route cannot be pursued in the $p$-adic
setting, because there is no natural notion of positivity for a bounded operator.
\end{remark}

Having shown that the definition of the trace class $\Tp$ does not depend on the choice of an
orthonormal basis in $\Hp$, we now want to prove that, for every $T=\opPhi(T_{mn})\in\Tp$, the
\emph{trace} itself of $T$ --- i.e., the the quantity (recall Proposition~\ref{prop.t2})
\begin{equation}\label{eq.77}
\mathrm{tr}_\Phi(T)\defi\sum_m\langle\phi_m,T\phi_m\rangle=\sum_mT_{mm}\in\Q
\end{equation}
--- \emph{does not} depend on the orthonormal basis $\Phi\equiv\{\phi_m\mnat$. Thus, we can call
$\tr(T)\equiv\tr_\Phi(T)$ the \emph{trace of the operator} $T\in\Tp$.

We first need to establish a technical fact.

\begin{lemma}\label{lem.t2}
Given double sequences $\{x_{mn}\}_{m,n\in\nat}$, $\{y_{mn}\}_{m,n\in\nat}$ in $\Q$, the following facts
hold true:
\begin{enumerate}[label=\sf{(\roman*)}]

\item \label{cond.le6i} $\lim_{m+n}x_{mn}=0$ and $|y_{mn}|\leq\alpha\in\mathbb{R}^+$,
$\forall m,n\in\nat\;\implies\;\lim_{m+n}x_{mn}y_{mn}=0$.

\item \label{cond.le6ii} If $\{x_{mn}\}_{m,n\in\nat}$ is of the form $x_{mn}=y_mz_n$, where $\lim_m y_m=0=\lim_n z_n$, then
\begin{equation}
\lim_{m+n}x_{mn}=0.
\end{equation}

\end{enumerate}
\end{lemma}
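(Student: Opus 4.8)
The plan is to prove the two parts of Lemma~\ref{lem.t2} directly from the characterization of the limit $\lim_{m+n}=0$ given in Remark~\ref{rem.6}, namely that for every $\epsilon>0$ only finitely many index pairs $(m,n)$ violate the bound $|x_{mn}|<\epsilon$ (equivalently, there is some $\ttN$ such that $\max\{m,n\}>\ttN$ forces $|x_{mn}|<\epsilon$). Both assertions are elementary, and the ultrametric plays essentially no role here; the work is purely combinatorial bookkeeping with the index set $\nat\times\nat$.

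For part~\ref{cond.le6i}, I would assume $\alpha>0$ (otherwise $y_{mn}\equiv0$ and there is nothing to prove) and fix $\epsilon>0$. Since $\lim_{m+n}x_{mn}=0$, there is some $\ttN\in\nat$ such that $\max\{m,n\}>\ttN$ implies $|x_{mn}|<\epsilon/\alpha$. Then for such $(m,n)$ we have $|x_{mn}y_{mn}|=|x_{mn}|\,|y_{mn}|\le|x_{mn}|\,\alpha<\epsilon$, using the multiplicativity of the valuation $|\cdot|$ on $\Q$. Since $\epsilon>0$ was arbitrary, this shows $\lim_{m+n}x_{mn}y_{mn}=0$ by the formulation~\eqref{meanb} of Notation~\ref{notalim}.

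For part~\ref{cond.le6ii}, with $x_{mn}=y_mz_n$, set $C\defi\sup_m|y_m|$ and $D\defi\sup_n|z_n|$; both are finite, since a zero-convergent sequence in $\Q$ is bounded. If $C=0$ or $D=0$ the claim is trivial, so assume $C,D>0$. Given $\epsilon>0$, choose $M$ with $|y_m|<\epsilon/D$ for all $m>M$ and $N$ with $|z_n|<\epsilon/C$ for all $n>N$, and put $\ttN\defi\max\{M,N\}$. If $\max\{m,n\}>\ttN$, then either $m>M$, giving $|x_{mn}|=|y_m|\,|z_n|<(\epsilon/D)\cdot D=\epsilon$, or $n>N$, giving $|x_{mn}|=|y_m|\,|z_n|<C\cdot(\epsilon/C)=\epsilon$. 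In either case $|x_{mn}|<\epsilon$, so $\lim_{m+n}x_{mn}=0$.

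There is no real obstacle here: the statement is a routine lemma whose only subtlety is remembering to handle the degenerate cases where one of the bounding constants vanishes, and to note that part~\ref{cond.le6ii} is actually a special case of part~\ref{cond.le6i} once one observes that $x_{mn}=y_m z_n$ can be written as $x_{mn}=(y_m\cdot\mathbf{1})\cdot z_n$ with the first factor tending to $0$ along $m$ — though writing it out directly, as above, is cleaner. I would present both proofs as short self-contained $\epsilon$-arguments.
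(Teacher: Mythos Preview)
Your proof is correct. For part~\ref{cond.le6i} you argue exactly as the paper does (the paper simply declares it ``obvious''). For part~\ref{cond.le6ii} your route differs from the paper's: you use the $\max\{m,n\}>\ttN$ formulation directly, bounding $|y_m z_n|$ via $\sup_n|z_n|$ when $m$ is large and via $\sup_m|y_m|$ when $n$ is large. The paper instead invokes the three-condition characterization~\eqref{limiff} from Remark~\ref{rem.6}, checking separately that both iterated limits vanish and then establishing the Pringsheim limit via the $\sqrt{\epsilon}$ trick ($|y_m|,|z_n|<\sqrt{\epsilon}$ for $m,n>\ttN$). Your argument is arguably more self-contained, since it does not rely on the equivalence~\eqref{limiff}; the paper's version has the mild advantage of illustrating that equivalence in action.

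One small quibble: your closing aside that part~\ref{cond.le6ii} ``is actually a special case of part~\ref{cond.le6i}'' is not correct as written. The factor $y_m\cdot\mathbf{1}$, regarded as a double sequence in $(m,n)$, does \emph{not} satisfy $\lim_{m+n}=0$ (it fails to vanish as $n\to\infty$ for fixed $m$), so part~\ref{cond.le6i} does not apply. This does not affect your actual proof, which you wisely wrote out directly, but the remark should be dropped.
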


\begin{proof}
Claim~\ref{cond.le6i} is obvious. Let us prove~\ref{cond.le6ii}.

Both the sequences $\{y_m\mnat,\{z_n\nnat$ converge to zero; hence:
\begin{equation}\label{eq.788}
\mbox{$\displaystyle\lim_{m}x_{mn}=\lim_{m}y_m z_n= 0$, for all $n\in\nat$, and $\displaystyle\lim_{n}y_m z_n= 0$,
for all $m\in\nat$.}
\end{equation}
Moreover,
\begin{equation}\label{eq.79}
\lim_{m,n}x_{mn}=0, \qquad \mbox{(Pringsheim limit)}
\end{equation}
because, $\forall\epsilon>0$, $\exists\ttN\in\mathbb{N}$ such that, if $m,n>\ttN$, then
\begin{equation}\label{eq.80}
|y_m|,|z_n|<\sqrt{\epsilon},
\end{equation}
so that  $|x_{mn}|=|y_m|\,|z_n|<\epsilon$. As recalled in Remark~\ref{rem.6}, conditions~\eqref{eq.788}
and~\eqref{eq.79} together entail that
\begin{equation}
\lim_{m+n}x_{mn}=0,
\end{equation}
which proves~\ref{cond.le6ii}.
\end{proof}

\begin{theorem}\label{th.t5}
If $T\in\Tp$, then, for any pair of orthonormal bases $\Phi\equiv\{\phi_m\mnat$ and $\Psi\equiv\{\psi_n\nnat$ in $\Hp$,
we have that
\begin{equation}\label{eq.84}
\tr_\Phi(T)=\tr_\Psi(T)\equiv\tr(T).
\end{equation}
\end{theorem}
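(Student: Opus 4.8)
The goal is to show that the trace functional $\tr_\Phi(T)=\sum_m\langle\phi_m,T\phi_m\rangle$ is independent of the chosen orthonormal basis. The natural strategy is to pass from $\Phi$ to $\Psi$ via the unitary operator $U$ determined by $U\phi_k=\psi_k$ for all $k$ (Theorem~\ref{th.u3}, condition~\ref{cond.th5u2}), so that $U=\opPhi(\langle\phi_m,\psi_n\rangle)$ with $U^*=U^{-1}\in\mathcal{U}(\Hp)$. Writing the matrices $U=\opPhi(U_{mn})$, $U^*=\opPhi(\overline{U_{nm}})$ and $T=\opPhi(T_{mn})$ w.r.t.\ $\Phi$, one has $\tr_\Psi(T)=\sum_k\langle\psi_k,T\psi_k\rangle=\sum_k\langle U\phi_k,TU\phi_k\rangle=\sum_k\langle\phi_k,U^*TU\phi_k\rangle=\tr_\Phi(U^*TU)$. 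Since $U^*TU\in\Tp$ by Theorem~\ref{th.t3} (as $\Tp$ is a two-sided ideal in $\Bpa$ and $U,U^*\in\Bpa$), this quantity is a legitimate trace. So the theorem reduces to the cyclicity-type identity
\begin{equation}
\tr_\Phi(U^*TU)=\tr_\Phi(T),\qquad T\in\Tp,\ U\in\mathcal{U}(\Hp).
\end{equation}

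Next I would prove this identity by an explicit rearrangement of the double series, in the spirit of the complex case $\tr(AB)=\tr(BA)$. Set $S\defi TU\in\Tp$ (Theorem~\ref{th.t3}) and $A\defi U^*\in\Bpa$. With matrix entries $A_{lm}$, $S_{mn}$ w.r.t.\ $\Phi$ one has, by the matrix-product formula in the proof of Theorem~\ref{th.t3}, $(AS)_{ln}=\sum_m A_{lm}S_{mn}$, hence
\begin{equation}
\tr_\Phi(U^*TU)=\tr_\Phi(AS)=\sum_l\sum_m A_{lm}S_{ml}.
\end{equation}
The matrix $(A_{lm})$ is bounded, so $\sup_{l,m}|A_{lm}|<\infty$ and $\lim_l A_{lm}=0$; the matrix $(S_{mn})$ satisfies $\lim_{m+n}S_{mn}=0$ since $S\in\Tp$. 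The key point is that the double sequence $b_{lm}\defi A_{lm}S_{ml}$ itself satisfies $\lim_{l+m}b_{lm}=0$: indeed $|b_{lm}|\le\alpha\,|S_{ml}|$ with $\alpha=\sup|A_{lm}|$, and $\lim_{m+l}S_{ml}=0$, so by Lemma~\ref{lem.t2}\ref{cond.le6i} we get $\lim_{l+m}b_{lm}=0$. By the summation-rearrangement fact recalled in Remark~\ref{rem.6} (equation~\eqref{eq.50}), the iterated sums of $b_{lm}$ in either order coincide with the symmetric double sum $\sum_{l,m}b_{lm}$; in particular
\begin{equation}
\sum_l\sum_m A_{lm}S_{ml}=\sum_m\sum_l S_{ml}A_{lm}=\sum_m (SA)_{mm}=\tr_\Phi(SA)=\tr_\Phi(TUU^*)=\tr_\Phi(T),
\end{equation}
using $UU^*=\id$.

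The main obstacle is the justification that the double sum can be reordered — i.e.\ verifying that $\lim_{l+m}(A_{lm}S_{ml})=0$ and then invoking the $p$-adic Fubini-type statement. This is where ultrametricity does the real work: unlike the complex case one cannot argue via absolute summability, but the condition $\lim_{m+n}T_{mn}=0$ built into the definition of $\Tp$ is exactly strong enough (after multiplication by a bounded matrix) to guarantee that the relevant double series is summable in the strong sense of~\eqref{eq.49}, so that Remark~\ref{rem.6}'s identity~\eqref{eq.50} applies and the two iterated sums agree. Everything else is bookkeeping: checking that $TU,\,U^*T,\,U^*TU\in\Tp$ (Theorem~\ref{th.t3}), that the matrix-product formula from the proof of Theorem~\ref{th.t3} applies to these products, and that $\tr_\Phi$ of a product of the form $(\,\cdot\,)_{mm}$ is literally the diagonal sum (Proposition~\ref{prop.t2}). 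A final remark would note that, combined with Corollary~\ref{cor.t5}, this shows $\tr\colon\Tp\to\Q$ is a well-defined $\Q$-linear functional with the cyclicity property $\tr(AT)=\tr(TA)$ for $A\in\Bpa$, $T\in\Tp$.
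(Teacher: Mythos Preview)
Your proof is correct and takes a genuinely different route from the paper's. The paper argues directly at the level of series expansions: it writes $\tr_\Phi(T)=\sum_m\langle\phi_m,T\phi_m\rangle$, inserts the expansion $\phi_m=\sum_n\langle\psi_n,\phi_m\rangle\psi_n$ (and then a second $\Psi$-expansion of $\phi_m$), producing a triple sum which it reorganizes using Corollary~\ref{cor.t6} and both parts of Lemma~\ref{lem.t2}, and finally collapses via the resolution-of-identity relation $\sum_m\langle\psi_k,\phi_m\rangle\langle\phi_m,\psi_n\rangle=\delta_{kn}$. You instead reduce the statement to the cyclicity-type identity $\tr_\Phi(U^*S)=\tr_\Phi(SU^*)$ with $S=TU\in\Tp$, and justify the single sum-interchange needed by the same bounded-times-trace-class estimate (Lemma~\ref{lem.t2}\ref{cond.le6i} plus Remark~\ref{rem.6}). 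Your argument is a bit leaner --- one double sum rather than a triple sum, and only part~\ref{cond.le6i} of Lemma~\ref{lem.t2} is invoked --- and it is essentially the mechanism the paper later uses to prove the full cyclic property (Proposition~\ref{prop.5}); in effect you have anticipated that proposition in the special case $B=U^*$. The paper's approach, on the other hand, avoids any appeal to unitaries or to the ideal property of $\Tp$ beyond Corollary~\ref{cor.t6}, working purely with inner-product identities.
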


\begin{proof}
Indeed, first note that
\begin{equation}\label{eq.85}
\tr_\Phi(T)\defi\sum_m \langle\phi_m, T\phi_m\rangle =
\sum_{m}\sum_n \langle\phi_m,\psi_n\rangle\langle\psi_n,T\phi_m\rangle,
\end{equation}
where we have used the expansion $\phi_m=\sum_n\langle\psi_n,\phi_m\rangle\psi_n$ and the continuity of the inner product.
Since $|\langle \phi_m,\psi_n\rangle|\leq 1$, for all $m,n\in\mathbb{N}$, then, by the implication~\ref{cond.c6i}
$\implies$~\ref{cond.c6ii} in Corollary~\ref{cor.t6}, and by claim~\ref{cond.le6i} of Lemma~\ref{lem.t2}, we can argue that
\begin{align*}
T\in\Tp\;\;&\implies\;\;\lim_{m+n}\langle\psi_n, T\phi_m\rangle=0\qquad (\text{Corollary~\ref{cor.t6}})\\
&\implies\;\;\lim_{m+n}\langle \phi_m, \psi_n\rangle\langle \psi_n, T\phi_m\rangle=0\qquad (\text{Lemma~\ref{lem.t2}}).
\end{align*}
Thus, as recalled in Remark~\ref{rem.6} (see~\eqref{eq.50}), we can exchange the sums on the r.h.s.\ of~\eqref{eq.85},
so obtaining
\begin{align}\label{eq.86}
\tr_\Phi(T)&=\sum_n\sum_m\langle\phi_m,\psi_n\rangle\langle\psi_n,T\phi_m\rangle
\nonumber\\
&=\sum_n\sum_m\langle\phi_m,\psi_n\rangle
{\textstyle \langle\psi_n,T (\sum_k\langle\psi_k,\phi_m\rangle\psi_k)\rangle}
\nonumber\\
&=\sum_n\sum_m\sum_k\langle\psi_k,\phi_m\rangle\langle\phi_m,\psi_n\rangle\langle\psi_n,T\psi_k\rangle,
\end{align}
where, for the second equality, we have used the fact that
$T\sum_k\langle\psi_k,\phi_m\rangle\psi_k=\sum_k\langle\psi_k,\phi_m\rangle(T\psi_k)$
($T$ being bounded) and, once again, the continuity of the inner product.

Next, since $|\langle\psi_k,\phi_m\rangle|\leq 1$, for all $k,m\in\nat$, $\lim_{m}\langle\phi_m,\psi_n\rangle=0$,
for all $n\in\nat$, and
\begin{equation}\label{eq.88}
\lim_k\langle \psi_n, T\psi_k\rangle=\lim_k \langle T^*\psi_n,\psi_k\rangle=0,\quad \forall n\in\nat,
\end{equation}
$T$ being (of trace class, hence) adjointable, we have:
\begin{align*}
&\lim_{m+k}\langle\phi_m,\psi_n\rangle\langle\psi_n,T\psi_k\rangle=0,\quad \forall n\in\nat,
\qquad\mbox{(by point~\ref{cond.le6ii} of Lemma~\ref{lem.t2})}
\\
\implies &\lim_{m+k}\langle\psi_k,\phi_m\rangle\langle\phi_m,\psi_n\rangle\langle\psi_n,T\psi_k\rangle=0,
\quad\forall n\in\nat.\qquad \mbox{(by point~\ref{cond.le6i} of Lemma~\ref{lem.t2})}
\end{align*}
Therefore, it is further possible to exchange the sums over $m$ and $k$ in the last line of~\eqref{eq.86}.

Eventually, we obtain that
\begin{align}
\tr_\Phi(T)&=\sum_n\sum_k\bigg(\sum_m\langle\psi_k,\phi_m\rangle\langle\phi_m,\psi_n\rangle\bigg)\langle\psi_n, T\psi_k\rangle
\nonumber\\
&=\sum_n\sum_k\langle\psi_k,\psi_n\rangle\langle\psi_n,T\psi_k\rangle
\nonumber\\
&=\sum_n\sum_k\delta_{nk}\langle\psi_n, T\psi_k\rangle
\nonumber\\
&=\sum_n\langle\psi_n, T\psi_n\rangle\ifed\tr_\Psi(T).
\end{align}
Here, for obtaining the second equality we have exploited the (very familiar, in a complex Hilbert space)
relation $\sum_m\langle\psi_k,\phi_m\rangle\langle\phi_m,\psi_n\rangle=\langle\psi_k,\psi_n\rangle$, which,
in the $p$-adic setting, is provided by the last line of~\eqref{reside}.
\end{proof}

The trace enjoys the following remarkable properties:

\begin{proposition}\label{prop.tnew}
Given trace class operators $S,T\in\Tp$, we have:
\begin{enumerate}[label=\tt{(P\arabic*)}]

\item $\tr(S+T)=\tr(S)+\tr(T)$, and $\tr(\alpha\, T)=\alpha\,\tr(T)$, for all $\alpha\in\Q$, i.e.,
the trace $\tr\colon\Tp\rightarrow\Q$ is a linear functional;

\item \label{conjtrace} $\tr(T^*)=\overline{\tr(T)}$;

\item for every unitary operator $U\in\mathcal{U}(\Hp)$, $\tr(UTU^*)=\tr(T)$.

\end{enumerate}
\end{proposition}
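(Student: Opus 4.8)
The plan is to prove the three properties essentially by reducing each to facts already established about $\Tp$, the trace, and unitary operators. Fix an orthonormal basis $\Phi\equiv\{\phi_m\mnat$ throughout and recall (Theorem~\ref{th.t5}) that $\tr(T)=\tr_\Phi(T)=\sum_m T_{mm}$, where $T_{mn}=\langle\phi_m,T\phi_n\rangle$; also recall (Corollary~\ref{cor.4}, Theorem~\ref{th.t3}) that $\Tp$ is a linear $*$-ideal, so all the operators appearing below indeed lie in $\Tp$ and have well-defined traces.

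For property \texttt{(P1)}, the linearity of $\tr$: since $S=\opPhi(S_{mn})$ and $T=\opPhi(T_{mn})$ both lie in $\Tp$, so do $S+T=\opPhi(S_{mn}+T_{mn})$ and $\alpha T=\opPhi(\alpha T_{mn})$, and the claim follows from the identities $(S+T)_{mm}=S_{mm}+T_{mm}$, $(\alpha T)_{mm}=\alpha T_{mm}$ together with the fact that in a $p$-adic Banach space a convergent series may be split and scalar-multiplied termwise (Proposition~\ref{sumlemma}); explicitly $\sum_m(S_{mm}+T_{mm})=\sum_m S_{mm}+\sum_m T_{mm}$ and $\sum_m\alpha T_{mm}=\alpha\sum_m T_{mm}$, which is just continuity of addition and of multiplication by $\alpha$ on $\Q$ applied to partial sums.

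For property \texttt{(P2)}: by Theorem~\ref{th.2} the matrix of $T^*$ with respect to $\Phi$ has entries $(T^*)_{mn}=\overline{T_{nm}}$, so the diagonal entries satisfy $(T^*)_{mm}=\overline{T_{mm}}$. Hence $\tr(T^*)=\sum_m(T^*)_{mm}=\sum_m\overline{T_{mm}}=\overline{\sum_m T_{mm}}=\overline{\tr(T)}$, using that conjugation on $\Q$ is a continuous field automorphism (Remark~\ref{rem.cont}-type continuity of $z\mapsto\bar z$, which is an isometry) and therefore commutes with the limit defining the sum of the series.

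For property \texttt{(P3)}: given $U\in\mathcal{U}(\Hp)$, put $\psi_n\defi U^*\phi_n$ for all $n\in\nat$. By Theorem~\ref{th.u3}\ref{cond.th5u6} (or its consequence \ref{cond.th5u8}), $\Psi\equiv\{\psi_n\nnat$ is again an orthonormal basis in $\Hp$. Now compute $\langle\phi_m,UTU^*\phi_m\rangle=\langle U^*\phi_m,TU^*\phi_m\rangle=\langle\psi_m,T\psi_m\rangle$, using that $U$ is IP-preserving (Theorem~\ref{th.u3}\ref{cond.th5u4}) and $U^*=U^{-1}$. Summing over $m$ gives $\tr_\Phi(UTU^*)=\sum_m\langle\psi_m,T\psi_m\rangle=\tr_\Psi(T)$. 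Since $UTU^*\in\Tp$ (Theorem~\ref{th.t3}) and $T\in\Tp$, both sides are basis-independent by Theorem~\ref{th.t5}, so $\tr(UTU^*)=\tr_\Phi(UTU^*)=\tr_\Psi(T)=\tr(T)$.

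The only genuine subtlety — the ``main obstacle'' such as it is — is justifying the termwise manipulations of the infinite series: all of them are legitimate because every series in sight converges in the ultrametric sense ($\lim_m(\cdot)=0$), addition, scalar multiplication and conjugation are continuous (indeed isometric or bounded) operations on $\Q$, and a $p$-adic convergent series may be added, scaled and conjugated termwise; no regrouping or reordering beyond what Proposition~\ref{sumlemma} already licenses is needed for \texttt{(P1)}--\texttt{(P3)}. Thus the proof is short, and each item is essentially a one-line consequence of the corresponding structural result proved earlier.
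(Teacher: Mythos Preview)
Your proof is correct and follows essentially the same route as the paper: linearity is immediate from the series definition, \texttt{(P2)} is obtained by conjugating the diagonal entries termwise, and \texttt{(P3)} is proved by setting $\psi_n=U^*\phi_n$, noting that $\Psi$ is again an orthonormal basis (via Theorem~\ref{th.u3}), and invoking the basis-independence of the trace. The only difference is that you spell out the justification for the termwise manipulations more explicitly than the paper, which simply states that linearity is ``clear'' and carries out the other two items in a single line each.
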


\begin{proof}
The linearity of the trace is clear. Let us prove property~\ref{conjtrace}.

Let $\Phi\equiv\{\phi_n\nnat$ be any orthonormal basis in $\Hp$. Then, we have:
\begin{equation}
\tr(T^\ast)= \sum_n\langle \phi_n, T^\ast\phi_n\rangle=\sum_n\langle T\phi_n,\phi_n\rangle
=\sum_n\overline{\langle\phi_n, T\phi_n\rangle}=\overline{\sum_n\langle\phi_n, T\phi_n\rangle}
=\overline{\tr(T)}.
\end{equation}
Moreover, for every unitary operator $U\in\mathcal{U}(\Hp)$,
\begin{equation}
\tr(UTU^*)= \sum_n\langle \phi_n, UTU^* \phi_n\rangle=\sum_n\langle \psi_n,T \psi_n\rangle
=\tr(T),
\end{equation}
where we have used the fact that $U^*$ is a unitary operator too, so that, by point~\ref{cond.th5u8}
of Theorem~\ref{th.u3}, $\Psi\equiv\{\psi_n\nnat=\{U^*\phi_n\nnat$ is an orthonormal basis.
\end{proof}


\subsection{The cyclic property}
\label{cyclic}

The reader will have noticed that in Proposition~\ref{prop.tnew} are listed all the main properties
of the trace --- say, in a complex Hilbert space --- \emph{except} the `cyclic property'. We are now
going to show that the $p$-adic trace possesses this important property too, \emph{provided that}
the domain of the map $\tr\argo$ be suitably extended (for the sake of simplicity, we will denote
the extended map by the same symbol).

In fact, recalling the first assertion of Theorem~\ref{th.t3}, $\Tp$ is a \emph{left} --- but not a right ---
ideal in $\Bp$ ($\dim(\Hp)=\infty$). Let us better clarify this point by means of an explicit example.

\begin{example} \label{exnontc}
Let $B\in\Bp$ a bounded operator that is \emph{not} adjointable, and let $\chi\in\Hp$ a (nonzero) vector
such that $\chi\not\in\dom(B^\dagger)$, where $B^\dagger$ is the pseudo-adjoint of $B$. For every $\phi\in\Hp$,
such that $\langle\phi,\phi\rangle=1$ (e.g., an element of an orthonormal basis), we can consider the trace class
operator $T=|\phi\rangle\langle\chi|\in\Tp$. Let us show that the bounded operator $TB\in\Bp$ is \emph{not} a
trace class operator. Indeed, for every $\psi\in\Hp$, we have: $\langle\phi,TB\psi\rangle=\langle\chi,B\psi\rangle$.
Now, since $\chi\not\in\dom(B^\dagger)$, \emph{there is no vector} $\eta\in\Hp$ such that
$\langle\eta,\psi\rangle=\langle\chi,B\psi\rangle=\langle\phi,TB\psi\rangle$, for all $\psi\in\Hp$; otherwise stated,
$\phi\not\in\dom((TB)^\dagger)$. Therefore, (whereas $BT=|B\phi\rangle\langle\chi|\in\Tp$) $TB\not\in\Bpa\supset\Tp$,
and this fact entails that $\Tp$ is \emph{not} a right ideal in $\Bp$.
\end{example}

By the previous discussion, in order to derive, in the $p$-adic setting, the cyclic property
of the trace, we need to introduce a new class of operators that we will call the \emph{weak trace class}.

\begin{definition}\label{def.7}
We say that a bounded operator $A\in\Bp$ is \emph{uniformly traceable} if it is traceable with respect to
every orthonormal basis in $\Hp$ and, moreover,
\begin{equation}\label{eq.90}
\tr_\Phi(A)=\tr_\Psi(A)\equiv\tr(A)
\end{equation}
for every pair of orthonormal bases $\Phi\equiv\{\phi_m\mnat$ and $\Psi\equiv\{\psi_n\nnat$ in $\Hp$.
\end{definition}

It is clear that the set of all uniformly traceable operators in $\Hp$ form a linear subspace $\Tw$
of $\Bp$, which is precisely the weak trace class of $\Hp$.

\begin{remark}\label{re.9}
Let $\mathcal{K}$ be  a (separable) complex Hilbert space, with $\dim(\mathcal{K})=\infty$.
It is well known that a bonded operator $A\in\mathcal{B}(\mathcal{K})$ is of trace class iff
it is uniformly traceable; i.e., iff the series
\begin{equation}\label{eq.91}
\sum_m\langle\chi_m, A\chi_m\rangle
\end{equation}
converges to a \emph{unique} limit for every orthonormal basis $\{\chi_m\mnat$ in $\mathcal{K}$
(see, e.g., Proposition~{4.42} of~\cite{moretti2018spectral}). It turns out that --- see Remark~\ref{reincls}
below --- this property does not hold true for a $p$-adic Hilbert space; namely,
$\Tw\supsetneq\Tp$ ($\dim(\Hp)=\infty$).
\end{remark}

\begin{proposition}[Cyclic property of the trace]\label{prop.5}
For every bounded operator $B\in\Bp$ and for every trace class operator $T\in\Tp$, we have that
\begin{equation}\label{eq.92}
BT\in\Tp\subset\Tw\;\;\;\;\text{and}\;\;\;\; TB\in\Tw.
\end{equation}
Moreover, we have:
\begin{equation}\label{eq.93}
\tr(BT)=\tr(TB).
\end{equation}
\end{proposition}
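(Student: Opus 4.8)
The plan is to establish the three claims in turn: $BT\in\Tp$, $TB\in\Tw$, and $\tr(BT)=\tr(TB)$. The first is immediate --- by the first assertion of Theorem~\ref{th.t3}, $\Tp$ is a left ideal in $\Bp$, so $BT\in\Tp$; and by definition $\Tp\subset\Tw$ will follow once I record that every trace class operator is uniformly traceable, which is exactly the content of Theorem~\ref{th.t5}. The substance of the proposition is therefore the statement $TB\in\Tw$ together with the trace identity.

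For $TB\in\Tw$ I would fix an arbitrary orthonormal basis $\Phi\equiv\{\phi_m\mnat$ in $\Hp$ and show that $TB$ is traceable w.r.t.\ $\Phi$ with a sum that does not depend on $\Phi$. Writing $T=\opPhi(T_{mn})$ and $B=\opPhi(B_{mn})$, the matrix of $TB$ has entries $(TB)_{mn}=\sum_k T_{mk}B_{kn}$, and I want $\lim_m (TB)_{mm}=0$. Expanding in the given basis, $\langle\phi_m, TB\phi_m\rangle=\sum_k T_{mk}B_{km}=\sum_k \langle\phi_m,T\phi_k\rangle\langle\phi_k, B\phi_m\rangle$. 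Here $T\in\Tp$ gives $\lim_{m+n}\langle\phi_m,T\phi_n\rangle=0$ (definition of $\mathcal{T}_\Phi(\Hp)$, Corollary~\ref{cor.t5}), while $B$ bounded gives $\sup_{k,m}|\langle\phi_k,B\phi_m\rangle|\le\|B\|$; so Lemma~\ref{lem.t1} --- applied with $A_{lm}=\langle\phi_l,T\phi_m\rangle$ playing the role of the $T$-matrix and $B$'s transpose-type matrix playing the bounded role, or more directly by a diagonal-sum estimate modeled on Lemma~\ref{lem.t1} --- shows $\lim_m\sum_k\langle\phi_m,T\phi_k\rangle\langle\phi_k,B\phi_m\rangle=0$, i.e.\ $TB$ is traceable w.r.t.\ $\Phi$. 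The independence of the resulting sum from $\Phi$ is the crux: I would exploit $BT\in\Tp$, which \emph{is} uniformly traceable by Theorem~\ref{th.t5}, and prove $\tr_\Phi(TB)=\tr_\Phi(BT)$ for \emph{each fixed} $\Phi$; since $\tr_\Phi(BT)=\tr(BT)$ is basis-independent, so is $\tr_\Phi(TB)$, giving $TB\in\Tw$ and simultaneously the desired identity~\eqref{eq.93}.

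So the heart of the argument reduces to the single-basis equality $\sum_m\langle\phi_m, TB\phi_m\rangle=\sum_m\langle\phi_m, BT\phi_m\rangle$. I would write the left side as $\sum_m\sum_k\langle\phi_m,T\phi_k\rangle\langle\phi_k,B\phi_m\rangle$ (inserting $\sum_k|\phi_k\rangle\langle\phi_k|$ via the expansion $B\phi_m=\sum_k\langle\phi_k,B\phi_m\rangle\phi_k$ and continuity of the inner product) and the right side as $\sum_m\sum_k\langle\phi_m,B\phi_k\rangle\langle\phi_k,T\phi_m\rangle$. To swap the order of summation in the first double series I need $\lim_{m+k}\langle\phi_m,T\phi_k\rangle\langle\phi_k,B\phi_m\rangle=0$: the factor $\langle\phi_m,T\phi_k\rangle\to0$ as $m+k\to\infty$ and the other factor is bounded by $\|B\|$, so Lemma~\ref{lem.t2}\ref{cond.le6i} applies. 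Then by the $p$-adic Fubini property recalled in Remark~\ref{rem.6} (relation~\eqref{eq.50}) I may reindex, and after relabeling $m\leftrightarrow k$ the two double sums coincide term by term. This yields $\tr_\Phi(TB)=\tr_\Phi(BT)$ for every $\Phi$.

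The main obstacle I anticipate is purely a bookkeeping one: verifying the various $\lim_{m+n}=0$ hypotheses needed to legitimately rearrange the non-absolutely-convergent (but unconditionally convergent, in the ultrametric sense) double series. Concretely, I must be careful that the off-diagonal decay $\lim_{m+k}\langle\phi_m,T\phi_k\rangle=0$ is genuinely available --- it is, since $T\in\Tp=\mathcal{T}_\Phi(\Hp)$ for every $\Phi$ by Corollary~\ref{cor.t5}, which is precisely the $\lim_{m+n}$ condition --- and that boundedness of $B$ supplies the uniform bound $|\langle\phi_k,B\phi_m\rangle|\le\|B\|$ required by Lemma~\ref{lem.t2}\ref{cond.le6i} and Lemma~\ref{lem.t1}. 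No positivity, no absolute summability, and no appeal to a Hilbert--Schmidt factorization is needed; the whole proof runs on the $p$-adic iterated-limit calculus already assembled in Section~\ref{sec6}. Once these limit conditions are checked, the rearrangements are automatic and the proposition follows.
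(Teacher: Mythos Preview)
Your proof is correct and follows essentially the same approach as the paper's: expand the trace as a double series, use Lemma~\ref{lem.t2}\ref{cond.le6i} to justify the swap via Remark~\ref{rem.6}, and identify the result with the other trace. The only minor difference is that the paper expands $T\phi_m$ in a \emph{second} orthonormal basis $\Psi$ and obtains $\tr(BT)=\tr_\Psi(TB)$ directly for arbitrary $\Psi$, whereas you work in a single basis $\Phi$, prove $\tr_\Phi(TB)=\tr_\Phi(BT)$, and then invoke Theorem~\ref{th.t5} for basis-independence; both routes are equivalent and equally short.
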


\begin{proof}
We have already shown that $BT\in\Tp$, because $\Tp$ is a left ideal in $\Bp$ (see Theorem~\ref{th.t3},
where $\mathcal{T}_\Phi(\Hp)\equiv\Tp$). Then, we have:
\begin{align}\label{eq.94}
\tr(BT)&=\sum_m\langle\phi_m, BT\phi_m\rangle
\nonumber\\
&=\sum_m\langle\phi_m, B ({\textstyle\sum_n}\langle\psi_n, T\phi_m\rangle\,\psi_n)\rangle
\qquad \mbox{(because $T\phi_m=\sum_n\langle\psi_n, T\phi_m\rangle\psi_n$)}
\nonumber\\
&=\sum_m\sum_n\langle\phi_m, B\psi_n\rangle\langle\psi_n,T\phi_m\rangle,
\qquad \mbox{(continuity of $B$ and of the inner product)}
\end{align}
where $\Phi\equiv\{\phi_m\mnat$, $\Psi\equiv\{\psi_n\nnat$ is any pair of orthonormal bases in $\Hp$.

Note that
\begin{equation}
\mbox{$|\langle\phi_m, B\psi_n\rangle|\leq \|B\|$, $\forall m,n\in\mathbb{N}$,
and $\displaystyle\lim_{m+n}\langle\psi_n, T\phi_m\rangle=0$}.
\end{equation}
Hence, by point~\ref{cond.le6i} of Lemma~\ref{lem.t2}, we have:
\begin{equation}\label{eq.95}
\lim_{m+n}\langle\phi_m, B\psi_n\rangle\langle\psi_n, T\phi_m\rangle=0.
\end{equation}
By~\eqref{eq.95}, we can exchange the sums in the last line of~\eqref{eq.94}, so obtaining
\begin{align}\label{eq.96}
\tr(BT)&=\sum_n\sum_m\langle\psi_n, T\phi_m\rangle\langle\phi_m, B\psi_n\rangle
\nonumber\\
&=\sum_n\langle\psi_n, T({\textstyle\sum_m}\langle\phi_m, B\psi_n\rangle\phi_m)\rangle
\nonumber\\
&=\sum_n\langle\psi_n, TB\psi_n\rangle=\tr_\Psi(TB).
\end{align}
By the arbitrariness of the orthonormal basis $\Psi$ in $\Hp$, we conclude that $TB\in\Tw\subset\Bp$
and $\tr(TB)=\tr_\Psi(TB)=\tr(BT)$.
\end{proof}

\begin{remark}
The inclusion relation
\begin{equation}
TB\subset\Tw, \quad \forall\, T\in\Tp,\ \forall B\in\Bp,
\end{equation}
is a manifestation of the fact that $TB$ is a \emph{compact operator}; see Corollary~\ref{corut} below.
In particular, the (non-adjointable, bounded) operator $TB$ constructed in Example~\ref{exnontc} is compact.
\end{remark}

\begin{proposition} \label{trineq}
For every bounded operator $B\in\Bp$ and for every trace class operator $T\in\Tp$, we have that
\begin{equation}\label{trinequa}
|\tr(BT)|=|\tr(TB)|\le \|B\|\,\|T\| \quad \mbox{and} \quad |\tr(T)|\le\|T\|.
\end{equation}
\end{proposition}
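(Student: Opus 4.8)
The plan is to derive both inequalities in \eqref{trinequa} directly from the formula for the trace in terms of matrix elements (Proposition~\ref{prop.t2} together with Theorem~\ref{th.t5}), exploiting the ultrametric nature of the valuation $|\cdot|$ on $\Q$. The second inequality $|\tr(T)|\le\|T\|$ is the cleaner one and I would do it first; the first inequality then follows by the same method applied to $BT$, once we recall from the cyclic property (Proposition~\ref{prop.5}) that $\tr(BT)=\tr(TB)$, so that it suffices to bound $|\tr(BT)|$.

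First I would fix an orthonormal basis $\Phi\equiv\{\phi_m\}_{m\in\nat}$ in $\Hp$ and write $T=\opPhi(T_{mn})$. By Proposition~\ref{prop.t2}, $\tr(T)=\sum_m T_{mm}$, a convergent series in $\Q$. Since $|\cdot|$ is an ultrametric valuation, the norm of a convergent sum is bounded by the supremum of the norms of its terms:
\begin{equation}
|\tr(T)| = \Big|\sum_m T_{mm}\Big| \le \sup_m |T_{mm}| \le \sup_{m,n}|T_{mn}| = \|T\|,
\end{equation}
where the last equality is~\eqref{eq.14} in Theorem~\ref{th.1} (valid since $T\in\Tp\subset\Bpa\subset\Bp$). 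This proves the second inequality.

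For the first inequality, I would apply the cyclic property to reduce to $\tr(BT)$ and then use the matrix-element computation~\eqref{eq.94}: for any orthonormal basis $\Phi$,
\begin{equation}
\tr(BT) = \sum_m\sum_n \langle\phi_m, B\psi_n\rangle\,\langle\psi_n, T\phi_m\rangle,
\end{equation}
with $\Psi\equiv\{\psi_n\}_{n\in\nat}$ any second orthonormal basis (take $\Psi=\Phi$ for simplicity). By the Cauchy–Schwarz inequality in the $p$-adic inner product, $|\langle\phi_m,B\phi_n\rangle|\le\|B\phi_n\|\le\|B\|$ and $|\langle\phi_n,T\phi_m\rangle|\le\|T\|$ (using again the Parseval-type bound $|\langle\phi_k,T\phi_m\rangle|\le\|T\phi_m\|\le\|T\|$). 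Since the double series converges (by~\eqref{eq.95}), the ultrametric inequality gives
\begin{equation}
|\tr(BT)| \le \sup_{m,n} |\langle\phi_m, B\phi_n\rangle|\,|\langle\phi_n, T\phi_m\rangle| \le \|B\|\,\|T\|,
\end{equation}
and combining with $\tr(BT)=\tr(TB)$ from Proposition~\ref{prop.5} finishes the proof.

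The only subtlety — and the step I would be most careful about — is the passage from a convergent (possibly iterated) series to the bound by the supremum of its terms: one must be sure the relevant series genuinely converges in the $p$-adic sense before invoking $|\sum x_i|\le\sup_i|x_i|$, since this is a statement about convergent series, not a formal manipulation. For $\tr(T)$ this is immediate from Proposition~\ref{prop.t2}; for $\tr(BT)$ it is exactly~\eqref{eq.95} established in the proof of Proposition~\ref{prop.5}, so both cases are already covered by results in the excerpt. No new ideas beyond ultrametricity and Cauchy–Schwarz are needed.
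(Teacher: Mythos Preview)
Your proposal is correct and uses the same core ingredients as the paper --- the ultrametric bound $|\sum x_i|\le\sup_i|x_i|$ together with Cauchy--Schwarz --- but the paper organizes the argument more economically. Rather than first proving $|\tr(T)|\le\|T\|$ and then expanding $\tr(BT)$ as the double series $\sum_m\sum_n\langle\phi_m,B\phi_n\rangle\langle\phi_n,T\phi_m\rangle$ (which forces you to invoke the convergence result~\eqref{eq.95} from the proof of the cyclic property), the paper simply notes that $BT\in\Tp$ (left ideal, Theorem~\ref{th.t3}) and applies the \emph{single}-sum trace formula directly:
\[
|\tr(BT)|=\Big|\sum_m\langle\phi_m,BT\phi_m\rangle\Big|\le\max_m|\langle\phi_m,BT\phi_m\rangle|\le\max_m\|BT\phi_m\|\le\|B\|\,\|T\|.
\]
The second inequality then drops out by setting $B=\id$. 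This avoids the detour through the double series and the appeal to Proposition~\ref{prop.5}; your route works, but the paper's is shorter and logically cleaner.
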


\begin{proof}
In fact,  given any orthonormal basis $\{\phi_m\mnat$ in $\Hp$,
$|\tr(BT)|=|\sum_m\langle\phi_m, BT\phi_m\rangle|\le\max_{m\in\nat}|\langle\phi_m, BT\phi_m\rangle|\le
\max_{m\in\nat}\|BT\phi_m\|\le \|B\|\,\|T\|$. In particular, putting $B=\id$, we obtain also the second
inequality in~\eqref{trinequa}.
\end{proof}

\begin{corollary} \label{corbotra}
The linear functional $\tr\argo\colon\Tp\ni T\mapsto\tr(T)\in\Q$ is bounded and $\|\tr\argo\|=1$.
\end{corollary}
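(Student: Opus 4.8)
The plan is to obtain the upper bound $\|\tr\argo\|\le 1$ for free from an estimate already in hand, and then to exhibit a single trace class operator that saturates it. By Proposition~\ref{prop.tnew}, $\tr\argo$ is a linear functional on $\Tp$, and by the second inequality in~\eqref{trinequa} of Proposition~\ref{trineq} we have $|\tr(T)|\le\|T\|$ for every $T\in\Tp$. Hence $\tr\argo\colon\Tp\rightarrow\Q$ is a bounded linear functional with $\|\tr\argo\|\le 1$.

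For the reverse inequality, I would fix any orthonormal basis $\Phi\equiv\{\phi_m\mnat$ in $\Hp$ and consider the rank-one operator $T\defi\langle\phi_1,\cdot\rangle\,\phi_1=|\phi_1\rangle\langle\phi_1|$. Its matrix elements w.r.t.\ $\Phi$ are $T_{mn}=\langle\phi_m,T\phi_n\rangle=\langle\phi_m,\phi_1\rangle\langle\phi_1,\phi_n\rangle=\delta_{m1}\delta_{n1}$, so $\lim_{m+n}T_{mn}=0$ holds trivially (all but one entry vanishes); thus $T\in\mathcal{T}_\Phi(\Hp)=\Tp$ by Definition~\ref{def.4} and Corollary~\ref{cor.t5}. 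By~\eqref{eq.14} in Theorem~\ref{th.1}, $\|T\|=\sup_{m,n}|T_{mn}|=1$, while by Proposition~\ref{prop.t2}, $\tr(T)=\sum_m T_{mm}=T_{11}=1$. Therefore $|\tr(T)|=1=\|T\|$, which forces $\|\tr\argo\|\ge 1$; combined with the previous paragraph this yields $\|\tr\argo\|=1$.

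There is essentially no obstacle: the statement is an immediate consequence of Proposition~\ref{trineq} together with the explicit computation for a rank-one projection onto a basis vector. The only point that deserves a line of care is confirming that the chosen $T$ lies in $\Tp$ (not merely that it is traceable w.r.t.\ one basis), which is immediate from the characterization $\lim_{m+n}T_{mn}=0$ of $\mathcal{T}_\Phi(\Hp)$; one may additionally note that $\langle\phi_1,\phi_1\rangle=1$ since $\phi_1$ belongs to an orthonormal basis, so that $T$ is in fact idempotent, $T^2=T$.
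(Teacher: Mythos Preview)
Your proof is correct and follows essentially the same approach as the paper: use the second inequality in~\eqref{trinequa} to get $\|\tr\argo\|\le 1$, then saturate with the rank-one projection $|\phi\rangle\langle\phi|$ for $\phi$ an element of an orthonormal basis. The paper's proof is simply terser, omitting the explicit verification that $|\phi_1\rangle\langle\phi_1|\in\Tp$ and the matrix-element computation of its norm and trace, but the argument is the same.
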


\begin{proof}
By the second inequality in~\eqref{trinequa} the functional $\tr\argo$ is bounded and $\|\tr\argo\|\le1$.
If $\phi$ is an element of an orthonormal basis in $\Hp$, then $|\tr(|\phi\rangle\langle\phi|)|=1=\|\,|\phi\rangle\langle\phi|\,\|$,
so that the previous inequality is saturated.
\end{proof}


\subsection{Trace class operators as compact operators}
\label{compact}

As is well known, the trace class operators in a (infinite-dimensional, separable) \emph{complex} Hilbert space
$\mathcal{K}$ form a Banach space, when endowed with the trace norm. This space is embedded in the Hilbert
space of all Hilbert-Schmidt operators in $\mathcal{K}$ (endowed with the Hilbert-Schmidt product). The
closure --- w.r.t.\ the operator norm --- of these spaces coincides with the closure of the linear space of all
finite rank operators in $\mathcal{K}$; namely, with the Banach space of compact operators, which is the
only proper closed two-sided ideal in the Banach algebra of bounded operators. In particular,
the trace class of $\mathcal{K}$ is \emph{not} closed w.r.t.\ the operator norm ($\dim(\mathcal{K})=\infty$).
See, e.g., the standard references~\cite{Reed,Simon,weidmann2012linear,moretti2018spectral}.

As the reader may expect, this familiar picture keeps some of its main features --- but also requires some
essential modification --- when switching to a (infinite-dimensional) $p$-adic Hilbert space $\Hp$.

As above, for the sake of simplicity, we will assume that $\dim(\Hp)=\infty$, \emph{but} all subsequent results
(and their proofs) remain valid --- with obvious adaptations, and even if possibly getting trivial --- in the
finite-dimensional setting. E.g., the `canonical decomposition' of an adjointable compact operator ---
see Corollary~\ref{decocpa} below --- holds true in the case where $\dim(\Hp)<\infty$ and $\Tp=\Cp=\Bpa=\Bp$
is just the space $\linop$ of all linear operators in $\Hp$.

\begin{definition} \label{deficom}
An all-over linear operator $C$ in $\Hp$ is said to be \emph{compact} if $C\,\Hba$ --- where $\Hba$ is
the unit ball in $\Hp$: $\Hba\defi\{\psi\in\Hp\sep \|\psi\|\le1\}$ --- is a precompact subset of $\Hp$
(namely, if $C\,\Hba$ has a compact closure).
\end{definition}

\begin{remark}
In formulating the previous definition, we have taken into account the fact that $\Q$ is locally compact,
because, in this case, the \emph{compactoid} subsets of $\Hp$ coincide with the precompact subsets.
See Chapt.~{4} of~\cite{rooij1978non}; in particular, Sect.~{4.S} and the subsequent definition of
a compact operator in the non-Archimedean setting (also see the seminal paper~\cite{Serre1962}).
\end{remark}

\begin{remark}
From Definition~\ref{deficom} it is clear that the linear space $\Fp$ of all \emph{finite-rank operators}
--- the linear operators in $\Hp$ having finite-dimensional range spaces --- consists of compact
operators.
\end{remark}

\begin{theorem} \label{theocpops}
The set $\Cp$ of all compact operators in $\Hp$ is a closed linear subspace of $\Bp$. Specifically, $\Cp$
is the closure of the linear subspace $\Fp$ of all finite rank operators in $\Hp$. Moreover, $\Cp$ is the
only proper closed two-sided ideal in $\Bp$.

Given any orthonormal basis $\Phi\equiv\{\phi_m\}_{m\in\nat}$ in $\Hp$, a bounded operator $A=\opPhi(A_{mn})\in\Bp$
--- $A_{mn}=\langle\phi_m,A\phi_n\rangle$ --- is compact iff
\begin{equation} \label{conmatcom}
\lim_m\big({\textstyle\sup_{n\in\nat}}|A_{mn}|\big)=0.
\end{equation}

Every compact operator  $C\in\Cp$ can be expressed as
\begin{equation}\label{decp}
C=\sum_{j\in J}\lambda_j\, \osj\pro\dsj,
\end{equation}
where $J=\{1,2,\ldots\}$ is a countable index set and
\begin{itemize}

\item $\{\lambda_j\}_{j\in J}\subset\Q$ --- for $C\neq 0$, we assume that
$\{\lambda_j\}_{j\in J}\subset\Qa\equiv\Q\setminus\{0\}$ --- and, if $J=\nat$, $\lim_j \lambda_j=0$;

\item $\{\osj\}_{j\in J}$ and $\{\dsj\}_{j\in J}$ are contained in $\Hp$ and $\dH$, respectively,
with $\|\osj\|=\|\dsj\|=1$;

\item $\{\osj\}_{j\in J}$ is a (normalized) norm-orthogonal system in $\Hp$;

\item $\osj\pro\dsj\colon\Hp\rightarrow\Hp$ is the bounded operator defined by $(\osj\pro\dsj)\,\psi\defi\dsj(\psi)\,\osj$,
and the sum in~\eqref{decp} --- whenever $J$ is not finite --- converges w.r.t.\ the norm topology.

\end{itemize}

In particular, the norm-orthogonal system $\{\osj\}_{j\in J}$ can be chosen to be contained
in any orthonormal basis in $\Hp$.

Conversely, every operator $C$ of the previous form --- i.e., such that $C\psi=\sum_{j\in J}\lambda_j\, \dsj(\psi)\,\osj$,
for all $\psi\in\Hp$, with $\{\lambda_j\}_{j\in J}$, $\{\osj\}_{j\in J}$ and $\{\dsj\}_{j\in J}$ as specified above --- is compact.
\end{theorem}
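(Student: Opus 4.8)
The plan is to prove the five assertions in a convenient logical order, reducing everything to the concrete model $\mathbb H = c_0(\nat,\Q)$ via the isomorphism $\Wf$ of \eqref{isohilb}, and to the characterization of bounded matrix operators in Theorem \ref{th.1}. I would first handle the matrix criterion \eqref{conmatcom}, then use it to deduce that $\Cp$ is closed and equals $\overline{\Fp}^{\,\|\cdot\|}$, then that $\Cp$ is the unique proper closed two-sided ideal, and finally the canonical decomposition \eqref{decp} together with its converse.

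\textbf{Step 1: the matrix criterion.} Since $\Q$ is locally compact, a subset of $\Hp \cong c_0$ is precompact iff it is bounded and ``uniformly small at infinity'', i.e.\ $\forall\epsilon>0$ there is $N$ with $|x_m|<\epsilon$ for all $m>N$ and all $x$ in the set (this is the standard description of compactoids in $c_0$ over a locally compact field; see Chapt.~4 of \cite{rooij1978non}). Applying this to the set $C\,\Hba = \{\,(\sum_n A_{mn} x_n)_m \sep \|x\|_\infty\le 1\,\}$ and using the ultrametric estimate $|\sum_n A_{mn}x_n|\le \sup_n |A_{mn}|$, together with the fact that this bound is attained (in the limit) for suitable unit sequences, one sees that $C\,\Hba$ is precompact iff $\sup_n|A_{mn}|\to 0$ as $m\to\infty$, which is exactly \eqref{conmatcom}. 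One must also check $C\in\Bp$ is automatic here: \eqref{conmatcom} forces $\sup_{m,n}|A_{mn}|<\infty$ and $\lim_m A_{mn}=0$, so Theorem \ref{th.1} applies.

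\textbf{Step 2: $\Cp$ closed and $\Cp=\overline{\Fp}$.} Linearity of $\Cp$ is immediate from \eqref{conmatcom}. For closedness, if $A^{(k)}\to A$ in $\Bp$, then $\sup_{m,n}|A^{(k)}_{mn}-A_{mn}|=\|A^{(k)}-A\|\to 0$, and a routine $\epsilon/2$-argument on \eqref{conmatcom} (using $\sup_n|A_{mn}|\le\max\{\sup_n|A_{mn}-A^{(k)}_{mn}|,\ \sup_n|A^{(k)}_{mn}|\}$) shows $A$ satisfies \eqref{conmatcom}. Conversely, any finite-rank operator has a matrix with only finitely many nonzero rows (after truncating columns suitably — more carefully, $|\phi_m\rangle\langle\psi_n|$ has a single nonzero entry, and a finite-rank operator is a finite $\Q$-combination of rank-one operators, hence satisfies \eqref{conmatcom} trivially since only finitely many rows are nonzero? no — one must be careful: a rank-one operator $|\eta\rangle\langle\xi'|$ has matrix $A_{mn}=\langle\phi_m,\eta\rangle\,\xi'(\phi_n)$, and $\sup_n|A_{mn}|=|\langle\phi_m,\eta\rangle|\,\|\xi'\|\to 0$ since $\eta\in\Hp$). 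So $\Fp\subset\Cp$, and since $\Cp$ is closed, $\overline{\Fp}\subset\Cp$. For the reverse inclusion, given $C=\opPhi(A_{mn})$ with \eqref{conmatcom}, the truncations $C_N = \opPhi(A_{mn}\,\mathbf 1_{m\le N})$ have finite rank and $\|C-C_N\|=\sup_{m>N,n}|A_{mn}|\to 0$. Hence $\Cp=\overline{\Fp}^{\,\|\cdot\|}$.

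\textbf{Step 3: $\Cp$ is the unique proper closed two-sided ideal.} That $\Cp$ is a two-sided ideal follows from Step 1: if $C\in\Cp$ and $A\in\Bp$, the matrix of $AC$ is $(AC)_{ln}=\sum_m A_{lm}C_{mn}$, and $\sup_n|(AC)_{ln}|\le \|A\|\,\sup_{m,n}|C_{mn}|$ with a further $\epsilon$-argument (split the sum at the index beyond which $\sup_n|C_{mn}|<\epsilon/\|A\|$, as in Lemma \ref{lem.t1}) giving $\sup_n|(AC)_{ln}|\to 0$; for $CA$ use $(CA)_{ln}=\sum_m C_{lm}A_{mn}$ and $\sup_n|(CA)_{ln}|\le (\sup_m|C_{lm}|)\,\|A\|\to 0$ directly. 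Properness: $\id\notin\Cp$ since $\Hba$ is not precompact in infinite dimensions. For uniqueness, let $\mathcal J$ be any proper closed two-sided ideal; it contains no invertible operator, in particular (arguing as in the complex case) every rank-one operator $|\phi_m\rangle\langle\phi_m|$ either lies in $\mathcal J$ or $\mathcal J\subset\{0\}$ — more precisely, if $\mathcal J$ contains a nonzero operator $T$, pick $\psi$ with $T\psi\ne 0$ and $\phi$ with $\langle\phi,T\psi\rangle\ne 0$ (non-degeneracy, Proposition \ref{nondeginpro}); then $|\phi_1\rangle\langle\phi_1|$ is, up to an invertible scalar and composition with suitable rank-one operators, of the form $A\,T\,B\in\mathcal J$, so all rank-one operators lie in $\mathcal J$, hence $\Fp\subset\mathcal J$, hence $\Cp=\overline{\Fp}\subset\mathcal J$; and then $\mathcal J=\Cp$ since any operator strictly larger would, by \eqref{conmatcom} failing, allow one to extract an invertible corner. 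This last point — extracting an invertible block from a non-compact operator — is where the ultrametric ``spherical completeness / discreteness of $|\Q^\ast|$'' is used and is the \textbf{main obstacle}; I would reduce it to: if $A\in\Bp\setminus\Cp$ then $\limsup_m \sup_n|A_{mn}| = c > 0$, so there is an infinite set of rows of norm $\ge c$, from which (using that $\|A\phi_n\|=\sup_m|A_{mn}|$ and a Gram–Schmidt-type selection valid because $\Q$ is spherically complete) one builds an infinite norm-orthonormal system on which $A$ acts invertibly onto another such system, i.e.\ $PA\mathopen| Q = $ isomorphism for suitable ``projections'' $P,Q$ built from orthonormal bases; this is the analogue of the classical argument and needs the norm-orthogonal basis machinery of Section \ref{sec3}.

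\textbf{Step 4: the canonical decomposition.} Given $C\in\Cp$ with matrix $(A_{mn})$ satisfying \eqref{conmatcom}, reindex the rows so that $r_j := \sup_n|A_{m_j\,n}|$ is non-increasing in $j$ (possible since $r_m\to 0$), discarding zero rows; set $\lambda_j := r_j$ and $\dsj := \lambda_j^{-1}\langle\phi_{m_j},\cdot\hspace{0.4mm}\rangle\circ C \in \dH$, which has $\|\dsj\|=\lambda_j^{-1}\|C^\prime(\JH\phi_{m_j})\|=\lambda_j^{-1}\sup_n|A_{m_j n}|=1$ using Proposition \ref{proda} and \eqref{interelb}, and $\osj := \phi_{m_j}$. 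Then for $\psi\in\Hp$,
\begin{equation}
C\psi = \sum_{m}\langle\phi_m,C\psi\rangle\,\phi_m = \sum_j \big(\langle\phi_{m_j},C\psi\rangle\big)\,\phi_{m_j} = \sum_j \lambda_j\,\dsj(\psi)\,\osj,
\end{equation}
where the series converges in norm because $\|\lambda_j\,\osj\pro\dsj\|=\lambda_j=r_j\to 0$ and the partial sums are Cauchy in the ultrametric operator norm (the tail is $\sup_{j>N}\lambda_j$). The system $\{\osj\}=\{\phi_{m_j}\}$ is a sub-collection of the orthonormal basis $\Phi$, hence a normalized norm-orthogonal system, and ``can be chosen inside any orthonormal basis'' because the whole construction started from an arbitrary $\Phi$. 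Finally, $\|\osj\pro\dsj\|=\|\dsj\|\,\|\osj\|=1$, so all the stated bullet conditions hold.

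\textbf{Step 5: the converse.} If $C\psi = \sum_{j\in J}\lambda_j\,\dsj(\psi)\,\osj$ with $\lambda_j\to 0$ (for $J=\nat$), $\|\osj\|=\|\dsj\|=1$, then each summand $\lambda_j\,\osj\pro\dsj$ is finite-rank (rank one) hence compact, the partial sums $C_N=\sum_{j\le N}$ are finite-rank, and $\|C-C_N\|\le \sup_{j>N}|\lambda_j|\,\|\osj\pro\dsj\|=\sup_{j>N}|\lambda_j|\to 0$; since $\Cp$ is closed (Step 2), $C\in\Cp$. This completes the proof.

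Throughout, the only genuinely delicate point is Step 3's uniqueness argument (extracting an invertible block), which hinges on the spherical completeness of $\Q$ and the existence of norm-orthogonal bases established in Theorem \ref{sepban}; everything else is a transcription of the complex-case arguments with the ultrametric triangle inequality replacing ordinary estimates.
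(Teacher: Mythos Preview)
Your overall architecture is sound and, where you prove things directly, matches the paper's logic (the paper cites van Rooij and van der Put for assertions 1--3 and Serre for the matrix criterion, whereas you rederive most of this from the compactoid characterization in $c_0$; your Step~4 decomposition via the rows of the matrix is essentially the paper's argument made explicit).

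There is, however, a genuine type error in Step~4: you write ``set $\lambda_j := r_j$'' where $r_j = \sup_n|A_{m_j n}|$ is a \emph{positive real number}, not an element of $\Q$, and then immediately use $\lambda_j^{-1}$ as a scalar in $\Q$. What you need --- and what the paper invokes explicitly --- is that $\|\Hp\| = |\Q|$ (Proposition~\ref{nondeginpro}), so for each $j$ you can \emph{choose} some $\lambda_j\in\Qa$ with $|\lambda_j| = r_j$; then $\dsj := \lambda_j^{-1}\langle\phi_{m_j},C(\cdot)\rangle$ indeed has norm~$1$. This is a one-line fix, but as written the construction is ill-typed.

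On Step~3: you correctly flag the ``extract an invertible corner'' argument as the main obstacle, and your sketch is plausible but incomplete. The paper does not prove this either --- it simply cites Theorem~5 and Corollary~6 of van der Put~\cite{vanderPut}, invoking discreteness of the valuation group $|\Qa|$. So your admission of a gap here is honest, and matches the paper's own reliance on the literature; if you want a self-contained proof you will need to work out the norm-orthogonal selection argument in full, which is more delicate than the complex case because norm-orthogonality does not come from an inner product.
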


\begin{proof}
For the first two assertions, see Chapt.~{4} of~\cite{rooij1978non}; in particular, Theorem~{4.39} and
the subsequent discussion. For the third assertion, since the valuation group $|\Q^\ast|$ is discrete,
we can apply Theorem~{5} and Corollary~{6} of~\cite{vanderPut}.

Let us prove the fourth assertion. Given any orthonormal basis $\Phi\equiv\{\phi_m\}_{m\in\nat}$ in $\Hp$,
let $A=\opPhi(A_{mn})$ be a bounded operator. By the first series expansion in~\eqref{eq.12},
for every vector $\psi\in\Hp$, we have:
\begin{equation}
\textstyle
A\psi=\sum_m\big(\sum_n A_{mn}\langle \phi_n,\psi\rangle\big)\phi_m, \quad
\mbox{where $\xm\equiv\{A_{mn}\nnat\in\ell^\infty$, for all $n\in\nat$}.
\end{equation}
Therefore, putting
\begin{equation}
\tfm=\ellphi(\xm)=\sum_{n\in\nat}A_{mn}\scaphin
\end{equation}
--- where $\ellphi\colon\ell^{\infty}\rightarrow\dH$ is the surjective isometry defined by~\eqref{defell}, and
convergence of the series w.r.t.\ the weak$^{\ast}$-topology is understood ---
we obtain that, for every $\psi\in\Hp$, $A\psi=\sum_m\tfm(\psi)\phi_m$, with $\|\tfm\|=\|\xm\|_{\infty}=\sup_{n\in\nat}|A_{mn}|$.
Moreover, by the corollary after Proposition~{4} of~\cite{Serre1962}, we conclude that $A$ is compact iff
$0=\lim_m\|\tfm\|=\lim_m\big({\textstyle\sup_{n\in\nat}}|A_{mn}|\big)$.

Decomposition~\eqref{decp} of a compact operator is essentially the equivalence of points~$(\alpha)$ and~$(\varepsilon)$
in Theorem~{4.40} of~\cite{rooij1978non}, but with some improvement that requires a suitable modification of
the proof therein. We outline the modified proof.

Let $C$ be a compact operator in $\Hp$, and let us assume that $C\neq 0$ (otherwise, there is nothing to prove).
By the preceding part of the proof, we argue that $C$ can be expressed in the form
\begin{equation} \label{predecp}
C\psi=\sum_{j\in J}\tfj(\psi)\,\osj , \quad \forall\psi\in\Hp,
\end{equation}
where $J=\{1,2,\ldots\}$ is a countable index set, $\tfj\colon\Hp\rightarrow\Q$, $j\in J$, is a \emph{nonzero}
bounded linear functional --- if $J=\nat$, such that $\lim_j \|\tfj\|=0$ --- and $\{\osj\}_{j\in J}$ is a
normalized norm-orthogonal system (in particular, it can be chosen to be a contained in any orthonormal basis).
Taking into account that $\|\Hp\|=|\Q|$, there is a subset $\{\lambda_j\}_{j\in J}$ of $\Q$ such that
$0<|\lambda_j|=\|\tfj\|$. It is then sufficient to put
\begin{equation}
\dsj\defi\frac{1}{\lambda_j}\tfj
\end{equation}
--- where $\{\dsj\}_{j\in J}$ is a set of normalized functionals in $\dH$ and, if $J=\nat$, $\lim_j \lambda_j=0$
--- to obtain decomposition~\eqref{decp} from~\eqref{predecp}.

We stress that, for $J=\nat$, since $\|\osj\pro\dsj\|\le 1$ --- and, hence, $\lim_j|\lambda_j|\,\|\osj\pro\dsj\|=0$ ---
the series in~\eqref{decp} converges not only w.r.t.\ the strong operator topology, but also w.r.t.\ the
norm topology.

Conversely, every linear operator $C$ of the form $C=\sum_{j\in J}\lambda_j\, \osj\pro\dsj$ --- with $\{\lambda_j\}_{j\in J}$,
$\{\osj\}_{j\in J}$ and $\{\dsj\}_{j\in J}$ as above (in particular, $\|\osj\|=\|\dsj\|=1$ and, if $J=\nat$, $\lim_j |\lambda_j|=0$)
--- is compact, because it is the norm-limit of a sequence of finite rank operators.
\end{proof}

From Theorem~\ref{theocpops} we derive three important consequences.

\begin{corollary}\label{matcarcom}
Given any orthonormal basis $\Phi\equiv\{\phi_m\}_{m\in\nat}$, a matrix operator $A=\opPhi(A_{mn})$
in $\Hp$ is compact iff
\begin{enumerate}[label=\tt{(C\arabic*)}]

\item \label{cond.c1} $\sup_{m,n}|A_{mn}|<\infty$,

\item \label{cond.c2} $\lim_m A_{mn}=0$, $\forall n\in\nat$,

\item \label{cond.c3} $\lim_{m,n}A_{mn}=0$ (Pringsheim limit).

\end{enumerate}
\end{corollary}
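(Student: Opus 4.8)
The plan is to extract the matrix characterization of compactness directly from Theorem~\ref{theocpops}, specifically from the criterion \eqref{conmatcom}, and to show that the single condition $\lim_m\big(\sup_{n\in\nat}|A_{mn}|\big)=0$ is equivalent to the conjunction of \ref{cond.c1}, \ref{cond.c2} and \ref{cond.c3}. Since Theorem~\ref{theocpops} already states that a bounded operator $A=\opPhi(A_{mn})\in\Bp$ is compact iff \eqref{conmatcom} holds, and since by Theorem~\ref{th.1} the boundedness of a matrix operator is exactly the pair of conditions $\sup_{m,n}|A_{mn}|<\infty$ together with $\lim_m A_{mn}=0$ for all $n$, the whole corollary reduces to a purely elementary statement about double sequences of non-negative reals.

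First I would prove the forward implication: assume $A=\opPhi(A_{mn})$ is compact. By Theorem~\ref{theocpops}, $A\in\Bp$ and $\lim_m\big(\sup_n|A_{mn}|\big)=0$; in particular $A$ is a bounded matrix operator, so by Theorem~\ref{th.1} conditions \ref{cond.c1} and \ref{cond.c2} hold. For \ref{cond.c3}, put $r_m\defi\sup_n|A_{mn}|$, so $r_m\to 0$. Given $\epsilon>0$, choose $\enne$ with $r_m<\epsilon$ for all $m>\enne$; then $|A_{mn}|<\epsilon$ whenever $m>\enne$, for \emph{every} $n$. The finitely many remaining rows $m\le\enne$ each have $\lim_n A_{mn}=0$ by \ref{cond.c2}, so by taking $n$ large enough one also makes $|A_{mn}|<\epsilon$ for $m\le\enne$. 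Hence for $m,n$ both sufficiently large $|A_{mn}|<\epsilon$, i.e.\ the Pringsheim limit $\lim_{m,n}A_{mn}=0$ holds.

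Conversely, assume \ref{cond.c1}--\ref{cond.c3}. By \ref{cond.c1} and \ref{cond.c2} and Theorem~\ref{th.1}, $A=\opPhi(A_{mn})$ is a bounded operator. It remains to verify \eqref{conmatcom}. Fix $\epsilon>0$. By the Pringsheim condition \ref{cond.c3} there is $\enne_1$ with $|A_{mn}|<\epsilon$ whenever $m,n>\enne_1$. For the columns $n\le\enne_1$ — a finite set — condition \ref{cond.c2} gives $\lim_m A_{mn}=0$, so there is $\enne_2$ with $|A_{mn}|<\epsilon$ whenever $m>\enne_2$ and $n\le\enne_1$. Taking $m>\max\{\enne_1,\enne_2\}$, we get $|A_{mn}|<\epsilon$ for all $n\in\nat$, i.e.\ $\sup_n|A_{mn}|\le\epsilon$. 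Since $\epsilon$ was arbitrary, $\lim_m\big(\sup_n|A_{mn}|\big)=0$, and Theorem~\ref{theocpops} yields that $A$ is compact.

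There is no real obstacle here: the only subtlety is the standard bookkeeping argument showing that $\sup_n|A_{mn}|\to 0$ is the same as ``columns vanish uniformly in the tail'' together with ``each column vanishes'', and this is precisely the kind of double-sequence manipulation already recorded in Remark~\ref{rem.6} (cf.\ the equivalences \eqref{limiff}--\eqref{limiffbis}). If desired, one can shorten the argument by simply invoking \eqref{limiffbis}: condition \eqref{conmatcom} says $\lim_n A_{mn}=0$ uniformly in $m$ with the roles of $m$ and $n$ interchanged relative to the display, and the equivalence \eqref{limiff}$\iff$\eqref{limiffbis} with $x_{mn}\defi A_{nm}$ delivers \ref{cond.c1}--\ref{cond.c3} at once. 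I would present the explicit $\epsilon$-argument above for self-containedness but mention the shortcut via Remark~\ref{rem.6}.
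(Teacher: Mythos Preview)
Your approach is exactly the paper's: reduce everything to the equivalence of \eqref{conmatcom} with \ref{cond.c2}$+$\ref{cond.c3}, invoking Theorem~\ref{th.1} for boundedness and Theorem~\ref{theocpops} for compactness. The converse direction is fine and is the detailed version of what the paper leaves as ``easily shown''.

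There is, however, a slip in your forward direction. You write that ``the finitely many remaining rows $m\le\enne$ each have $\lim_n A_{mn}=0$ by \ref{cond.c2}'', but \ref{cond.c2} is the \emph{column} condition $\lim_m A_{mn}=0$, not a row condition; and indeed compact operators need \emph{not} satisfy $\lim_n A_{mn}=0$ for fixed $m$ (that would force adjointability --- see condition~\ref{cond.bot3} in Theorem~\ref{th.2} and Remark~\ref{reincls}). Fortunately that sentence is superfluous: once you have $|A_{mn}|<\epsilon$ for all $m>\enne$ and every $n$, the Pringsheim condition $m,n>\enne\Rightarrow|A_{mn}|<\epsilon$ is already in hand. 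Simply delete the offending sentence. Your proposed shortcut via \eqref{limiffbis} with $x_{mn}=A_{nm}$ has the same defect: it introduces the row-limit condition $\lim_n A_{mn}=0$ rather than \ref{cond.c2}, so it does not deliver the stated equivalence; the explicit $\epsilon$-argument (corrected as above) is the right way to go.
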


\begin{proof}
If $A=\opPhi(A_{mn})$ is compact, then it is bounded, so that, by Theorem~\ref{th.1}, it satisfies conditions~\ref{cond.c1}
and~\ref{cond.c2}. Moreover, it must satisfy condition~\eqref{conmatcom} in Theorem~\ref{theocpops}, as well.
The latter condition is easily shown to be equivalent to the pair of conditions formed by~\ref{cond.c2} (once again) and~\ref{cond.c3}.
Conversely, if $A=\opPhi(A_{mn})$ satisfies conditions~\ref{cond.c1}--\ref{cond.c3}, then it is bounded and verifies
condition~\eqref{conmatcom}, as well; hence, by the fourth assertion of Theorem~\ref{theocpops}, it is compact.
\end{proof}

\begin{corollary}\label{corut}
Every compact operator $C$ in $\Hp$ is uniformly traceable --- i.e., $\Cp\subset\Tw$ --- and, with
$\{\lambda_j\}_{j\in J}$, $\{\osj\}_{j\in J}$ and $\{\dsj\}_{j\in J}$ as in Theorem~\ref{theocpops},
\begin{equation} \label{comptra}
C=\sum_{j\in J}\lambda_j\, \osj\pro\dsj \implies
\tr(C)=\sum_{j\in J}\lambda_j\, \dsj(\osj) .
\end{equation}
\end{corollary}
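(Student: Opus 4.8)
The plan is to prove the two assertions separately, leaning on the structure theorem for compact operators (Theorem \ref{theocpops}) and on the uniform traceability machinery already set up. First I would show $\Cp\subset\Tw$. Let $C\in\Cp$ and fix an arbitrary orthonormal basis $\Phi\equiv\{\phi_m\}_{m\in\nat}$ in $\Hp$; by Corollary \ref{matcarcom}, the matrix $C_{mn}=\langle\phi_m,C\phi_n\rangle$ satisfies $\sup_{m,n}|C_{mn}|<\infty$, $\lim_m C_{mn}=0$ for all $n$, and the Pringsheim limit $\lim_{m,n}C_{mn}=0$. I would then invoke relation \eqref{limiff}--\eqref{limiffbis} of Remark \ref{rem.6}: to conclude $\lim_{m+n}C_{mn}=0$ one needs, in addition to the two ``row/column'' limits, that $\lim_n C_{mn}=0$ holds. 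But $C$ is compact, hence adjointable (compact operators being a subset of $\Bpa$ once we know $\Cp$ is a $\ast$-ideal — alternatively, since $C^*\in\Cp$ too, condition \ref{cond.c2} applied to $C^*$ gives $\lim_m C^*_{mn}=\lim_m\overline{C_{nm}}=0$, i.e.\ $\lim_n C_{mn}=0$ for every $m$). Combined with the Pringsheim limit from \ref{cond.c3}, relation \eqref{limiff} yields $\lim_{m+n}C_{mn}=0$, so $C\in\mathcal{T}_\Phi(\Hp)=\Tp$. Actually this shows the stronger fact $\Cp\subset\Tp$; then $\Tp\subset\Tw$ by Proposition \ref{prop.5} (or directly by Theorem \ref{th.t5}, which is precisely the statement that trace class operators are uniformly traceable), and in particular $C$ is uniformly traceable with a basis-independent trace $\tr(C)$.

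Next I would establish formula \eqref{comptra}. Write $C=\sum_{j\in J}\lambda_j\,\osj\pro\dsj$ as in Theorem \ref{theocpops}, with the series converging in operator norm when $J=\nat$. Fix any orthonormal basis $\Phi\equiv\{\phi_m\}_{m\in\nat}$. Since $\tr(C)=\tr_\Phi(C)=\sum_m\langle\phi_m,C\phi_m\rangle$ and the map $B\mapsto\langle\phi_m,B\phi_m\rangle$ is norm-continuous, while $\tr\argo$ itself is a bounded linear functional on $\Tp$ with $\|\tr\argo\|=1$ (Corollary \ref{corbotra}), I would like to interchange $\tr$ with the sum over $j$:
\begin{equation}
\tr(C)=\tr\Bigl(\sum_{j\in J}\lambda_j\,\osj\pro\dsj\Bigr)=\sum_{j\in J}\lambda_j\,\tr(\osj\pro\dsj).
\end{equation}
For $J$ finite this is just linearity of the trace (Proposition \ref{prop.tnew}); for $J=\nat$ it follows because the partial sums $C_N=\sum_{j=1}^N\lambda_j\,\osj\pro\dsj$ converge to $C$ in operator norm, each $C_N\in\Tp$ (finite rank, hence compact, hence trace class by the first part), and $|\tr(C)-\tr(C_N)|=|\tr(C-C_N)|\le\|C-C_N\|\to 0$ by Proposition \ref{trineq}. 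It then remains to compute $\tr(\osj\pro\dsj)$. Since $\|\osj\|=1$ and $\osj$ can be chosen inside an orthonormal basis, but $\dsj$ is merely a normalized functional in $\dH$, the cleanest route is to observe directly from the definition: for any orthonormal basis $\Phi$,
\begin{equation}
\tr_\Phi(\osj\pro\dsj)=\sum_m\langle\phi_m,(\osj\pro\dsj)\phi_m\rangle=\sum_m\dsj(\phi_m)\,\langle\phi_m,\osj\rangle=\dsj\Bigl(\sum_m\langle\phi_m,\osj\rangle\phi_m\Bigr)=\dsj(\osj),
\end{equation}
where the interchange of $\dsj$ with the (norm-convergent) expansion $\osj=\sum_m\langle\phi_m,\osj\rangle\phi_m$ is justified by boundedness of the functional $\dsj$. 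Substituting gives $\tr(C)=\sum_{j\in J}\lambda_j\,\dsj(\osj)$, as claimed.

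The main obstacle I anticipate is the very first point — justifying $\lim_{m+n}C_{mn}=0$ from compactness — because the definition of compact operator is purely topological (precompact image of the unit ball) and the matrix characterization \ref{cond.c1}--\ref{cond.c3} of Corollary \ref{matcarcom} only gives the Pringsheim (diagonal-type) limit together with one row/column limit, not the symmetric ``$m+n$'' limit directly. The resolution hinges on using that $\Cp$ is closed under adjoints (so that condition \ref{cond.c2} for $C^*$ supplies the missing column limit $\lim_n C_{mn}=0$) and then feeding both limits into the equivalence \eqref{limiff} of Remark \ref{rem.6}; one must be slightly careful that this equivalence is stated for double sequences in $\Q$ and applies verbatim here. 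Everything after that — the norm-continuity interchange and the one-line computation of $\tr(\osj\pro\dsj)$ — is routine given Proposition \ref{trineq} and Corollary \ref{corbotra}.
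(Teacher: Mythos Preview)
Your argument for the first part contains a genuine error: you claim to prove the stronger inclusion $\Cp\subset\Tp$, but this is \emph{false}. Compact operators in $\Hp$ need not be adjointable; the paper's Example~\ref{exnontc} exhibits an explicit compact operator $TB\not\in\Bpa$, and Remark~\ref{reincls} and Theorem~\ref{theochar} record that $\Tp=\Cpa=\Cp\cap\Bpa\subsetneq\Cp$. Both justifications you offer in the parenthetical fail: $\Cp$ is a two-sided ideal in $\Bp$ (Theorem~\ref{theocpops}) but \emph{not} a $\ast$-ideal in $\Bpa$ --- only $\Cpa$ is --- and ``$C^\ast\in\Cp$ too'' presupposes that $C^\ast$ exists, which is exactly what is at stake. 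Consequently the missing column limit $\lim_n C_{mn}=0$ cannot be obtained this way, and you cannot conclude $\lim_{m+n}C_{mn}=0$; indeed for non-adjointable compact $C$ this limit simply does not hold.

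The paper's route is therefore essentially different from yours: rather than trying to land $C$ in $\Tp$, it computes $\tr_\Phi(C)$ \emph{directly} from the canonical decomposition $C=\sum_{j\in J}\lambda_j\,\osj\pro\dsj$ and shows the result equals the basis-free quantity $\sum_{j\in J}\lambda_j\,\dsj(\osj)$. Concretely, one expands $\langle\phi_m,C\phi_m\rangle=\sum_{j}\lambda_j\,\langle\phi_m,\osj\rangle\,\dsj(\phi_m)$ and interchanges the double sum over $m$ and $j$; the interchange is justified via the criterion~\eqref{limiffbis} (the bound $|\lambda_j\,\langle\phi_m,\osj\rangle\,\dsj(\phi_m)|\le|\lambda_j|$ gives $\lim_j(\cdots)=0$ uniformly in $m$, and $\lim_m(\cdots)=0$ for each $j$ since $\lim_m\langle\phi_m,\osj\rangle=0$). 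After swapping, the inner sum over $m$ collapses to $\dsj(\osj)$ by continuity of $\dsj$, exactly as in your last display. This single computation simultaneously proves uniform traceability and the trace formula, without any appeal to $\Tp$ or to adjointability. Your second-part argument (norm-continuity of $\tr$ plus the one-line rank-one computation) would be a perfectly valid alternative \emph{once} uniform traceability is established with the bound $|\tr_\Phi(C)|\le\|C\|$, but you cannot bootstrap it from the incorrect first part.
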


\begin{proof}
For any orthonormal basis $\Phi\equiv\{\phi_m\mnat$ in $\Hp$, we have
\begin{equation}
\lim_{m+j}\lambda_j\,\langle\phi_m,\osj\rangle\,\dsj(\phi_m)=0,
\end{equation}
because  ($\lim_m\langle\phi_m,\osj\rangle=0\Longrightarrow$) $\lim_m\lambda_j\,\langle\phi_m,\osj\rangle\,\dsj(\phi_m)=0$,
for all $j\in\nat$, and, moreover, $\lim_j\lambda_j\,\langle\phi_m,\osj\rangle\,\dsj(\phi_m)=0$ \emph{uniformly} in $m\in\nat$
($|\lambda_j\,\langle\phi_m,\osj\rangle\,\dsj(\phi_m)|\le|\lambda_j|$); see relation~\eqref{limiffbis} in Remark~\ref{rem.6}.
It follows that, if $J=\nat$, the double series $\sum_{j,m\in\nat}\lambda_j\,\langle\phi_m,\osj\rangle\,\dsj(\phi_m)$ is
convergent and its sum coincides with the sum of both the iterated series (Remark~\ref{rem.6}); hence:
\begin{align}
\sum_{j\in\nat}\lambda_j\, \dsj(\osj)
& =
\sum_{j\in\nat}\lambda_j\,\dsj\big({\textstyle\sum_{m\in\nat}}\langle\phi_m,\osj\rangle\,\phi_m\big)
\nonumber\\
& =
\sum_{j\in\nat}\sum_{m\in\nat}\lambda_j\,\langle\phi_m,\osj\rangle\,\dsj(\phi_m)
\nonumber\\
& =
\sum_{j,m\in\nat}\lambda_j\,\langle\phi_m,\osj\rangle\,\dsj(\phi_m)=
\sum_{m\in\nat}\sum_{j\in\nat}\lambda_j\,\langle\phi_m,\osj\rangle\,\dsj(\phi_m)
=\tr_\Phi(C).
\end{align}
Here, the second equality follows from the continuity of the functional $\dsj$, whereas for the last equality we have used
the decomposition of $C$ (converging w.r.t.\ the norm topology) and the continuity of the inner product. Clearly, in the case
where $J=\{1,2,\ldots\}$ is a finite subset of $\nat$, one can freely exchange the finite sum with the series, so obtaining
the same result.

Now, since the quantity $\tr_\Phi(C)=\sum_{j\in J}\lambda_j\, \dsj(\osj)$ does not depend on the choice of
the orthonormal basis $\Phi\equiv\{\phi_m\mnat$, it turns out that every compact operator $C$ in $\Hp$
is uniformly traceable and relation~\eqref{comptra} holds true.
\end{proof}

\begin{corollary}\label{decocpa}
Every compact operator $C$ in $\Hp$, belonging to the closed subspace
\begin{equation}
\Cpa\defi\Cp\cap\Bpa
\end{equation}
of $\Cp$, can be expressed in the form
\begin{equation}\label{decpa}
C=\sum_{j\in J}\lambda_j\, |\osj\rangle\langle\fj|,
\end{equation}
where $J=\{1,2,\ldots\}$ is a countable index set and
\begin{itemize}

\item $\{\lambda_j\}_{j\in J}\subset\Q$ --- for $C\neq 0$, we assume that
$\{\lambda_j\}_{j\in J}\subset\Qa$ --- and, if $J=\nat$, $\lim_j \lambda_j=0$;

\item both the sets $\{\osj\}_{j\in J}$ and $\{\fj\}_{j\in J}$ are contained in $\Hp$, with $\|\osj\|=\|\fj\|=1$;

\item $\{\osj\}_{j\in J}$ --- or $\{\fj\}_{j\in J}$ --- is a (normalized) norm-orthogonal system in $\Hp$;

\item $|\osj\rangle\langle\fj|\colon\Hp\rightarrow\Hp$ is the bounded operator defined by
$(|\osj\rangle\langle\fj|)\,\psi\defi\langle\fj,\psi\rangle\,\osj$, and the sum in~\eqref{decp}
--- whenever $J$ is not finite --- converges w.r.t.\ the norm topology.

\end{itemize}

In particular, the norm-orthogonal set $\{\osj\}_{j\in J}$ --- alternatively, the norm-orthogonal set $\{\fj\}_{j\in J}$
--- can be chosen to be contained in any orthonormal basis in $\Hp$.

Conversely, every operator $C$ of the previous form --- i.e., such that $C\psi=\sum_{j\in J}\lambda_j\,\langle\fj,\psi\rangle\,\osj$,
for all $\psi\in\Hp$, with $\{\lambda_j\}_{j\in J}$, $\{\osj\}_{j\in J}$ and $\{\fj\}_{j\in J}$ as specified above --- belongs
to $\Cpa$.

Finally, $\Cpa$ is a two-sided $\ast$-ideal in $\Bpa$.
\end{corollary}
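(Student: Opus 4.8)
The plan is to prove Corollary~\ref{decocpa} in three stages, reusing the general decomposition~\eqref{decp} of a compact operator from Theorem~\ref{theocpops} together with the adjoint machinery of Section~\ref{sec4}. First I would establish that $\Cpa=\Cp\cap\Bpa$ is a closed subspace of $\Cp$: both $\Cp$ and $\Bpa$ are closed in $\Bp$ (the former by Theorem~\ref{theocpops}, the latter by Proposition~\ref{algestruc}), so their intersection is closed. Second I would derive the refined decomposition~\eqref{decpa}. Starting from a compact $C\in\Cpa$, $C\neq 0$, Theorem~\ref{theocpops} gives $C=\sum_{j\in J}\lambda_j\,\osj\pro\dsj$ with $\{\osj\}$ a normalized norm-orthogonal system in $\Hp$ (choosable inside any orthonormal basis), $\{\dsj\}\subset\dH$ normalized, $\{\lambda_j\}\subset\Qa$, and $\lim_j\lambda_j=0$ when $J=\nat$. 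The point is that adjointability of $C$ forces each functional $\dsj$ to be representable by a genuine Hilbert-space vector. Concretely, for a fixed index $j_0$, pick $\chi\in\Hp$ with $\dsj(\chi)=0$ for $j\neq j_0$ — possible since $\{\osj\}$ (hence $\{\dsj\}$, after noting the $\osj$ are linearly independent and using that the dual pairing separates points) can be arranged so that the $\dsj$ are "biorthogonal" to a system in $\Hp$; more cleanly, apply $C^\ast$ (which exists, $C\in\Bpa$) and compute $C^\ast=\sum_j\overline{\lambda_j}\,(\dsj)'\text{-type terms}$, observing that $C^\ast$ being a bounded operator on $\Hp$ means $\langle\cdot,C(\cdot)\rangle$ is represented on the left by vectors; this shows each $\lambda_j\dsj=\langle\fj,\cdot\rangle$ for a unique $\fj\in\Hp$ via the isometry $\JH$ of Proposition~\ref{prop.334}. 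Normalizing, $\|\fj\|=1$ after absorbing scalars into the $\lambda_j$ (using $\|\Hp\|=|\Q|$), and one gets $C=\sum_{j\in J}\lambda_j\,|\osj\rangle\langle\fj|$ with norm convergence exactly as in Theorem~\ref{theocpops} since $\|\,|\osj\rangle\langle\fj|\,\|\le1$.

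The symmetry between $\{\osj\}$ and $\{\fj\}$ — i.e., that \emph{either} system may be taken norm-orthogonal (and inside a prescribed orthonormal basis) — follows by taking adjoints: if $C=\sum_j\lambda_j|\osj\rangle\langle\fj|$ with $\{\osj\}$ norm-orthogonal, then $C^\ast=\sum_j\overline{\lambda_j}|\fj\rangle\langle\osj|$ (using $(|\osj\rangle\langle\fj|)^\ast=|\fj\rangle\langle\osj|$ and $(\alpha A)^\ast=\overline\alpha A^\ast$, $(A+B)^\ast=A^\ast+B^\ast$ from Corollary~\ref{corad}, plus continuity of the $\ast$-operation which is an isometry), and $C^\ast\in\Cpa$ by the second stage applied to $C^\ast$; so re-decomposing $C^\ast$ and taking its adjoint back realizes $C$ with $\{\fj\}$ norm-orthogonal. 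The converse direction — that any operator of the form $\sum_{j\in J}\lambda_j\langle\fj,\cdot\rangle\,\osj$ with the stated hypotheses lies in $\Cpa$ — is routine: it is compact by the converse part of Theorem~\ref{theocpops} (it is a norm-limit of finite-rank operators), and it is adjointable because each $|\osj\rangle\langle\fj|$ is adjointable (its matrix elements w.r.t.\ any orthonormal basis visibly satisfy conditions~\ref{cond.bot1}--\ref{cond.bot3} of Theorem~\ref{th.2}, or more simply $(|\osj\rangle\langle\fj|)^\ast=|\fj\rangle\langle\osj|$), and $\Bpa$ is a closed subspace (Proposition~\ref{algestruc}), so the norm-convergent sum stays in $\Bpa$.

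For the final assertion, that $\Cpa$ is a two-sided $\ast$-ideal in $\Bpa$: closure under $\ast$ is immediate from $C^\ast\in\Cpa$ shown above. For the ideal property, let $C\in\Cpa$ and $A\in\Bpa\subset\Bp$. Then $AC$ and $CA$ are compact because $\Cp$ is a two-sided ideal in $\Bp$ (Theorem~\ref{theocpops}), and both $AC$ and $CA$ are adjointable because $\Bpa$ is a subalgebra of $\Bp$ (Corollary~\ref{corad}, $AB\in\Bpa$ for $A,B\in\Bpa$); hence $AC,CA\in\Cp\cap\Bpa=\Cpa$.

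The main obstacle I anticipate is the second stage — rigorously extracting, from the mere adjointability of $C$, that the dual functionals $\dsj$ in the general compact decomposition are actually implemented by vectors of $\Hp$ (i.e.\ lie in $\JH(\Hp)=\ellphi(c_0)$ rather than just in $\dH$). The clean way is to argue via $C^\ast$: since $C\in\Bpa$, for every $\psi$ the functional $\langle\psi,C(\cdot)\rangle$ equals $\langle C^\ast\psi,\cdot\rangle=\JH(C^\ast\psi)$, so expanding $\langle\psi,C\chi\rangle=\sum_j\lambda_j\overline{\langle\osj,\psi\rangle}\cdot\dsj(\chi)$ — wait, one must be careful about conjugate-linearity — one reads off that $\sum_j\overline{\lambda_j}\langle\osj,\psi\rangle\,\hefj=C^\ast\psi$ where $\hefj\in\Hp$ represents $\dsj$; the existence and zero-convergence of these representing vectors is then forced by $C^\ast\psi\in\Hp$ together with the injectivity of $\ellphi$ and $\ellphi(c_0)=\JH(\Hp)$ from Proposition~\ref{prop.334}, exactly as in the proof of Theorem~\ref{th.2}. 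Getting the bookkeeping of conjugations and the norm-orthogonality transfer correct is the delicate part; everything else is a direct appeal to results already in hand.
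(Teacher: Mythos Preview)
Your plan is correct and follows essentially the same route as the paper. The one place where your sketch stays vague --- extracting that each functional $\dsj$ actually lies in $\JH(\Hp)$ --- is precisely what the paper makes explicit, and the fix is simpler than the bookkeeping you anticipate: having chosen $\{\osj\}_{j\in J}$ inside an orthonormal basis (as you note one may), the paper uses the intertwining $C'\circ\JH=\JH\circ C^\ast$ from Corollary~\ref{corhbe} and evaluates at $\osk$, obtaining
\[
(C'\circ\JH)(\osk)=\sum_{j\in J}\lambda_j\langle\osk,\osj\rangle\,\dsj=\lambda_k\,\dsk=\JH(C^\ast\osk)\in\JH(\Hp),
\]
so each $\dsk$ lies in $\ran(\JH)$ and one sets $\fj\defi\JH^{-1}(\dsj)$. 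This is just the ``take $\psi=\osk$'' specialization of your computation, and it replaces your looser appeal to the mechanism of Theorem~\ref{th.2}; no delicate conjugation bookkeeping is needed beyond this. The remaining steps --- the symmetry via passing to $C^\ast$ and re-decomposing, the converse via norm-limits of adjointable finite-rank operators, and the two-sided $\ast$-ideal property from $\Cp$ being an ideal in $\Bp$ together with $\Bpa$ being a $\ast$-subalgebra and $\Cpa^\ast=\Cpa$ --- match the paper exactly.
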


\begin{proof}
Let $C$ be an adjointable compact operator in $\Hp$. Then, by Theorem~\ref{theocpops},
we have that $C=\sum_{j\in J}\lambda_j\, \osj\pro\dsj$, with $\{\lambda_j\}_{j\in J}$,
$\{\osj\}_{j\in J}$ and $\{\dsj\}_{j\in J}$ as specified therein.

It is easy to check
that the generalized adjoint $C^\prime\in\mathcal{B}(\Hp^\prime)$ is given by
\begin{equation}\label{decpb}
C^\prime=\sum_{j\in J}\lambda_j\, \dsj\pro(\IH\osj), \qquad (\IH\osj\in\ddH).
\end{equation}
Here, $\IH\colon\Hp\rightarrow\ddH$ is the isometry defined by~\eqref{defIH}
and the operator $\dsj\pro(\IH\osj)\colon\dH\rightarrow\dH$ is of the form
$(\dsj\pro(\IH\osj))\,\phi^\prime=((\IH\osj)(\phi^\prime))\,\dsj=\phi^\prime(\osj)\,\dsj$,
for all $\phi^\prime\in\dH$; moreover, for $J=\nat$, the series converges in $\mathcal{B}(\Hp^\prime)$
w.r.t.\ the norm topology (because $\|\dsj\pro(\IH\osj)\|= 1$ and, hence,
$\lim_j|\lambda_j|\,\|\dsj\pro(\IH\osj)\|=0$). Since $C$ is adjointable, by Corollary~\ref{corhbe},
$C^\prime$ is a dual Hahn-Banach extension of the proper adjoint $C^\ast$ of $C$.

Let us assume that, in particular, the set $\{\osj\}_{j\in J}$ is contained in any --- arbitrarily chosen,
by one of the assertions of Theorem~\ref{theocpops} --- orthonormal basis.
It follows that, for every $k\in J$,
\begin{equation}
\big(C^\prime\circ\JH\big)(\osk)=\sum_{j\in J}\lambda_j\, \langle\osk,\osj\rangle\,\dsj
=\lambda_k\,\dsk\in\JH(\ran(C^\ast))\subset\JH(\Hp).
\end{equation}
(Recall that $\JH\colon\Hp\rightarrow\dH$ is the conjugate-linear isometry defined by~\eqref{defJH} and
the intertwining relation $A^\prime\circ\JH=\JH\circ C^\ast$ holds.) Therefore, $\{\dsj\}_{j\in J}\subset\ran(\JH)$.

Observe now that, defining $\fj\in\Hp$ by
\begin{equation}
\JH\fj=\dsj\in\ran(\JH), \quad \forall j\in J
\end{equation}
--- where the vectors $\{\fj\}_{j\in J}\subset\Hp$ are uniquely determined by the functionals $\{\dsj\}_{j\in J}$ and
$\|\fj\|=\|\dsj\|=1$, because $\JH$ is a (conjugate-linear) isometry --- we can introduce the bounded operator
\begin{equation}
D=\sum_{j\in J}\overline{\lambda_j}\, |\fj\rangle\langle\osj|.
\end{equation}
Here, if $J=\nat$, the series converges w.r.t.\ the norm topology, and $D$ is both compact and adjointable
(being the norm-limit of a sequence of \emph{adjointable} finite rank operators). Then, we have:
\begin{equation}
\big(C^\prime\circ\JH\big)(\psi)=\sum_{j\in J}\lambda_j\, \langle\psi,\osj\rangle\,\dsj
=\sum_{j\in J}\lambda_j\,\overline{\langle\osj,\psi\rangle}\,\JH\fj
=\big(\JH\circ D\big)(\psi), \quad \forall\psi\in\Hp.
\end{equation}
Hence, by the second assertion of Corollary~\ref{corhbe}, $\Cpa\ni D=C^\ast$ and, using the continuity
of the adjoining operation in $\Bpa$, we find that
\begin{equation}
C=D^\ast=\sum_{j\in J}\big(\overline{\lambda_j}\, |\fj\rangle\langle\osj|\big)^\ast=
\sum_{j\in J}\lambda_j\, |\osj\rangle\langle\fj|,
\end{equation}
where $\{\lambda_j\}_{j\in J}$, $\{\osj\}_{j\in J}$ are as specified in Theorem~\ref{theocpops} --- in particular,
the norm-orthogonal set $\{\osj\}_{j\in J}$ can be chosen to be contained in an orthonormal basis ---
and $\{\fj\}_{j\in J}$ is a set of normalized vectors.

Observe also that, by the first part of the proof, if $C$ is compact and adjointable,
then its adjoint is (both adjointable and) compact; i.e., $C\in\Cpa\implies C^\ast\in\Cpa$
(and, of course, $C^{\ast\ast}=C$), so that
\begin{equation}
\Cpa^\ast=\{C^\ast\sep C\in\Cpa\}\subset\Cpa
\end{equation}
and
\begin{equation}
\Cpa=\Cpa^{\ast\ast}\subset\Cpa^{\ast\ast\ast}=\Cpa^\ast.
\end{equation}
Hence, actually, $\Cpa=\Cpa^\ast$.

Therefore, every operator $D\in\Cpa$ is of the form $D=C^\ast$, for some $C\in\Cpa$, and,
again by the first part of the proof, we know that it can be written as
$\sum_{j\in J}\gamma_j\, |\fj\rangle\langle\osj|$, where:
$\{\gamma_j\equiv\overline{\lambda_j}\}_{j\in J}\subset\Q$ and, if $J=\nat$, $\lim_j \gamma_j=0$;
$\|\osj\|=\|\fj\|=1$; the norm-orthogonal system $\{\osj\}_{j\in J}$ can be chosen to be contained in any orthonormal basis in $\Hp$.
Thus, we have an alternative option for the choice of the norm-orthogonal system in decomposition~\eqref{decpa}
(where the vectors $\{\osj\}_{j\in J}$ now play the role of functionals, i.e., $\{\langle\osj,\cdot\hspace{0.4mm}\rangle\}_{j\in J}$).

The third assertion of the corollary is clear, since, if $C$ is of the form~\eqref{decpa}, then it is is both compact
and adjointable (being the norm-limit of a sequence of adjointable finite rank operators).
Finally, $\Cpa=\Cp\cap\Bpa$ is a two-sided $\ast$-ideal in $\Bpa$, because $\Cp$ is a two-sided ideal in $\Bp$, $\Bpa$ is an algebra
(w.r.t.\ composition of operators) and, as previously argued, $\Cpa=\Cpa^\ast$.
\end{proof}

\begin{remark}
The expressions~\eqref{decp} and~\eqref{decpa} may be regarded as a $p$-adic counterpart of the
\emph{singular value decomposition}~\cite{Reed,Simon,weidmann2012linear,moretti2018spectral}
of a compact operator in a separable complex Hilbert space. However, we stress that decompositions~\eqref{decp}
and~\eqref{decpa} --- in particular, the set of coefficients $\{\lambda_j\}_{j\in J}$ --- are not unique.
\end{remark}

\begin{definition}
Any (non-unique) decomposition of the form~\eqref{decp} (or of the form~\eqref{decpa}) will be called
\emph{a canonical decomposition} of the compact operator $C\in\Cp$ (respectively, of the adjointable compact operator
$C\in\Cpa$); in particular, we will assume that $\{\osj\}_{j\in J}$ --- or $\{\fj\}_{j\in J}$, in the case where $C\in\Cpa$
--- is a (normalized) norm-orthogonal system in $\Hp$. In the case where $\{\osj\}_{j\in J}$ --- alternatively,
$\{\fj\}_{j\in J}$, for $C\in\Cpa$ --- is chosen to be contained in an orthonormal basis, we will call \emph{orthonormal}
the associated canonical decomposition.
\end{definition}

We will now show that the trace class operators in $\Hp$ form a suitable class of compact operators.

\begin{definition}
We say that a  linear operator $A$ in $\Hp$ is \emph{block-finite} w.r.t.\ an orthonormal basis $\Phi\equiv\{\phi_m\}_{m\in\nat}$
if it is of the form $A=\opPhi(A_{mn})$, where, for some $k\in\nat$,
\begin{equation}
\max\{m,n\}>k \implies A_{mn}=0 .
\end{equation}
We say that $A$ is \emph{block-finite} (\emph{tout court}) if it is block-finite w.r.t.\ some orthonormal basis in $\Hp$.
\end{definition}

Clearly, all block-finite operators are of \emph{finite rank} (hence, compact), and the set of all
block-finite operators w.r.t.\ an orthonormal basis $\Phi\equiv\{\phi_m\}_{m\in\nat}$ is a linear subspace
$\blfi$ of $\Cp$.

\begin{theorem} \label{theoclo}
$\Tp$ is a closed linear subspace of $\Cp$. Specifically, $\Tp$ is the closure of the linear subspace $\blfi$ of $\Cp$,
for every orthonormal basis $\Phi\equiv\{\phi_m\}_{m\in\nat}$ in $\Hp$.
\end{theorem}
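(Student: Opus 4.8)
The goal is to show that $\Tp = \mathrm{cl}\,\blfi$ (closure w.r.t.\ the operator norm) inside $\Cp$, for any fixed orthonormal basis $\Phi\equiv\{\phi_m\}_{m\in\nat}$. I will use the matrix characterization of $\Tp$ via $\mathcal{T}_\Phi(\Hp)$ (Corollary~\ref{cor.t5}), namely $T=\opPhi(T_{mn})\in\Tp$ iff $\lim_{m+n}T_{mn}=0$, together with Corollary~\ref{matcarcom} characterizing compact matrix operators. First I would verify the two inclusions $\blfi\subset\Tp$ and $\Tp\subset\Cp$. The first is immediate: if $A=\opPhi(A_{mn})$ is block-finite w.r.t.\ $\Phi$, then $A_{mn}=0$ for $\max\{m,n\}>k$, so trivially $\lim_{m+n}A_{mn}=0$ and $A\in\mathcal{T}_\Phi(\Hp)=\Tp$. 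The second follows by comparing the defining conditions: $\lim_{m+n}T_{mn}=0$ implies (by~\eqref{limiff} in Remark~\ref{rem.6}) that $\sup_{m,n}|T_{mn}|<\infty$, $\lim_m T_{mn}=0$ for all $n$, and the Pringsheim limit $\lim_{m,n}T_{mn}=0$; these are exactly conditions~\ref{cond.c1}--\ref{cond.c3} of Corollary~\ref{matcarcom}, so $T\in\Cp$. Hence $\blfi\subset\Tp\subset\Cp$.

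Next I would show $\Tp$ is \emph{closed} in $\Cp$ (equivalently in $\Bp$, since $\Cp$ is closed in $\Bp$). Let $\{T^{(k)}\}_{k\in\nat}\subset\Tp$ with $T^{(k)}\to T$ in operator norm, $T\in\Bp$. Writing $T^{(k)}=\opPhi(T^{(k)}_{mn})$ and $T=\opPhi(T_{mn})$ with $T_{mn}=\langle\phi_m,T\phi_n\rangle$, relation~\eqref{eq.14} of Theorem~\ref{th.1} gives $\sup_{m,n}|T^{(k)}_{mn}-T_{mn}|=\|T^{(k)}-T\|\to0$, i.e.\ the matrix entries converge \emph{uniformly}. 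Fix $\epsilon>0$; choose $k$ with $\|T^{(k)}-T\|<\epsilon/2$. Since $T^{(k)}\in\Tp$, there is $\ttN$ with $|T^{(k)}_{mn}|<\epsilon/2$ whenever $\max\{m,n\}>\ttN$ (using characterization~\eqref{eq.47b} in Remark~\ref{rem.6}). Then for $\max\{m,n\}>\ttN$, the ultrametric inequality gives $|T_{mn}|\le\max\{|T_{mn}-T^{(k)}_{mn}|,\,|T^{(k)}_{mn}|\}<\epsilon/2$ (in fact $\le\max\{\epsilon/2,\epsilon/2\}$). Hence $\lim_{m+n}T_{mn}=0$, so $T\in\mathcal{T}_\Phi(\Hp)=\Tp$, proving $\Tp$ is closed.

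Finally I would prove $\mathrm{cl}\,\blfi=\Tp$. The inclusion $\mathrm{cl}\,\blfi\subset\Tp$ follows from $\blfi\subset\Tp$ and the closedness of $\Tp$ just established. For the reverse, let $T=\opPhi(T_{mn})\in\Tp$, so $\lim_{m+n}T_{mn}=0$. For each $k\in\nat$ define the truncation $T^{(k)}=\opPhi(T^{(k)}_{mn})$ by $T^{(k)}_{mn}=T_{mn}$ if $\max\{m,n\}\le k$ and $T^{(k)}_{mn}=0$ otherwise; each $T^{(k)}\in\blfi$ by definition. By Theorem~\ref{th.1}, $\|T-T^{(k)}\|=\sup_{m,n}|T_{mn}-T^{(k)}_{mn}|=\sup_{\max\{m,n\}>k}|T_{mn}|$, which tends to $0$ as $k\to\infty$ precisely because $\lim_{m+n}T_{mn}=0$ (condition~\eqref{eq.47b} again). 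Hence $T^{(k)}\to T$ in operator norm, so $T\in\mathrm{cl}\,\blfi$. This establishes $\Tp=\mathrm{cl}\,\blfi$ for every orthonormal basis $\Phi$, which also independently re-confirms that $\Tp$ is a closed linear subspace of $\Cp$.

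**Main obstacle.** There is no deep obstacle here; the argument is a clean bookkeeping exercise once the matrix pictures of $\Tp$, $\Cp$, and $\Bp$ are in place. The one point requiring a little care is making sure the various equivalent formulations of $\lim_{m+n}T_{mn}=0$ (the ``finite exceptional set'' form~\eqref{eq.47}, the $\max\{m,n\}>\ttN$ form~\eqref{eq.47b}, and the $m+n>\ttN$ form~\eqref{eq.48}) are used consistently, and that one genuinely exploits the \emph{uniform} convergence of matrix entries granted by $\|\cdot\|=\sup_{m,n}|\cdot_{mn}|$ — this is the feature that makes the closedness argument work in the $p$-adic setting with the ultrametric inequality doing the job that, in the complex case, would require summing a series.
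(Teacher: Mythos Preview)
Your proof is correct and follows essentially the same approach as the paper: both arguments rest on the matrix truncation $T^{(k)}$ to show $\Tp\subset\overline{\blfi}$, and on the identity $\|A\|=\sup_{m,n}|A_{mn}|$ (Theorem~\ref{th.1}) to pass to limits for the reverse inclusion. The only organizational difference is that you prove closedness of $\Tp$ directly (for arbitrary sequences in $\Tp$) and then deduce $\overline{\blfi}\subset\Tp$, whereas the paper proves $\overline{\blfi}\subset\Tp$ by working with sequences in $\blfi$ specifically---but the underlying estimate is the same.
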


\begin{proof}
Let us first show that, for every orthonormal basis $\Phi\equiv\{\phi_m\}_{m\in\nat}$ in $\Hp$,
$\Tp\subset\overline{\blfi}^{\,\|\cdot\|}$. In fact, given any trace class operator $T=\opPhi(T_{mn})\in\Tp$,
we can define the matrix operator $\kT\defi\opPhi\big(\kTmn\big)$, $k\in\nat$, where
\begin{equation}
\kTmn=
\begin{cases}
\mbox{$T_{mn}$, if $\max\{m,n\}\le k$},
\\
\mbox{$0$, otherwise}.
\end{cases}
\end{equation}
Clearly, $\kT\in\blfi$ and, by construction,
\begin{equation}
\big\|T-\kT\big\|=\sup_{m,n}\big|T_{mn}-\kTmn\hspace{-0.6mm}\big|=\sup\{|T_{mn}|\sep \max\{m,n\}> k\}.
\end{equation}
Now, since $T$ is of trace class, for every $\epsilon>0$, there is some $j_\epsilon\in\nat$ such that
\begin{equation} \label{arga}
\max\{m,n\}>j_\epsilon \implies |T_{mn}|<\epsilon;
\end{equation}
hence, for every $k\ge j_\epsilon$,
\begin{equation} \label{argb}
\big\|T-\kT\big\|=\sup\{|T_{mn}|\sep \max\{m,n\}> k\}\le\sup\{|T_{mn}|\sep \max\{m,n\}> j_\epsilon\}\le\epsilon.
\end{equation}
Therefore, $\lim_k\big\|T-\kT\big\|=0$, hence, $\Tp$ is contained in the norm-closure of $\blfi$.

Let us next prove that $\Tp\supset\overline{\blfi}^{\,\|\cdot\|}$, as well, so that, actually,
$\Tp=\overline{\blfi}^{\,\|\cdot\|}$. Indeed, let us now suppose that $\big\{\kC=\opPhi(\kCmn)\big\}_{k\in\nat}$
is a sequence in $\blfi$ converging, in norm, to some (necessarily compact) operator $C=\opPhi(C_{mn})$.
Then, for every $\epsilon>0$, there is some $j_\epsilon\in\nat$ such that
\begin{equation}
k> j_\epsilon\implies\big\|C-\kC\big\|<\epsilon.
\end{equation}
Moreover, for every $k\in\nat$, since $\kC$ is block-finite w.r.t.\ $\Phi\equiv\{\phi_m\}_{m\in\nat}$, there
is some $l_k\in\nat$ such that
\begin{equation}
\max\{m,n\}>l_k \implies \kCmn=0.
\end{equation}
Therefore, for every $\epsilon>0$, there is some
$j_\epsilon\in\nat$ such that
\begin{align}
k> j_\epsilon\implies\epsilon>\big\|C-\kC\big\| & =\sup_{m,n}\big|C_{mn}-\kCmn\hspace{-0.6mm}\big|
\nonumber\\
& =\max\bigg\{\sup_{\max\{m,n\}\le l_k}\big|C_{mn}-\kCmn\hspace{-0.6mm}\big|, \sup_{\max\{m,n\}> l_k}|C_{mn}|\bigg\}.
\end{align}
In conclusion, for every $\epsilon>0$, there is some $l\in\nat$ --- say, $l\equiv l_k$, for any $k>j_\epsilon$
--- such that
\begin{equation}
\max\{m,n\}> l\implies |C_{mn}|<\epsilon.
\end{equation}
Otherwise stated, $\lim_{m+n}|C_{mn}|=0$, so that $C=\opPhi(C_{mn})=\lim_k\kC$ is a trace class operator
and --- by the arbitrariness of the converging sequence $\big\{\kC\big\}_{k\in\nat}\subset\blfi$ ---
$\Tp\supset\overline{\blfi}^{\,\|\cdot\|}$.

Eventually, it is shown that $\Tp=\overline{\blfi}^{\,\|\cdot\|}$ and the proof is complete.
\end{proof}

We will next prove a remarkable characterization of the trace class of $\Hp$.

\begin{theorem} \label{theochar}
The following characterization of the trace class of $\Hp$ holds true:
\begin{equation}
\Tp=\Cpa\defi\Cp\cap\Bpa.
\end{equation}
Moreover, given any canonical decomposition of a trace class operator $T\in\Tp=\Cpa$
--- i.e., $T=\sum_{j\in J}\lambda_j\, |\osj\rangle\langle\fj|$, where $\{\lambda_j\}_{j\in J}$,
$\{\osj\}_{j\in J}$ and $\{\fj\}_{j\in J}$ are as specified in Corollary~\ref{decocpa}; in particular,
$\{\osj\}_{j\in J}$, or $\{\fj\}_{j\in J}$, is a normlized norm-orthogonal system in $\Hp$ ---
we have that
\begin{equation} \label{tctra}
\tr(T)=\sum_{j\in J}\lambda_j\, \langle\fj,\osj\rangle ,
\end{equation}
and the following estimate holds:
\begin{equation} \label{tcnor}
|\tr(T)|\le\max_{j\in J} |\lambda_j|\,|\langle\fj,\osj\rangle|\le\|T\|=\max_{j\in J} |\lambda_j|.
\end{equation}
{\rm (Compare with the second inequality in~\eqref{trinequa}.)}
\end{theorem}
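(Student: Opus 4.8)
The plan is to prove the identity $\Tp = \Cpa$ by double inclusion, and then deduce the trace formula and the norm estimate from the canonical decomposition provided by Corollary~\ref{decocpa} together with the basis-independence of the trace (Theorem~\ref{th.t5}). For the inclusion $\Tp \subset \Cpa$: a trace class operator $T \in \Tp$ is by definition adjointable (i.e.\ $\Tp \subset \Bpa$, from Proposition~\ref{prop.t3}), so it remains only to show that $T$ is compact. But by Theorem~\ref{theoclo}, $\Tp$ is the norm-closure of the block-finite operators $\blfi$, which are finite rank; since $\Cp$ is norm-closed (Theorem~\ref{theocpops}) and contains all finite rank operators, we get $\Tp = \overline{\blfi}^{\,\|\cdot\|} \subset \Cp$, whence $\Tp \subset \Cp \cap \Bpa = \Cpa$. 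Conversely, for $\Cpa \subset \Tp$: take $C \in \Cpa$ and fix any orthonormal basis $\Phi \equiv \{\phi_m\}_{m\in\nat}$. By Corollary~\ref{matcarcom}, the matrix entries $C_{mn} = \langle\phi_m, C\phi_n\rangle$ satisfy $\sup_{m,n}|C_{mn}| < \infty$, $\lim_m C_{mn} = 0$ for all $n$, and $\lim_{m,n} C_{mn} = 0$ (Pringsheim). Since $C$ is also adjointable, by Theorem~\ref{th.2} we additionally have $\lim_n C_{mn} = 0$ for all $m$. By the equivalence~\eqref{limiff} in Remark~\ref{rem.6}, the three conditions ``$\lim_m C_{mn}=0$ for all $n$'', ``$\lim_n C_{mn}=0$ for all $m$'', and ``Pringsheim $\lim_{m,n} C_{mn}=0$'' together are equivalent to $\lim_{m+n} C_{mn} = 0$; hence $C = \opPhi(C_{mn}) \in \mathcal{T}_\Phi(\Hp) = \Tp$. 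This establishes both inclusions.

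For the trace formula~\eqref{tctra}, I would argue exactly as in the proof of Corollary~\ref{corut}, but now using a canonical decomposition in the form~\eqref{decpa} afforded by Corollary~\ref{decocpa}. Write $T = \sum_{j\in J}\lambda_j\, |\osj\rangle\langle\fj|$ with $\lim_j \lambda_j = 0$ (if $J = \nat$), $\|\osj\| = \|\fj\| = 1$, and $\{\osj\}_{j\in J}$ a normalized norm-orthogonal system contained in an orthonormal basis $\Phi$ of $\Hp$. For any orthonormal basis $\Phi \equiv \{\phi_m\}_{m\in\nat}$ one checks that $\lim_{m+j}\lambda_j\,\langle\phi_m,\osj\rangle\,\langle\fj,\phi_m\rangle = 0$: indeed $\lim_m \langle\phi_m,\osj\rangle = 0$ for each $j$, and the family is dominated by $|\lambda_j|$, which tends to $0$ uniformly in $m$; the claim then follows from~\eqref{limiffbis}. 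Consequently the relevant double series converges and its sum equals either iterated sum (Remark~\ref{rem.6}), so that
\begin{align*}
\sum_{j\in J}\lambda_j\,\langle\fj,\osj\rangle
&= \sum_{j\in J}\lambda_j\,\big\langle\fj,{\textstyle\sum_{m}}\langle\phi_m,\osj\rangle\,\phi_m\big\rangle
= \sum_{j\in J}\sum_{m}\lambda_j\,\langle\phi_m,\osj\rangle\,\langle\fj,\phi_m\rangle \\
&= \sum_{m}\sum_{j\in J}\lambda_j\,\langle\phi_m,\osj\rangle\,\langle\fj,\phi_m\rangle
= \sum_{m}\langle\phi_m, T\phi_m\rangle = \tr_\Phi(T),
\end{align*}
where continuity of the inner product and of the functionals, together with the norm-convergence of the decomposition of $T$, justify the interchanges. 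Since $\tr_\Phi(T)$ is independent of $\Phi$ by Theorem~\ref{th.t5}, formula~\eqref{tctra} holds.

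Finally, the estimate~\eqref{tcnor}: by~\eqref{tctra} and the ultrametric inequality, $|\tr(T)| \le \max_{j\in J}|\lambda_j|\,|\langle\fj,\osj\rangle|$; by Cauchy--Schwarz $|\langle\fj,\osj\rangle| \le \|\fj\|\,\|\osj\| = 1$, so this is $\le \max_{j\in J}|\lambda_j|$. It remains to identify $\|T\|$ with $\max_{j\in J}|\lambda_j|$. Choosing the canonical decomposition with $\{\osj\}_{j\in J}$ contained in an orthonormal basis, for each $k$ one has $T^*\osk$ expressed via the adjoint decomposition (cf.\ the proof of Corollary~\ref{decocpa}), giving $\|T\fj\|$ or $\|T^*\osk\|$ of the order $|\lambda_k|$; more directly, pairing with basis vectors shows $\sup_m\|T\phi_m\| \ge |\lambda_k|$ for each $k$ while Theorem~\ref{th.1} gives $\|T\| = \sup_{m,n}|T_{mn}| = \sup_n\|T\phi_n\|$, and the norm-orthogonality of $\{\osj\}$ forces $\|T\| = \max_j|\lambda_j|$. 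I expect the main obstacle to be precisely this last point — carefully extracting $\|T\| = \max_j|\lambda_j|$ from a \emph{non-unique} canonical decomposition, since one must verify that the norm-orthogonality of the system $\{\osj\}$ (or $\{\fj\}$) genuinely decouples the coefficients in the sup-norm computation; the rest of the argument is an assembly of results already proved (Theorems~\ref{th.t5}, \ref{theoclo}, \ref{theocpops}, Corollaries~\ref{matcarcom}, \ref{corut}, \ref{decocpa}) and routine ultrametric estimates.
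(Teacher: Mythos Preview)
Your argument for $\Tp=\Cpa$ and for the trace formula~\eqref{tctra} is essentially identical to the paper's: the reverse inclusion $\Cpa\subset\Tp$ is obtained exactly as you do, by combining the matrix conditions~\ref{cond.c1}--\ref{cond.c3} for compactness (Corollary~\ref{matcarcom}) with the extra condition $\lim_n A_{mn}=0$ coming from adjointability (Theorem~\ref{th.2}), and then invoking the equivalence~\eqref{limiff}; and the trace formula is derived by the same double-series interchange you give, mirroring Corollary~\ref{corut}.

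The one point where your sketch falls short of the full statement is the norm identity $\|T\|=\max_{j\in J}|\lambda_j|$, and you are right to flag it. Your route via $\|T\|=\sup_n\|T\phi_n\|$ with $\{\osj\}$ contained in the orthonormal basis $\Phi$ only treats that special class of canonical decompositions, whereas the theorem asserts the identity for \emph{any} canonical decomposition in which $\{\osj\}$ (or $\{\fj\}$) is merely a normalized norm-orthogonal system --- not necessarily embeddable in an orthonormal basis. The paper resolves this directly and without choosing a basis: assuming (after passing to $T^\ast$ if necessary, since $\|T\|=\|T^\ast\|$) that $\{\osj\}$ is the norm-orthogonal system, norm-orthogonality gives
\[
\|T\psi\|=\Big\|\sum_{j\in J}\lambda_j\,\langle\fj,\psi\rangle\,\osj\Big\|=\sup_{j\in J}|\lambda_j|\,|\langle\fj,\psi\rangle|,
\]
whence
\[
\|T\|=\sup_{j\in J}\Big(|\lambda_j|\,\sup_{\psi\neq 0}\frac{|\langle\fj,\psi\rangle|}{\|\psi\|}\Big)
=\sup_{j\in J}|\lambda_j|\,\|\JH\fj\|=\max_{j\in J}|\lambda_j|,
\]
using that $\JH$ is a conjugate-linear isometry (Proposition~\ref{prop.334}). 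This is precisely the ``decoupling'' you anticipated; the key is that norm-orthogonality alone already yields the first displayed equality, so no embedding in an orthonormal basis is needed.
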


\begin{proof}
We already know that $\Tp\subset\Cp\cap\Bpa$. Let us show that this inclusion is, actually, an equality.
In fact, given any orthonormal basis $\Phi\equiv\{\phi_m\}_{m\in\nat}$ in $\Hp$, by Corollary~\ref{matcarcom}
a compact operator $A=\opPhi(A_{mn})\in\Cp$ must verify conditions~\ref{cond.c1}--\ref{cond.c3} therein.
If, in addition, $A$ is adjointable, then, by Theorem~\ref{th.2},  we also have that
\begin{enumerate}[label=\tt{(C\arabic*)}]
\setcounter{enumi}{3}

\item \label{cond.c4} $\lim_n A_{mn}=0$, $\forall m\in\nat$.

\end{enumerate}
By relation~\eqref{limiff} in Remark~\ref{rem.6}, conditions~\ref{cond.c2}--\ref{cond.c4} are equivalent to
$\lim_{m+n}A_{mn}=0$ (and condition~\ref{cond.c1} becomes redundant). Hence, $A$ is a trace class operator.

The proof of relation~\eqref{tctra} is similar to the proof of Corollary~\ref{corut}: for any orthonormal basis
$\Phi\equiv\{\phi_m\mnat$ in $\Hp$,
\begin{align}
\tr(T)=\tr\big({\textstyle\sum_{j\in J}}\lambda_j\, |\osj\rangle\langle\fj|\big) & =
\sum_{m\in\nat}\sum_{j\in J}\lambda_j\,\langle\phi_m,\osj\rangle\langle\fj,\phi_m\rangle
\nonumber\\
& =\sum_{j\in J}\sum_{m\in\nat}\lambda_j\,\langle\fj,\phi_m\rangle\langle\phi_m,\osj\rangle
=\sum_{j\in J}\lambda_j\, \langle\fj,\osj\rangle .
\end{align}
Here, if $J=\nat$, exchanging the sums is justified by the fact that
$\lim_{m+j}\lambda_j\,\langle\phi_m,\osj\rangle\langle\fj,\phi_m\rangle=0$,
because $\lim_m\lambda_j\langle\phi_m,\osj\rangle\langle\fj,\phi_m\rangle=0$, for all $j\in\nat$,
and $\lim_j\lambda_j\,\langle\phi_m,\osj\rangle\langle\fj,\phi_m\rangle=0$ \emph{uniformly}
in $m\in\nat$ ($|\lambda_j\,\langle\phi_m,\osj\rangle\langle\fj,\phi_m\rangle|\le|\lambda_j|$).

Let us now prove that $\|T\|=\max_{j\in J} |\lambda_j|$. Since $\|T\|=\|T^\ast\|$, in the following we can assume,
without loss of generality, that $\{\osj\}_{j\in J}$ (rather than $\{\fj\}_{j\in J}$) is a normalized norm-orthogonal
system in $\Hp$. Hence, for every vector $\psi\in\Hp$, we have that
$\|T\psi\|=\|\sum_{j\in J}\lambda_j\,\langle\fj,\psi\rangle\,\osj\|=\sup_{j\in J}|\lambda_j|\,|\langle\fj,\psi\rangle|$,
and
\begin{equation}
\|T\|=\sup_{\psi\neq 0}\frac{\|T\psi\|}{\|\psi\|}=
\sup_{j\in J}\bigg(|\lambda_j|\,\sup_{\psi\neq 0}\frac{|\langle\fj,\psi\rangle|}{\|\psi\|}\bigg)
=\sup_{j\in J}|\lambda_j|\,\|\JH\fj\|=\max_{j\in J}|\lambda_j|,
\end{equation}
where we have used the fact that $\JH$ is a (conjugate-linear) isometry. Then, since
$|\langle\fj,\osj\rangle|\le 1$, the estimate~\eqref{tcnor} holds true.
\end{proof}

\begin{remark} \label{reincls}
By Theorem~\ref{theocpops}, Corollary~\ref{matcarcom} and Corollary~\ref{decocpa}, it is clear that \emph{not}
every compact operator is adjointable and then $\Tp=\Cpa\subsetneq\Cp\subset\Tw$ ($\dim(\Hp)=\infty$). For instance,
the bounded operator $TB$ in Example~\ref{exnontc} is compact but not adjointable.
\end{remark}

In a infinite-dimensional separable \emph{complex} Hilbert space, the product of two trace class operators
is of trace class too, \emph{but} not every trace class operator is the product of two trace class
operators (instead, it can expressed as the product of two Hilbert-Schmidt operators);
see~\cite{Reed,Simon,weidmann2012linear,moretti2018spectral}. In a $p$-adic Hilbert space $\Hp$,
putting
\begin{equation}
\Tp^2\defi\{S\,T \sep S,T\in\Tp\},
\end{equation}
we have that $\Tp^2\subset\Tp$, because $\Tp$ is a two-sided ideal in $\Bpa$; actually, from Theorem~\ref{theochar}
we derive the following:

\begin{corollary}
$\Tp^2=\Tp$. In particular, every trace class operator $R\in\Tp$ can be expressed in the form
$R=S\,T$, for some $S,T\in\Tp$.
\end{corollary}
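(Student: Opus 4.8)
Since $\Tp$ is a two-sided ideal in $\Bpa$ (Theorem~\ref{th.t3}) and $\Tp\subset\Bpa$, the inclusion $\Tp^2\subset\Tp$ is immediate; so the plan is to prove the reverse inclusion $\Tp\subset\Tp^2$, i.e.\ to exhibit, for an arbitrary $R\in\Tp$, a factorization $R=ST$ with $S,T\in\Tp$. The case $R=0$ is trivial ($R=0\cdot 0$), so assume $R\neq 0$. By Theorem~\ref{theochar}, $\Tp=\Cpa$, hence Corollary~\ref{decocpa} furnishes a canonical decomposition
\[
R=\sum_{j\in J}\lambda_j\,|\osj\rangle\langle\fj|,
\]
with $J=\{1,2,\ldots\}$ countable, $\{\lambda_j\}_{j\in J}\subset\Qa$ (and $\lim_j\lambda_j=0$ if $J=\nat$), and $\|\osj\|=\|\fj\|=1$ for all $j\in J$. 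I would also fix \emph{any} normalized norm-orthogonal system $\{\phi_j\}_{j\in J}$ in $\Hp$ --- for instance the first $\card(J)$ members of an orthonormal basis of $\Hp$, which exist since $\dim(\Hp)=\infty$.

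The idea is then to distribute the weights $\lambda_j$ symmetrically, inserting the auxiliary system $\{\phi_j\}_{j\in J}$ in the middle. Since the value group $|\Qa|$ is discrete (Remark~\ref{remrami}), it is infinite cyclic, generated by $|\varpi|$ for a suitable $\varpi\in\Qa$ with $0<|\varpi|<1$; writing $\lambda_j=\varpi^{\,n_j}u_j$ with $n_j\in\mathbb{Z}$ and $|u_j|=1$, the condition $\lambda_j\to0$ (when $J=\nat$) forces $n_j\to+\infty$. Setting
\[
\alpha_j\defi\varpi^{\lceil n_j/2\rceil},\qquad\beta_j\defi\varpi^{\lfloor n_j/2\rfloor}u_j,
\]
we obtain $\alpha_j\beta_j=\lambda_j$ for every $j\in J$, and $\lim_j\alpha_j=\lim_j\beta_j=0$ when $J=\nat$ (no such condition being needed when $J$ is finite). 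Now define
\[
T\defi\sum_{j\in J}\alpha_j\,|\phi_j\rangle\langle\fj|,\qquad
S\defi\sum_{j\in J}\beta_j\,|\osj\rangle\langle\phi_j|.
\]
Both operators fall under Corollary~\ref{decocpa}: for $T$ the normalized norm-orthogonal system is $\{\phi_j\}_{j\in J}$ (with $\{\fj\}_{j\in J}$ the other unit family and $\alpha_j\in\Qa$, $\alpha_j\to0$), while for $S$ it is again $\{\phi_j\}_{j\in J}$, this time in the second slot --- which the corollary explicitly allows --- with $\{\osj\}_{j\in J}$ the other family and $\beta_j\in\Qa$, $\beta_j\to0$. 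Hence $S,T\in\Cpa=\Tp$.

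It then remains to compute $ST$. For $\psi\in\Hp$ the series $\sum_{j\in J}\alpha_j\langle\fj,\psi\rangle\phi_j$ converges in $\Hp$, because $|\alpha_j\langle\fj,\psi\rangle|\le|\alpha_j|\,\|\psi\|\to0$; so, using that $S$ is bounded, the continuity of the inner product $\scap$, and $\langle\phi_k,\phi_j\rangle=\delta_{kj}$,
\[
(ST)\psi
=S\Big(\sum_{j\in J}\alpha_j\langle\fj,\psi\rangle\phi_j\Big)
=\sum_{k\in J}\beta_k\Big\langle\phi_k,\sum_{j\in J}\alpha_j\langle\fj,\psi\rangle\phi_j\Big\rangle\osk
=\sum_{k\in J}\beta_k\alpha_k\langle f_k,\psi\rangle\,e_k
=R\psi,
\]
since $\beta_k\alpha_k=\lambda_k$. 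Thus $R=ST\in\Tp^2$, which gives $\Tp\subset\Tp^2$ and therefore $\Tp^2=\Tp$. The two non-formal ingredients are the coefficient splitting $\lambda_j=\alpha_j\beta_j$ --- which requires the discreteness of $|\Qa|$ (hence the existence of a uniformizer) to keep both factors decaying to $0$ --- and the verification that $S$ and $T$ are \emph{adjointable} and not merely compact; I expect this last point to be the main (mild) obstacle, and it is handled precisely by invoking the freedom, built into Corollary~\ref{decocpa}, of letting either of the two vector families in a canonical decomposition play the role of the norm-orthogonal system.
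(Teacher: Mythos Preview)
Your proof is correct and follows essentially the same route as the paper's: both factor $R$ through an auxiliary orthonormal system $\{\phi_j\}$ by writing $S=\sum_j\beta_j\,|\osj\rangle\langle\phi_j|$ and $T=\sum_j\alpha_j\,|\phi_j\rangle\langle\fj|$ with $\alpha_j\beta_j=\lambda_j$ and both factors tending to zero, and then invoke Corollary~\ref{decocpa} (together with $\Tp=\Cpa$) to place $S,T$ in $\Tp$. The only difference is cosmetic: the paper cites Lemma~8.1.5 of~\cite{Perez-Garcia} for the existence of the splitting $\lambda_j=\kappa_j\nu_j$ with $\kappa_j,\nu_j\to0$, whereas you construct it explicitly via a uniformizer of the discretely valued field $\Q$.
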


\begin{proof} We only need to prove that $\Tp^2\supset\Tp$; i.e., that every $R\in\Tp$ is
of the form $R=S\,T$, for suitable $S,T\in\Tp$. Since $\Tp=\Cpa$, we can write
$R=\sum_{j\in J}\lambda_j\, |\osj\rangle\langle \fj|$, with $J=\{1,2,\ldots\}\subset\nat$, and $\{\lambda_j\}_{j\in J}$,
$\{\osj\}_{j\in J}$, $\{\fj\}_{j\in J}$ as specified in Corollary~\ref{decocpa}.
Now, given any orthonormal basis $\Phi\equiv\{\phi_m\mnat$ in $\Hp$, let us set
$S=\sum_{j\in J}\kappa_j\, |\osj\rangle\langle \phi_j|$, $T=\sum_{j\in J}\nu_j\, |\phi_j\rangle\langle \fj|$, where
$\kappa_j\nu_j=\lambda_j$ and, if $J=\nat$, $\lim_j\kappa_j=0=\lim_j\nu_j$. The existence of suitable sets $\{\kappa_j\}_{j\in J}$,
$\{\nu_j\}_{j\in J}$ in $\Q$ satisfying the previous conditions is guaranteed by Lemma~{8.1.5} in~\cite{Perez-Garcia}.
Therefore, $\{\kappa_j\}_{j\in J}$, $\{\osj\}_{j\in J}$ and $\{\phi_j\}_{j\in J}$ --- and, analogously,
$\{\nu_j\}_{j\in J}$, $\{\phi_j\}_{j\in J}$ and $\{\fj\}_{j\in J}$ --- are as prescribed in Corollary~\ref{decocpa}.
Hence, $S,T\in\Cpa=\Tp$ (note that Theorem~\ref{theochar} is essential here), and, by construction,
$S\,T=\sum_{j\in J}\kappa_j\nu_j\, |\osj\rangle\langle \fj|=\sum_{j\in J}\lambda_j\, |\osj\rangle\langle \fj|=R$.
\end{proof}


\subsection{The $p$-adic Hilbert-Schmidt space}
\label{Hilbert-Schmidt}

In the light of the results of the previous section, it should not be surprising that, in the $p$-adic setting,
$\Tp$ actually plays a two-fold role: the trace class and the Hilbert-Schmidt space.

In fact, let us introduce the sesquilinear form
\begin{equation}
\Tp\times\Tp\ni(S,T)\mapsto\tr(S^*T)\ifed\langle S,T\hsp \in\Q ,
\end{equation}
which is Hermitian, because
\begin{equation}
\langle S,T\hsp\defi\tr(S^*T)=\overline{\tr(T^*S)}=\overline{\langle T,S\rangle}_{\hspace{-0.5mm}\mbox{\tiny $\Tp$}} .
\end{equation}
Notice that, here, for obtaining the second equality, we have used property~\ref{conjtrace} of the trace (see
Proposition~\ref{prop.tnew}).

We will call the Hermitian sesquilinear form $\inpr$ in $\Tp$ the ($p$-adic) \emph{Hilbert-Schmidt product}.

Given any orthonormal basis $\Phi\equiv\{\phi_m\}_{m\in\nat}$ in $\Hp$, we can also consider the family
of matrix operators $\big\{\jkE\big\}_{j,k\in\nat}\subset\blfi$ defined by
\begin{equation}
\jkE\defi\opPhi\big(\jkEmn\big), \ \mbox{where $\jkEmn=\delta_{jm}\delta_{kn}$;}
\end{equation}
namely, in the usual  Dirac notation, $\jkE=|\phi_j\rangle\langle\phi_k|$ (i.e., $\jkEmn\psi=\langle\phi_k,\psi\rangle\phi_j$).
Note that, for every trace class operator $T\defi\opPhi(T_{mn})$, we have:
\begin{equation}
\big\langle \jkE, T\bhsp=\tr(|\phi_k\rangle\langle\phi_j|\, T)= T_{jk}.
\end{equation}
It follows that the Hermitian sesquilinear form $\inpr$ is non-degenerate, because
\begin{equation}
\big\langle T,\jkE\bhsp=0,\ \forall j,k\in\nat  \implies T=0.
\end{equation}

\begin{theorem} \label{theohs}
The $p$-adic Banach space $\Tp$ --- endowed with the $p$-adic Hilbert-Schmidt product $\inpr$ --- becomes an inner product
$p$-adic Banach space. Moreover, for every orthonormal basis $\Phi\equiv\{\phi_m\}_{m\in\nat}$ in $\Hp$,
$\{\jkE\}_{j,k\in\nat}$ is an orthonormal basis in $\Tp$. Therefore, the triple $(\Tp,\|\cdot\|,\inpr)$ is, actually,
a $p$-adic Hilbert space.
\end{theorem}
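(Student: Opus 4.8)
The plan is to establish the three assertions in sequence: that the Hilbert–Schmidt form $\inpr$ is a non-Archimedean inner product on $\Tp$, that $\{\jkE\}_{j,k\in\nat}$ is an orthonormal basis of $(\Tp,\|\cdot\|,\inpr)$, and hence, by definition, that $\Tp$ is a $p$-adic Hilbert space. Throughout I would use that $\Tp$ is already known to be a $p$-adic Banach space, being a closed subspace of $\Cp\subset\Bp$ (Theorem~\ref{theoclo}).

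\emph{Step 1 ($\inpr$ is an inner product).} First I would note that the form is well defined: by Theorem~\ref{th.t3}, $\Tp$ is a two-sided $*$-ideal in $\Bpa$, so $S^*,T\in\Tp$ gives $S^*T\in\Tp$ and $\tr(S^*T)$ makes sense. Sesquilinearity then follows from the linearity of $\tr\argo$ (Proposition~\ref{prop.tnew}) together with the relations $(\alpha S)^*=\overline{\alpha}\,S^*$ and $(S_1+S_2)^*=S_1^*+S_2^*$ of Corollary~\ref{corad}; the Hermitian property is exactly the computation displayed just above the theorem, which rests on property~\ref{conjtrace} of the trace. For the Cauchy–Schwarz inequality I would invoke Proposition~\ref{trineq} (with $B=S^*$) and $\|S^*\|=\|S\|$ (relation~\eqref{eq.26} of Theorem~\ref{th.2}):
\[
|\langle S,T\hsp| = |\tr(S^*T)| \le \|S^*\|\,\|T\| = \|S\|\,\|T\| .
\]
This makes $(\Tp,\|\cdot\|,\inpr)$ an inner product $p$-adic Banach space.

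\emph{Step 2 (the orthonormal basis).} The cleanest route I would take is to fix an orthonormal basis $\Phi\equiv\{\phi_m\}_{m\in\nat}$ of $\Hp$ and observe that $T=\opPhi(T_{mn})\mapsto\{T_{jk}\}_{(j,k)\in\nat\times\nat}$ is a \emph{surjective isometry} of $\Tp$ onto the coordinate $p$-adic Hilbert space $\mathbb{H}(\nat\times\nat)$: it is isometric by relation~\eqref{eq.14} of Theorem~\ref{th.1}, and it is onto because, by Theorem~\ref{theochar} (equivalently, Definition~\ref{def.4} and Corollary~\ref{cor.t5}) together with Remark~\ref{rem.6}, a matrix $\{c_{jk}\}$ defines a trace class operator precisely when $\lim_{m+n}c_{mn}=0$, i.e.\ precisely when $\{c_{jk}\}\in c_0(\nat\times\nat,\Q)$. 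Moreover this map carries $\inpr$ into the canonical inner product of $\mathbb{H}(\nat\times\nat)$, since $\tr(S^*T)=\sum_m(S^*T)_{mm}=\sum_{j,k}\overline{S_{jk}}\,T_{jk}$, and it sends $\jkE=|\phi_j\rangle\langle\phi_k|$ to the standard basis vector indexed by $(j,k)$; hence $\{\jkE\}_{j,k\in\nat}$ is an orthonormal basis of $\Tp$ (cf.\ Example~\ref{examp324}). Alternatively one can check Definition~\ref{orbasis} by hand: $\|\jkE\|=1$ and $\|\sum_{j,k}\alpha_{jk}\jkE\|=\max_{j,k}|\alpha_{jk}|$ by~\eqref{eq.14}; the expansion $T=\sum_{j,k}T_{jk}\jkE$ converges in operator norm because $\|T_{jk}\jkE\|=|T_{jk}|\to 0$ over $\nat\times\nat$ (Proposition~\ref{sumlemma}) and its truncations are the block-finite approximants shown in the proof of Theorem~\ref{theoclo} to converge to $T$; and, using $\langle\jkE,T\hsp=T_{jk}$ (established above the theorem) with $T=\rsE$, whose $(j,k)$-entry is $\delta_{rj}\delta_{sk}$, one obtains the orthonormality relation $\langle\jkE,\rsE\hsp=\delta_{jr}\delta_{ks}$. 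In particular $\|\Tp\|=|\Q|$, so $\Tp$ is normal.

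\emph{Conclusion.} Since $\Tp$ is then a normal inner product $p$-adic Banach space admitting the orthonormal basis $\{\jkE\}_{j,k\in\nat}$, it is a $p$-adic Hilbert space, its inner product being automatically non-degenerate (Proposition~\ref{nondeginpro}), consistent with the relation ``$\langle T,\jkE\hsp=0$ for all $j,k\ \Rightarrow\ T=0$'' noted above the theorem. The one place I expect to need care — and the main (though modest) obstacle — is the surjectivity of the isometry in Step~2, equivalently the norm-convergence of the double expansion $T=\sum_{j,k}T_{jk}\jkE$ and the identification of its partial sums with block-finite operators; this is precisely what Theorem~\ref{theoclo} and the description of zero-convergence over $\nat\times\nat$ in Remark~\ref{rem.6} deliver, so no genuinely new work is required.
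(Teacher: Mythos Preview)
Your proposal is correct and follows essentially the same approach as the paper. The paper verifies Cauchy--Schwarz by the direct expansion $|\tr(S^*T)|=|\sum_m\langle S\phi_m,T\phi_m\rangle|\le\max_m|\langle S\phi_m,T\phi_m\rangle|\le\|S\|\,\|T\|$ rather than by quoting Proposition~\ref{trineq}, and it checks the orthonormal-basis axioms for $\{\jkE\}$ by hand (exactly your ``alternative'' route, invoking the block-finite approximation from Theorem~\ref{theoclo}); your primary framing via the surjective isometry $\Tp\to\mathbb{H}(\nat\times\nat)$ is a clean repackaging of the same computations, with the added payoff that inner-product preservation and surjectivity give all the basis properties at once.
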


\begin{proof}
We have already shown that $\Tp$, endowed with the operator norm, is a $p$-adic Banach space, and that
the sesquilinear form $\inpr$ is both Hermitian and non-degenerate.

Observe now that, for all $S,T\in\Tp$, we have:
\begin{equation}
|\langle S,T\hsp|=|\tr(S^*T)|=|{\textstyle \sum_m}\langle S\phi_m, T\phi_m\rangle|\le
\max_m |\langle S\phi_m, T\phi_m\rangle|\le \|S\|\, \|T\|;
\end{equation}
i.e., $\inpr$ satisfies the Cauchy-Schwarz inequality, as well. Therefore, $\inpr$ is an inner product, and
$\Tp$, endowed with this sesquilinear form, is an inner product $p$-adic Banach space.

It remains to show that $\big\{\jkE\big\}_{j,k\in\nat}$ is an orthonormal basis in $\Tp$. Since it is clear that
\begin{equation}
\big\langle\jkE,\rsE\bhsp=\tr(|\phi_k\rangle\langle\phi_j|\, |\phi_r\rangle\langle\phi_s| )=\langle\phi_j,\phi_r\rangle
\langle\phi_s,\phi_k\rangle=\delta_{jr}\delta_{ks},
\end{equation}
we only need to prove that $\big\{\jkE\big\}_{j,k\in\nat}$ is a \emph{normal} basis. In fact, for every finite subset $I$
of $\nat\times\nat$ and every finite subset $\{\alpha_{jk}\}_{j,k\in I}$ of $\Q$, we have:
\begin{equation}
\left\|{\textstyle\sum_{j,k\in I}}\,\alpha_{jk}\, \jkE\right\|=\max_{j,k\in I}|\alpha_{jk}|.
\end{equation}
Moreover, for every trace class operator $T\defi\opPhi(T_{mn})$, we have that
\begin{equation}
T=\lim_l \lT,\ \mbox{where $\lT\defi\sum_{\max\{j,k\}\le l} T_{jk}\,\jkE$}.
\end{equation}
This fact is a consequence of the estimate
\begin{equation}
\big\|T-\lT\hspace{-0.6mm}\big\|=\sup_{m,n}\big|T_{mn}-\lTmn\hspace{-0.6mm}\big|=\sup\{|T_{mn}|\sep \max\{m,n\}> l\},
\end{equation}
together with the same argument used in the first part of the proof of Theorem~\ref{theoclo},
which shows that --- $T$ being of trace class --- $\lim_l\big\|T-\lT\hspace{-0.6mm}\big\|=0$.

In conclusion, $\big\{\jkE\big\}_{j,k\in\nat}$ is an orthonormal basis in the inner product $p$-adic
Banach space $\Tp$, which is then a $p$-adic Hilbert space.
\end{proof}

The $p$-adic Hilbert space $(\Tp,\|\cdot\|,\inpr)$ will be called the \emph{$p$-adic Hilbert-Schmidt space}.


\subsection{Selfadjoint trace class operators}
\label{selfadjoint}

Let us now consider the $\mathbb{Q}_p$-linear space $\Tps\defi\Tp\cap\Bps=\Cpa\cap\Bps$ of all selfadjoint
trace class operators in the $p$-adic Hilbert space $\Hp$, that is closed in $\Tp$, because the mapping
$\Tp\ni T\mapsto T^\ast\in\Tp$ is a (conjugate-linear) isometry and, hence, continuous (thus, $\Tps$, endowed
with the operator norm, is an ultrametric Banach space over $\Qp$).

\begin{proposition} \label{prosydec}
Every  selfadjoint trace class operator $T\in\Tps$ can be expressed in the form
\begin{equation}\label{sydec}
T=\sum_{j\in J}(\sj\, |\osj\rangle\langle\fj|+\overline{\sj}\, |\fj\rangle\langle\osj|),
\end{equation}
where $J=\{1,2,\ldots\}$ is a countable index set and
\begin{itemize}

\item $\{\sj\}_{j\in J}\subset\Q$ --- for $T\neq 0$, we assume that
$\{\sj\}_{j\in J}\subset\Qa$ --- and, if $J=\nat$, $\lim_j\sj=0$;

\item $\{\osj\}_{j\in J}$ is a normalized norm-orthogonal system in $\Hp$,
and $\|\fj\|=1$, for all $j\in J$;

\item the sum in~\eqref{sydec} --- whenever $J$ is not finite --- converges w.r.t.\ the norm topology.

\end{itemize}

In particular, the norm-orthogonal system $\{\osj\}_{j\in J}$ can be chosen to be contained
in any orthonormal basis in $\Hp$.

Conversely, every linear operator $T$ of the previous form belongs to $\Tps$, and
\begin{equation} \label{trasa}
\tr(T)=2\sum_{j\in J}\scj(\sj\,\langle\fj,\osj\rangle)=\sum_{j\in J}(\sj\,\langle\fj,\osj\rangle
+\overline{\sj}\,\langle\osj,\fj\rangle)\in\mathbb{Q}_p;
\end{equation}
moreover, $|\tr(T)|\le\|T\|\le\max_{j\in J} |\sj|$.
\end{proposition}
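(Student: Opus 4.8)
The plan is to derive the decomposition~\eqref{sydec} from two results already at our disposal: the identification $\Tp=\Cpa$ (Theorem~\ref{theochar}) and the canonical decomposition of adjointable compact operators (Corollary~\ref{decocpa}). The symmetric shape of~\eqref{sydec} will come simply from the observation that a selfadjoint operator coincides with its own average with its adjoint, $T=\tfrac12(T+T^*)$, so that $T$'s canonical decomposition and the adjoint of that decomposition can be added together.

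Concretely, given $T\in\Tps=\Tp\cap\Bps=\Cpa\cap\Bps$, I would first invoke Corollary~\ref{decocpa} to write $T=\sum_{j\in J}\lambda_j\,|\osj\rangle\langle\fj|$ with $J=\{1,2,\ldots\}$, $\{\lambda_j\}_{j\in J}\subset\Qa$ for $T\neq 0$, $\lim_j\lambda_j=0$ if $J=\nat$, $\|\osj\|=\|\fj\|=1$, the norm-orthogonal system $\{\osj\}_{j\in J}$ chosen inside a prescribed orthonormal basis, and the series convergent in operator norm. Since $(|\osj\rangle\langle\fj|)^*=|\fj\rangle\langle\osj|$ (Corollary~\ref{corad}) and the involution on $\Bpa$ is an isometry, hence continuous (Proposition~\ref{algestruc}), applying $T=\tfrac12(T+T^*)$ term by term gives $T=\sum_{j\in J}\bigl(\tfrac{\lambda_j}{2}\,|\osj\rangle\langle\fj|+\tfrac{\overline{\lambda_j}}{2}\,|\fj\rangle\langle\osj|\bigr)$. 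Setting $\sj\defi\lambda_j/2$ is legitimate because $2\in\Qpa$ is invertible, so $\sj\in\Qa$ precisely when $\lambda_j\neq 0$ and $\lim_j\sj=0\iff\lim_j\lambda_j=0$; norm convergence survives because $\||\osj\rangle\langle\fj|\|=\|\osj\|\,\|\JH\fj\|=1$ and likewise $\||\fj\rangle\langle\osj|\|=1$, so each grouped summand has norm $\le|\sj|\to 0$. This yields exactly~\eqref{sydec}, with the norm-orthogonal system $\{\osj\}_{j\in J}$ still containable in any orthonormal basis.

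For the converse, starting from any $T$ written as in~\eqref{sydec}, each partial sum is a finite-rank operator that is adjointable (it is a finite $\Q$-combination of the mutually adjoint operators $|\osj\rangle\langle\fj|$, $|\fj\rangle\langle\osj|\in\Bpa$), and the partial sums converge to $T$ in operator norm by the same estimate as above; hence $T$ is compact (a norm-limit of finite-rank operators) and adjointable ($\Bpa$ being norm-closed in $\Bp$), i.e.\ $T\in\Cpa=\Tp$. Applying the continuous, isometric involution to the partial sums shows $T^*=T$, so $T\in\Bps$, whence $T\in\Tps$. Finally, for the trace and norm bounds I would use that $\tr\colon\Tp\to\Q$ is bounded (Corollary~\ref{corbotra}), so it commutes with the norm-convergent series; the rank-one case of Theorem~\ref{theochar} (or a direct evaluation of $\sum_m\langle\phi_m,(|\osj\rangle\langle\fj|)\phi_m\rangle$ in an orthonormal basis, using~\eqref{reside}) gives $\tr(|\osj\rangle\langle\fj|)=\langle\fj,\osj\rangle$ and $\tr(|\fj\rangle\langle\osj|)=\langle\osj,\fj\rangle=\overline{\langle\fj,\osj\rangle}$, so $\tr(T)=\sum_{j\in J}\bigl(\sj\langle\fj,\osj\rangle+\overline{\sj\langle\fj,\osj\rangle}\bigr)=2\sum_{j\in J}\scj(\sj\langle\fj,\osj\rangle)\in\mathbb{Q}_p$, which is~\eqref{trasa}; then $|\tr(T)|\le\|T\|$ by the second inequality in~\eqref{trinequa}, and $\|T\|\le\sup_j\max\{|\sj|,|\overline{\sj}|\}=\max_{j\in J}|\sj|$ by the ultrametric inequality applied to~\eqref{sydec}. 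I do not anticipate a genuine obstacle: the only points that require a little care are the invertibility of $2$ (so that passing from $\lambda_j$ to $\sj$ preserves all the listed conditions) and the continuity arguments — for the involution and for $\tr$ — that license the term-by-term manipulations when $J$ is infinite.
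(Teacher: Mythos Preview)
Your proof is correct and follows essentially the same route as the paper: the paper observes that $T\in\Tps$ iff $T=A+A^\ast$ for some $A\in\Tp=\Cpa$, applies Corollary~\ref{decocpa} to $A$, and then reads off~\eqref{trasa} from~\eqref{tctra} and the norm bound from~\eqref{tcnor}. Your version simply makes the choice $A=\tfrac12 T$ explicit (decomposing $T$ first and then averaging with $T^\ast$), and spells out the continuity arguments for the involution and the trace; otherwise the two arguments coincide.
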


\begin{proof}
Clearly, a trace class operator $T\in\Tp$ is selfadjoint iff it is of the form $T=A+A^\ast$,
for some $A\in\Tp=\Cpa$. Then, by Corollary~\ref{decocpa}, $A=\sum_{j\in J}\sj\, |\osj\rangle\langle\fj|$
--- with $\{\sj\}_{j\in J}$, $\{\osj\}_{j\in J}$ and $\{\fj\}_{j\in J}$ as above --- so that $T\in\Tps$
iff it is of the form~\eqref{sydec}, and then formula~\eqref{trasa} follows immediately from~\eqref{tctra}.
Moreover, by the estimate~\eqref{tcnor}, we have that
$|\tr(T)|\le\|T\|=\|A+A^\ast\|\le\max\{\|A\|,\|A^\ast\|\}=\|A\|=\max_{j\in J} |\sj|$.
\end{proof}

\begin{definition}
Given a selfadjoint trace class operator $T\in\Tps$, an expression of the form~\eqref{sydec} will be called
\emph{a symmetric decomposition} of $T$. In the case where $\{\osj\}_{j\in J}$ is chosen to be contained in
an orthonormal basis, we will call \emph{orthonormal} the associated symmetric decomposition.
\end{definition}


\section{A $p$-adic model for quantum states}
\label{sec7}

Building on the foundations laid down in the preceding sections, we will now attempt at achieving a general definition
of a quantum state in the $p$-adic setting. As usual, the standard complex case will provide us with a useful road map,
\emph{but}, when dealing with a $p$-adic Hilbert space, the emergence of non-trivial peculiarities should be expected.


\subsection{The complex setting in a nutshell}
\label{complex}

Since we do expect that the general lines, rather than the peculiar features, of the theory will be preserved
when switching from the complex to the $p$-adic case, it may be sensible to consider, as a starting point,
the most abstract formulation of standard quantum mechanics, i.e., the so-called
\emph{algebraic formulation}~\cite{strocchi2008introduction,moretti2018spectral,giachetta2005geometric,bratteli2012operator}.
This formulation relies on the following set of fundamental assumptions:
\begin{itemize}

\item A quantum system can be described by means of two main classes of objects --- \emph{states} and \emph{observables} ---
mutually related by means of a natural \emph{pairing} map. By suitably exploiting these two kinds of objects, one can then
construct all other parts of the theory: measurements, symmetry transformations, dynamics etc.

\item The (bounded) observables of the system are supposed to form the selfadjoint part $\calgs$ of an abstract
\emph{non-commutative unital $C^\ast$-algebra} $\calg$.

\item A generic \emph{state} $\omega$ (of $\calg$) is defined as a \emph{normalized positive functional} on $\calg$; i.e., as a functional
$\omega\colon\calg\rightarrow\mathbb{C}$ satisfying the conditions
\begin{equation}
\omega(A^*A)\geq 0, \quad\forall A\in\calg, \qquad\omega(\mathrm{Id})=1.
\end{equation}
Here, the \emph{positive elements} $A^*A$ of $\mathfrak{A}$ form a convex cone.

\item Denoting by $\stalg$ the set of all states of the $C^\ast$-algebra $\calg$, the pairing between observables
and states is provided by the evaluation map $\calgs\times\stalg\ni(A,\omega)\mapsto\omega(A)$.
\end{itemize}

From these assumptions, one can then derive the following main facts:
\begin{enumerate}

\item Every state $\omega\colon\calg\rightarrow\mathbb{C}$ is automatically continuous (i.e., bounded, as
a linear functional); specifically, it turns out that $\|\omega\|=\omega(\id)=1$.

\item $\stalg$ is a convex subset of the (complex) Banach space of bounded functionals on $\mathfrak{A}$.

\item For every $A\in\calg$ and every state $\omega\in\stalg$, $\omega(A^\ast)=\overline{\omega(A)}$.

\item In particular, for every observable $A\in\calgs$ and every state $\omega\in\stalg$, the \emph{real} quantity $\omega(A)$
--- i.e., the pairing of $A$ with $\omega$ --- can be interpreted as the \emph{expectation value} of the observable $A$ when
the physical system is in the state $\omega$.

\item By the celebrated \emph{Gelfand-Naimark theorem}~\cite{strocchi2008introduction,moretti2018spectral, giachetta2005geometric},
$\calg$ can be realized as --- i.e., is isometrically $\ast$-isomorphic to --- a $C^\ast$-subalgebra $\salge$ of the $C^\ast$-algebra of
all bounded operators $\mathcal{B}(\mathcal{K})$ in a complex Hilbert space $\mathcal{K}$. For the sake of simplicity, we will suppose henceforth
that $\mathcal{K}$ is separable and $\mathfrak{C}=\mathcal{B}(\mathcal{K})$ (this is the case of `ordinary' quantum mechanics).

\item By the previous identification of $\calg$ with $\mathcal{B}(\mathcal{K})$, we can single out a distinguished class of states
--- the so-called \emph{trace-induced states} $\trast$ --- that can be defined by
\begin{equation}
\omega\in\trast\iffdef\mbox{$\omega=\tr(\argo\rho_{\omega})\colon\calg\rightarrow\mathbb{C}$, for some $\rho_{\omega}\in\densK$},
\end{equation}
where $\densK\subset\mathcal{T}(\mathcal{K})$ is the convex set of all unit-trace positive trace class operators in $\mathcal{K}$,
the so-called \emph{density} or \emph{statistical} operators.

\item It is worth stressing that, in the case where $\dim(\mathcal{K})=\infty$, $\trast\subsetneq\stalg$.
There is a remarkable characterization of trace-induced states as those states that are \emph{$\sigma$-additive}~\cite{Emch,moretti2018spectral}.
Identifying the abstract algebra $\calg$ with $\mathcal{B}(\mathcal{K})$, a state $\omega\in\cst$ is
$\sigma$-additive if
\begin{equation}
\omega\big({\textstyle\sum_{j\in J}}P_j\big)=\sum_{j\in J}\omega(P_j),
\end{equation}
for every (countable) family $\{P_j\}_{j\in J}$ of pairwise orthogonal projections in $\mathcal{K}$, where
the possibly infinite sum $\sum_{j\in J}P_j$ is supposed to converge w.r.t.\ the weak operator topology
(it actually converges w.r.t.\ the strong operator topology, as well). Therefore, $\omega\in\cst$ is
$\sigma$-additive iff $\omega=\tr(\argo\rho_{\omega})$, for some density operator $\rho_{\omega}\in\densK$.
The role, the meaning and the relevance of those states that are \emph{not} $\sigma$-additive is controversial~\cite{Emch},
and one often restricts to the trace-induced ones; equivalently, to density operators. This is analogous to restricting
to $\sigma$-additive probability measures in classical statistical mechanics.

\item The \emph{spectral decomposition} $A=\int_\mathbb{R}\lambda\,\de\spm(\lambda)$ of a selfadjoint operator in
$\mathcal{K}$ --- where  $\spm$ is the spectral measure uniquely associated with $A$ --- allows one to complete
the probabilistic interpretation of the theory. In particular, it shows that every (bounded or unbounded) observable
can be expressed in terms of the \emph{lattice of projections} $\proK\subset\bsa$, whose elements are then regarded as
the \emph{elementary propositions} of the theory~\cite{Emch}.

\item Eventually, one is led in a natural way to describe the observables of a quantum system in terms of PVMs (projection-valued
measures) or, more generally, of POVMs (positive-operator-valued measures, also called ``semispectral measures'')~\cite{Holevo,Busch,Heino}.
The --- both conceptually and mathematically transparent --- generalization of PVMs into POVMs has a remarkable physical interpretation
related to the theory of open quantum systems (Naimark's dilation theorem~\cite{Holevo}).

\end{enumerate}


\subsection{Convexity and probability in the $p$-adic setting}
\label{convexity}

Quantum probability theory is tailored on classical probability theory, of which it can be regarded as a non-commutative
counterpart. This is not surprising because the outcome of a quantum measurement process must be, ultimately, a classical
probability distribution. In particular, both theories share essentially the same notion of convexity.

Clearly, the basic rules of the game must be re-written when switching to the $p$-adic setting. We start with briefly
introducing the $p$-adic (or, more generally, non-Archimedean) notion of convexity. Our treatment will be rather sketchy;
for further details, the reader may refer to Sect.~{2.5} of~\cite{narici71} and Sect.~{3.1} of~\cite{Perez-Garcia}.
Moreover, we will adapt the main definitions and results to the special case that will be considered in the next subsection.

Let $(X,\|\cdot\|)$ be a normed space over $\Q$. By field restriction, we can regard it as a vector space over
$\Qp$ and consider a notion of $\Qp$-convexity (rather than $\Q$-convexity). We will keep trace of this choice ---
essentially motivated by our objectives --- in the notation that will be adopted.

\begin{definition}
A subset $\acoset$ of $X$ is said to be \emph{absolutely $\Qp$-convex} if ($0\in\acoset$ and) $\lambda x + \mu y\in\acoset$,
for all $x,y\in \acoset$ and all $\lambda,\mu\in\pint$, where $\pint=\{\lambda\in\Qp \sep |\lambda|\le 1\}$ is the ring
of $p$-adic integers. Given any subset $\suse$ of $X$, its \emph{absolutely $\Qp$-convex hull} $\acop(\suse)$ is defined as the intersection
of all absolutely $\Qp$-convex sets containing $\suse$.
\end{definition}

We have that $\acop(\emptyset)=\emptyset$ and, if $\suse\neq\emptyset$,
\begin{equation}
\acop(\suse)=\{\lambda_1 x_1 + \cdots +\lambda_n x_n \sep
n\in\nat,\ x_1, \ldots ,x_n\in\suse,\ \lambda_1,\ldots ,\lambda_n\in\pint\}.
\end{equation}
We will denote by $\cacop(\suse)$ the (norm-)closure of the set $\acop(\suse)$.

\begin{definition}
A subset $\coset$ of $X$ is said to be \emph{$\Qp$-convex} if it is either empty or of the form
$x+\acoset$, for some $x\in X$ and some (nonempty) absolutely $\Qp$-convex subset $\acoset$ of $X$.
Given any subset $\suse$ of $X$, its \emph{$\Qp$-convex hull} $\cop(\suse)$ is defined as the intersection
of all $\Qp$-convex sets containing $\suse$.
\end{definition}

We will denote by $\ccop(\suse)$ the closure of the set $\cop(\suse)$. Given a $p$-adic Hilbert space $\Hp$
and an orthonormal basis $\Phi\equiv\{\phi_m\menne$ (where $\enne\in\nat$ or $\enne=\infty$) in $\Hp$,
the closed $\Qp$-convex hull $\ccop(\Phi)$ is said to be a \emph{$\Qp$-simplex} in $\Hp$.

One can easily check the following facts:
\begin{itemize}

\item A $\Qp$-convex subset of $X$ is absolutely $\Qp$-convex iff it contains 0.

\item
$\cop(\suse)=\big\{\lambda_1 x_1 + \cdots +\lambda_n x_n \sep
n\in\nat,\ x_1, \ldots ,x_n\in\suse,\ \lambda_1,\ldots ,\lambda_n\in\pint,
\ {\textstyle\sum_{k=1}^n}\lambda_k=1\big\}$.

\end{itemize}

\begin{definition}
A map $g\colon X\rightarrow Y$ --- where $Y$ is a vector space over $\Q$ ---
is said to be \emph{$\Qp$-convex} if
\begin{equation}
g(\lambda_1 x_1 + \lambda_2 x_2)=\lambda_1 g(x_1) + \lambda_2 g(x_2),\
\mbox{$\forall x_1, x_2\in X$, $\forall\lambda_1, \lambda_2\in\pint$
such that $\lambda_1 + \lambda_2=1$.}
\end{equation}
\end{definition}

\begin{theorem} \label{charconv}
Let $(X,\|\cdot\|)$ be a normed space over $\Q$.

For $p\neq 2$, a subset $\coset$ of $X$ is $\Qp$-convex iff
\begin{equation} \label{concort}
\lambda_1 x_1 + \lambda_2 x_2\in\coset,\
\mbox{$\forall x_1, x_2\in\coset$, $\forall\lambda_1, \lambda_2\in\pint$
such that $\lambda_1 + \lambda_2=1$,}
\end{equation}
whereas this condition is (necessary but) \emph{not} sufficient in the case where $p=2$.

For $p=2$, a subset $\coset$ of $X$ is $\Qp$-convex iff
\begin{equation} \label{consta}
\lambda_1 x_1 + \lambda_2 x_2 +\lambda_3 x_3\in\coset,\
\mbox{$\forall x_1, x_2, x_3 \in\coset$, $\forall\lambda_1, \lambda_2, \lambda_3\in\pint$
such that $\lambda_1 + \lambda_2 +\lambda_3=1$.}
\end{equation}
\end{theorem}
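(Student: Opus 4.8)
The statement is a characterization of $\Qp$-convexity in terms of a simplicial closure condition, with a dichotomy between $p\neq 2$ and $p=2$. The plan is to reduce the general claim about $\Qp$-convex sets to a claim about absolutely $\Qp$-convex sets (by translation), and then to characterize absolute $\Qp$-convexity by the condition that the set absorbs $\pint$-combinations of its points. The technical heart is an arithmetic lemma about the ring of $p$-adic integers: for $p\neq 2$, every element of $\pint$ can be written as $\lambda_1$ with $\lambda_1+\lambda_2=1$ and $\lambda_1,\lambda_2\in\pint$ in a way that lets one build up arbitrary finite $\pint$-combinations by iterating binary ones; whereas for $p=2$ this fails (the obstruction being that $\lambda_1=\lambda_2=1/2\notin\mathbb{Z}_2$, so one cannot halve), and one needs \emph{ternary} combinations instead.

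First I would record the elementary reductions. If $\coset$ is $\Qp$-convex and nonempty, then $\coset=x+\acoset$ with $\acoset$ absolutely $\Qp$-convex, and one checks directly that condition~\eqref{concort} (resp.~\eqref{consta}) is \emph{affine-invariant}: it holds for $\coset$ iff it holds for $\acoset=\coset-x$. For $\acoset$ containing $0$, condition~\eqref{concort} with $x_2=0$, $\lambda_2=0$ forces $\lambda_1 x_1\in\acoset$ for all $\lambda_1\in\pint$ (scalar absorption), and then $\lambda_1 x_1+\lambda_2 x_2\in\acoset$ for $\lambda_1+\lambda_2=1$ upgrades, via a short computation using scalar absorption, to $\mu_1 x_1+\mu_2 x_2\in\acoset$ for \emph{all} $\mu_1,\mu_2\in\pint$ --- which is precisely absolute $\Qp$-convexity in the binary case. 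So for $p\neq 2$ the remaining content is: binary $\pint$-combinations stay in $\acoset$ $\iff$ arbitrary finite $\pint$-combinations do. The nontrivial direction is showing that $\sum_{k=1}^n\lambda_k x_k\in\acoset$ follows from the binary case; here I would induct on $n$, writing $\sum_{k=1}^n\lambda_k x_k=\mu\bigl(\sum_{k=1}^{n-1}\tfrac{\lambda_k}{\mu}x_k\bigr)+\lambda_n x_n$ where $\mu=\sum_{k=1}^{n-1}\lambda_k$; the subtlety is that $\mu$ may have $|\mu|<1$, so $\lambda_k/\mu$ need not lie in $\pint$, and one must instead argue by a normalization/regrouping that keeps all coefficients $p$-adic integers. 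This is where the hypothesis $p\neq 2$ enters: when $|\mu|<1$ one exploits that $\pint\setminus\mide$ (the units) together with the binary condition suffices to reach any convex combination, using that $|1-\lambda_k|$ behaves well; alternatively, and more cleanly, I would cite that $\acop(\suse)$ as computed in the excerpt already consists of all finite $\pint$-combinations, so the task is just to show the set satisfying~\eqref{concort} is closed under forming such hulls.

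For the $p=2$ part I would proceed in two steps. \textbf{Necessity of~\eqref{consta}:} an absolutely $\mathbb{Q}_2$-convex set closed under $\mathbb{Z}_2$-combinations trivially satisfies the ternary condition (and the translated version for general $\mathbb{Q}_2$-convex sets). \textbf{Insufficiency of~\eqref{concort} for $p=2$:} I would exhibit a concrete counterexample --- a subset of, say, $\mathbb{Q}_2^2$ or $\mathbb{Q}_2^3$ satisfying the binary condition but not absolutely $\mathbb{Q}_2$-convex --- the standard obstruction being that one cannot realize the midpoint $\tfrac12(x_1+x_2)$ as a binary $\mathbb{Z}_2$-combination since $\tfrac12\notin\mathbb{Z}_2$, whereas it \emph{is} a ternary one (e.g. $\tfrac12(x_1+x_2)=\alpha x_1+\alpha x_2+\beta\cdot 0$ is impossible with $2\alpha+\beta=1$, $\alpha,\beta\in\mathbb{Z}_2$, forcing a genuinely three-vertex relation; more to the point, $1=3\cdot(-1)+3\cdot 1+1\cdot 1$-type identities in $\mathbb{Z}_2$ let three points reach combinations two cannot). \textbf{Sufficiency of~\eqref{consta} for $p=2$:} here I would show the ternary condition forces scalar absorption (take $x_2=x_3=0$) and then, by an induction analogous to the $p\neq 2$ case but now \emph{trisecting} rather than bisecting, that all finite $\mathbb{Z}_2$-combinations stay in the set. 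The main obstacle I anticipate is precisely this last induction: making the regrouping argument work uniformly when partial sums $\mu=\sum_{k\in S}\lambda_k$ satisfy $|\mu|<1$, so that division by $\mu$ leaves $\mathbb{Z}_2$; the resolution is to group the summands into \emph{three} blocks with coefficient-sums that are either units or zero (possible in $\mathbb{Z}_2$ precisely because every element of $\mathbb{Z}_2$ is a sum of three units, e.g. $\lambda=\lambda + 1 + (-1)$ won't do but partitioning the index set into at most three parts whose partial sums are units can always be arranged), then apply~\eqref{consta}. I would keep the citations to Sect.~2.5 of~\cite{narici71} and Sect.~3.1 of~\cite{Perez-Garcia} as the reference for the underlying non-Archimedean convexity facts, invoking them for the description of $\acop(\suse)$ and $\cop(\suse)$ and adapting only the $p=2$ refinement in detail.
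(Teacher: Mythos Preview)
Your plan diverges from the paper's proof, which is essentially a two-line citation: the ternary condition~\eqref{consta} is the standard sufficient criterion for non-Archimedean convexity (Theorem~3.1.15 in~\cite{Perez-Garcia}), and the binary condition~\eqref{concort} suffices precisely when the residue class field has at least three elements (Theorem~3.1.17 \emph{ibidem}), which for $\Qp$ means $p\neq 2$. The paper does not reprove these facts.

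Your attempt to argue from scratch is reasonable in spirit, but there is a genuine gap in where you locate the role of $p\neq 2$. You claim that from scalar absorption and the binary convex condition one obtains $\mu_1 x_1+\mu_2 x_2\in\acoset$ for \emph{all} $\mu_1,\mu_2\in\pint$ ``via a short computation,'' and only afterward invoke $p\neq 2$ in the induction passing from binary to $n$-ary $\pint$-combinations. This is backwards. Once one has closure under arbitrary binary $\pint$-combinations (no constraint on the sum), the $n$-ary case follows trivially for every $p$ by writing $\sum_{k=1}^n\lambda_k x_k = 1\cdot\big(\sum_{k=1}^{n-1}\lambda_k x_k\big)+\lambda_n x_n$ and iterating. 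The actual obstruction lies in your ``short computation'': given $\mu_1,\mu_2\in\pint$ with $|\mu_1+\mu_2|<1$, one needs a unit $\alpha\in\pint$ such that $1-\alpha$ is \emph{also} a unit, so as to write $\mu_1 x_1+\mu_2 x_2=\alpha(\alpha^{-1}\mu_1 x_1)+(1-\alpha)\big((1-\alpha)^{-1}\mu_2 x_2\big)$ with both bracketed vectors in $\acoset$ by scalar absorption. Such an $\alpha$ exists iff the residue field $\rcf$ has an element different from $0$ and $1$, i.e., iff $p\neq 2$. Your diagnosis in terms of ``$\tfrac12\notin\mathbb{Z}_2$'' and ``cannot halve'' is a symptom rather than the mechanism; the residue-field size is what the argument actually consumes.

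Separately, your proposed counterexample for $p=2$ is not worked out: the discussion around $\tfrac12(x_1+x_2)=\alpha x_1+\alpha x_2+\beta\cdot 0$ does not yield a set satisfying~\eqref{concort} but failing $\mathbb{Q}_2$-convexity. A concrete example (or a direct appeal to Theorem~3.1.17 in~\cite{Perez-Garcia}, which also asserts the failure for two-element residue fields) is needed to close this part.
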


\begin{proof}
That a convex combination of three elements of a set belongs to this set is the standard
sufficient condition ensuring convexity in a non-Archimedean normed space; see Theorem~{3.1.15}
in~\cite{Perez-Garcia}. Then, condition~\eqref{consta} is (necessary and) sufficient for the $\Qp$-convexity
of $X$. Moreover, in the case where $p\neq 2$, the residue class field $\rcf$ of $\Qp$ consists of
$p\ge 3$ elements, so that, by Theorem~{3.1.17} in~\cite{Perez-Garcia}, the milder condition~\eqref{concort}
is sufficient too, whereas this condition is \emph{not} sufficient for $p=2$. (We stress that we can apply
the previously mentioned results in non-Archimedean convexity because, by field restriction, $X$ can be
regarded as a vector space over $\Qp$.)
\end{proof}

\begin{corollary} \label{corcoma}
The range $g(X)$ of a $\Qp$-convex map $g\colon X\rightarrow Y$ --- where $Y$ is a vector space over $\Q$ ---
is a $\Qp$-convex subset of $Y$.
\end{corollary}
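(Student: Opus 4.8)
The plan is to reduce the assertion to a linearity property of the translate $h\colon X\rightarrow Y$, $h(x)\defi g(x)-g(0)$, and then to exhibit $g(X)$ as a translate of an absolutely $\Qp$-convex set. First I would record the elementary consequences of the defining identity of a $\Qp$-convex map. Taking $x_2=0$, $\lambda_1=\lambda\in\pint$ and $\lambda_2=1-\lambda$ gives $g(\lambda x)=\lambda g(x)+(1-\lambda)g(0)$, i.e.\ $h(\lambda x)=\lambda h(x)$ for every $\lambda\in\pint$ and $x\in X$; and, since $\lambda_1+\lambda_2=1$, the identity itself rewrites as $h(\lambda_1x_1+\lambda_2x_2)=\lambda_1h(x_1)+\lambda_2h(x_2)$, with $h(0)=0$.

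The key step is to prove that $h$ is additive. The point is that $2$ and $-1$ belong to $\pint$ for \emph{every} prime $p$ and $2+(-1)=1$, so for all $a,b\in X$ the identity for $h$ yields $h(a+b)=h\big(2a+(-1)(a-b)\big)=2h(a)-h(a-b)$ and, interchanging $a$ and $b$ and using $h(b-a)=h\big((-1)(a-b)\big)=-h(a-b)$, also $h(a+b)=2h(b)-h(b-a)=2h(b)+h(a-b)$; adding the two relations gives $2h(a+b)=2h(a)+2h(b)$, hence $h(a+b)=h(a)+h(b)$, division by $2$ being legitimate because $Y$ is a vector space over the characteristic-zero field $\Q$. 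This little parallelogram-type computation is the only non-routine point of the proof: for $p=2$ one has $\frac{1}{2}\notin\pint$, so the naive ``midpoint'' argument $h(a+b)=2\,h(\frac{1}{2}a+\frac{1}{2}b)$ is unavailable, and this is precisely where the above trick is needed (for $p\neq 2$ the midpoint argument would of course work just as well).

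Finally, combining additivity with $\pint$-homogeneity gives $h(\lambda a+\mu b)=h(\lambda a)+h(\mu b)=\lambda h(a)+\mu h(b)$ for all $a,b\in X$ and $\lambda,\mu\in\pint$. Hence $h(X)$ contains $0=h(0)$ and is stable under $\pint$-linear combinations of pairs of its points; that is, $h(X)$ is an absolutely $\Qp$-convex subset of $Y$. Since $g(x)=h(x)+g(0)$, we obtain $g(X)=g(0)+h(X)$, a translate of a nonempty absolutely $\Qp$-convex set, which is $\Qp$-convex by definition. As a by-product, $h$ is in fact $\Qp$-homogeneous --- from $h(x)=h\big(p^{k}(p^{-k}x)\big)=p^{k}h(p^{-k}x)$ one gets $h(p^{-k}x)=p^{-k}h(x)$ --- hence $\Qp$-linear, so that $g$ preserves every finite $\pint$-convex combination and $\cop(g(X))=g(X)$, consistently with Theorem~\ref{charconv}.
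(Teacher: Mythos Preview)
Your proof is correct, and the ``parallelogram'' identity $h(a+b)=h(2a-(a-b))=2h(a)-h(a-b)$ together with its symmetric counterpart is a neat device that handles the case $p=2$ uniformly. The approach, however, differs from the paper's. The paper does not pass through the translate $h=g-g(0)$ at all; instead it argues that a $\Qp$-convex map automatically preserves \emph{three}-term $\pint$-convex combinations --- by grouping two of the three terms into a single two-term convex combination --- and then invokes Theorem~\ref{charconv}, which characterizes $\Qp$-convex sets via closure under such three-term combinations. That argument is shorter and uses the available machinery directly.

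Your route, by contrast, is essentially a self-contained proof that $h$ is $\Qp$-linear, which is precisely the content of the Proposition immediately \emph{following} this corollary (the equivalence \ref{conl}$\Leftrightarrow$\ref{affl}$\Leftrightarrow$\ref{qual}); you are effectively proving both results at once. What your approach buys is that it avoids relying on Theorem~\ref{charconv} and yields the stronger conclusion --- $g(X)$ is a translate of a $\Qp$-linear subspace, hence $\Qp$-affine --- as a by-product. What the paper's approach buys is economy: a two-line reduction to the three-term criterion already established.
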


\begin{proof}
Just note that, if $g$ is $\Qp$-convex, then also
\begin{equation}
g(\lambda_1 x_1 + \lambda_2 x_2 + \lambda_3 x_3)=\lambda_1 g(x_1) + \lambda_2 g(x_2) + \lambda_3 g(x_3),
\end{equation}
for all $x_1, x_2, x_3\in X$, and for all $\lambda_1, \lambda_2, \lambda_3\in\pint$ such that
$\lambda_1 + \lambda_2 + \lambda_3=1$ (assuming that, say, $\lambda_1 + \lambda_2\neq 0$, write
$\lambda_1 x_1 + \lambda_2 x_2 = (\lambda_1 + \lambda_2)(\lambda_1 x_1/(\lambda_1 + \lambda_2)
+ \lambda_2 x_2/(\lambda_1 + \lambda_2))$, where $\lambda_1/(\lambda_1 + \lambda_2),
\lambda_2/(\lambda_1 + \lambda_2)\in\pint$), and apply Theorem~\ref{charconv}.
\end{proof}

\begin{definition} \label{defaf}
A nonempty subset $\affset$ of $X$ is said to be \emph{$\Qp$-affine} if it is of the form
$x+\lis$, for some $x\in X$ and some $\Qp$-linear subspace $\lis$ of $X$.
Given any nonempty subset $\suse$ of $X$, its \emph{$\Qp$-affine hull} $\aff(\suse)$ is the intersection
of all $\Qp$-affine sets containing $\suse$.
\end{definition}

Clearly, a $\Qp$-affine subset of $X$ is also $\Qp$-convex, because every $\Qp$-linear subspace of $X$
is a $\Qp$-absolutely convex subset of $X$. We will denote by $\caff(\suse)$ the closure of the set $\aff(\suse)$.
Given a $p$-adic Hilbert space $\Hp$ and an orthonormal basis $\Phi\equiv\{\phi_m\menne$ in $\Hp$, the closed
$\Qp$-affine hull $\caff(\Phi)$ is said to be a \emph{$\Qp$-hyperplane} in $\Hp$.

One can easily prove the following facts:
\begin{itemize}

\item A $\Qp$-affine subset of $X$ is a $\Qp$-linear subspace iff it contains 0.

\item A nonempty subset $\affset$ of $X$ is $\Qp$-affine iff, for every pair of vectors $x,y\in\affset$,
the \emph{$\Qp$-line} $\{x+\alpha(y-x)\}_{\alpha\in\Qp}=\{y+\alpha(x-y)\}_{\alpha\in\Qp}$ through $x$ and
$y$ is contained in $\affset$; namely, iff $\alpha x + (1-\alpha) y\in\affset$, for all $x,y\in \affset$ and
$\alpha\in\Qp$.

\item if $g\colon X\rightarrow Y$, where $Y$ is a vector space over $\Q$,
is \emph{$\Qp$-affine} --- $g(\alpha x + (1-\alpha) y)=\alpha g(x) + (1-\alpha) g(y)$, for all $x,y\in X$,
and $\alpha\in\Qp$ --- then $g(X)$ is a $\Qp$-affine subset of $Y$.

\item
$\aff(\suse)=\big\{\pi_1 x_1 + \cdots +\pi_n x_n \sep
n\in\nat,\ x_1, \ldots ,x_n\in\suse,\ \pi_1,\ldots ,\pi_n\in\Qp,
\ {\textstyle\sum_{k=1}^n}\pi_k=1\big\}$.

\end{itemize}

\begin{proposition}
Given a map $g\colon X\rightarrow Y$ --- where $Y$ is a vector space over $\Q$ --- the following
facts are equivalent:
\begin{enumerate}[label=\tt{(K\arabic*)}]

\item \label{conl} $g$ is $\Qp$-convex;

\item \label{affl} $g$ is $\Qp$-affine;

\item \label{qual} $g$ is of the form $g(x)=g(0)+h(x)$, for some $\Qp$-linear map $h\colon X\rightarrow Y$.

\end{enumerate}
\end{proposition}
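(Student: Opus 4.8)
The plan is to prove the cycle of equivalences $\ref{conl}\Rightarrow\ref{qual}\Rightarrow\ref{affl}\Rightarrow\ref{conl}$, which is the most economical route since the hardest implication, $\ref{conl}\Rightarrow\ref{qual}$, produces the strongest conclusion and the other two are then essentially formal. First I would record the trivial observation that we may assume $g(0)=0$ after replacing $g$ by $\tilde g(x)\defi g(x)-g(0)$: one checks immediately that $\tilde g$ is $\Qp$-convex (resp.\ $\Qp$-affine) iff $g$ is, since the defining identities for both notions involve coefficients summing to $1$, so the constant $g(0)$ passes through unchanged. Thus it suffices to show that a $\Qp$-convex map $g$ with $g(0)=0$ is $\Qp$-linear.

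The core of the argument is the implication $\ref{conl}\Rightarrow\ref{qual}$. Assuming $g(0)=0$, I would first extract additivity and then homogeneity. For \emph{homogeneity}, fix $x\in X$ and $\lambda\in\pint$; writing $\lambda x=\lambda x+(1-\lambda)\cdot 0$ with $\lambda,(1-\lambda)\in\pint$ and $\lambda+(1-\lambda)=1$, $\Qp$-convexity gives $g(\lambda x)=\lambda g(x)+(1-\lambda)g(0)=\lambda g(x)$. To promote this to all of $\Qp$, given an arbitrary $\alpha\in\Qpa$ choose $k\in\mathbb{Z}$ with $p^k\alpha\in\pint$ and also $p^k\in\pint$ can be arranged only when $k\ge 0$; the clean way is: for any $\alpha\in\Qp$ pick $n\in\nat$ large enough that $p^n\alpha\in\pint$, apply the $\pint$-homogeneity just proved to the vector $x$ and scalar $p^n\alpha$ to get $g(p^n\alpha x)=p^n\alpha\,g(x)$, then apply it again with scalar $p^{-n}\in\Qp$ — but $p^{-n}\notin\pint$, so instead one uses additivity: $g(p^n y)=p^n g(y)$ would follow from adding $y$ to itself $p^n$ times, giving $g(p^n\alpha x)=p^n g(\alpha x)$, whence $p^n g(\alpha x)=p^n\alpha g(x)$ and $g(\alpha x)=\alpha g(x)$. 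For \emph{additivity}, given $x,y\in X$, write $x+y=\tfrac12 x' +\tfrac12 y'$ only if $2$ is invertible in $\pint$, i.e.\ $p\ne 2$; for $p\ne 2$ set $x'=2x$, $y'=2y$ and use $\Qp$-convexity with $\lambda_1=\lambda_2=\tfrac12\in\pint$ together with the homogeneity $g(2x)=2g(x)$ to conclude $g(x+y)=\tfrac12 g(2x)+\tfrac12 g(2y)=g(x)+g(y)$. For $p=2$ one must instead invoke the three-term form: use coefficients $\lambda_1=\lambda_2=\lambda_3$ is impossible since $3\lambda=1$ forces $\lambda=1/3\in\pint$ (fine, $|3|_2=1$), giving $g\big(\tfrac13(x'+y'+z')\big)=\tfrac13(g(x')+g(y')+g(z'))$; combining with the already-established $\pint$-homogeneity and a little juggling (e.g.\ taking $z'=0$ and scaling) yields additivity in the $p=2$ case as well. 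Once $g$ is additive and $\Qp$-homogeneous it is by definition $\Qp$-linear, which is exactly $\ref{qual}$ with $h=g$.

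For $\ref{qual}\Rightarrow\ref{affl}$: if $g(x)=g(0)+h(x)$ with $h$ linear, then for $x,y\in X$ and $\alpha\in\Qp$,
\begin{equation}
g(\alpha x+(1-\alpha)y)=g(0)+h(\alpha x+(1-\alpha)y)=g(0)+\alpha h(x)+(1-\alpha)h(y)=\alpha g(x)+(1-\alpha)g(y),
\end{equation}
so $g$ is $\Qp$-affine. For $\ref{affl}\Rightarrow\ref{conl}$: every pair $\lambda_1,\lambda_2\in\pint$ with $\lambda_1+\lambda_2=1$ is in particular a pair in $\Qp$ with $\lambda_1+(1-\lambda_1)=1$, so the $\Qp$-affine identity specializes directly to the $\Qp$-convexity identity. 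This closes the loop.

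The main obstacle I anticipate is the bookkeeping in the $p=2$ case of the additivity step, where the two-term convexity relation is genuinely weaker (this is exactly the phenomenon already flagged in Theorem~\ref{charconv}) and one is forced to route additivity through three-term convex combinations; care is needed to choose the auxiliary vectors and $\pint$-coefficients so that the sum-to-one constraint is met while still isolating $g(x+y)-g(x)-g(y)$. A secondary, milder nuisance is extending $\pint$-homogeneity to $\Qp$-homogeneity, which cannot be done by a single scaling (negative powers of $p$ leave $\pint$) and must be bootstrapped from additivity as sketched above. Everything else is formal manipulation of the defining identities.
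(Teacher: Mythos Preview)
Your route \ref{conl}$\Rightarrow$\ref{qual}$\Rightarrow$\ref{affl}$\Rightarrow$\ref{conl} is sound in outline and the last two implications are fine, but the step \ref{conl}$\Rightarrow$\ref{qual} has two problems. First, a genuine gap: for $p=2$ you ``invoke the three-term form'' $g\big(\tfrac13(x'+y'+z')\big)=\tfrac13\big(g(x')+g(y')+g(z')\big)$, but the definition of a $\Qp$-convex \emph{map} is only the two-term identity; the three-term version is not assumed and you have not derived it (and attempting to derive it by grouping runs into $|1/2|_2>1$, which is exactly the obstruction you are trying to circumvent). Second, an unnecessary detour that actually hides the fix: you claim that extending $\pint$-homogeneity to $\Qp$-homogeneity must go through additivity, but in fact it is immediate --- since $p^n\in\pint$, $g(p^n u)=p^n g(u)$ follows directly, and combined with $g(p^n\alpha\,x)=p^n\alpha\,g(x)$ this gives $g(\alpha x)=\alpha g(x)$ for all $\alpha\in\Qp$ with no appeal to additivity. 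Once full $\Qp$-homogeneity is in hand, additivity for $p=2$ is easy: write $x+y=(-1)(-x)+2\cdot(y/2)$ with $-1,2\in\mathbb{Z}_2$ summing to $1$, apply two-term convexity, and use $g(-x)=-g(x)$ and $g(y)=g(2\cdot y/2)=2\,g(y/2)$ (both from $\pint$-homogeneity, since $-1,2\in\mathbb{Z}_2$) to conclude $g(x+y)=g(x)+g(y)$. With these corrections your argument goes through.

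By contrast, the paper avoids the whole additivity/homogeneity bootstrap and any case split on $p$: it proves \ref{conl}$\Rightarrow$\ref{affl} directly by observing that for $|\alpha|>1$ one can write $y=\lambda\,(\alpha x+(1-\alpha)y)+(1-\lambda)\,x$ with $\lambda=1/(1-\alpha)$, checks that $|\lambda|<1$ and $|1-\lambda|=1$ so both lie in $\pint$, applies two-term convexity once, and solves for $g(\alpha x+(1-\alpha)y)$. This single algebraic trick is shorter and uniform in $p$; your (corrected) approach is more hands-on but yields \ref{qual} directly rather than via \ref{affl}.
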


\begin{proof}
Clearly, property~\ref{affl} implies~\ref{conl}. Let us prove that~\ref{conl} implies~\ref{affl}, as well.
In fact, if $g$ is $\Qp$-convex, it is sufficient to show that $g(\alpha x + (1-\alpha) y)=\alpha g(x) + (1-\alpha) g(y)$,
$\forall x,y\in X$, $\forall\alpha\in\Qp\setminus\pint$, i.e., for $|\alpha|>1$. Let us write $y$ as a $\Qp$-convex combination
\begin{equation}
y = \frac{1}{1-\alpha}\,(\alpha x + (1-\alpha) y) + \Big(1-\frac{1}{1-\alpha}\Big) x
= \lambda(\alpha x + (1-\alpha) y) + (1-\lambda) x,
\end{equation}
where $\lambda\equiv 1/(1-\alpha),(1-\lambda)\in\pint$, because $|\lambda|=|1/(1-\alpha)|=1/|(1-\alpha)|=1/|\alpha|<1$
and $|1-\lambda|=1$. Observe, now, that we have:
\begin{align}
g(y) & = g(\lambda(\alpha x + (1-\alpha) y) + (1-\lambda) x)
\nonumber\\
& = \lambda\, g(\alpha x + (1-\alpha) y) + (1-\lambda) g(x)
= \frac{1}{1-\alpha}\,g(\alpha x + (1-\alpha) y) - \frac{\alpha}{1-\alpha}\, g(x).
\end{align}
In conclusion: $g(\alpha x + (1-\alpha) y)=\alpha g(x) + (1-\alpha) g(y)$, for $|\alpha|>1$, and, hence,
for all $\alpha\in\Qp$ ($g$ being $\Qp$-convex).

The equivalence between~\ref{affl} and~\ref{qual} can be shown by a standard argument. We leave the details to the reader
(to prove that~\ref{affl} implies~\ref{qual}, it is enough to show that the map $h\colon X\rightarrow Y$, $h(x)\defi g(x)-g(0)$,
is $\Qp$-homogeneous and, hence, also additive).
\end{proof}

It turns out that $p$-adic probability theory differs significantly from classical probability theory,
because it mainly involves \emph{affine} --- rather than convex --- structures. Nevertheless, both theories arise
in a natural way from a common conceptual background. Again, our exposition will be sketchy; for further details and examples,
see~\cite{khrennikov1993p,khrennikov2002interpretations,khrennikov2013p,khrennikov1997non}, and references therein.

The statistical output of a concrete experiment consists of (relative) frequencies of the form $n/\ttN$, where $\ttN$ is the total
number of measurements performed during the experiment and $n\le\ttN$ counts the number of measurements providing a fixed experimental outcome.
Therefore, the possible statistical outputs of each experiment take values in the following subset of the field of rational numbers:
$\mathscr{O}_\mathbb{Q}=\{q\in\mathbb{Q}\sep 0\le q\le 1\}$. Assuming that a principle of \emph{statistical stabilization}
of frequencies holds (for $\ttN\rightarrow\infty$), it follows that the closure $\cl(\mathscr{O}_\mathbb{Q})$ of $\mathscr{O}_\mathbb{Q}$,
in the completion of $\mathbb{Q}$ w.r.t.\ some suitable topology, should provide the set where all experimental statistical
distributions take their values. Usually, one assumes that this topology is the one induced by the standard valuation on
$\mathbb{Q}$, so obtaining $\cl(\mathscr{O}_\mathbb{Q})=[0,1]\subset\mathbb{R}$. It is a remarkable fact that, if one considers
the topology induced by the $p$-adic valuation, instead, then $\cl(\mathscr{O}_\mathbb{Q})=\Qp$; see Theorem~{1.2}
in Chapt.~{VI} of~\cite{khrennikov2013p}.

Therefore, we can set the following:
\begin{definition}
A (discrete) $p$-adic probability distribution is a countable set $\{\pdj\}_{j\in J}\subset\Qp$
such that $\sum_{j\in J}\pdj=1$.
\end{definition}

It is worth observing the following simple facts:
\begin{itemize}

\item The set $\{1,2,-1,-1\}$ is a legitimate $p$-adic probability distribution, whereas it is not a
standard probability distribution.

\item For every pair $\{\pdj\}_{j\in J}$, $\{\tpdk\}_{k\in K}$ of $p$-adic probability distributions,
$\{\pdj\,\tpdk\}_{j\in J,\,k\in K}$ is a $p$-adic probability distribution too.

\item For every $p$-adic probability distribution $\{\pdj\}_{j\in J}$, $\max_{j\in J} |\pdj|\ge 1$.
Indeed, we have that $1=\big|\sum_{j\in J}\pdj\big|\le\max_{j\in J} |\pdj|$.

\item For every quadratic extension $\Q$ of $\Qp$, the collection of all probability distributions indexed
by $J$ can be identified, in a natural way, with a subset of $c_0(J,\Q)$, i.e.,
\begin{equation}
\prs(J,\Q)\defi\big\{\{\pdj\}_{j\in J}\in c_0(J,\Q) \sep \mbox{$\pdj\in\Qp$, $\forall j\in J$, $\sum_{j\in J}\pdj=1$}\big\}.
\end{equation}
Note that $\prs(J,\Q)$ is a closed $\Qp$-affine subset of $c_0(J,\Q)$ --- the so-called \emph{probability hyperplane}
of $c_0(J,\Q)$ --- which, apart from the trivial case where $\card(J)=1$,
is an \emph{unbounded} subset of $c_0(J,\Q)$, because it is a translate of the (closed) $\Qp$-linear subspace
$\big\{\{x_j\}_{j\in J}\in c_0(J,\Q) \sep \mbox{$x_j\in\Qp$, $\forall j\in J$, $\sum_{j\in J}x_j=0$}\big\}$.
E.g., if $\card(J)\ge 2$, for every $n\in\nat$, the probability distribution
$\pi=\big\{p^{-n}, 1-p^{-n},0,0,\ldots\big\}\subset c_0(J,\Q)$ is such that $\|\pi\|_\infty=p^n$.

\item The probability hyperplane $\prs(J,\Q)$ contains a distinguished $\Qp$-convex subset $\psy(J,\Q)$
of $c_0(J,\Q)$ --- namely,
\begin{equation}
\psy(J,\Q)\defi\big\{\{\pdj\}_{j\in J}\in c_0(J,\Q) \sep \mbox{$\pdj\in\pint$, $\forall j\in J$, $\sum_{j\in J}\pdj=1$}\big\}
\end{equation}
--- that is a \emph{bounded} closed subset of $c_0(J,\Q)$, called the \emph{probability simplex} of $c_0(J,\Q)$.
E.g., the sequence $\big\{\pi_n=p^{n-1}(1-p)\big\}_{n\in\nat}$ belongs to $\psy(\nat,\Q)$.
\end{itemize}


\subsection{States in $p$-adic quantum mechanics}
\label{states}

In the spirit of the algebraic approach to quantum mechanics and taking into account the peculiar features of $p$-adic
probability theory, we now define a state of the unital Banach $\ast$-algebra $\Bpa$ as a suitable element of $\Bpa^\prime$,
where $\Hp$ is a $p$-adic Hilbert space over a quadratic extension $\Q$ of $\Qp$.

\begin{definition}\label{algestat}
A \emph{state}, for the $p$-adic Hilbert space $\Hp$, is a linear functional
\begin{equation}
\sta\colon\Bpa\rightarrow \Q
\end{equation}
satisfying the following conditions:
\begin{enumerate}[label=\tt{(S\arabic*)}]

\item \label{con.s1} $\sta$ is a \emph{bounded functional}, i.e., $\|\sta\|\defi\sup_{\|A\|\neq 0}|\sta(A)|/\|A\|<\infty$.

\item \label{con.s2} $\sta$ is \emph{involution-preserving}, i.e., $\sta(A^*)=\overline{\sta(A)}$, for all $A\in\Bpa$.

\item \label{con.s3} $\sta$ is \emph{normalized}, i.e., $\sta(\mathrm{Id})=1$.

\end{enumerate}
\end{definition}

By comparison with the complex setting, it is clear that a distinguishing feature of the $p$-adic case is contained in
condition~\ref{con.s2} (recall that an analogous property follows from the positivity condition in the complex case).
Also note that we have:
\begin{equation} \label{inqp}
\sta(A)=\sta(A^\ast)=\overline{\sta(A)}\implies\sta(A)\in\Qp, \quad \forall A\in\Bps.
\end{equation}

\begin{proposition}
Given any state $\sta\colon\Bpa\rightarrow \Q$ for $\Hp$, there is a bounded functional $\stext\colon\Bp\rightarrow\Q$
such that
\begin{enumerate}[label=\tt{(E\arabic*)}]

\item \label{con.e1} $\stext$ agrees with $\sta$ on $\Bpa$, i.e., $\stext(A)=\sta(A)$, for all $A\in\Bpa$;
whence, $\stext$ is involution-preserving on $\Bpa$ --- $\stext(A^*)=\overline{\stext(A)}$, $A\in\Bpa$ ---
and $\stext(\mathrm{Id})=1$.

\item \label{con.e2} $\|\stext\|=\|\sta\|$, where the norms are defined on $\Bp^\prime$ and $\Bpa^\prime$, respectively.

\end{enumerate}
\end{proposition}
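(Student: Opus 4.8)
The plan is to recognize this statement as a direct application of the non-Archimedean Hahn--Banach extension theorem already recorded as Theorem~\ref{th.330}. I would take $X=\Bp$, $X_0=\Bpa$ --- which is a linear subspace of $\Bp$ by Proposition~\ref{algestruc} --- and $Y=\Q$, regarded as a one-dimensional $p$-adic Banach space over itself with norm equal to the valuation $|\cdot|$. By condition~\ref{con.s1} in Definition~\ref{algestat}, the state $\sta$ is precisely a bounded $\Q$-linear map, i.e.\ $\sta\in\mathcal{B}(\Bpa,\Q)$.

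The first step is to verify the hypotheses of Theorem~\ref{th.330}, namely that the target $Y=\Q$ is complete and normal. Completeness is immediate, since $\Q$ is a complete valued field. Normality means $\|\Q\|=|\Q|$, which holds trivially: viewing $\Q$ as a one-dimensional normed space over itself, $\|\alpha\|=|\alpha|$ for all $\alpha$, so $\|\Q\|=|\Q|$ (equivalently, $1\in\Q$ is a unit vector; recall Remark~\ref{remnor}, and that the discreteness of $|\Q^\ast|$ makes $\Q$ spherically complete, which is what underlies Ingleton's theorem). Hence Theorem~\ref{th.330} applies directly and yields an extension $\stext\in\mathcal{B}(\Bp,\Q)$ of $\sta$ with $\|\stext\|=\|\sta\|$, the two norms being computed as suprema over $\Bp$ and over $\Bpa$ respectively. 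This is exactly property~\ref{con.e2}.

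Property~\ref{con.e1} then requires no further argument: by construction $\stext(A)=\sta(A)$ for every $A\in\Bpa$, so $\stext$ inherits the involution-preserving property on $\Bpa$, namely $\stext(A^\ast)=\sta(A^\ast)=\overline{\sta(A)}=\overline{\stext(A)}$ for $A\in\Bpa$, and $\stext(\mathrm{Id})=\sta(\mathrm{Id})=1$ since $\mathrm{Id}\in\Bpa$.

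I do not expect a genuine obstacle here; the only point worth a sentence is the bookkeeping above on $\Q$ being complete and (spherically complete via) normal, so that Theorem~\ref{th.330} is legitimately invoked. It is also worth remarking explicitly, in the write-up, that $\stext$ is in general highly non-unique --- $\Bpa$ is not dense in $\Bp$ --- and that nothing forces $\stext$ to remain involution-preserving on all of $\Bp$, since $\Bp$ carries no natural involution; thus only the properties listed in~\ref{con.e1}--\ref{con.e2} are asserted.
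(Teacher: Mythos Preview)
Your proposal is correct and follows exactly the paper's approach: the paper's proof is a one-line invocation of Theorem~\ref{th.330} (Hahn--Banach) to extend $\sta$ from $\Bpa$ to $\Bp$ with the same norm. Your write-up simply spells out the verification of the hypotheses (completeness and normality of $\Q$) and the bookkeeping for \ref{con.e1}, which the paper leaves implicit.
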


\begin{proof}
Given any state $\sta$ for $\Hp$, by Theorem~\ref{th.330}, it is sufficient to take a Hahn-Banach extension
$\stext\colon\Bp\rightarrow\Q$ of the bounded functional $\sta\colon\Bpa\rightarrow\Q$ (see Theorem~\ref{th.330}).
\end{proof}

Let us denote by $\states$ the set of all states for $\Hp$. By the preceding result, $\states$ can be identified with
the quotient $\statex/\sim$ of the set
\begin{equation} \label{defstatex}
\statex\defi\{\Theta\in\Bp^\prime \sep \mbox{$\Theta$ is involution-preserving on $\Bpa$ and $\stext(\mathrm{Id})=1$}\}
\end{equation}
w.r.t.\ the equivalence relation defined by
\begin{equation}
\Theta_1\sim\Theta_2 \iffdef \Theta_1(A)=\Theta_2(A),\ \forall A\in\Bpa.
\end{equation}
Moreover, in each equivalence class of $\statex$ modulo $\sim$, there is a functional $\Theta\in\Bp^\prime$ such that
$\|\Theta\|=\|\Theta\vert_{\Bpa}\|$.

\begin{proposition}
$\states$ and $\statex$ are $\Qp$-affine subsets of $\Bpa^\prime$ and $\Bp^\prime$, respectively.
\end{proposition}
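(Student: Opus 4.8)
The plan is to verify directly that each of the two sets is closed under $\Qp$-affine combinations, i.e.\ under the operation $(x_1,x_2)\mapsto \alpha x_1 + (1-\alpha)x_2$ for $\alpha\in\Qp$. Recall from Definition~\ref{defaf} and the remarks following it that a nonempty subset $\affset$ of a $\Qp$-vector space is $\Qp$-affine precisely when $\alpha x + (1-\alpha) y\in\affset$ for all $x,y\in\affset$ and all $\alpha\in\Qp$; this is the characterization I will use. Both $\states$ (more precisely its representatives) and $\statex$ are visibly nonempty, since the trace-induced functionals give honest states — for instance, $\tr(\argo\,|\phi\rangle\langle\phi|)$ for $\phi$ an element of an orthonormal basis is bounded, involution-preserving by property~\ref{conjtrace} of Proposition~\ref{prop.tnew}, and normalized by Corollary~\ref{corbotra} type computations — so the only real content is the closure property.

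First I would handle $\statex$, which is the cleaner case because it is literally a subset of the $\Qp$-vector space $\Bp^\prime$ cut out by two conditions. Let $\Theta_1,\Theta_2\in\statex$ and $\alpha\in\Qp$, and set $\Theta\defi\alpha\Theta_1 + (1-\alpha)\Theta_2$, which lies in $\Bp^\prime$ since $\Bp^\prime$ is a $\Qp$-linear space (indeed a $\Q$-linear space, so a fortiori closed under $\Qp$-linear combinations). For the involution-preserving condition on $\Bpa$: for every $A\in\Bpa$,
\begin{equation}
\Theta(A^*)=\alpha\,\Theta_1(A^*) + (1-\alpha)\,\Theta_2(A^*)
=\alpha\,\overline{\Theta_1(A)} + (1-\alpha)\,\overline{\Theta_2(A)},
\end{equation}
and since $\alpha,1-\alpha\in\Qp$ are fixed by the conjugation of $\Q$, this equals $\overline{\alpha\,\Theta_1(A) + (1-\alpha)\,\Theta_2(A)}=\overline{\Theta(A)}$. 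For normalization: $\Theta(\mathrm{Id})=\alpha\cdot 1 + (1-\alpha)\cdot 1 = 1$. Hence $\Theta\in\statex$, and since $\statex$ is nonempty it is $\Qp$-affine in $\Bp^\prime$.

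Next I would treat $\states\subset\Bpa^\prime$. The argument is the same in spirit: given states $\sta_1,\sta_2\colon\Bpa\rightarrow\Q$ and $\alpha\in\Qp$, form $\sta\defi\alpha\sta_1 + (1-\alpha)\sta_2$. This is a $\Q$-linear, hence in particular well-defined, functional on $\Bpa$. Boundedness (condition~\ref{con.s1}) follows from the ultrametric estimate $\|\sta\|\le\max\{|\alpha|\,\|\sta_1\|,\,|1-\alpha|\,\|\sta_2\|\}<\infty$; conditions~\ref{con.s2} and~\ref{con.s3} are checked exactly as above, using that $\alpha$ and $1-\alpha$ are conjugation-invariant. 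Thus $\sta\in\states$. Strictly speaking $\states$ was identified with a quotient $\statex/\!\sim$; but the $\Qp$-affine structure passes to the quotient since $\sim$ is the kernel-type relation of the restriction map $\Bp^\prime\to\Bpa^\prime$, which is $\Qp$-linear, and the image of a $\Qp$-affine set under a $\Qp$-linear map is $\Qp$-affine (one of the stated facts following Definition~\ref{defaf}) — so $\states$, being the image of $\statex$ under restriction, is $\Qp$-affine in $\Bpa^\prime$.

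I do not expect a genuine obstacle here; the statement is essentially bookkeeping once the right characterization of $\Qp$-affineness is in hand. The one subtlety worth flagging is \emph{why} one only gets a $\Qp$-affine (and not $\Q$-affine) structure: the conjugation condition~\ref{con.s2} is preserved under combinations with coefficients in $\Qp$ (fixed points of the conjugation) but would fail for general coefficients in $\Q$, since $\overline{\beta\,\sta_1(A^*)}=\overline{\beta}\,\overline{\sta_1(A)}\neq\beta\,\overline{\sta_1(A)}$ when $\beta\notin\Qp$. I would include a one-sentence remark to this effect, as it explains the appearance of $\Qp$ rather than $\Q$ throughout this subsection and mirrors the real-versus-complex distinction in the classical case.
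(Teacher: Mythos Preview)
Your proposal is correct and follows essentially the same approach as the paper: verify directly that $\alpha\sta_1+(1-\alpha)\sta_2\in\states$ (respectively $\statex$) for all $\alpha\in\Qp$, using the characterization of $\Qp$-affine sets recorded after Definition~\ref{defaf}. The paper's proof is a one-line remark to this effect, so your version is considerably more detailed---in particular your explicit check that conjugation-invariance of $\alpha\in\Qp$ is what makes~\ref{con.s2} go through, and your observation that this is exactly why only $\Qp$-affineness (not $\Q$-affineness) holds, are worthwhile additions. One minor point: the detour through the quotient $\statex/\!\sim$ is unnecessary, since $\states$ is defined directly as a subset of $\Bpa^\prime$ via Definition~\ref{algestat}, and the quotient description is a subsequent identification rather than the definition.
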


\begin{proof}
Just observe that, by Definition~\ref{algestat}, if $\sta_1,\sta_2\in\states$, then $\alpha\sta_1 + (1-\alpha)\sta_2\in\states$,
for all $\alpha\in\Qp$, and, by~\eqref{defstatex}, an analogous property holds for $\statex$ too.
\end{proof}


\subsection{The trace-induced states}
\label{tracial}

Just like in the standard complex setting, the algebraically defined $p$-adic states are somewhat too general
and vaguely characterized objects to be useful for most practical applications, and it is natural to restrict
to the much more concrete class of \emph{trace-induced} states.

For the sake of notational simplicity, we will assume that $\dim(\Hp)=\infty$, but the subsequent results
and their proofs remain valid --- with obvious adaptations --- in the finite-dimensional case (say, neglecting
the trivial case where $\dim(\Hp)=1$, for $2\le\dim(\Hp)<\infty$). In particular, if $\dim(\Hp)<\infty$,
$\Tp=\Bpa=\Bp$ and all states for $\Hp$ are trace-induced.

Let us first consider the subset $\tcst$ (where the subscript stands for `statistical') of $\Tp$ defined by
\begin{equation} \label{deftrstates}
\tcst\defi\{S\in\Tps \sep \tr(S)=1\}.
\end{equation}
We endow $\tcst$ with the relative topology w.r.t.\ the $p$-adic Hilbert-Schmidt space $\Tp$ (the norm-topology).

\begin{theorem} \label{theotcst}
$\tcst$ is a closed $\Qp$-affine subset of $\Tp$. A linear operator $S$ belongs to $\tcst$ iff it is of the form
\begin{equation}\label{sydecst}
S=\sum_{j\in J}(\sj\, |\osj\rangle\langle\fj|+\overline{\sj}\, |\fj\rangle\langle\osj|),
\end{equation}
where $J=\{1,2,\ldots\}$ is a countable index set and
\begin{itemize}

\item $\{\sj\}_{j\in J}\subset\Qa$ and, if $J=\nat$, $\lim_j\sj=0$;

\item $\{\osj\}_{j\in J}$ is a normalized norm-orthogonal system in $\Hp$,
and $\|\fj\|=1$, for all $j\in J$;

\item $\sum_{j\in J}(\sj\,\langle\fj,\osj\rangle+\overline{\sj}\,\langle\osj,\fj\rangle)=1$;

\item the sum in~\eqref{sydecst} --- whenever $J$ is not finite --- converges w.r.t.\ the norm topology.

\end{itemize}

Every $S\in\tcst$ admits a decomposition of the previous form where, in particular, the norm-orthogonal system
$\{\osj\}_{j\in J}$ is contained in any orthonormal basis in $\Hp$.

For every $S\in\tcst$, the functional $\tr(\argo S)\colon\Bpa\ni A\mapsto\tr(AS)\in\Q$ is a state for $\Hp$, and
the map
\begin{equation}
\inj\colon\tcst\ni S\mapsto\big(\tr(\argo S)\colon\Bpa\rightarrow\Q\big)\in\states
\end{equation}
is a continuous $\Qp$-affine injection of $\tcst$ into $\states$, where $\states$ is
endowed with the relative topology w.r.t.\ $\Bpa^\prime$; moreover, $\|\inj(S)\|=\|S\|$,
for all $S\in\tcst$.
\end{theorem}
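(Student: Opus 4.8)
The plan is to prove Theorem~\ref{theotcst} in four stages, each reducing to results already established. First I would establish that $\tcst$ is a closed $\Qp$-affine subset of $\Tp$. The $\Qp$-affineness is immediate: for $S_1, S_2\in\tcst$ and $\alpha\in\Qp$, the operator $\alpha S_1 + (1-\alpha)S_2$ is selfadjoint (since $\Qp$-coefficients commute with conjugation and $\Tps$ is a $\Qp$-linear space), is trace class (by linearity of $\Tp$), and has trace $\alpha + (1-\alpha)=1$ by linearity of $\tr$ (Proposition~\ref{prop.tnew}). For closedness, note that $\Tps$ is closed in $\Tp$ (as observed just before Proposition~\ref{prosydec}) and that $\tr\colon\Tp\to\Q$ is continuous (Corollary~\ref{corbotra}); hence $\tcst=\Tps\cap\tr^{-1}(\{1\})$ is closed, since $\{1\}$ is closed in $\Q$.

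Second, I would characterize the elements of $\tcst$ by their decompositions. This is essentially a direct combination of Proposition~\ref{prosydec} with the trace-one constraint: a linear operator $S$ lies in $\Tps$ iff it admits a symmetric decomposition~\eqref{sydec}, and in that case $\tr(S)=\sum_{j\in J}(\sj\langle\fj,\osj\rangle+\overline{\sj}\langle\osj,\fj\rangle)$ by~\eqref{trasa}; intersecting with $\tcst$ simply adds the requirement that this sum equals $1$. The claim that $\{\osj\}_{j\in J}$ can be chosen inside any prescribed orthonormal basis follows from the corresponding assertion in Corollary~\ref{decocpa}, inherited through Proposition~\ref{prosydec}. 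This stage requires no new computation beyond rewriting~\eqref{trasa} under the normalization constraint.

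Third, the core of the theorem: for $S\in\tcst$, the functional $\tr(\argo S)\colon A\mapsto\tr(AS)$ is a state. Condition~\ref{con.s3} is the defining property $\tr(S)=1$. Condition~\ref{con.s1}: for $A\in\Bpa\subset\Bp$, $AS\in\Tp$ (since $\Tp$ is a left ideal in $\Bp$, Theorem~\ref{th.t3}), and by Proposition~\ref{trineq}, $|\tr(AS)|\le\|A\|\,\|S\|$; hence $\tr(\argo S)$ is bounded with $\|\tr(\argo S)\|\le\|S\|$. Condition~\ref{con.s2}: for $A\in\Bpa$, using $(AS)^\ast = S^\ast A^\ast = S A^\ast$ (as $S$ is selfadjoint) together with property~\ref{conjtrace} of Proposition~\ref{prop.tnew} and the cyclic property (Proposition~\ref{prop.5}), one gets $\overline{\tr(AS)}=\tr((AS)^\ast)=\tr(SA^\ast)=\tr(A^\ast S)$, which is $\tr(\argo S)(A^\ast)$. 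So $\tr(\argo S)$ is a state. That $\inj$ is a $\Qp$-affine map follows from linearity of $S\mapsto\tr(\argo S)$ in $S$ combined with the $\Qp$-affineness of both $\tcst$ and $\states$; continuity follows from the bound $|\tr(AS_1)-\tr(AS_2)|\le\|A\|\,\|S_1-S_2\|$, i.e.\ $\|\inj(S_1)-\inj(S_2)\|\le\|S_1-S_2\|$.

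Fourth, injectivity of $\inj$ and the norm equality $\|\inj(S)\|=\|S\|$. For injectivity, suppose $\tr(AS)=0$ for all $A\in\Bpa$. Fixing an orthonormal basis $\Phi\equiv\{\phi_m\}_{m\in\nat}$ and taking $A=\jkE=|\phi_j\rangle\langle\phi_k|$ (which is block-finite, hence in $\Tp\subset\Bpa$), one computes $\tr(\jkE S)=\langle\phi_j,S\phi_k\rangle$ — exactly the pairing $\langle(\hspace{0.3mm}\cdot\hspace{0.3mm})^\ast\ldots\rangle_{\hspace{-0.2mm}\mbox{\tiny $\Tp$}}$ type identity already used before Theorem~\ref{theohs}, where it is shown that $\langle T,\jkE\hsp=0$ for all $j,k$ forces $T=0$. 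Hence $S=0$, so if $\inj(S_1)=\inj(S_2)$ then $S_1=S_2$ (this uses that the \emph{difference} $S_1-S_2$ is a trace class operator, not necessarily of trace one, but injectivity only needs the kernel statement on the ambient linear map $S\mapsto\tr(\argo S)$ restricted to $\Tp$). For the norm: we already have $\|\inj(S)\|\le\|S\|$; for the reverse, I would use the canonical decomposition $S=\sum_{j\in J}\sj|\osj\rangle\langle\fj|$ of $S$ as an element of $\Tp=\Cpa$ (from Theorem~\ref{theochar}, $\|S\|=\max_j|\sj|$), pick an index $j_0$ with $|\sj[j_0]|=\|S\|$, and test $\tr(\argo S)$ against a well-chosen $A\in\Bpa$ — namely $A=|\fj[j_0]\rangle\langle\osj[j_0]|$ or a rescaled variant — to produce $|\tr(AS)|$ close to $\|S\|\,\|A\|$, exploiting that $\{\osj\}$ is norm-orthogonal so the cross terms vanish and $\tr(AS)=\sj[j_0]\langle\osj[j_0],\osj[j_0]\rangle + (\text{terms with }\langle\osj[j_0],\osj[j]\rangle=0)$.

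\textbf{Main obstacle.} The hard part will be the last step, the equality $\|\inj(S)\|=\|S\|$: since the inner product on $\Hp$ is only ultrametric (Cauchy--Schwarz, not an equality), one cannot simply take $A$ to be a rank-one projection and read off the norm as in the complex case. One must carefully exploit \emph{norm}-orthogonality of the system $\{\osj\}_{j\in J}$ and the fact that $\langle\osj[j_0],\osj[j_0]\rangle$ can be arranged to be $1$ (choosing $\osj$ inside an orthonormal basis, which Corollary~\ref{decocpa} permits) so that the test pairing isolates the dominant coefficient $\sj[j_0]$ exactly, rather than up to an uncontrolled factor. The subtlety is choosing the test operator $A$ simultaneously in $\Bpa$, of norm exactly matching the estimate, and annihilating all off-diagonal contributions; once the orthonormal canonical decomposition is invoked this becomes a finite bookkeeping exercise, but it does genuinely require that decomposition and is the one place where the $p$-adic geometry could trip up a naive argument.
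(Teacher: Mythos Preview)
Your stages 1--3 and the injectivity argument in stage~4 match the paper's proof essentially verbatim.

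The norm equality, however, has a genuine gap. With $A=|\fj[j_0]\rangle\langle\osj[j_0]|$ and $S=\sum_j\sj|\osj\rangle\langle\fj|$ (orthonormal canonical decomposition, so $\langle\osj[j_0],\osj\rangle=\delta_{j_0 j}$), one computes
\[
AS=\sigma_{j_0}\,|\fj[j_0]\rangle\langle\fj[j_0]|,\qquad \tr(AS)=\sigma_{j_0}\,\langle\fj[j_0],\fj[j_0]\rangle,
\]
not $\sigma_{j_0}\langle\osj[j_0],\osj[j_0]\rangle$ as you wrote. The factor you pick up is $\langle\fj[j_0],\fj[j_0]\rangle$, and the $\fj$'s are merely normalized vectors --- nothing in Corollary~\ref{decocpa} forces $|\langle\fj[j_0],\fj[j_0]\rangle|=1$; in fact $\fj[j_0]$ could be isotropic, making $\tr(AS)=0$. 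So this test operator need not witness the norm.

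The fix is already in your hands: the rank-one operators $\jkE=|\phi_j\rangle\langle\phi_k|$ you used for injectivity do the job directly. Since $\|\jkE\|=1$ and $\tr(\jkE S)=\langle\phi_k,S\phi_j\rangle=S_{kj}$, you get
\[
\|\inj(S)\|\ge\sup_{j,k}|S_{kj}|=\|S\|
\]
by~\eqref{eq.14}. This is exactly the paper's argument: it extends $\inj$ to the linear map $\tinj\colon\Tp\to\Bpa^\prime$, tests on $A=|\phi\rangle\langle\psi|$ to obtain $\tr(AT)=\langle\psi,T\phi\rangle$, and then reads off the isometry from the matrix-element characterization of the operator norm. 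No canonical decomposition is needed for this step, and the ``main obstacle'' you anticipated dissolves once you use the right test operators.
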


\begin{proof}
The fact that $\tcst$ is a $\Qp$-affine subset of $\Tp$ is straightforward from~\eqref{deftrstates}.
By Corollary~\ref{corbotra}, the linear functional $\tr\argo\colon\Tp\rightarrow\Q$
is bounded, so that $\tcst$ --- which is the intersection of the closed subset
$\Tps$ of $\Tp$ with the (closed) pre-image w.r.t.\ $\tr\argo$ of the singleton set $\{1\}\subset\Q$
--- is closed in $\Tp$.

The second and the third assertion of the statement follow directly from the definition of $\tcst$
and Proposition~\ref{prosydec}.

Moreover, given any $A\in\Bpa$ and $T\in\Tp$, the estimate
\begin{equation} \label{simpest}
|\tr(AT)|\le\|A\|\, \|T\|,
\end{equation}
see the first inequality in~\eqref{trinequa}, shows that $\tr(\argo T)\colon\Bpa\rightarrow\Q$ is
a bounded functional.
In particular, for every $S\in\tcst\subset\Tps$, the bounded functional $\tr(\argo S)$ is both involution preserving ---
$\tr(A^\ast S)=\overline{\tr(SA)}=\overline{\tr(AS)}$, where we have used relation~\ref{conjtrace} in
Proposition~\ref{prop.tnew} and the cyclic property of the trace --- and normalized. Therefore,
$\tr(\argo S)\colon\Bpa\rightarrow\Q$ is a state for $\Hp$.

It is clear that the map $\inj$ is $\Qp$-affine. Consider, next, the linear map
\begin{equation}
\tinj\colon\Tp\ni T\mapsto\big(\tr(\argo T)\colon\Bpa\rightarrow\Q\big)\in\Bpa^\prime.
\end{equation}
Here, as previously noted, for every $T\in\Tp$ the linear functional $\tinj(T)=\tr(\argo T)$ is bounded
and, by~\eqref{simpest}, $\|\tinj(T)\|\le \|T\|$.  Now, taking $A\in\Bpa$ of the form $A=|\phi\rangle\langle\psi|$,
where $\phi,\psi$ are arbitrary vectors in $\Hp$, we have that
$\tr(AT)=\tr(|\phi\rangle\langle\psi|T)=\tr(|T\phi\rangle\langle\psi|)=\langle\psi,T\phi\rangle$.
Thus, given any orthonormal basis $\{\phi_m\}_{m\in\nat}$, we also have: $\|T\|=\sup_{m,n\in\nat}
|\langle\phi_m,T\phi_n\rangle|=\sup_{m,n\in\nat}|\tr(|\phi_m\rangle\langle\phi_n|T)|\le\|\tinj(T)\|$
($\|\,|\phi_m\rangle\langle\phi_n|\,\|=1$).
We conclude that $\|\tinj(T)\|=\|T\|$; i.e., $\tinj$ is a linear isometry, and then the $\Qp$-affine map $\inj$,
which can be regarded as a restriction of $\tinj$, is injective and continuous.
Moreover, obviously, $\|\inj(S)\|=\|S\|$, for all $S\in\tcst$.
\end{proof}

\begin{definition}
We call the operators in the $\Qp$-affine subset $\tcst$ of $\Tp$ the \emph{statistical operators} in $\Hp$.
Moreover, we call the states in the set $\trstates\defi\inj(\tcst)\subset\states$ the \emph{trace-induced states}
for $\Hp$.
\end{definition}

Since the map $\inj$ is $\Qp$-affine, then $\trstates$ is a $\Qp$-affine subset
of $\states$ that, by the final assertion of Theorem~\ref{theotcst}, can be isometrically identified with
$\tcst$. Note that, by the second estimate in~\eqref{trinequa} (also see~\eqref{tcnor}), for every trace-induced
state $\sta=\inj(S)$, $S\in\tcst$, we have:
\begin{equation}
\sta(\id)=1=\tr(S)\le\|S\|=\|\sta\|.
\end{equation}
This is a further difference w.r.t.\ the complex case, where every state $\omega$ satisfies $\|\omega\|=\omega(\id)=1$.

Let us then have a closer look at the structure of the $\Qp$-affine set $\tcst$ of all statistical operators.
To highlight this affine structure, let us first observe that --- introducing the set
\begin{equation}
\ztc\defi\{T\in\Tps \sep \tr(T)=0\},
\end{equation}
which is a (closed) $\Qp$-linear subspace of $\Tp$ --- for $S\in\tcst$ and $T\in\ztc$,
\begin{equation}
\mbox{$S+T\in\tcst$, and, if $\|S\|<\|T\|$, then $\|S+T\|=\|T\|$.}
\end{equation}
We will call $S+T$ a \emph{zero-trace perturbation} of the statistical operator $S$ (by $T$),
and, by the previous argument, it is easy to see that there exists a zero-trace perturbation of $S$
of arbitrarily large norm ($\ztc$ being a $\Qp$-linear subspace of $\Tp$). Moreover, considering a
symmetric decomposition~\eqref{sydecst} of $S$, there is a natural partition $J=J_0\sqcup J_1$ of
the index set $J$, where
\begin{equation}
J_0\defi\big\{j\in J \sep \langle\osj,\fj\rangle=0\big\} \quad \mbox{and} \quad
J_1\defi\big\{j\in J \sep \langle\osj,\fj\rangle\neq0\big\},
\end{equation}
and we can write $S=S_0+S_1$, with
\begin{equation}
S_0\defi\sum_{j\in J_0}(\sj\, |\osj\rangle\langle\fj|+\overline{\sj}\, |\fj\rangle\langle\osj|)\in\ztc,
\ S_1\defi\sum_{j\in J_1}(\sj\, |\osj\rangle\langle\fj|+\overline{\sj}\, |\fj\rangle\langle\osj|)\in\tcst.
\end{equation}
(If $J_0=\emptyset$, then we put $S_0\equiv0$.)
Here, we have used formula~\eqref{trasa} to conclude that $\tr(S_0)=0$, whereas $\tr(S_1)=\tr(S)=1$. Therefore,
$S$ is a zero-trace perturbation of $S_1$ by $S_0$.

We will now derive a useful refinement of a symmetric decomposition of a statistical operator.
To this end, we will fruitfully adopt the following convenient notation:
\begin{notation}
For every pair of \emph{nonzero vectors} $\phi,\psi\in\Hp$ --- $\phi\neq 0\neq\psi$ ---
and every $\sigma\in\Qa\equiv\Q\setminus\{0\}$,
we set
\begin{equation}
\hphps(\sigma)\defi
\begin{cases}
(\sigma+\overline{\sigma})^{-1}(\sigma\,|\phi\rangle\langle\psi| +
\overline{\sigma}\,|\psi\rangle\langle\phi|)\in\ztc \quad \mbox{if $\langle\phi,\psi\rangle=0$} \\
(\sigma\,\langle\psi,\phi\rangle + \overline{\sigma}\,\langle\phi,\psi\rangle)^{-1}(\sigma\,|\phi\rangle\langle\psi|
+ \overline{\sigma}\,|\psi\rangle\langle\phi|)\in\tcst
\quad \mbox{if $\langle\phi,\psi\rangle\neq 0$}
\end{cases} .
\end{equation}
Here, we stress that $\hphps(\sigma)\in\Tps$, and $\tr\big(\hphps(\sigma)\big)=1$, if $\langle\phi,\psi\rangle\neq 0$,
whereas $\tr\big(\hphps(\sigma)\big)=0$, otherwise. Moreover, we introduce the sets
\begin{equation}
\sts\defi\big\{S\in\tcst \sep S=\hphps(\sigma),\ \phi,\psi\in\Hp\setminus\{0\},\ \langle\phi,\psi\rangle\neq 0,
\ \sigma\in\Qa\big\},
\end{equation}
\begin{equation}
\ztcs\defi\big\{S\in\ztc \sep S=\hphps(\sigma),\ \phi,\psi\in\Hp\setminus\{0\},\ \langle\phi,\psi\rangle= 0,
\ \sigma\in\Qa\big\}.
\end{equation}
\end{notation}

Note that $\hphps(\sigma)=\hpsph(\overline{\sigma})$. Moreover, $\hphps(\alpha\sigma)=\hphps(\sigma)$, for all $\alpha\in\Qp$,
and --- in the case where $\langle\phi,\psi\rangle\neq 0$ --- an analogous relation holds true if we map $\phi$
or $\psi$ into $\alpha\phi$ or $\alpha\psi$, respectively. We also put $\hphps\equiv\hphps(1)$. In particular, given a
nonzero vector $\psi$, we have that $\hpsps(\sigma)=\hpsps=|\psi\rangle\langle\psi|\in\ztcs$, in the case where the vector
$\psi$ is \emph{isotropic}, whereas $\hpsps(\sigma)=\hpsps=\langle\psi,\psi\rangle^{-1}|\psi\rangle\langle\psi|\in\sts$, otherwise.
In the latter case, the statistical operator $\hpsps$ is a (selfadjoint) \emph{rank-one projection}: $\hpsps\,\hpsps=\hpsps$.

\begin{definition}
Let $\phi,\psi\in\Hp$ be a pair of \emph{nonzero vectors}, and let $\sigma\in\Qa$.  If $\langle\phi,\psi\rangle\neq 0$,
we say that $\hphps(\sigma)\in\sts$ is a \emph{simple statistical operator}. If, instead, $\langle\phi,\psi\rangle= 0$, we say that
$\hphps(\sigma)\in\ztcs$ is a \emph{simple zero-trace operator}.
\end{definition}

By the previously introduced notation, we can suitably re-write the symmetric decomposition
$S=\sum_{j\in J}(\sj\, |\osj\rangle\langle\fj|+\overline{\sj}\, |\fj\rangle\langle\osj|)=S_0+S_1$
of the statistical operator $S$. In fact, by construction, we have that
\begin{equation} \label{spano}
S_0=\sum_{j\in J_0} (\sj +\overline{\sj})\hefj(\sigma_j)\equiv\sum_{j\in J_0} \gamma_j\hefj(\sigma_j)\in\ztc,
\quad \mbox{where: $\gamma_j\in\Qp$, $j\in J_0$,}
\end{equation}
while
\begin{align}
S_1 & =\sum_{j\in J_1}(\sj\,\langle\fj,\osj\rangle +\overline{\sj}\,\langle\osj,\fj\rangle)\hefj(\sigma_j)
\nonumber\\ \label{affu}
& \equiv\sum_{j\in J_1}\pdj\hefj(\sigma_j)\in\tcst,\quad \mbox{where: $\pdj\in\Qp$, $j\in J_1$, and
$\sum_{j\in J_1}\pdj=\tr(S)=1$}.
\end{align}
Therefore, actually, $S_0\in\cspanq\big(\ztcs\big)$ and $S_1\in\caff\big(\sts\big)$.

\begin{remark}
Let us observe explicitly that $\cspanq\big(\ztcs\big)\subset\ztc$, since the $\Qp$-linear space $\ztc$ is closed in $\Tp$.
Moreover, we have that $\aff\big(\sts\big)\subset\tcst$, and hence $\caff\big(\sts\big)\subset\tcst$ too, because $\tcst$
is a \emph{closed} $\Qp$-affine subset of $\Tp$.
\end{remark}

It is clear that, conversely, every linear operator in $\Hp$ of the form $S=S_0+S_1$ --- where $S_0\in\cspanq\big(\ztcs\big)$
and $S_1\in\caff\big(\sts\big)$ --- is a statistical operator, because in such a case $S$ is a zero-trace perturbation of
a statistical operator $S_1$ (by $S_0$).

We eventually get to the following result:

\begin{theorem}
Every statistical operator $S$ in $\Hp$ can be expressed as a zero-trace perturbation of a statistical operator $S_1$
--- contained in the closed $\Qp$-affine hull $\caff\big(\sts\big)\subset\tcst$ generated by all simple statistical
operators --- by an operator $S_0\in\cspanq\big(\ztcs\big)\subset\ztc$. Conversely, every zero-trace perturbation of
a statistical operator --- in particular, of an operator contained in $\caff\big(\sts\big)$ by an operator in
$\cspanq\big(\ztcs\big)$ --- is a statistical operator too.

Therefore, we have that
\begin{equation}
\tcst=\ztc+\tcst=\ztc+\caff\big(\sts\big)=\cspanq\big(\ztcs\big)+\caff\big(\sts\big).
\end{equation}

Moreover, for every $T\in\tcst$, we have:
\begin{equation}
\tcst=T+\ztc;
\end{equation}
otherwise stated, $\tcst$ coincides with the (closed) $\Qp$-affine subset $T+\ztc$ of $\Tp$.
\end{theorem}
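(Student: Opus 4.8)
The plan is to assemble the facts already established in the paragraphs preceding the statement, so that almost all the work reduces to elementary inclusions of $\Qp$-affine sets. First I would prove the first assertion. Given $S\in\tcst$, by the second assertion of Theorem~\ref{theotcst} (equivalently, by Proposition~\ref{prosydec}) it admits a symmetric decomposition of the form~\eqref{sydecst} in which the normalized norm-orthogonal system $\{\osj\}_{j\in J}$ is contained in an arbitrarily prescribed orthonormal basis of $\Hp$. Partitioning $J=J_0\sqcup J_1$ with $J_0=\{j\in J\sep\langle\osj,\fj\rangle=0\}$ and $J_1=\{j\in J\sep\langle\osj,\fj\rangle\neq0\}$, and grouping the terms of the decomposition accordingly, I obtain $S=S_0+S_1$ where, as spelled out in~\eqref{spano} and~\eqref{affu}, $S_0$ is a $\Qp$-linear combination (or a norm-convergent such series, when $J_0$ is infinite) of simple zero-trace operators and $S_1$ is a $\Qp$-affine combination (again possibly a norm-convergent series) of simple statistical operators; hence $S_0\in\cspanq(\ztcs)$ and $S_1\in\caff(\sts)$. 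Applying formula~\eqref{trasa} blockwise — which is legitimate because the underlying $p$-adic series converge unconditionally — yields $\tr(S_0)=0$ and $\tr(S_1)=\tr(S)=1$, so $S$ is exhibited as a zero-trace perturbation of the statistical operator $S_1$ by $S_0\in\cspanq(\ztcs)\subset\ztc$.

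Next I would dispatch the converse. If $T\in\ztc$ and $R\in\tcst$, then $T+R$ is again a selfadjoint trace class operator and, by the linearity of the trace (Proposition~\ref{prop.tnew}), $\tr(T+R)=\tr(T)+\tr(R)=0+1=1$; hence $T+R\in\tcst$, i.e.\ every zero-trace perturbation of a statistical operator is again a statistical operator. Recalling the Remark that $\cspanq(\ztcs)\subset\ztc$ and $\caff(\sts)\subset\tcst$ (the former because $\ztc$ is closed in $\Tp$, the latter because $\tcst$ is a closed $\Qp$-affine subset of $\Tp$), it follows that every operator $S_0+S_1$ with $S_0\in\cspanq(\ztcs)$ and $S_1\in\caff(\sts)$ lies in $\tcst$. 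The displayed chain of equalities is then obtained by chasing inclusions: $\tcst=0+\tcst\subseteq\ztc+\tcst\subseteq\tcst$ (the last step by the converse just proved) gives $\tcst=\ztc+\tcst$; and the first assertion gives $\tcst\subseteq\cspanq(\ztcs)+\caff(\sts)\subseteq\ztc+\caff(\sts)\subseteq\ztc+\tcst=\tcst$, which sandwiches both remaining expressions between $\tcst$ and itself.

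For the final identity, fix $T\in\tcst$. By the converse, $T+\ztc\subseteq\tcst+\ztc=\tcst$; conversely, for any $S\in\tcst$ the difference $S-T$ is a selfadjoint trace class operator with $\tr(S-T)=1-1=0$, so $S-T\in\ztc$ and $S=T+(S-T)\in T+\ztc$. Hence $\tcst=T+\ztc$. I do not expect any genuine obstacle here: the whole argument is bookkeeping on top of the symmetric decomposition of Proposition~\ref{prosydec} and of the linearity and additivity of the $p$-adic trace. The only points deserving a line of care are the blockwise use of~\eqref{trasa} (justified by the unconditional convergence properties recalled in Section~\ref{sec6}) and the observation that passing to the $\Qp$-span, respectively the $\Qp$-affine hull, inside the closed sets $\ztc$ and $\tcst$ does not leave them.
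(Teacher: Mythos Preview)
Your proposal is correct and follows essentially the same route as the paper: the paper's own proof simply refers back to the decomposition $S=S_0+S_1$ via~\eqref{spano} and~\eqref{affu} for the first part, and writes $S=T+(S-T)$ for the last identity, exactly as you do. Your write-up is more explicit about the inclusion-chasing and the legitimacy of the blockwise use of~\eqref{trasa}, but the underlying argument is identical.
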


\begin{proof}
The first assertion follows form our previous discussion; in particular, for any $S\in\tcst$, from the
decomposition $S=S_0+S_1$, with $S_0,S_1$ expressed as in~\eqref{spano} and~\eqref{affu}, respectively.
For the final assertion, just note that every statistical operator $S\in\tcst$ can be expressed in the
form $S=T+(S-T)$, where $T$ is some (fixed) statistical operator and $(S-T)\in\ztc$.
\end{proof}

We have already noted that $\tcst$ --- being $\Qp$-affine --- is an \emph{unbounded} subset of $\Tp$,
in sharp contrast w.r.t.\ the complex case. It is natural to ask: Is there any $p$-adic parallel
for the \emph{bounded} convex set of density operators in a separable \emph{complex} Hilbert space?

Let us consider the following set of statistical operators:
\begin{equation}
\densr\defi\{S\in\tcst \sep \|T\|\le r\},\quad r\in\|\Tp\|\setminus\{0\}=|\Qa|.
\end{equation}
Here, note that
\begin{itemize}

\item for $r<1$, $\densr=\{S\in\tcst \sep \|T\|\le r\}=\emptyset$, because every statistical operator has norm not
smaller than 1;

\item by the previous point, for $r\ge 1$, $\densr=\{S\in\tcst \sep 1\le\|T\|\le r\}$;

\item for every for $s\in|\Qa|$, $s\ge1$, $\cup_{s\le r\in|\Qa|}\hspace{0.3mm}\densr=\tcst$;

\item expressing a statistical operator $S\in\tcst$ as a matrix operator --- $S=\opPhi(S_{mn})$,
for some orthonormal basis $\Phi\equiv\{\phi_m\}_{m\in\nat}$ --- it is easy to see, by means of
explicit examples, that $\densu=\{T\in\tcst \sep \|T\|=1\}\neq\emptyset$ and, for $r,s\in|\Qa|$, with $r>s\ge1$,
$\densr\supsetneq\denss$, because $\|\densr\|=\{t\in|\Qa| \sep 1\le t\le r\}$ (note, by the way, that
$\|\tcst\|=\{t\in|\Qa| \sep t\ge 1\}$);

\item by Theorem~\ref{charconv} and by the strong triangle inequality for the operator norm,
for every $r\in|\Qa|$, $r\ge1$, $\densr$ is a $\Qp$-convex subset of $\Tp$.

\end{itemize}

\begin{definition}
We call \emph{density operators} those statistical operators in the $p$-adic Hilbert space $\Hp$ belonging to the bounded
subset $\densops\equiv\densu$ of $\Tp$ defined by
\begin{equation} \label{defdensops}
\densops\defi\{T\in\tcst \sep \|T\|\le 1\}=\{T\in\tcst \sep \|T\|=1\}.
\end{equation}
\end{definition}

\begin{example}
If $\psi\in\Hp$, is a non-isotropic nonzero vector, then the statistical operator
$\hpsps=\langle\psi,\psi\rangle^{-1}|\psi\rangle\langle\psi|\in\sts$ is a density operator iff
$|\langle\psi,\psi\rangle|=\|\psi\|^2$. For every orthonormal basis $\{\phi_m\}_{m\in\nat}$
and every probability distribution $\{\pi_m\}_{m\in\nat}$ contained in the probability
simplex $\psy(\nat,\Q)$, $\sum_{m\in\nat} \pi_m\, |\phi_m\rangle\langle\phi_m|\in\densops$.
E.g., $\sum_{m\in\nat}p^{m-1}(1-p)\,|\phi_m\rangle\langle\phi_m|$ is a density operator.
\end{example}

\begin{proposition} \label{charadens}
$\densops$ is a closed $\Qp$-convex subset of $\Tp$. For every trace class operator $T\in\Tp$,
the following facts are equivalent:
\begin{enumerate}[label=\tt{(D\arabic*)}]

\item \label{densa} $T\in\densops$.

\item \label{densb} $T\in\Tps$ and admits a canonical decomposition of the form
$T=\sum_{j\in J}\lambda_j\, |\osj\rangle\langle\fj|$, where $\{\lambda_j\}_{j\in J}$ is contained in the
valuation ring $\vari\equiv\Qba\defi\{z\in\Q \sep |z|\le 1\}$  of $\Q$ and
$\sum_{j\in J}\lambda_j\, \langle\fj,\osj\rangle=1$.

\item \label{densbis} $T\in\Tps$ and admits a canonical decomposition of the form
$T=\sum_{j\in J}\lambda_j\, |\osj\rangle\langle\fj|$, where $\max_{j\in J}|\lambda_j|=1$
and $\sum_{j\in J}\lambda_j\, \langle\fj,\osj\rangle=1$.

\end{enumerate}

If $p\neq 2$, conditions~\ref{densa}--\ref{densbis} are equivalent to the following:
\begin{enumerate}[label=\tt{(D\arabic*)}]

\setcounter{enumi}{3}

\item \label{densc} $T\in\Tps$ and  admits a symmetric decomposition of the form
$T=\sum_{j\in J}(\sj\,|\osj\rangle\langle\fj|+\overline{\sj}\, |\fj\rangle\langle\osj|)$, where
$\{\sj\}_{j\in J}\subset\vari$ and $\sum_{j\in J}(\sj\,\langle\fj,\osj\rangle
+\overline{\sj}\,\langle\osj,\fj\rangle)=1$.

\end{enumerate}

Finally, for every $r\ge 1$ and any density operator $T\in\densops$, we have that
\begin{equation} \label{strdensr}
\densr= T+\ztcr,
\end{equation}
where
\begin{equation} \label{defztcr}
\ztcr\defi\{A\in\Tps\sep \tr(A)=0,\ \|A\|\le r\}
\end{equation}
is an absolutely $\Qp$-convex subset of $\Tp$.
\end{proposition}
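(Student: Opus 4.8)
The plan is to reduce every assertion to the structure theory already developed — principally Theorem~\ref{theochar} (so that $\Tp=\Cpa$), Corollary~\ref{decocpa} on canonical decompositions, Proposition~\ref{prosydec} on symmetric decompositions, and the fact, recorded just before the statement, that $\densr$ is $\Qp$-convex for $r\ge1$. First I would settle that $\densops$ is a closed $\Qp$-convex subset of $\Tp$: the $\Qp$-convexity of $\densops=\densu$ is the case $r=1$ of the observation preceding the proposition (it follows from Theorem~\ref{charconv} and the strong triangle inequality for $\|\cdot\|$), while closedness holds because $\densops=\tcst\cap\{T\in\Tp\sep\|T\|\le1\}$ is the intersection of the closed affine set $\tcst$ (Theorem~\ref{theotcst}) with the closed unit ball of $\Tp$.

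For the equivalence \ref{densa}$\iff$\ref{densb}$\iff$\ref{densbis}, I would exploit that, since $\Tp=\Cpa$, any $T\in\Tps$ admits a canonical decomposition $T=\sum_{j\in J}\lambda_j\,|\osj\rangle\langle\fj|$ (Corollary~\ref{decocpa}), and that by Theorem~\ref{theochar} \emph{every} such decomposition satisfies $\|T\|=\max_{j\in J}|\lambda_j|$ and $\tr(T)=\sum_{j\in J}\lambda_j\langle\fj,\osj\rangle$, these quantities being intrinsic to $T$. Hence the requirement in~\ref{densb} that $\{\lambda_j\}\subset\vari$ and $\sum_j\lambda_j\langle\fj,\osj\rangle=1$ is just $\|T\|\le1$ together with $\tr(T)=1$, i.e.\ $T\in\tcst$ with $\|T\|\le1$, which is exactly~\ref{densa}; and~\ref{densbis} similarly amounts to $T\in\Tps$, $\|T\|=1$, $\tr(T)=1$. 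Since every statistical operator has norm $\ge1$ (as $1=|\tr(T)|\le\|T\|$), for $T\in\tcst$ the conditions $\|T\|\le1$ and $\|T\|=1$ coincide, which closes the three-way equivalence.

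The equivalence with~\ref{densc} when $p\neq2$ is the only place where the arithmetic of the field really enters. One implication works for all $p$: if $T$ has a symmetric decomposition $T=\sum_{j\in J}(\sj|\osj\rangle\langle\fj|+\overline{\sj}|\fj\rangle\langle\osj|)$ with $\{\sj\}\subset\vari$, then by Proposition~\ref{prosydec} we get $T\in\Tps$, $\tr(T)=\sum_j(\sj\langle\fj,\osj\rangle+\overline{\sj}\langle\osj,\fj\rangle)=1$ and $\|T\|\le\max_j|\sj|\le1$, so $T\in\densops$. For the converse, starting from $T\in\densops$ I would take (Corollary~\ref{decocpa}) a canonical decomposition $T=\sum_j\lambda_j\,|\osj\rangle\langle\fj|$ with $\{\osj\}$ a normalized norm-orthogonal system; then $\max_j|\lambda_j|=\|T\|=1$ forces $\{\lambda_j\}\subset\vari$, and, using $T=T^\ast$,
\[
T=\frac{1}{2}(T+T^\ast)=\sum_{j\in J}\Big(\frac{\lambda_j}{2}\,|\osj\rangle\langle\fj|+\frac{\overline{\lambda_j}}{2}\,|\fj\rangle\langle\osj|\Big).
\]
Setting $\sj\defi\lambda_j/2$, the crucial observation is that $|2|=1$ for $p\neq2$, so $|\sj|=|\lambda_j|\le1$ and $\{\sj\}\subset\vari$; the remaining data (nonvanishing of $\sj$, normalization, norm-orthogonality of $\{\osj\}$, $\lim_j\sj=0$ when $J=\nat$, norm convergence) are inherited from the canonical decomposition, and formula~\eqref{trasa} with $\tr(T)=1$ yields the normalization $\sum_j(\sj\langle\fj,\osj\rangle+\overline{\sj}\langle\osj,\fj\rangle)=1$. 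The exclusion of $p=2$ is precisely the failure of this halving step ($|2|=2>1$ there); I expect this to be the subtlest point, though it is short.

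For the final statement I would first check that $\ztcr$ is absolutely $\Qp$-convex: it contains $0$, and for $A,B\in\ztcr$ and $\alpha,\beta\in\pint$ the operator $\alpha A+\beta B$ lies in the $\Qp$-linear space $\Tps$, has $\tr(\alpha A+\beta B)=\alpha\tr(A)+\beta\tr(B)=0$, and $\|\alpha A+\beta B\|\le\max\{|\alpha|\,\|A\|,|\beta|\,\|B\|\}\le r$. Then $\densr=T+\ztcr$ for $T\in\densops$ and $r\ge1$ follows by two inclusions: if $A\in\ztcr$ then $T+A\in\Tps$, $\tr(T+A)=1$ and $\|T+A\|\le\max\{\|T\|,\|A\|\}\le\max\{1,r\}=r$, so $T+A\in\densr$; conversely, if $S\in\densr$ then $S-T\in\Tps$, $\tr(S-T)=0$ and $\|S-T\|\le\max\{\|S\|,\|T\|\}\le r$, so $S-T\in\ztcr$ and $S=T+(S-T)$. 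The hypothesis $r\ge1$ is used only to ensure $\max\{1,r\}=r$; no further obstacle is anticipated.
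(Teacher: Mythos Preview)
Your proposal is correct and follows essentially the same route as the paper's proof: the same closure argument, the same reduction of \ref{densa}--\ref{densbis} to $\|T\|=\max_j|\lambda_j|$ and $\tr(T)=\sum_j\lambda_j\langle\fj,\osj\rangle$ via Theorem~\ref{theochar}, the same halving trick $\sj=\lambda_j/2$ for \ref{densc}, and the same two-inclusion argument for $\densr=T+\ztcr$. One small slip: in your parenthetical about $p=2$ you write ``$|2|=2>1$'', but in fact $|2|_2=1/2$; the correct point is that $|1/2|_2=2>1$, so $|\sj|=|\lambda_j/2|=2|\lambda_j|$ may exceed $1$, which is precisely why the halving step fails for $p=2$.
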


\begin{proof}
As previously noted, by Theorem~\ref{charconv} and by the strong triangle inequality for the operator norm,
$\densops$ is a $\Qp$-convex subset of $\Tp$. Moreover, $\densops$ is the intersection of the closed unit ball
(or of the unit sphere) in $\Tp$ with the closed subset $\tcst$ of $\Tp$; hence, it is
closed in $\Tp$.

By relations~\eqref{tctra} and~\eqref{tcnor}, given any canonical decomposition
$T=\sum_{j\in J}\lambda_j\, |\osj\rangle\langle\fj|$ of a trace class operator $T\in\Tp$,
we have that $\tr(T)=\sum_{j\in J}\lambda_j\, \langle\fj,\osj\rangle$ and
$\|T\|=\max_{j\in J} |\lambda_j|$ (so that $\|T\|\le1\iff\{\lambda_j\}_{j\in J}\subset\pint$).
Therefore, conditions~\ref{densa} and~\ref{densb} are equivalent, and, by the second equality
in~\eqref{defdensops}, conditions~\ref{densb} and~\ref{densbis} are equivalent too.

Moreover, if condition~\ref{densb} is satisfied, writing $T=\frac{1}{2}(T+T^\ast)$, we obtain the
symmetric decomposition $T=\sum_{j\in J}(\sj\,|\osj\rangle\langle\fj|+\overline{\sj}\, |\fj\rangle\langle\osj|)$, where
$\sum_{j\in J}(\sj\,\langle\fj,\osj\rangle+\overline{\sj}\,\langle\osj,\fj\rangle)=\tr(T)=
\sum_{j\in J}\lambda_j\, \langle\fj,\osj\rangle=1$ and $\{2\sj=\lambda_j\}_{j\in J}\subset\vari$;
in particular, if $p\neq 2$, then $|2\sj|=|\sj|$ and $\{\sj\}_{j\in J}\subset\vari$.
Thus,~\ref{densb} implies~\ref{densc}.

Next, if~\ref{densc} holds, then $T\in\Tps$ is such that
$\tr(T)=\sum_{j\in J}(\sj\,\langle\fj,\osj\rangle+\overline{\sj}\,\langle\osj,\fj\rangle)=1$ and
$\|T\|=\|\sum_{j\in J}(\sj\,|\osj\rangle\langle\fj|+\overline{\sj}\, |\fj\rangle\langle\osj|)\|
\le\max_{j\in J}\|\sj\,|\osj\rangle\langle\fj|\,\|=\max_{j\in J}|\sj|\le 1$, so that
condition~\ref{densa} is verified too.

Finally, by the linearity of the trace and by the strong triangle inequality for the operator norm,
it is easy to see that $\ztcr$ is a (closed) absolutely $\Qp$-convex subset of $\Tp$. Now, for every $S\in\densr$,
we have that $S=T+(S-T)$, for any $T\in\densops$, where $\tr(S-T)=0$ and $\|S-T\|\le\|S\|\le r$; i.e., $S-T\in\ztcr$.
On the other hand, if $S=T+S_0$, with $T\in\densops$ and $S_0\in\ztcr$ --- i.e., if $S$ is a zero-trace perturbation
of the density operator $T$ by $S_0$ --- then $S\in\tcst$ and $1\le\|S\|\le\max\{\|T\|,\|S_0\|\}=
\max\{1,\|S_0\|\}\le\max\{1,r\}=r$; i.e., $S\in\densr$. Therefore, relation~\eqref{strdensr} holds true.
\end{proof}


\subsection{The SOVMs and the statistical interpretation}
\label{sovms}

The statistical interpretation of trace-induced states is ensured by defining the observables as a suitable
$p$-adic counterpart of the POVMs:

\begin{definition}
A (discrete) \emph{selfadjoint-operator-valued measure} (in short, a SOVM) in $\Hp$ is a countable family
$\{A_i\}_{i\in I}\subset\Bps$, which is norm-bounded --- i.e., $\sup_{i\in I}\|A_i\|<\infty$ --- and
such that $\sum_{i\in I} A_i=\id$. Here, if the index set $I$ is not finite,
the series is supposed to converge w.r.t.\ weak operator topology (i.e., the initial topology induced by the family of maps
$\{\ephps\colon\Bp\rightarrow\Q\}_{\phi,\psi\in\Hp}$, $\ephps(A)\defi\langle\phi,A\psi\rangle$). We call a SOVM
$\{A_i\}_{i\in I}\subset\Bps$ \emph{contractive} if, in particular, $\|A_i\|\le 1$, for all $i\in I$; we say that it is
\emph{trace-class} if $\{A_i\}_{i\in I}\subset\Tps$.
\end{definition}

\begin{lemma} \label{lemsovm}
Let $\{A_i\}_{i\in I}\subset\Bps$ be a SOVM, where the index set $I$ is (countably) infinite.
Then, for every pair of vectors $\phi,\psi\in\Hp$,
\begin{equation}
\lim_i\langle\phi,A_i\psi\rangle =0.
\end{equation}
\end{lemma}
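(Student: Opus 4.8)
The plan is to reduce the statement to a convergence fact about sequences in $\Q$ by exploiting the weak-operator convergence of the series $\sum_{i\in I}A_i=\id$. First I would fix vectors $\phi,\psi\in\Hp$ and consider the sequence of scalars $a_i\defi\langle\phi,A_i\psi\rangle\in\Q$, $i\in I$ (with $I=\nat$, since the finite case is trivial). By the very definition of the weak operator topology (the initial topology induced by the maps $\ephps$), the hypothesis $\sum_{i\in I}A_i=\id$ means precisely that the series $\sum_{i\in\nat}\langle\phi,A_i\psi\rangle$ converges in $\Q$ to $\langle\phi,\psi\rangle$.

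Next I would invoke Proposition~\ref{sumlemma}: a series $\sum_i x_i$ in the complete $p$-adic normed space $\Q$ converges if and only if $\lim_i x_i=0$. Applying this to the convergent series $\sum_{i\in\nat}a_i=\sum_{i\in\nat}\langle\phi,A_i\psi\rangle$ yields immediately $\lim_i\langle\phi,A_i\psi\rangle=0$, which is exactly the claim. So the argument is essentially two lines: unwind the definition of the weak operator topology to see that the partial sums $\sum_{i=1}^n\langle\phi,A_i\psi\rangle$ form a Cauchy (equivalently convergent) sequence, then apply the $p$-adic criterion that the general term of a convergent series must tend to zero.

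There is no real obstacle here; the only point that requires a word of care is making the translation ``$\sum_i A_i=\id$ in the weak operator topology'' into ``$\sum_i\langle\phi,A_i\psi\rangle$ converges'' fully explicit, since strictly speaking weak-operator convergence of the net of partial sums $S_n=\sum_{i=1}^n A_i$ to $\id$ means that $\langle\phi,S_n\psi\rangle\to\langle\phi,\id\,\psi\rangle=\langle\phi,\psi\rangle$ for every $\phi,\psi$, which is the same as saying the scalar series $\sum_i\langle\phi,A_i\psi\rangle$ converges (to $\langle\phi,\psi\rangle$). Once that is said, Proposition~\ref{sumlemma} closes the matter. I would also note in passing that the norm-boundedness hypothesis $\sup_i\|A_i\|<\infty$ is not needed for this particular lemma — it is a hypothesis carried along for other purposes — so I would not use it in the proof.
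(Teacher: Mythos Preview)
Your proof is correct and is essentially identical to the paper's own argument: unwind the weak-operator convergence of $\sum_i A_i=\id$ to get convergence of the scalar series $\sum_i\langle\phi,A_i\psi\rangle$, then apply Proposition~\ref{sumlemma} to conclude the general term tends to zero. Your remark that the norm-boundedness hypothesis plays no role here is also accurate.
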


\begin{proof}
In fact, we have that
\begin{align}
\sum_{i\in I} A_i=\id\ \mbox{(w.r.t.\ the weak op.\ topology)} & \implies
\langle\phi,\psi\rangle=\langle\phi,({\textstyle\sum_{i\in I}}A_i)\psi\rangle
=\sum_{i\in I}\langle\phi,A_i\psi\rangle
\nonumber\\
& \implies
\lim_i\langle\phi,A_i\psi\rangle =0,
\end{align}
where the second implication holds by Proposition~\ref{sumlemma}.
\end{proof}

\begin{proposition}
Let $\{A_i\}_{i\in I}\subset\Bps$ be a SOVM. Then, for every trace-induced state $\sta\in\trstates$,
$\{\sta(A_i)\}_{i\in I}$ is a $p$-adic probability distribution. In particular, if
$\sta$ is a \emph{density state} --- i.e., $\sta=\inj(S)$, for some $S\in\densops$ --- and $\{A_i\}_{i\in i}$
is contractive, then $\{\sta(A_i)\}_{i\in I}$ is contained in the probability simplex $\psy(I,\Q)$.
Finally, the SOVM $\{A_i\}_{i\in I}$ is contractive iff $\sup_{i\in I}\|A_i\|=\max_{i\in I} \|A_i\|=1$.
\end{proposition}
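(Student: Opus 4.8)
The plan is to prove the three assertions in sequence, each reducing to a short computation using the $p$-adic ultrametric properties of the valuation and the definition of a state. First, fix a trace-induced state $\sta=\inj(S)$ with $S\in\tcst$, so that $\sta(A)=\tr(AS)$ for all $A\in\Bpa$. Since $\{A_i\}_{i\in I}\subset\Bps\subset\Bpa$, each $\sta(A_i)=\tr(A_iS)$ is well defined, and by relation~\ref{con.s2} (equivalently, by~\eqref{inqp}) we have $\sta(A_i)\in\Qp$ for every $i$. The key point is $\Qp$-countable additivity of $\sta$ on the weakly convergent series $\sum_{i\in I}A_i=\id$: I would argue that $\sta$, being a bounded linear functional (hence continuous), commutes with the weak-operator-topology sum, so that $\sum_{i\in I}\sta(A_i)=\sta(\sum_{i\in I}A_i)=\sta(\id)=1$. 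To make this rigorous in the infinite-$I$ case, I would invoke the standard fact (used already in Lemma~\ref{lemsovm} and its proof, via Proposition~\ref{sumlemma}) that the weak-operator convergence $\sum_i A_i=\id$ yields $\sum_i\langle\phi,A_i\psi\rangle=\langle\phi,\psi\rangle$ for all $\phi,\psi$; applying this with a rank-one trace class operator and the cyclic property of the trace lets one pass from the matrix-element sums to $\sum_i\tr(A_iS)$, and Proposition~\ref{sumlemma} guarantees the $p$-adic convergence. Thus $\{\sta(A_i)\}_{i\in I}\subset\Qp$ sums to $1$, i.e.\ is a $p$-adic probability distribution.

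For the second assertion, suppose in addition that $\sta=\inj(S)$ is a density state, so $S\in\densops$, i.e.\ $\|S\|\le 1$ (equivalently $\|S\|=1$) by~\eqref{defdensops}, and that the SOVM is contractive, $\|A_i\|\le 1$ for all $i$. Then by the trace estimate~\eqref{trinequa} (first inequality), $|\sta(A_i)|=|\tr(A_iS)|\le\|A_i\|\,\|S\|\le 1$, so each $\sta(A_i)\in\pint=\{z\in\Qp\sep|z|\le1\}$. Combined with the first assertion ($\sum_i\sta(A_i)=1$), this is precisely the condition for $\{\sta(A_i)\}_{i\in I}$ to lie in the probability simplex $\psy(I,\Q)$ as defined in Subsection~\ref{convexity}.

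For the final assertion, I would prove both directions. If the SOVM $\{A_i\}_{i\in I}$ is contractive, then $\|A_i\|\le 1$ for all $i$, so $\sup_{i\in I}\|A_i\|\le 1$; to see the supremum equals $1$ and is attained, pick any $\phi$ in an orthonormal basis (so $\|\phi\|=|\langle\phi,\phi\rangle|=1$) and use $\langle\phi,\phi\rangle=\sum_{i\in I}\langle\phi,A_i\phi\rangle$, which by the strong triangle inequality forces $1=|\langle\phi,\phi\rangle|\le\max_{i\in I}|\langle\phi,A_i\phi\rangle|\le\max_{i\in I}\|A_i\phi\|\le\max_{i\in I}\|A_i\|$; hence $\max_{i\in I}\|A_i\|=1$ (the max is attained since the values lie in the discrete set $|\Q|\cap[0,1]$ and the ultrametric sum argument shows the bound $1$ is reached for at least one $i$ — strictly, for each fixed $\phi$ some $i$ achieves $|\langle\phi,A_i\phi\rangle|=1$). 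Conversely, if $\sup_{i\in I}\|A_i\|=\max_{i\in I}\|A_i\|=1$ then $\|A_i\|\le 1$ for every $i$ by definition of the supremum, which is exactly contractivity.

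The main obstacle I anticipate is the careful justification, in the second sub-claim of the last assertion, that $\max_{i\in I}\|A_i\|=1$ is genuinely attained rather than merely being a supremum — for a general infinite family of operators with norms in the discrete value group $|\Q^\ast|\cap(0,1]$, one must rule out the scenario $\|A_i\|<1$ for all $i$ while $\sup_i\|A_i\|=1$, which is impossible precisely because the value group is discrete (there is a largest element of $|\Q^\ast|$ strictly below $1$, namely $p^{-1/2}$ or $p^{-1}$ depending on ramification), so any countable supremum of norms is attained. I would state this discreteness argument explicitly, citing Remark~\ref{remrami}. The rest of the proof is routine application of the Cauchy--Schwarz and strong triangle inequalities together with the already-established properties of the trace and of trace-induced states.
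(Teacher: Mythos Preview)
Your argument for the second assertion is correct and matches the paper's approach (via the trace inequality~\eqref{trinequa}). Your argument for the third assertion is also correct, though the paper gives a shorter proof: it simply writes $1=\|\id\|=\|\sum_{i}A_i\|\le\sup_i\|A_i\|\le 1$, where the middle inequality follows from the matrix-element characterization of the norm in Theorem~\ref{th.1} together with the ultrametric inequality applied to each matrix element, and the supremum is a maximum because the only accumulation point of $|\Q|$ is $0$. Your more hands-on argument via $\langle\phi,\phi\rangle=\sum_i\langle\phi,A_i\phi\rangle$ reaches the same conclusion.

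The first assertion, however, has a genuine gap. The claim that ``$\sta$, being a bounded linear functional (hence continuous), commutes with the weak-operator-topology sum'' is not justified: boundedness gives continuity in the \emph{norm} topology, not in the weak operator topology, and the SOVM series $\sum_i A_i=\id$ is only assumed to converge weakly. Your attempted repair --- reducing to the rank-one case $T=|\phi\rangle\langle\psi|$, where indeed $\sum_i\tr(A_iT)=\sum_i\langle\psi,A_i\phi\rangle=\langle\psi,\phi\rangle$, and then passing to general $S$ --- does not explain the crucial step: when $S=\sum_{j\in J}\lambda_j|e_j\rangle\langle f_j|$ is a canonical decomposition, one obtains $\sum_i\tr(A_iS)=\sum_i\sum_j\lambda_j\langle f_j,A_ie_j\rangle$, and the whole difficulty lies in \emph{exchanging} the $i$- and $j$-sums. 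This is exactly what the paper's proof addresses carefully: it shows $\lim_j\lambda_j\langle f_j,A_ie_j\rangle=0$ \emph{uniformly in $i$} (since $|\lambda_j\langle f_j,A_ie_j\rangle|\le\alpha|\lambda_j|$ with $\alpha=\sup_i\|A_i\|<\infty$), and combines this with $\lim_i\lambda_j\langle f_j,A_ie_j\rangle=0$ for each $j$ (from Lemma~\ref{lemsovm}) to conclude $\lim_{i+j}\lambda_j\langle f_j,A_ie_j\rangle=0$ via~\eqref{limiffbis}, after which the sum exchange is licensed by~\eqref{eq.50}. Without this uniform-in-$i$ control, neither the convergence of $\sum_i\tr(A_iS)$ nor its value is established.
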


\begin{proof}
By relation~\eqref{inqp}, $\{\sta(A_i)\}_{i\in I}\subset\Qp$ and, if the index set $I$ is finite,
the first assertion follows directly from the condition that $\sum_{i\in I} A_i=\id$. Thus, we will
henceforth assume that the index set $I$ is (countably) infinite.

Given any trace-induced state $\sta\in\trstates$, we have that $\sta=\inj(S)$, for some $S\in\tcst$,
and, taking any canonical decomposition $S=\sum_{j\in J}\lambda_j\, |\osj\rangle\langle\fj|$
of the trace class operator $S$, we have that
\begin{align}
\sta(A_i)=\tr(A_i S) =\tr\big(A_i {\textstyle\sum_{j\in J}}\lambda_j\, |\osj\rangle\langle\fj|\big)
& =\tr\big({\textstyle\sum_{j\in J}}\lambda_j\, |A_i\osj\rangle\langle\fj|\big)
\nonumber\\ \label{firela}
& =\sum_{j\in J}\lambda_j\,\tr\big(|A_i\osj\rangle\langle\fj|\big)
=\sum_{j\in J}\lambda_j\, \langle\fj,A_i\osj\rangle.
\end{align}
Here, the third equality follows from the fact that the linear map $\Tp\ni T\mapsto A_i T\in\Tp$ is
bounded, and for obtaining the fourth equality we have used the fact that $\tr\argo\colon\Tp\rightarrow\Q$
is a bounded functional and, if $J=\nat$, the series $\sum_{j\in J}\lambda_j\, |A_i\osj\rangle\langle\fj|$
converges w.r.t.\ the norm topology.

Observe now that, by Lemma~\ref{lemsovm}, $\lim_i\lambda_j\,\langle\fj,A_i\osj\rangle=0$, for all $j\in J$.
Moreover, in the case where $J=\nat$, since $\lim_j\lambda_j=0$ and $\alpha\equiv\sup_{i\in I}\|A_i\|<\infty$
(by the definition of a SOVM), then
\begin{equation}
|\langle\fj,A_i\osj\rangle|\le\alpha\|\fj\|\,\|\osj\|=\alpha,\ \forall i,j\in I  \ \implies \
\lim_j\lambda_j\,\langle\fj,A_i\osj\rangle=0, \ \mbox{\emph{uniformly} in $i\in I$}.
\end{equation}
Therefore, we can freely exchange the sums in the following calculation:
\begin{align}
\sum_{i\in I}\sta(A_i) & =\sum_{i\in I}\sum_{j\in J}\lambda_j\, \langle\fj,A_i\osj\rangle
\nonumber\\
& =\sum_{j\in J}\sum_{i\in I}\lambda_j\, \langle\fj,A_i\osj\rangle
\nonumber\\
& =\sum_{j\in J}\lambda_j\, \langle\fj,({\textstyle\sum_{i\in I}} A_i)\osj\rangle
=\sum_{j\in J}\lambda_j\, \langle\fj,\osj\rangle=\tr(S)=1.
\end{align}
Here, we have used relation~\eqref{firela} and the fact that $\sum_{i\in I} A_i=\id$ (w.r.t.\ the weak operator topology).
In conclusion, it is proven that $\{\sta(A_i)=\tr(A_i S)\}_{i\in I}\subset\Qp$ is a $p$-adic probability distribution.

For the second assertion, just note that, if $\|A_i\|\le 1$, then, for every canonical decomposition
$S=\sum_{j\in J}\lambda_j\, |\osj\rangle\langle\fj|$ of the statistical operator $S\in\tcst$, we have that $|\sta(A_i)|=|\tr(A_i S)|=
|\sum_{j\in J}\lambda_j\, \langle\fj,A_i\osj\rangle|\le\max_{j\in J}|\lambda_j\,\langle\fj,A_i\osj\rangle|
\le\max_{j\in J}|\lambda_j|$. Therefore, if $S\in\densops$, then, by the equivalence of conditions~\ref{densa}
and~\ref{densb} in Proposition~\ref{charadens}, $|\sta(A_i)|\le 1$.

Regarding the final assertion, we only need to show that if $\{A_i\}_{i\in I}$ is a contractive SOVM, then
$\sup_{i\in I}\|A_i\|=\max_{i\in I}\|A_i\|=1$. Indeed, if $\{A_i\}_{i\in I}$ is a contractive SOVM,
we must have: $1=\|\id\|=\|\sum_{i\in I} A_i\|\le\sup_{i\in I} \|A_i\|\le 1$.
(Here, $\sup_{i\in I}\|A_i\|=\max_{i\in I}\|A_i\|$, because the only accumulation point of the subset
$\|\Bp\|=|\Q|$ of $\mathbb{R}^+$ is $0$.)
\end{proof}

\begin{example} \label{exsovms}
Given a trace-induced state $\tr(\argo S)\in\trstates$, with every symmetric decomposition
$S=\sum_{j\in J}(\sj\,|\osj\rangle\langle\fj|+\overline{\sj}\, |\fj\rangle\langle\osj|)$, $J=\{1,2,\ldots\}$,
of the statistical operator $S$ is associated, in a natural way, the $p$-adic probability distribution
$\big\{\pi_0=0,\pi_1=\sigma_1\,\langle f_1,e_1\rangle+\overline{\sigma_1}\,\langle e_1,f_1\rangle,\ldots\big\}$
and the SOVM
\begin{equation}
\big\{A_0=\id-S=\id-{\textstyle\sum_{j\in J}}(\sj\,|\osj\rangle\langle\fj|+\overline{\sj}\, |\fj\rangle\langle\osj|),
A_1=\sigma_1\,|e_1\rangle\langle f_1|+\overline{\sigma_1}\, |f_1\rangle\langle e_1|,\ldots\big\} .
\end{equation}
Note that, for $p\neq 2$, $\{A_0,A_1,\ldots\}$ is contractive if $S\in\densops$, because $\|A_0\|\le\max\{1,\|S\|\}=1$, and
$\|A_j\|=\|\sigma_j\,|\osj\rangle\langle\fj|\,\|=|\sigma_j|\le 1$, for all $j\in J$  (see Proposition~\ref{charadens}).
But (for every prime number $p$, and for $\dim(\Hp)=\infty$) it is \emph{not} trace-class because $A_0=\id-S\not\in\Tp$.
For every orthonormal basis $\{\phi_m\mnat$ in $\Hp$, $\{|\phi_m\rangle\langle\phi_m|\mnat$ is a contractive
trace-class SOVM.
\end{example}


\section{Conclusions and prospects}
\label{conclu}

This work is the first step of a program aimed at developing an abstract approach to quantum mechanics
and quantum information theory over a quadratic extension of the field $\Qp$ of $p$-adic numbers, where
$p$ is a generic prime number.

Let us briefly summarize our main achievements:
\begin{enumerate}

\item We have first introduced a suitable notion of a Hilbert space $\Hp$ over a quadratic extension $\Q$
of $\mathbb{Q}_p$. A salient point is that the existence of an orthonormal basis is part of the definition
of $\Hp$. (The space $\Hp$ will be supposed henceforth to be infinite-dimensional.)

\item We have then derived various properties of the ultrametric Banach space $\Bp$ of bounded operators.
A peculiar fact is that not every element of $\Bp$ admits a \emph{proper adjoint}. The \emph{adjointable}
elements of $\Bp$ form an ultrametric Banach $\ast$-algebra $\Bpa$. Bounded operators in $\Hp$ have been
mostly regarded as \emph{matrix operators}. This turns out to be a far-reaching approach in this
non-Archimedean setting.

\item In fact, the unitary operators $\Up\subset\Bpa$ in $\Hp$ have been introduced as suitable matrix operators.
We have then obtained a complete characterization of the unitary group $\Up$. In particular, $\Up$ is
shown to be the intersection of the group $\ipp(\Hp)$ of all surjective, IP-preserving, all-over operators,
with the group $\nop(\Hp)$ of all surjective, NO-preserving, all-over operators. This result is coherent
with the fact that the inner product and the norm are not as strictly related as in the standard complex
case.

\item  Once again, the \emph{trace class} $\Tp\subset\Bpa$ of $\Hp$ has been defined as a suitable class of
matrix operators and shown to be a two-sided $\ast$-ideal in $\Bpa$. A very peculiar feature of the $p$-adic
setting is that $\Tp$ --- endowed with the standard operator norm and with the Hilbert-Schmidt product --- is a
$p$-adic Hilbert space. Moreover, $\Tp^2=\Tp$, whence $\Tp^n=\Tp$, for all $n\in\nat$. Otherwise stated, $\Tp$
alone plays essentially the role that the trace ideals have in the complex case.

\item As in the complex case, $\Tp$ is contained in the Banach space $\Cp$ of \emph{compact operators}; \emph{but},
in the $p$-adic case we have the somewhat \emph{eccentric} fact that $\Tp=\Cp\cap\Bpa$.

\item Following essentially the same route as in the algebraic formulation of (standard) quantum mechanics,
we get to the conclusion that physical \emph{states} in $p$-adic quantum mechanics should be defined as
(suitably normalized) \emph{involution-preserving} bounded functionals on the unital $\ast$-algebra $\Bpa$.
The role played by the $\sigma$-additive states in the complex case --- the states associated with the density operators
--- is now played by the so-called \emph{trace-induced states} $\trstates$, induced, via the trace functional, by the
\emph{statistical operators} $\tcst$. Once again we have a (two-fold) peculiar feature of the $p$-adic case:
$\tcst$ is a \emph{$\Qp$-affine} subset of $\Tp$ --- coherently with the affine structure of $p$-adic probability
distributions --- whence it is an \emph{unbounded} subset of $\Tp$. Nevertheless, one can still define a \emph{$\Qp$-convex}
(and norm-bounded) subset $\densops$ of \emph{density operators}.

\item The statistical interpretation of the (new) theory is completed by suitably defining the \emph{observables}
in $p$-adic quantum mechanics. We believe that the \emph{selfadjoint-operator-valued measures} (SOVMs) --- a suitable $p$-adic
counterpart of the POVMs associated with a complex Hilbert space --- provide a convenient mathematical tool for this scope.
They allow one to associate with every trace-induced state a $p$-adic probability distribution. Note that with every symmetric decomposition
of a statistical operator is associated, in a natural way, a $p$-adic probability distribution and a SOVM (Example~\ref{exsovms}).
This may be somewhat reminiscent of the probability distribution and the PVM canonically associated, via the spectral decomposition, with
a density operator in a complex Hilbert space.

\end{enumerate}

We now briefly outline some natural continuations of our work. It would be futile, at this preliminary stage,
to try to envisage all potential developments and applications, so we will keep within a reasonably narrow
neighborhood of our present understanding and results. As a first point, we plan to investigate the symmetry
transformations in the $p$-adic setting, which entails studying symmetry (group) actions, a representation theory
oriented towards applications to physics (say, the `projective representations') etc. A related topic is the investigation
of dynamical maps and dynamical (semi-)groups in $p$-adic quantum mechanics. Another interesting prospect is the study of
the fine structure of the set of statistical operators, and the characterization of those states that are \emph{not} trace-induced.
Tensor products and entanglement are central in quantum information theory, and we expect that they will be central in
the $p$-adic setting too. A further intriguing open problem is the description of the `weak trace class' $\Tw$.
We already know that it contains the compact operators $\Cp$; but, \emph{is there anything else?}

\section*{Acknowledgements}

The authors gratefully acknowledge useful discussions with Andrei Yur'evich Khrennikov and Andreas Winter.


\end{document}